\newcommand{\cdel}{\textsc{CVD}\xspace}
\newcommand{\wcdel}{\textsc{WCVD}\xspace}
\newcommand{\OO}{{\mathcal O}}
\newcommand{\opt}{{\sf opt}}
\newcommand{\fopt}{{\sf fopt}}
\newcommand{\numdhd}{48}
\newcommand{\defparproblemOpt}[3]{
  \vspace{3mm}
  \noindent\fbox{
  \begin{minipage}{.95\textwidth}
  \begin{tabular*}{\textwidth}{@{\extracolsep{\fill}}lr} \textsc{#1}  \\ \end{tabular*}
  {\bf{Input:}} #2  \\
  {\bf{Question:}} #3
  \end{minipage}
  }
  \vspace{2mm}
}
\newcommand{\alg}[1]{\mbox{\sf #1}}  
\newcommand{\dho}{DH-obstruction}
\newcommand{\approxDH}{d \cdot \log^2 n}
\newcommand{\biclique}{\emph{biclique}}
\newcommand{\x}{{\bf x}}
\newcommand{\DHbddObs}{50}
\newcommand{\pfd}{{\sc Planar  $\mathscr{F}$-Minor-Free Deletion}\xspace}
\newcommand{\wpfd}{{\sc Weighted Planar  $\mathscr{F}$-Minor-Free Deletion}\xspace}
\newcommand{\wfd}{{\sc Weighted $\mathscr{F}$-Minor-Free Deletion}\xspace}
\newcommand{\wpfdshort}{{\sc WP$\mathscr{F}$-MFD}\xspace}
\newcommand{\wcvd}{\textsc{Weighted Chordal Vertex Deletion}\xspace}
\newcommand{\wretad}{\textsc{Weighted Rankwidth-$\eta$ Vertex Deletion}\xspace}
\newcommand{\wretadshort}{\textsc{WR-$\eta$VD}\xspace}
\newcommand{\wroned}{\textsc{Weighted Rankwidth-$1$ Vertex Deletion}\xspace}
\newcommand{\wronedshort}{\textsc{WR-$1$VD}\xspace}
\newcommand{\wDHfull}{\textsc{Weighted Distance Hereditary Vertex Deletion}\xspace}
\newcommand{\wDH}{\textsc{WDHVD}\xspace}
\newcommand{\DHfull}{\textsc{Distance Hereditary Vertex Deletion}\xspace}
\newcommand{\DHsmall}{\textsc{DHVD}\xspace}
\newcommand{\genfvd}{{\sc Weighted $\mathcal{F}$ Vertex Deletion}\xspace}
\newcommand{\afg}{$\OO(\log^{\OO(1)} n)$}
\newcommand{\afcvd}{$\OO(\log^{2} n)$}
\newcommand{\afpfd}{$\OO(\log^{2} n)$}
\newcommand{\afpfdr}{$\OO(\log^{1.5} n)$}
\newcommand{\afdh}{$\OO(\log^{3} n)$}
\title{Polylogarithmic  Approximation Algorithms for  Weighted-$\mathcal{F}$-Deletion Problems\thanks{The research leading to these results received funding from the European Research Council under the European Union’s Seventh Framework Programme (FP/2007-2013) / ERC Grant Agreement no.~306992.}}
\author{
Akanksha Agrawal\thanks{
University of Bergen, Bergen, Norway. \texttt{akanksha.agrawal@uib.no}.
} \and 
Daniel Lokshtanov\thanks{
University of Bergen, Bergen, Norway. \texttt{daniello@ii.uib.no}.
} \and 
Pranabendu Misra\thanks{
Institute of Mathematical Sciences, Chennai, India. \texttt{pranabendu@imsc.res.in}.
}
\and 
Saket Saurabh\thanks{
University of Bergen, Bergen, Norway. The Institute of Mathematical Sciences HBNI, Chennai, India. \texttt{saket@imsc.res.in}.
}
\and 
Meirav Zehavi\thanks{
University of Bergen, Bergen, Norway. \texttt{meirav.zehavi@uib.no}.
}}
\date{}
\begin{document}
\begin{titlepage}
\def\thepage{}
\thispagestyle{empty}
\maketitle
\begin{abstract} 

Let $\mathcal F$ be a family of graphs. A canonical vertex deletion problem corresponding to $\mathcal F$ is defined as follows: given an $n$-vertex undirected graph $G$ and a weight function $w: V(G)\rightarrow\mathbb{R}$, find a minimum weight subset $S\subseteq V(G)$ such that $G- S$ belongs to $\cal F$. This is known as \genfvd problem. In this paper we devise a recursive scheme to obtain \afg-approximation algorithms for such problems, building upon the classic technique of finding \emph{balanced separators} in a graph. Roughly speaking, our scheme applies to those problems, where an optimum solution $S$ together with a well-structured set $X$, form a balanced separator of the input graph. In this paper, we obtain the first \afg-approximation algorithms for the following vertex deletion problems. 
\begin{itemize}
\setlength{\itemsep}{-2pt}

\item We give an \afcvd-factor approximation algorithm for \wcvd (\wcdel), the vertex deletion problem to the family of chordal graphs. On the way to this algorithm, we also obtain a constant factor approximation algorithm for {\sc Multicut} on chordal graphs. 

\item We give an \afdh-factor approximation algorithm for \wDHfull (\wDH), also known as \wroned (\wronedshort). This is the vertex deletion problem to the family of distance hereditary graphs, or equivalently, the family of graphs of rankwidth 1. 
\end{itemize}

Our methods also allow us to obtain in a clean fashion a $\OO(\log^{1.5} n)$-approximation algorithm for the \genfvd problem when $\cal F$ is a minor closed family excluding at least one planar graph. For the {\em unweighted} version of the problem constant factor approximation algorithms are were known~[Fomin et al., FOCS~2012], while for the weighted version considered here an $\OO(\log n  \log\log n)$-approximation algorithm follows from~[Bansal et al. SODA~2017]. We believe that our recursive scheme can be applied to obtain \afg-approximation algorithms for many other problems as well.



\end{abstract}
\end{titlepage}
\newpage 


\section{Introduction}
Let $\cal F$ be a family of undirected graphs. 
Then a natural optimization problem 
is as follows. 

\defparproblemOpt{\genfvd}{An undirected graph $G$ and a weight function $w: V(G)\rightarrow\mathbb{R}$.}
{Find a minimum weight subset $S\subseteq V(G)$ such that $G- S$ belongs to $\cal F$.}

\noindent
The \genfvd problem captures a wide class of node (or vertex) deletion problems that have been studied from the 1970s.
For example, when $\mathcal F$ is the family of independent sets, forests, bipartite graphs, planar graphs, and chordal graphs, then the corresponding vertex deletion problem corresponds to {\sc Weighted Vertex Cover}, {\sc Weighted Feedback Vertex Set}, 
{\sc Weighted Vertex Bipartization} (also called {\sc Weighted Odd Cycle Transversal}), 
{\sc Weighted Planar Vertex Deletion} and {\sc Weighted Chordal Vertex Deletion}, respectively.
By a classic theorem of Lewis and Yannakakis~\cite{LewisY80}, the decision version of the \genfvd problem---deciding whether there exists a set $S$ weight at most $k$, such that removing $S$ from $G$ results in a graph with property $\Pi$---is {\sf NP}-complete for every non-trivial hereditary property\footnote{A graph property $\Pi$ is simply a family of graphs, and it is called {\em non-trivial} if there exists an infinite number of graphs that are in $\Pi$, as well as an infinite number of graphs that are not in $\Pi$. A non-trivial graph property $\Pi$ is called \emph{hereditary} if $G\in \Pi$ implies that every induced subgraph of $G$ is also in $\Pi$.} $\Pi$. 

Characterizing the graph properties, for which the corresponding vertex deletion problems can be approximated within a bounded factor in polynomial time, is a long standing open problem in approximation algorithms \cite{Yannakakis94}. 
In spite of a long history of research, we are still far from a complete characterization.
Constant factor approximation algorithms  for  {\sc Weighted Vertex Cover}  are known since 1970s \cite{Bar-YehudaE81,NemT74}. 
Lund and Yannakakis observed that the vertex deletion problem for any hereditary property with a ``finite number of minimal forbidden induced subgraphs'' can be approximated within a constant ratio \cite{LundY93}. 
%
They conjectured that for every nontrivial, hereditary property $\Pi$ with an infinite forbidden set, the corresponding vertex deletion problem cannot be approximated within a constant ratio. 
However, it was later shown that {\sc Weighted Feedback Vertex Set}, which doesn't have a finite forbidden set, admits a constant factor approximation~\cite{BafnaBF99,BarYGJ98}, thus disproving their conjecture.
%
On the other hand a result by Yannakakis  \cite{Yannakakis79} shows that, for a wide range of graph properties $\Pi$, approximating the minimum number of vertices to delete in order to obtain a \emph{connected} graph with the property $\Pi$ within a factor $n^{1-\varepsilon}$ is NP-hard.
We refer to~\cite{Yannakakis79} for the precise list of graph properties to which this result applies to, but it is worth mentioning the list includes the class of acyclic graphs and the class of outerplanar graphs. 

In this paper, we explore the approximability of \genfvd for several different families $\mathcal F$ and design \afg-factor approximation algorithms for these problems. More precisely, 
our results are as follows.  
\begin{enumerate}
\setlength{\itemsep}{-2pt}
\item  Let ${\mathscr F}$ be  a finite set of graphs that includes a planar graph. Let ${\cal F}={\mathscr G}({\mathscr F})$ be the family of graphs such that every graph $H\in{\mathscr G}({\mathscr F})$ does not contain a graph from ${\mathscr F}$ as a minor. The vertex deletion problem corresponding to  ${\cal F}={\mathscr G}({\mathscr F})$ is known as the \wpfd (\wpfdshort). The \wpfdshort problem is a very generic problem and by selecting different sets of forbidden minors ${\mathscr F}$, one can obtain  various fundamental  problems such as {\sc Weighted Vertex Cover}, {\sc Weighted Feedback Vertex Set} or {\sc Weighted Treewidth $\eta$-Deletion}. Our first result is a randomized \afpfdr-factor (deterministic \afpfd-factor)  approximation algorithm for \wpfdshort, for any finite $\mathscr F$ that contains a planar graph. 

We remark that a different approximation algorithm for the same class of problems with a slightly better approximation ratio of $\OO(\log n \log\log n)$ follows from recent work of  Bansal, Reichman, and Umboh~\cite{BansalRU1} (see also the discussion following Theorem~\ref{thm:approx_thmpfd}). Therefore, our first result should be interpreted as a clean and gentle introduction to our methods.


 
\item We give an \afcvd-factor approximation algorithm for \wcvd (\wcdel), the vertex deletion problem corresponding to the family of chordal graphs. 
On the way to this algorithm, we also obtain a constant factor approximation algorithm for {\sc Weighted Multicut} in chordal graphs.

\item We give an \afdh-factor approximation algorithm for \wDHfull (\wDH). This is also known as the \wroned (\wronedshort) problem. 
This is the vertex deletion problem corresponding to the family of distance hereditary  graphs, or equivalently graphs of rankwidth 1. 
\end{enumerate}
All our algorithms follow the same recursive scheme, that find ``well structured balanced separators'' in the graph by exploiting the properties of the family $\cal F$.
In the following, we first describe the methodology by which we design all these approximation algorithms. Then, we give a brief overview, consisting of  known results and our contributions, for each problem we study.



\paragraph{Our Methods.} Multicommodity max-flow min-cut theorems are a classical technique in designing approximation algorithms, which was pioneered by Leighton and Rao in their seminal paper~\cite{BalancedSeparator}.  
This approach can be viewed as using balanced vertex (or edge) separators\footnote{A \emph{balanced vertex separator} is a set of vertices $W$, such that every connected component of $G-W$ contains at most half of the vertices of $G$.} in a graph to obtain a divide-and-conquer approximation algorithm.
In a typical application, the optimum solution $S$, forms a balanced separator of the graph.
Thus, the idea is to find an minimum cost balanced separator $W$ of the graph and add it to the solution, and then recursively solve the problem on each of the connected components. This leads to an \afg-factor approximation algorithm for the problem in question. 

Our recursive scheme is a strengthening of this approach which exploits the structural properties of the family $\cal F$.
Here the optimum solution $S^*$ need not be a balanced separator of the graph. Indeed, a balanced separator of the graph could be much larger than $S^*$.
Rather, $S^*$ along with a possibly large but well-structured subset of vertices $X$, forms a balanced separator of the graph.
We then exploit the presence of such a balanced separator in the graph
to compute an approximate solution.
Consider a family $\cal F$ for which \genfvd is amenable to our approach, and let $G$ be an instance of this problem.
Let $S$ be the approximate solution that we will compute.
Our approximation algorithm has the following steps: 
\begin{enumerate}
\setlength{\itemsep}{-2pt}
\item Find a well-structured set $X$, such that $G-X$ has a balanced separator $W$ which is not too costly.
\item Next, compute the balanced separator $W$ of $G-X$ using the known factor $\OO(\sqrt{\log n})$-approximation algorithm (or deterministic $\OO(\log n)$-approximation algorithm) for  {\sc Weighted Vertex Separators}~\cite{FeigeHL08,BalancedSeparator}. Then add $W$ into the solution set $S$ and recursively solve the problem on each connected component of $G- (X\cup S)$. Let $S_1,\cdots, S_\ell$ be the solutions returned by the recursive calls. We add $S_1,\cdots, S_\ell$ to the solution $S$. 

\item Finally, we add $X$ back into the graph and consider the instance $(G-S) \cup X$.
Observe that, $V(G - S)$ can be partitioned into $V'\uplus X$, where $G[V']$ belongs to $\cal F$ and $X$ is a well-structured set.
We call such instances, the \emph{special case} of \genfvd.
We apply an approximation algorithm that exploits the structural properties of the special case to compute a solution.
\end{enumerate} 
\noindent
%
Now consider the problem of finding the structure $X$.
%
One way is to enumerate all the candidates for $X$ and then pick the one where $G-X$ has a balanced vertex separator of least cost --- this separator plays the role of $W$. 
However, the number of candidates for $X$ in a graph could be too many to enumerate in polynomial time.  
For example, in the case of \wcvd, the set $X$ will be a clique in the graph, and the number of maximal cliques in a graph on $n$ vertices could be as many as $3^{\frac{n}{3}}$~\cite{moon1965cliques}.
Hence, we cannot enumerate and test every candidate structure in polynomial time.
However, we can exploit certain structural properties of family $\calF$, to reduce the number of candidates for $X$ in the graph.
In our problems, we ``tidy up'' the graph by removing ``short obstructions'' that forbid the graph from belonging to the family $\cal F$. 
Then one can obtain an upper bound on the number of candidate structures. 
In the above example, recall that a graph $G$ is chordal if and  only if there are no induced cycles of length $4$ or more. 
It is known that a graph $G$ without any induced cycle of length $4$ has at most $\OO(n^2)$ maximal cliques~\cite{C4FreeNumCliques}. 
Observe that, we can greedily compute a set of vertices which intersects all induced cycles of length $4$ in the graph.
Therefore, at the cost of factor $4$ in the approximation ratio,
we can ensure that the graph has only polynomially many maximal cliques.
Hence, one can enumerate all maximal cliques in the remaining graph ~\cite{TsukiyamaIAS77} to test for $X$.
%
%

Next consider the task of solving an instance of the special case of
the problem. 
We again apply a recursive scheme, but now with the advantage of a much more structured graph.
By a careful modification of an LP solution to the instance, we eventually reduce it to instances of {\sc Weighted Multicut}.
In the above example, for \wcvd we obtain instances of {\sc Weighted Multicut} on a chordal graph. 
We follow this approach for all three problems that we study in this paper.
We believe our recursive scheme can be applied to obtain \afg-approximation algorithms for {\sc Weighted $\mathcal{F}$ Vertex (Edge) Deletion} corresponding to several other graph families $\cal F$. 

\vspace{-.3cm}
\paragraph{Weighted Planar $\mathscr F$-Minor-Free Deletion.}
Let ${\mathscr F}$ be  a finite set of graphs containing a planar graph. Formally, \wpfd is defined as follows. 

\defparproblemOpt{\wpfd (\wpfdshort)}{An undirected graph $G$ and a weight function $w: V(G)\rightarrow\mathbb{R}$.}
{Find a minimum weight subset $S\subseteq V(G)$ such that $G- S$ does not contain any graph in 
$\mathscr F$ as a minor.}

\noindent 
The \wpfdshort problem is a very generic problem that encompasses several known problems. To explain the versatility of the problem, we require a few definitions. A graph $H$ is called a {\em minor} of a graph $G$ if we can obtain $H$ from $G$ by a sequence of vertex deletions, edge deletions and edge contractions, and a family of graphs 
$\cal F$ is called {\em minor closed} if $G\in \cal F$ implies that every minor of $G$ is also in $\cal F$. 
Given a graph family $\cal F$, by ${\sf ForbidMinor}({\cal F})$ we denote the family of graphs such that 
$G\in \cal F$ if and only if $G$ does not contain any graph in  ${\sf ForbidMinor}({\cal F})$ as a minor. 
By the celebrated Graph Minor Theorem of Robertson and Seymour, every minor closed family $\cal F$ is characterized by a finite family of forbidden minors~\cite{RobertsonS04}. That is,  ${\sf ForbidMinor}({\cal F})$ has finite size. Indeed, the size of ${\sf ForbidMinor}({\cal F})$ depends on the family $\cal F$. 
Now for a finite collection of graphs $\mathscr F$, as above, we may define the \wfd problem.
And observe that, even though the definition of \wfd we only consider finite sized ${\mathscr F}$, this problem actually encompasses deletion to every minor closed family of graphs. 
Let $\mathscr{G}$ be the set of all finite undirected graphs, and let $\mathscr{L}$ be the family of all finite subsets of  $\mathscr{G}$. Thus,   every element ${\mathscr F} \in \mathscr{L}$ is a finite set of graphs, and throughout the paper we assume that  ${\mathscr F} $ is explicitly given.  In this paper, we show that when  ${\mathscr F}\in \mathscr{L}$ contains at least one planar graph, then it is possible to obtain an \afg-factor approximation algorithm for \wpfdshort. 

The case where $\mathscr F$ contains a planar graph, while being considerably more restricted than the general case, already encompasses a number of the well-studied  
 instances of \wpfdshort. For  example, when ${\mathscr F}=\{K_2\}$, a complete graph on two
vertices, this is the {\sc Weighted Vertex Cover} problem. When ${\cal F}=\{C_3\}$, a cycle on three
vertices, this is the {\sc Weighted Feedback Vertex Set} problem. Another fundamental problem, which is also a special case of \wpfdshort, is {\sc Weighted Treewidth-$\eta$ Vertex Deletion} or  {\sc Weighted $\eta$-Transversal}. Here the task is to 
delete a minimum weight vertex subset to obtain a graph of treewidth at most $\eta$. Since any graph of treewidth $\eta$ excludes a 
$(\eta+1)\times (\eta+1)$ grid as a minor, we have that the set $\mathscr F$ of forbidden minors of treewidth 
$\eta$ graphs contains a planar graph. 
{\sc Treewidth-$\eta$ Vertex Deletion}  plays an important role in 
 generic efficient polynomial time approximation schemes based on Bidimensionality theory~\cite{FominLRS11,FominLS12}.  Among other examples of    \pfd problems that can be found in the literature on approximation and parameterized algorithms, are 
  the cases of ${\mathscr F}$ being 
$\{K_{2,3}, K_4\}$, $ \{K_4\}$, $\{\theta_c\}$,  and $ \{K_{3}, T_2\}$, which correspond to removing vertices to
obtain an  outerplanar graph, a series-parallel graph,  a diamond graph,   and a graph  of pathwidth 1,  respectively. 
 

Apart from the case of {\sc Weighted Vertex Cover}~\cite{Bar-YehudaE81,NemT74} and {\sc Weighted Feedback Vertex Set}~\cite{BafnaBF99,BarYGJ98}, there was not much progress on approximability/non-approximability of \wpfdshort until the work 
of Fiorini, Joret, and Pietropaoli~\cite{Fiorini:2009ipco}, which gave a constant factor approximation algorithm for 
the case of \wpfdshort where ${\mathscr F}$ is a diamond graph, i.e., a graph with two vertices and three parallel edges. In 2011, Fomin et al.~\cite{FominLMPS16}  considered \pfd (i.e. the unweighted version of \wpfdshort) in full generality and designed a randomized (deterministic) $\OO(\log^{1.5} n)$-factor ($\OO(\log^{2} n)$-factor) approximation algorithm for it. Later,  Fomin et al.~\cite{FominLMS12} gave a randomized constant factor approximation algorithm for \pfd.  
Our algorithm for \wpfdshort extends this result to the weighted setting, at the cost of increasing the approximation factor to $\log^{\OO(1)} n$.

\begin{theorem}\label{thm:approx_thmpfd}
For every set  ${\mathscr F} \in \mathscr{L}$, \wpfdshort admits a randomized (deterministic) \afpfdr-factor (\afpfd-factor)  approximation algorithm. 
 \end{theorem}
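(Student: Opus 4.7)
The plan is to directly apply the recursive balanced-separator scheme of the introduction. The structural fact that makes \wpfdshort amenable to this approach is the following: since $\mathscr{F}$ contains a planar graph, the Excluded Grid Theorem of Robertson--Seymour yields a constant $\eta=\eta(\mathscr{F})$ such that every graph in the target family $\mathscr{G}(\mathscr{F})$ has treewidth at most $\eta$. Hence, if $S^*$ denotes an optimum solution of weight $\opt$, then $G-S^*$ has treewidth at most $\eta$ and therefore admits a balanced vertex separator $W_0$ of size at most $\eta+1$ (for instance, a suitable bag of a tree decomposition of $G-S^*$). Consequently $S^*\cup W_0$ is a balanced vertex separator of $G$. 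This is the structural observation that makes \wpfdshort the ``clean and gentle'' instance of our framework: the well-structured set $X$ from the general recursive scheme can be taken to have size $\OO(1)$, so Step~1 of the scheme is essentially free.

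The algorithm is then as follows. Invoke the approximation algorithm for \emph{minimum-weight balanced vertex separator}---the randomized $\OO(\sqrt{\log n})$-approximation of Feige, Hajiaghayi and Lee~\cite{FeigeHL08} or the deterministic $\OO(\log n)$-approximation via Leighton--Rao~\cite{BalancedSeparator}---to find a balanced separator $W$ of $G$, add $W$ to the output solution $S$, and recursively solve \wpfdshort on each connected component of $G-W$. Since $W$ is balanced, every recursive subinstance has at most $|V(G)|/2$ vertices, so the recursion depth is $\OO(\log n)$; when $|V(G)|$ drops below a constant, the problem is solved by brute force. No additional ``special case'' subroutine is required here because the $\OO(1)$-sized set $X$ can be absorbed into the brute-force base case.

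The analysis hinges on the inequality $w(\text{minimum balanced vertex separator of }G) \le \opt + w(W_0)$. The main obstacle of the proof is controlling $w(W_0)$ in the weighted setting: unlike the unweighted case, where $|W_0|\le \eta+1$ is an absolute constant, in the weighted setting $w(W_0)$ could a priori be arbitrarily large. This is overcome by selecting $W_0$ to minimize its weight over all bags of the tree decomposition of $G-S^*$ that yield a balanced separator, and by an amortization argument that charges the weight of each separator selected in the recursion against the portion of $S^*$ that lies in the corresponding subinstance. Summing over the $\OO(\log n)$ recursion levels and multiplying by the subroutine's approximation factor $\alpha\in\{\OO(\sqrt{\log n}),\OO(\log n)\}$ yields the claimed ratios of \afpfdr\ (randomized) and \afpfd\ (deterministic).
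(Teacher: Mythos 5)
Your structural observation is right---every bag of a tree decomposition of $G-S^*$ has at most $\eta+1$ vertices and some bag $W_0$ yields a balanced separator, so $S^*\cup W_0$ is a balanced separator of $G$---but the algorithm you build on top of it does not work in the weighted setting, and the paper's proof is organized very differently precisely to avoid the problem you flag and then do not actually solve.

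The gap is concrete. Your algorithm finds an approximate minimum-weight balanced separator $W$ of $G$ and \emph{adds $W$ to the solution}. The bound you invoke is $w(W)\le\alpha\cdot\bigl(\opt+w(W_0)\bigr)$, and you correctly observe that $w(W_0)$ can be arbitrarily large compared to $\opt$. But the amortization you then sketch---charging $w(W)$ against the part of $S^*$ inside the subinstance---cannot control $w(W_0)$, because $W_0$ lives in $G-S^*$, not in $S^*$. A concrete failure: take $\mathscr{F}=\{K_3\}$, let $G$ be a single cycle with one vertex $v$ of weight $1$ and all other vertices of weight $M\gg 1$. Then $\opt=1$ (delete $v$), but any balanced separator of $G$ must contain at least one heavy vertex, so $w(W)=\Omega(M)$ and your first recursive level already outputs a solution of weight $\Omega(M)$. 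No choice of $W_0$ among the bags, and no charging against $S^*$, can repair this: the heavy separator weight is paid into the solution and never refunded. This is exactly why the paper explicitly says the set $X$ in its scheme may be ``possibly large but well-structured''---large in weight, not in cardinality---and must \emph{not} be charged to the solution.

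The paper's proof sidesteps this by never paying for $W_0$ at all. It enumerates all candidate sets $M$ of size at most $c+1$ (so $\OO(n^{c+1})$ candidates), and for each one computes an approximate balanced separator of $G'-M$, not of $G'$. For the correct guess $M=W_0$, the set $S^*$ itself is a balanced separator of $G'-M$, so the subroutine returns a set $S$ with $w(S)\le\alpha\cdot\opt$, with no dependence on $w(W_0)$. Crucially, $M$ is then \emph{put back} into the graph: after the recursive calls on $G'[V_1]$ and $G'[V_2]$, the algorithm forms $G'[(V_1\cup V_2\cup M)\setminus(S_1\cup S_2)]$, which consists of a constant-size set $M$ plus two $\mathscr{F}$-minor-free pieces, hence has treewidth $\OO(1)$, and is solved \emph{optimally} (weight at most $\opt$) by the dedicated special-case routine \alg{Special-WP}. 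Your remark that ``no additional special-case subroutine is required'' because $X$ can be absorbed into the brute-force base case misses this: the base case in the paper is not ``the graph is small,'' it is ``the graph is $M$ plus an $\mathscr{F}$-minor-free graph,'' and the special-case solver is exactly what lets the algorithm account for $M$ with cost $\opt$ rather than $w(M)$. Without the enumerate-$M$-and-re-insert step, the weighted version of the scheme has no valid analysis.
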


We mention that Theorem~\ref{thm:approx_thmpfd} is subsumed by a recent related result of Bansal, Reichman, and Umboh~\cite{BansalRU1}. They studied the edge deletion version of the {\sc Treewidth-$\eta$ Vertex Deletion} problem, under the name {\sc Bounded Treewidth Interdiction Problem}, and gave a bicriteria approximation algorithm.
In particular, for a graph $G$ and an integer $\eta>0$, they gave a polynomial time algorithm that finds a subset of edges $F'$ of $G$ such that $|F'| \leq \OO((\log n \log \log n)\cdot  {\sf opt})$ and the treewidth of $G-F'$ 
is $\OO(\eta \log \eta)$. With some additional effort~\cite{BansalRU2} their algorithm can be made to work for the {\sc Weighted Treewidth-$\eta$ Vertex Deletion} problem as well. 
In our setting where $\eta$ is a fixed constant,  this immediately implies a factor $\OO(\log n \log \log n)$ approximation algorithm for \wpfdshort.\footnote{One can run their algorithm first and remove the solution output by their algorithm to obtain a graph of treewidth at most $\OO(\eta \log \eta)$. Then one can find an optimal solution using standard dynamic programming.}
While the statement of Theorem~\ref{thm:approx_thmpfd} is subsumed by~\cite{BansalRU1}, the proof gives a simple and clean introduction to our methods. 

 
 \vspace{-0.3cm} 
 \paragraph{Weighted Chordal Vertex Deletion.} Formally, the \wcvd problem is defined as follows. 

\defparproblemOpt{{\sc Weighted Chordal Vertex Deletion (WCVD)}}{An undirected graph $G$ and a weight function $w: V(G)\rightarrow\mathbb{R}$.}{Find a minimum weight subset $S\subseteq V(G)$ such that $G- S$ is a chordal graph.}

The class of chordal graphs is a natural class of graphs that has been extensively studied from the viewpoints of Graph Theory and Algorithm Design. Many important problems that are {\sf NP}-hard on general graphs, such as {\sc Independent Set}, and {\sc Graph Coloring} are solvable in polynomial time once restricted to the class of chordal graphs \cite{Golumbic80}. Recall that a graph is chordal if and only if it does not have any induced cycle of length $4$ or more. 
Thus, {\sc Chordal Vertex Deletion (CVD)} can be viewed as a natural variant of the classic {\sc Feedback Vertex Set (FVS)}. Indeed, while the objective of {\sc FVS} is to eliminate all cycles, the \cdel problem only asks us to eliminate induced cycles of length $4$ or more. Despite the apparent similarity between the objectives of these two problems, the design of approximation algorithms for \wcdel is very challenging. In particular, chordal graphs can be dense---indeed, a clique is a chordal graph. As we cannot rely on the sparsity of output, our approach must deviate from those employed by approximation algorithms from {\sc FVS}. That being said, chordal graphs still retain some properties that resemble those of trees, and these properties are utilized by our algorithm.

Prior to our work, only two non-trivial approximation algorithms for \cdel were known. The first one, by Jansen and Pilipczuk \cite{JansenPili2016}, is a deterministic $\OO(\opt^2\log\opt\log n)$-factor approximation algorithm, and the second one, by Agrawal et al.~\cite{soda17chordal}, is a deterministic $\OO(\opt\log^2 n)$-factor approximation algorithm. The second result implies that \cdel admits an $\OO(\sqrt{n}\log n)$-factor approximation algorithm.\footnote{If $\opt\geq \sqrt{n}/\log n$, we output a greedy solution to the input graph, and otherwise we have that $\opt\log^2 n\leq \sqrt{n}\log n$, hence we call the $\OO(\opt\log^2 n)$-factor approximation algorithm.}
In this paper we obtain the first \afg-approximation algorithm for \wcdel. 
\begin{theorem}
\label{thm:newApprox2}
\cdel admits a deterministic $\OO(\log^2 n)$-factor approximation algorithm.
\end{theorem}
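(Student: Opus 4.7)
The plan is to instantiate the recursive scheme outlined in the Methodology section with chordal graphs playing the role of $\mathcal{F}$. The structural engine behind the argument is the clique-tree representation of chordal graphs: every chordal graph $H$ admits a maximal clique $X^{*}$ whose removal splits $H$ into components of size at most $|V(H)|/2$. Applied to $H = G - S^{*}$ for an optimum solution $S^{*}$, this yields a clique $X^{*} \subseteq V(G) \setminus S^{*}$ such that $S^{*} \cap V(G - X^{*})$ is a balanced vertex separator of $G - X^{*}$ of weight at most $w(S^{*})$. This is the well-structured set $X$ promised by the scheme.

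To turn this into an algorithm, I would first tidy the input by computing a $4$-approximate hitting set $F$ for induced $4$-cycles; this is a $4$-Hitting Set problem and costs at most $4 \cdot w(S^{*})$ since every induced $C_4$ must be hit by $S^{*}$. The resulting graph $G - F$ is $C_4$-free, and by the classical bound of~\cite{C4FreeNumCliques} has only $\OO(n^{2})$ maximal cliques, which can be enumerated in polynomial time using~\cite{TsukiyamaIAS77}. For each maximal clique $X$ of $G - F$, I would compute an approximate minimum-weight balanced vertex separator of $(G - F) - X$ using the deterministic $\OO(\log n)$-approximation of~\cite{BalancedSeparator}. Even though the cliques $X^{*}$ guaranteed above need not themselves be enumerated (some of their vertices may lie in $F$), the residue $X^{*} \setminus F$ is still a clique in $G - F$ and is contained in some enumerated maximal clique $X$; since $(G - F) - X \subseteq (G - F) - X^{*}$, the balanced-separator guarantee inherited from the clique tree is preserved. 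The best enumerated pair $(X, W)$ therefore satisfies $w(W) = \OO(\log n) \cdot w(S^{*})$. I would add $F \cup W$ to the output, leave $X$ temporarily in the graph, and recurse on each connected component of $(G - F) - X - W$.

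Because $W$ is a balanced vertex separator of $(G - F) - X$ at every recursive call, the recursion depth is $\OO(\log n)$, and the $\OO(\log n) \cdot w(S^{*})$ separator cost accrues once per level for a total of $\OO(\log^{2} n) \cdot w(S^{*})$. What remains after each recursive call returns is the special case instance $(G - S) \cup X$ of the scheme's third step: a graph whose vertex set is partitioned into a clique $X$ and a chordal subgraph, and we still owe the deletion of additional vertices that restore chordality. I would handle this by writing the natural LP relaxation of vertex deletion on the restricted instance, reshaping its fractional solution so that cutting every induced hole through $X$ reduces to separating a collection of source--sink pairs inside the chordal residue, and finally invoking a constant-factor approximation for Weighted Multicut on chordal graphs, the auxiliary result advertised in the introduction. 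Since the optimum of each special-case subproblem is bounded by $w(S^{*} \cap V)$ for the current subgraph, its contribution summed over all recursive levels is only $\OO(\log n) \cdot w(S^{*})$, absorbed into the $\OO(\log^{2} n)$ total.

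The hard part is the special case together with its reduction to Weighted Multicut on chordal graphs: one must argue that after LP manipulation, every induced hole of length at least four through $X$ is witnessed by a pair of vertices in the chordal part that must be separated, yielding a genuine multicut instance on a chordal host graph. Designing the constant-factor approximation for Weighted Multicut on chordal graphs itself---most plausibly via LP rounding guided by the clique tree, adapting known tree-multicut techniques to this near-tree setting---is the principal technical step that separates this proof from the clean but otherwise analogous argument for \wpfdshort.
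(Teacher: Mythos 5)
Your high-level skeleton agrees with the paper's: tidy up short obstructions, enumerate maximal cliques, find a clique-plus-separator balanced cut, recurse on the two halves, and finish by solving a clique-plus-chordal ``special case'' that is ultimately fed into \textsc{Weighted Multicut} on chordal graphs. Your enumeration argument (that $X^* \setminus F$ is contained in an enumerated maximal clique of $G-F$ and the balanced-separator guarantee survives the replacement) is also essentially the argument behind Observation~\ref{obs:divideClique} and Lemma~\ref{lem:divideClique}. So the outer recursion is in the right shape.

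There are, however, two genuine gaps. First, you only hit induced $C_4$'s. That is indeed enough to invoke the $\OO(n^2)$ bound on maximal cliques, but the paper's special-case algorithm \alg{CVD-APPROX} crucially requires that the graph contain no chordless cycle on at most $48$ vertices (the $C_{48}$-free invariant): this is what makes Lemma~\ref{lem:approxTerminate} hold (``$\alpha(G')<24$ implies $G'$ chordal'') and what guarantees that each step of the inner recursion shrinks $\alpha$ by a constant factor via the inequality $\frac{2}{3}\alpha+2\leq\frac{3}{4}\alpha$, which fails when $\alpha$ is small. Without hitting all short holes the inner recursion has no bounded depth and the $\OO(\log n)$ analysis of the special case breaks down. (One could in principle try to replace this with the Low-Value Invariant of the LP, but you would then have to redo the analysis; as written, your tidy-up step does not supply the invariant that the special case depends on.)

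Second, and more substantially, your treatment of the special case is where the paper's real work lives, and your sketch both under-specifies it and implicitly overclaims. You write as if the clique-plus-chordal instance reduces \emph{directly} to one \textsc{Weighted Multicut} instance on the chordal residue, giving an $\OO(1)$ factor for the special case and hence ``only $\OO(\log n)\cdot w(S^*)$'' total contribution across recursive levels. The paper does not do this and for good reason: the holes through $X$ need not be captured by ordinary source-sink cuts (they are induced paths avoiding specified neighborhoods, and the solution may also need to delete from $X$). What the paper actually does is another recursive algorithm (\alg{CVD-APPROX}) over the clique forest of the chordal residue $H'$, tracking the independence number $\alpha(G')$ and a carefully reweighted fractional solution with the Low-Value and Zero-Clique invariants, and invoking the chordal multicut algorithm only on the ``bad cycles'' that cross the splitting clique $M$ at each level (Lemmata~\ref{lem:relationBadCyc}--\ref{lem:fracCutPairAlg}). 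The resulting special-case guarantee is $\OO(\log n)$, not $\OO(1)$, and the final $\OO(\log^2 n)$ bound comes from plugging this $\OO(\log n)$ into the outer recursion, not from absorbing an $\OO(1)$ term. Your proof as stated has a hole exactly where you acknowledge ``the hard part'' lies, and the hinted direct reduction is not the mechanism the paper uses and is not obviously correct.
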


While this approximation algorithm follows our general scheme, it also requires us to incorporate several new ideas. 
In particular, to implement the third step of the scheme, we need to design a different 
$\OO(\log n)$-factor approximation algorithm for the special case of \wcdel where the vertex-set of the input graph $G$ can be partitioned into two sets, $X$ and $V(G)\setminus X$, such that $G[X]$ is a clique and $G[V(G)\setminus X]$ is a chordal graph. This approximation algorithm is again based on recursion, but it is more involved.
At each recursive call, it carefully manipulates a fractional solution of a special form. Moreover, to ensure that its current problem instance is divided into two subinstances that are independent and simpler than their origin, we introduce multicut constraints. In addition to these constraints, we keep track of the complexity of the subinstances,
which is measured via the cardinality of the maximum independent set in the graph.
Our multicut constraints result in an instance of {\sc Weighted Multicut}, which we ensure is on a chordal graph. Formally, the {\sc Weighted Multicut} problem is defined as follows. 

\defparproblemOpt{{\sc Weighted Multicut}}{An undirected graph $G$, a weight function $w: V(G)\rightarrow\mathbb{R}$ and a set $T=\{(s_1,t_1),\ldots,(s_k,t_k)\}$ of $k$ pairs of vertices of $G$.}{Find a minimum weight subset $S\subseteq V(G)$ such that for any pair $(s_i,t_i)\in{\cal T}$, $G- S$ does not have any path between $s_i$ and $t_i$.}

For {\sc Weighted Multicut} on chordal graphs, no constant-factor approximation algorithm was previously known.
We remark that {\sc Weighted Multicut} is NP-hard on trees~\cite{GVY96}, and hence it is also NP-hard on chordal graphs. We design the first such algorithm, which our main algorithm employs as a black box. 
\begin{theorem}
\label{thm:multicutGen}
{\sc Weighted Multicut} admits a constant-factor approximation algorithm on chordal graphs. 
\end{theorem}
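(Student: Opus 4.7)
My plan is to solve the natural LP relaxation of \textsc{Weighted Multicut} on the chordal graph $G$ and round its optimum using the clique-tree structure of $G$. Let $T$ be a clique tree of $G$: every node is a maximal clique of $G$ and, for each vertex $v \in V(G)$, the set $T_v$ of bags containing $v$ induces a connected subtree of $T$. The rounding will rely on the classical property that every minimal $s$-$t$ separator in a chordal graph is a clique; more precisely, for any non-adjacent $s,t$, every edge $BB'$ on the unique $T$-path between a bag containing $s$ and a bag containing $t$ yields a clique separator $B \cap B'$ through which every $s$-$t$ walk in $G$ must pass.

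The LP is the standard covering relaxation: minimize $\sum_v w_v x_v$ subject to $\sum_{v \in P} x_v \geq 1$ for every $s_i$-$t_i$ walk $P$, with $x_v \geq 0$. This LP is solvable in polynomial time via a shortest-walk separation oracle in which $x_v$ plays the role of a vertex length. Let $x^\star$ be an optimum with value $\mathrm{LP}^\star$, and let $d(\cdot,\cdot)$ denote the shortest-walk metric on $V(G)$ induced by the weights $x^\star_v$.

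To round, I would root $T$ at an arbitrary bag and process the terminal pairs bottom-up, ordered by the lowest bag $B_i^\star$ of $T$ that meets both of $T_{s_i}$ and $T_{t_i}$. For each pair in turn, I would grow a $d$-ball around the deeper endpoint (breaking $s_i/t_i$ symmetry arbitrarily), stopping as soon as that ball is enclosed by a clique separator coming from an edge of $T$; that separator clique is then inserted into the solution $S$. Because clique separators are the only obstructions to growth in a chordal graph, this mimics the Garg--Vazirani--Yannakakis primal-dual procedure for trees, but executed on $G$ itself using the clique tree as a guide.

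The main obstacle will be bounding the total cost by $O(1) \cdot \mathrm{LP}^\star$. The difficulty is that a single vertex of $G$ belongs to many bags of $T$, so naively it could be charged by several pairs whose LCAs lie on different parts of $T$. I plan to control this by (i) assigning each $v$ a canonical home bag, namely the topmost bag of $T_v$ in the rooted $T$, and (ii) observing that the region grown for a pair processed at $B_i^\star$ lies in the subtree of $G$ rooted at $B_i^\star$, so that every newly charged vertex has its home bag on the single root-to-$B_i^\star$ path. Together with the usual volume argument — the weight of the enclosing clique separator is at most a constant times the LP mass inside the ball, since the only fractional obstruction to the ball being cheap is the separator clique itself — this charging scheme yields the advertised constant-factor guarantee.
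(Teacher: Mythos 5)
Your proposal shares the LP relaxation and the clique-tree viewpoint with the paper, but the rounding strategy is genuinely different, and the step on which the constant factor hinges is missing. The paper does not process terminal pairs one at a time, and it grows no balls around terminals. After moving into the solution every vertex with $\mathbf{x}(v)\ge 1/8$ (charged directly against the LP) and rounding the remaining LP values up to multiples of $1/n$ at a factor-$4$ loss, it fixes one root vertex $r_G$ in a root bag of the clique tree, computes a single $\mathbf{x}$-weighted shortest-path labelling $d(\cdot)$ from $r_G$, and cuts along one randomly shifted family of concentric shells spaced $2/8$ apart: $v$ is cut iff the interval $(d(v)-\mathbf{x}(v),\,d(v)]$ contains a point of the form $(r+2j)/8$. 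Because $\mathbf{x}(v)<1/8$ for every surviving vertex, the probability $v$ is cut is at most $8\,\mathbf{x}(v)$, so the expected cut weight is at most $8\,w(\mathbf{x})$, and derandomization is by trying the $n{+}1$ discrete shifts. All the work is in feasibility: using the topmost bag of the clique tree touched by a hypothetical surviving $s$-$t$ path, one splices together an $s$-$t$ walk of total $\mathbf{x}$-length strictly less than $1$, contradicting the LP constraint.

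Two concrete gaps in your plan. First, a $d$-ball grown around $s_i$ in a chordal graph is not in general enclosed by a single clique separator $B\cap B'$ at any stopping radius $\rho<1/2$: the ball can simultaneously cross many $T$-edges, so what you would need to cut is a union of separators, and you give no bound on that union. Second, and more fundamentally, the claim that ``the weight of the enclosing clique separator is at most a constant times the LP mass inside the ball'' is exactly what region-growing does \emph{not} hand you for free; the generic region-growing bound is logarithmic in the number of commodities, and the constant that Garg--Vazirani--Yannakakis obtain on trees comes from a primal-dual argument, not from such a volume inequality. Your home-bag charging controls how many pairs can touch a vertex but does not supply the missing per-region inequality, so the $O(1)$ guarantee is unsupported. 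The paper avoids the issue entirely: its constant comes from the one-line per-vertex probability calculation above, and the role of chordality (every $s$-$t$ path must pass through a bag on the clique-tree path between $T_s$ and $T_t$) is confined to the feasibility proof.
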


This algorithm is inspired by the work of Garg, Vazirani and Yannakakis on {\sc Weighted Multicut} on trees \cite{GVY96}. Here, we carefully exploit the well-known characterization of the class of chordal graphs as the class of graphs that admit clique forests. We believe that this result is of independent interest. The algorithm by  Garg, Vazirani and Yannakakis \cite{GVY96} is a classic primal-dual algorithm. A more recent algorithm, by Golovin, Nagarajan and Singh \cite{golovin06}, uses total modularity to obtain a different algorithm for {\sc Multicut} on trees.

 \vspace{-.3cm}
\paragraph{Weighted Distance Hereditary Vertex Deletion.} We start by formally defining the \wDHfull problem. 

\defparproblemOpt{\wDHfull (\wDH)}{An undirected graph $G$ and a weight function $w: V(G)\rightarrow\mathbb{R}$.}
{Find a minimum weight subset $S\subseteq V(G)$ such that $G- S$ is a distance hereditary graph.}
 
 A graph $G$ is a \emph{distance hereditary graph} 
 (also called a completely separable graph~\cite{hammer1990completely}) if the distances between vertices in every connected induced subgraph of $G$ are the same as in the graph $G$. Distance hereditary graphs were named and first studied by Hworka~\cite{howorka1977characterization}. However,  an equivalent family of graphs was earlier studied by Olaru and Sachs~\cite{sachs1970berge} and shown to be perfect. 
 It was later discovered that these graphs are precisely the graphs of rankwidth 1~\cite{Oum05}. 
 
 Rankwidth is a graph parameter introduced by Oum and Seymour~\cite{OumS06} to approximate yet another graph parameter called Cliquewidth. 
 The notion of cliquewidth was defined by  Courcelle and Olariu~\cite{CourcelleO00} as a measure of how ``clique-like''  the input graph is. This is similar to the notion of treewidth, which measures how ``tree-like'' the input graph is. 
 One of the main motivations was that several {\sf NP}-complete problems
 become tractable on the family of cliques (complete graphs), the assumption was that these algorithmic properties extend to ``clique-like'' graphs~\cite{CourcelleMR00}. 
However, computing cliquewidth and the corresponding cliquewidth decomposition seems to be computationally intractable. 
This then motivated the notion of rankwidth, which is a graph parameter that approximates cliquewidth well while also being algorithmically tractable~\cite{OumS06,Oum08}. For more information on cliquewidth and rankwidth, we refer to the surveys by Hlinen{\'{y}} et al.~\cite{HlinenyOSG08} and Oum~\cite{Oum16}. 

As algorithms for {\sc Treewidth-$\eta$ Vertex Deletion} are applied as subroutines to solve many graph problems, we believe that algorithms for \wretad (\wretadshort) will be useful in this respect.
In particular,  {\sc Treewidth-$\eta$ Vertex Deletion} has been considered in designing efficient approximation, kernelization and fixed parameter tractable algorithms for \wpfdshort and its unweighted counterpart \pfd~\cite{BansalRU1,FominLMPS16,FominLRS11,FominLS12,FominLST10}. Along  similar lines, we believe that \wretadshort and its unweighted counterpart will be useful in designing efficient approximation, kernelization and fixed parameter tractable algorithms for \genfvd where $\cal F$ is characterized by a finite family of forbidden {\em vertex minors}~\cite{Oum05}.

Recently, Kim and Kwon~\cite{OsmallfreeDHVDkim} designed an 
$\OO(\opt^2 \log n)$-factor approximation algorithm for \DHfull (\DHsmall). This result implies that \DHsmall  admits an $\OO(n^{2/3}\log n)$-factor approximation algorithm. In this paper,
we take first step towards obtaining good approximation algorithm for  \wretadshort  by 
designing a \afg-factor approximation algorithm for 
\wDH. 
\begin{theorem}
\label{thm:newApprox2DH}
\wDH or \wronedshort  admits an \afdh-factor approximation algorithm. 
\end{theorem}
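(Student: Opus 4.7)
My plan is to instantiate the general recursive scheme for the family of distance hereditary graphs, closely paralleling the \wcvd case but adapted to distance hereditary structure. First, I would enumerate the short \dho{}s: forbidden induced subgraphs of distance hereditary graphs on at most a constant number of vertices, namely cycles of length five and six, the house, the gem and the domino, together with any other minimal obstructions of constant size. Greedily packing vertex-disjoint copies of such \dho{}s and adding all their vertices to the output solution loses only a constant factor against $\opt$, yet guarantees that the residual graph contains no short \dho. This ``tidied-up'' graph should then admit only polynomially many candidate well-structured sets $X$, which play the role that maximal cliques play for chordal graphs; the natural candidates come from the canonical split decomposition of a distance hereditary graph, whose prime bags are cliques or stars, so $X$ will itself be a clique or star in $G$, and these are enumerable in polynomial time once short obstructions have been removed.

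Second, I would establish the structural lemma driving the recursion: whenever $G$ is free of short \dho{}s, there is a candidate $X$ such that an optimum solution $S^{*}$ together with $X$ forms a balanced vertex separator of $G$. Assuming this lemma, I would iterate over all candidates $X$; for each, use the $\OO(\sqrt{\log n})$-approximation of Feige--Hajiaghayi--Lee (or the deterministic $\OO(\log n)$-approximation of Leighton--Rao) to compute a balanced separator $W$ of $G - X$; and pick the candidate minimizing $w(X) + w(W)$. Adding $W$ to the partial solution and recursing on each connected component of $G - (X \cup W)$ reduces the problem to a ``special case'' instance where the vertex set partitions as $V' \uplus X$ with $G[V']$ distance hereditary and $X$ a well-structured set.

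The final phase is a dedicated approximation algorithm for this special case. Mirroring the \wcvd treatment, I would formulate an LP relaxation tailored to the star-or-clique structure of $X$, carefully round its optimal fractional solution, and reduce the remaining problem to a weighted multicut-type instance on a highly structured graph, on which an $\OO(\log n)$-factor or constant-factor approximation is available. Combining the three phases, the claimed \afdh-factor bound comes as a product of $\OO(\log n)$ factors coming from the recursion depth, the balanced-separator subroutine, and the special-case routine, respectively. The main obstacle I anticipate is the structural lemma in the second phase: unlike the chordal case, where the clique-tree decomposition makes the analogous claim nearly immediate, the split decomposition of a distance hereditary graph is less symmetric and requires arguing simultaneously about clique-type and star-type prime bags; ensuring that some single candidate $X$ absorbs enough of an optimum to guarantee balance, and bounding the candidate family to polynomial size, is where the bulk of the technical work will live.
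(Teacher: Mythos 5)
Your high-level scheme matches the paper's: remove short obstructions, then recursively split via a well-structured set plus an approximate balanced separator, and finally solve a structured ``special case'' by LP manipulation and a reduction to multicut. However, several of the specific choices you make diverge from the paper's, and at least two steps have genuine gaps.

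\emph{The well-structured set is a biclique, not a clique or a star.} You propose that $X$ be a clique or a star, motivated by the split decomposition of distance hereditary graphs. The paper instead works with \emph{bicliques} in the sense defined there (a complete bipartite subgraph whose sides need not be independent; a clique is a special case, as is a star). Both the structural lemma and the candidate enumeration are imported from Kim and Kwon~\cite{OsmallfreeDHVDkim}: Lemma~3.8 gives a balanced separator of the form $K\uplus X$ with $K$ a biclique (or empty) and $w(X)$ at most $\OO(\log n)\cdot\opt$, and Lemma~3.5 shows that once small \dho{}s are removed the graph has only $\OO(n^3)$ maximal bicliques, enumerable in polynomial time. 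You correctly identify the structural lemma as the heart of the matter, but restricting $X$ to cliques or stars may not suffice; the paper does not attempt it, and it is the biclique structure that makes the Kim--Kwon arguments go through. In the special-case subroutine, the decomposition used to find a balancing biclique inside the distance hereditary part is the rank-width-$1$ decomposition (a binary tree whose tree-edges induce biclique cuts), not the split decomposition.

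\emph{The tidying step must be done with a weighted hitting set.} Your proposal to greedily pack vertex-disjoint small \dho{}s and dump all their vertices into the solution only gives a constant-factor loss in the \emph{unweighted} case. With arbitrary vertex weights, a single heavy vertex can lie on many disjoint obstructions, and taking whole obstructions is arbitrarily wasteful. The paper instead solves a {\sc Weighted $c'$-Hitting Set} instance using the standard $c'$-approximation (here $c'=\numdhd$), which is the correct tool in the weighted setting.

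\emph{The multicut step and the factor accounting.} In the special case the multicut instances are \emph{not} chordal, so the constant-factor chordal multicut of Theorem~\ref{thm:multicutGen} is unavailable. The paper falls back on the general GVY $\OO(\log n)$-factor algorithm (Theorem~\ref{thm:multicut}), and that extra $\log$ is exactly why the final bound is $\OO(\log^3 n)$ rather than $\OO(\log^2 n)$ as for \wcdel. The correct accounting is: one $\log n$ from the recursion depth \emph{inside} the special case (tracked via $\alpha(G')$, the number of fractionally nonzero vertices in the distance hereditary part), one $\log n$ from GVY multicut, and one $\log n$ from the size-based recursion of the outer general algorithm. The balanced separator's $\log n$ factor is a lower-order term absorbed in the analysis, not a multiplicative contributor as your sketch suggests. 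Your description of the final bound as a product of ``recursion depth, separator, and special-case'' factors is therefore off: if the special-case routine is counted as $\OO(\log^2 n)$, that accounting would yield $\log^4 n$.

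In short: right recursive skeleton, but you would need to replace cliques/stars by bicliques, import (or re-prove) the Kim--Kwon counting and separator lemmas, replace greedy packing by weighted hitting set approximation, and settle for the $\OO(\log n)$ general multicut, before the pieces fit together to give the claimed $\OO(\log^3 n)$ factor.
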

We note that several steps of our approximation algorithm for \wronedshort can be generalized for an approximation algorithm for \wretadshort  and thus we believe that our approach should yield an  
\afg-factor approximation algorithm for  \wretadshort. We leave that as an interesting open problem for the future.

%
%

%
%
%
%
%

%

\section{Preliminaries}
For a positive  integer $k$, we use  $[k]$  as a shorthand for  $\{1,2,\ldots,k\}$. Given a function $f: A\rightarrow B$ and a subset $A'\subseteq A$, we let $f|_{A'}$ denote the function $f$ restricted to the domain $A'$.

\bigskip
{\noindent\bf Graphs.} Given a graph $G$, we let $V(G)$ and $E(G)$ denote its vertex-set and edge-set, respectively. In this paper, we only consider undirected graphs. We let $n=|V(G)|$ denote the number of vertices in the graph $G$, where $G$ will be clear from context. The \emph{open neighborhood}, or simply the \emph{neighborhood}, of a vertex $v\in V(G)$ is defined as $N_G(v) = \{w \mid \{v,w\} \in E(G) \}$. The \emph{closed neighborhood} of $v$ is defined as $N_G[v] = N_G(v) \cup \{ v \}$. 
The \emph{degree} of $v$ is defined as $d_G(v) = |N_G(v)|$. We can extend the definition of the neighborhood of a vertex to a set of vertices as follows. Given a subset $U \subseteq V(G)$, $N_G(U) = \bigcup_{u\in U} N_G(u)$ and $N_G[U] = \bigcup_{u\in U} N_G[u]$. The \emph{induced subgraph} $G[U]$ is the graph with vertex-set $U$ and edge-set $\{\{u,u'\}~|~u,u'\in U, \text{ and } \{u,u'\} \in E(G)\}$. Moreover, we define $G - U$ as the induced subgraph $G[V(G) \setminus U]$. We omit subscripts when the graph $G$ is clear from context. For graphs $G$ and $H$, by $G \cap H$, we denote the graph with vertex set as $V(G)\cap V(H)$ and edge set as $E(G) \cap E(H)$. An \emph{independent set} in $G$ is a set of vertices such that there is no edge in $G$ between any pair of vertices in this set. The \emph{independence number} of $G$, denoted by $\alpha(G)$, is defined as the cardinality of the largest independent set in $G$. A \emph{clique} in $G$ is a set of vertices such that there is an edge in $G$ between every pair of vertices in this set.

A \emph{path} $P=(x_1, x_2, \ldots, x_\ell)$ in $G$ is a subgraph of $G$ where
$V(P) = \{ x_1, x_2, \ldots, x_\ell \} \subseteq V(G)$ and $E(P) = \{ \{x_1,x_2\}, \{x_2, x_3\}, \ldots, \{x_{\ell-1},x_\ell\}\}\subseteq E(G)$, where $\ell\in[n]$.
The vertices $x_1$ and $x_\ell$ are called the \emph{endpoints} of the path $P$ and the remaining vertices in $V(P)$ are called the \emph{internal vertices} of $P$.
We also say that $P$ is a path between $x_1$ and $x_\ell$ or connects $x_1$ and $x_\ell$. A \emph{cycle} $C=(x_1, x_2, \ldots, x_\ell)$ in $G$ is a subgraph of $G$ where
$V(C) = \{ x_1, x_2, \ldots, x_\ell \} \subseteq V(G)$ and $E(C) = \{ \{x_1,x_2\}, \{x_2, x_3\}, \ldots, \{x_{\ell-1},x_\ell\}, \{x_\ell, x_1\}\} \subseteq E(G)$, i.e., it is a path with an additional edge between $x_1$ and $x_\ell$.
The graph $G$ is \emph{connected} if there is a path between every pair of vertices in $G$,
otherwise $G$ is \emph{disconnected}. A connected graph without any cycles is a \emph{tree}, and a collection of trees is a \emph{forest}. A maximal connected subgraph of $G$ is called a \emph{connected component} of $G$. Given a function $f: V(G)\rightarrow\mathbb{R}$ and a subset $U\subseteq V(G)$, we denote $f(U)=\sum_{v\in U}f(v)$. Moreover, we say that a subset $U\subseteq V(G)$ is a {\em balanced separator for $G$} if for each connected component $C$ in $G- U$, it holds that $|V(C)|\leq \frac{2}{3}|V(G)|$. We refer the reader to~\cite{diestel-book} for details on standard graph theoretic notations and terminologies that are not explicitly defined here.

\paragraph{Forest Decompositions.} A \emph{forest decomposition} of a graph $G$ is a pair $(F,\beta)$ where $F$ is forest, and $\beta:V(T) \rightarrow 2^{V(G)}$ is a function that satisfies the following:
\begin{enumerate}[(i)]
\setlength{\itemsep}{-2pt}
    \item $\bigcup_{v \in V(F)} \beta(v) = V(G)$;
    \item for any edge $\{v,u\} \in E(G)$, there is a node $w \in V(F)$ such that $v,u \in \beta(w)$;
    \item for any $v \in V(G)$, the collection of nodes $T_v = \{ u \in V(F) \mid v \in \beta(u)\}$ is a subtree~of~$F$.
\end{enumerate}

For $v \in V(F)$, we call $\beta(v)$ the \emph{bag} of $v$, and for the sake of clarity of presentation, we sometimes use $v$ and $\beta(v)$ interchangeably. We refer to the vertices in $V(F)$ as {\em nodes}.
A \emph{tree decomposition} is a forest decomposition where $F$ is a tree. For a graph $G$, by ${\sf tw}(G)$ we denote the minimum over all possible \emph{tree decompositions} of $G$, the maximum size of a bag minus one in that \emph{tree decomposition}. 


\paragraph{Minors.}
Given a graph $G$ and an edge $\{u,v\} \in E(G)$, the graph $G/e$ denotes the graph obtained from $G$ by contracting the
edge $\{u,v\}$, that is, the vertices $u,v$ are deleted from $G$ and a new vertex $uv^{\star}$ is added to $G$ which is adjacent to
the all the neighbors of $u,v$ previously in $G$ (except for $u,v$). A graph $H$ that is obtained by a sequence of edge contractions in $G$ is said to be a contraction of $G$. A graph $H$
is a \emph{minor} of a $G$ if $H$ is the contraction of some subgraph of $G$. We say that a graph $G$ is $F$-minor free when $F$ is not a minor of $G$. Given a family $\cal F$ of graphs, we say that a graph $G$ is $\cal F$-minor free, if for all $F \in {\cal F}$, $F$ is not a minor of $G$. It is well known that if $H$ is a minor of $G$, then ${\sf tw}(H) \leq {\sf tw}(G)$. A graph is \emph{planar} if it is $\{K_5,K_{3,3}\}$-minor free~\cite{diestel-book}. Here, $K_5$ is a clique on $5$ vertices and $K_{3,3}$ is a complete bipartite graph with both sides of bipartition having $3$ vertices. 


\bigskip
{\noindent\bf Chordal Graphs.} Let $G$ be a graph. For a cycle $C$ on at least four vertices, we say that $\{u,v\} \in E(G)$ is a {\em chord} of $C$ if $u,v \in V(C)$ but $\{u,v\} \notin E(C)$. A cycle $C$ is \emph{chordless} if it contains at least four vertices and has no chords. The graph $G$ is a \emph{chordal graph} if it has no chordless cycle as an induced subgraph. A \emph{clique forest} of $G$ is a forest decomposition of $G$ where every bag is a maximal clique. The following lemma shows that the class of chordal graphs is exactly the class of graphs which have a clique forest.

\begin{lemma}[\cite{Golumbic80}]\label{lem:cliqueForest}
A graph $G$ is a chordal graph if and only if $G$ has a clique forest. Moreover, a clique forest of a chordal graph  can be constructed in polynomial time.
\end{lemma}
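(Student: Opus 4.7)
The plan is to prove the two directions of the equivalence separately and then read off a polynomial-time construction from the forward direction.

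For the direction ``chordal implies existence of a clique forest'', I would induct on $|V(G)|$. By Dirac's classical lemma, every chordal graph on at least two vertices has a simplicial vertex $v$---one whose open neighborhood $N_G(v)$ is a clique. Since chordality is hereditary, $G - v$ is chordal and, by the induction hypothesis, admits a clique forest $(F',\beta')$. The set $K := N_G[v]$ is a maximal clique of $G$, because any vertex adjacent to every member of $K$ must be adjacent to $v$ and hence already lie in $K$. Since $N_G(v)$ is a clique of $G - v$, some bag $\beta'(u^\star)$ contains $N_G(v)$; I then construct $(F,\beta)$ by attaching a fresh node $u$ to $u^\star$ with $\beta(u) := K$, leaving all other bags unchanged. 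A short verification confirms the three forest-decomposition axioms and that every bag is a maximal clique of $G$.

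For the converse, assume $(F,\beta)$ is a clique forest of $G$ and suppose, for contradiction, that $G$ contains a chordless cycle $C = (x_1, \ldots, x_\ell)$ with $\ell \geq 4$. For each $i$, let $T_i := \{u \in V(F) : x_i \in \beta(u)\}$; this is a subtree of $F$ by axiom (iii). Consecutive subtrees (indices differing by one modulo $\ell$) intersect, because each edge of $C$ lies in a common bag; non-consecutive subtrees are disjoint, since a common bag would be a clique containing both $x_i$ and $x_j$, forcing the chord $\{x_i, x_j\} \in E(G)$. Because $T_1$ and $T_3$ are disjoint subtrees of the forest $F$, some edge $e$ of $F$ separates them, and $T_2$, meeting both $T_1$ and $T_3$, must contain $e$. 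If $\ell = 4$, then $T_4$ similarly meets both $T_1$ and $T_3$ and therefore also contains $e$, so $T_2 \cap T_4 \neq \emptyset$, contradicting disjointness. If $\ell \geq 5$, then every $T_i$ with $i \in \{4,\ldots,\ell\}$ is disjoint from $T_2$ and so does not contain $e$, hence lies entirely on one side of $e$; but $T_4$ meets $T_3$ (so it lies on the $T_3$-side) while $T_\ell$ meets $T_1$ (so it lies on the $T_1$-side), and the chain of consecutive meetings $T_4, T_5, \ldots, T_\ell$ forces all of them onto the same side, a contradiction.

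The constructive claim is immediate from the forward direction: Lex-BFS produces a perfect elimination ordering of $G$ in linear time, so one can peel off simplicial vertices one by one, attach at most one new bag per iteration, and assemble a clique forest in polynomial time overall. The main obstacle in this plan is the side-classification step in the backward direction for $\ell \geq 5$---one must justify that each $T_i$ with $i \in \{4, \ldots, \ell\}$ avoids $e$ (because $T_i \cap T_2 = \emptyset$ and both endpoints of $e$ lie in $T_2$) and that any two consecutive subtrees among $T_4, \ldots, T_\ell$ share a side (because they share a vertex and neither contains $e$). Once these observations are in place, the collision between $T_4$'s $T_3$-side membership and $T_\ell$'s $T_1$-side membership closes the proof.
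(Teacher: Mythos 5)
The paper cites this lemma to Golumbic's textbook and does not supply a proof, so there is no in-paper argument to compare against; I will just assess your attempt on its own terms. Your backward direction (clique forest $\Rightarrow$ chordal) is correct and carefully argued, and the algorithmic remark via Lex-BFS is the standard one.

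There is, however, a genuine gap in the forward direction. After finding a bag $\beta'(u^\star) \supseteq N_G(v)$ in the clique forest of $G - v$, you always attach a fresh node $u$ with $\beta(u) = N_G[v]$ and assert that ``a short verification confirms \dots\ that every bag is a maximal clique of $G$.'' This fails exactly when $N_G(v)$ is itself a maximal clique of $G - v$, for then $\beta'(u^\star) = N_G(v)$, and once $v$ is reinserted this bag is a proper subset of the clique $N_G[v]$, hence no longer a maximal clique of $G$. A minimal counterexample: let $G = K_2$ on vertices $a,b$ and $v=a$; the clique forest of $G-v$ is a single node with bag $\{b\}$, and your construction produces bags $\{b\}$ and $\{a,b\}$, of which $\{b\}$ is not maximal in $K_2$. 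The standard fix is a two-case construction: if $\beta'(u^\star) = N_G(v)$, \emph{replace} that bag by $N_G[v]$ (no new node); otherwise attach a fresh node as you describe. (Every other bag $M$ of $G-v$ does remain a maximal clique of $G$, since $M \cup \{v\}$ a clique would force $M \subseteq N_G(v)$ and then $M = N_G(v)$ by maximality.) You should also handle the degenerate case $N_G(v)=\emptyset$, where one simply starts a new tree consisting of the single bag $\{v\}$. The same case distinction must be folded into your ``peel simplicial vertices via Lex-BFS'' algorithm, since at some steps one enlarges an existing bag rather than appending a new one.
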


Given a subset $U\subseteq V(G)$, we say that $U$ {\em intersects} a chordless cycle $C$ in $G$ if $U\cap V(C)\neq\emptyset$.
Observe that if $U$ \emph{intersects} every chordless cycle of $G$, then $G- U$ is a chordal graph.

\section{Approximation Algorithm for \wpfdshort}\label{sec:planar}

In this section we prove Theorem~\ref{thm:approx_thmpfd}. 
We can assume that the weight $w(v)$ of each vertex $v\in V(G)$ is positive, else we can insert $v$ into any solution. 
Below we state a result from~\cite{Robertson:graph-minor},  which will be useful in our algorithm. 

\begin{proposition}[\cite{Robertson:graph-minor}] \label{prop:plaar-minor-free}
Let $\mathscr F$ be a finite set of graphs such that $\mathscr F$ contains a planar graph. Then, any graph $G$ 
that excludes any graph from $\mathscr F$ as a minor 
satisfies ${\sf tw}(G) \leq c=c(\mathscr{F})$.
\end{proposition}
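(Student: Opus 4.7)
The plan is to derive this as a straightforward consequence of two classical ingredients from the Graph Minors series: the Excluded Grid Theorem and the fact that every planar graph embeds as a minor in a sufficiently large grid. Concretely, let $H \in \mathscr{F}$ be a planar graph guaranteed by the hypothesis, set $h = |V(H)|$, and let $\Gamma_r$ denote the $r \times r$ grid.

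First I would establish the elementary geometric lemma that there exists an integer $r = r(H)$ such that $H$ is a minor of $\Gamma_r$. A clean way to see this is to use a planar straight-line embedding of $H$ (via F\'ary's theorem), rescale and perturb so that all vertices lie on distinct grid points of a sufficiently fine grid, and then route the edges along pairwise vertex-disjoint paths in the grid (for instance, channel routing along rows and columns, with a constant-width ``bus'' per edge). This gives a bound like $r \leq \mathrm{poly}(h)$, but any finite bound depending only on $H$ suffices for the proposition. Doing this for each planar member of $\mathscr{F}$ (or just for one of them) produces a single integer $r_0 = r_0(\mathscr{F})$ such that $H$ is a minor of $\Gamma_{r_0}$.

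Next I would invoke the Excluded Grid Theorem of Robertson and Seymour as a black box: there exists a function $f : \mathbb{N} \to \mathbb{N}$ such that every graph $G$ with $\mathrm{tw}(G) \geq f(r)$ contains $\Gamma_r$ as a minor. Applied with $r = r_0$, this yields a constant $c = c(\mathscr{F}) := f(r_0) - 1$ with the following property: if $G$ satisfies $\mathrm{tw}(G) > c$, then $G$ contains $\Gamma_{r_0}$ as a minor, and hence, by transitivity of the minor relation together with the embedding from the previous step, $G$ contains $H$ as a minor. This contradicts the assumption that $G$ is $\mathscr{F}$-minor free. Therefore $\mathrm{tw}(G) \leq c(\mathscr{F})$, as claimed.

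The main obstacle is of course the Excluded Grid Theorem itself, whose proof is a deep result occupying much of the Graph Minors series; I would simply cite it from \cite{Robertson:graph-minor}. The only genuine work in the proposition's proof is the embedding step, and there the only minor care is to make sure $r_0$ depends only on $\mathscr{F}$ and not on the graph $G$ being analyzed — this is automatic because the embedding is performed once and for all for a fixed planar member of $\mathscr{F}$.
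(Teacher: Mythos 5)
Your argument is correct: combining the fact that every planar graph is a minor of some fixed grid with the Excluded Grid Theorem is precisely the standard derivation of this statement, and your constant $c(\mathscr{F})$ depends only on $\mathscr{F}$ as required. The paper itself gives no proof and simply cites Robertson--Seymour, so your proposal matches the argument underlying the citation rather than deviating from anything in the paper.
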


We let $c=c(\mathscr{F})$ to be the constant returned by Proposition~\ref{prop:plaar-minor-free}. The approximation algorithm for \wpfdshort comprises of two components. The first component handles the special case where the vertex set of input graph $G$ can be partitioned into two sets $C$ and $X$ such that $|C| \leq c+1$ and $H=G[X]$ is an $\mathscr{F}$-minor free graph. We note that there can be edges between vertices in $C$ and vertices in $H$. We show that for these special instances, in polynomial time we can compute the size of the optimum solution and a set realizing it. 

The second component is a recursive algorithm that solves general instances of the problem. Here, we gradually disintegrate the general instance until it becomes an instance of the special type where we can resolve it in polynomial time. More precisely, for each guess of $c+1$ sized subgraph $M$ of $G$, we find a small separator $S$ (using an approximation algorithm) that together with $M$ breaks the input graph into two graphs significantly smaller than their origin. It first removes $M \cup S$, and solves each of the two resulting subinstances by calling itself recursively; then, it inserts $M$ back into the graph, and uses the solutions it obtained from the recursive calls to construct an instance of the special case which is then solved by the first component. 

\subsection{Constant sized graph + $\mathscr{F}$-minor free graph}\label{sec:exactPF}

We first handle the special case where the input graph $G$ consists of a graph $M$ of size at most $c+1$ and an $\mathscr{F}$-minor free graph $H$. We refer to this algorithm as \alg{Special-WP}. More precisely, along with the input graph $G$ and the weight function $w$, we are also given a graph $M$ with at most $c+1$ vertices and an $\mathscr{F}$-minor free graph $H$ such that $V(G)=V(M) \cup V(H)$, where the vertex-sets $V(M)$ and $V(H)$ are disjoint. Note that the edge-set $E(G)$ may contain edges between vertices in $M$ and vertices in $H$. We will show that such instances may be solved optimally in polynomial time. We start with the following easy observation.

\begin{observation} 
Let $G$ be a graph with $V(G)= X \uplus Y$, such that $|X| \leq c+1$ and $G[Y]$ is an $\mathscr{F}$-minor free graph. Then,  the  treewidth of $G$ is at most $2c+1$.
\end{observation}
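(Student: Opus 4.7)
The plan is to use the tree decomposition of $G[Y]$ guaranteed by Proposition~\ref{prop:plaar-minor-free} and enlarge every bag by the entire set $X$. Since $G[Y]$ excludes some graph in $\mathscr{F}$ as a minor, Proposition~\ref{prop:plaar-minor-free} yields a tree decomposition $(T,\beta)$ of $G[Y]$ with $\max_{t\in V(T)}|\beta(t)|\leq c+1$. I would then define a new labelling $\beta':V(T)\to 2^{V(G)}$ by $\beta'(t)=\beta(t)\cup X$ for every node $t\in V(T)$ (and, if $V(T)=\emptyset$ because $Y=\emptyset$, replace it by a single node with bag $X$).

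Next I would verify that $(T,\beta')$ is a tree decomposition of $G$. The union of the $\beta'(t)$ covers $Y$ because $(T,\beta)$ already does, and covers $X$ because $X\subseteq\beta'(t)$ for every node $t$. For the edge covering condition, edges inside $G[Y]$ are covered by $(T,\beta)$; edges inside $G[X]$ appear in every bag of $(T,\beta')$; and for an edge $\{x,y\}$ with $x\in X$ and $y\in Y$, any node $t$ with $y\in\beta(t)$ now also contains $x$. The connectedness condition is preserved: for $y\in Y$ the set of nodes containing $y$ is the same subtree as before, and for $x\in X$ this set is all of $V(T)$, which is connected.

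Finally, the size bound follows immediately: $|\beta'(t)|\leq |\beta(t)|+|X|\leq (c+1)+(c+1)=2c+2$, so $\mathsf{tw}(G)\leq 2c+1$. There is no real obstacle here; the only minor care needed is the degenerate case $Y=\emptyset$, which is handled by using a trivial one-node decomposition whose bag equals $X$, and then the same bound $|X|\leq c+1\leq 2c+2$ holds.
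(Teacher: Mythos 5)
Your proof is correct and is precisely the standard argument this observation implicitly relies on: take a width-$c$ tree decomposition of $G[Y]$ (which exists by Proposition~\ref{prop:plaar-minor-free}) and add $X$ to every bag, giving bags of size at most $2c+2$. The paper states the observation without proof, and your write-up, including the check of the three tree-decomposition conditions and the degenerate case $Y=\emptyset$, fills it in correctly.
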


\begin{lemma}
Let $G$ be a graph of treewidth $t$ with a non-negative weight function $w$ on the vertices, and let $\mathscr{F}$ be a finite family of graphs.
Then, one can compute a minimum weight vertex set $S$ such that $G - S$ is $\mathscr{F}$-minor free, in time ${\it f}(q,t)\cdot n$, where $n$ is the number of vertices in $G$ and $q$ is a constant that depends only on $\mathscr{F}$.
\end{lemma}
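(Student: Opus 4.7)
The plan is to appeal to algorithmic meta-theorems for graphs of bounded treewidth. First, I would compute a tree decomposition of $G$ of width $O(t)$ in time $2^{O(t)}\cdot n$, for instance by invoking the known constant-factor approximation of Bodlaender, Drange, Dregi, Fomin, Lokshtanov, and Pilipczuk. Thus, without loss of generality, a width-$O(t)$ tree decomposition of $G$ is available.

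Next, I would express the target property in monadic second-order logic. For each fixed graph $H$, ``$G$ contains $H$ as a minor'' is expressible in $\mathrm{MSO}_2$: existentially quantify over $|V(H)|$ pairwise disjoint vertex subsets $B_1,\dots,B_{|V(H)|}$ (the branch sets), assert that each $G[B_i]$ is connected (``every nonempty proper subset of $B_i$ has an outgoing edge within $B_i$''), and assert that for every edge $\{i,j\}$ of $H$ some edge of $G$ joins $B_i$ and $B_j$. Taking the conjunction over $H\in\mathscr{F}$ of the negations yields an $\mathrm{MSO}_2$ formula $\phi_{\mathscr{F}}$ whose size $q$ depends only on $\mathscr{F}$. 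The deletion problem then reads: find $S\subseteq V(G)$ minimizing $\sum_{v\in S}w(v)$ subject to $G[V(G)\setminus S]\models \phi_{\mathscr{F}}$, or equivalently, relativizing quantifiers to the complement of a free vertex-set variable $S$, an $\mathrm{MSO}_2$-definable linear optimization over subsets of $V(G)$. This fits the LinEMSOL framework of Courcelle, Makowsky, and Rotics, which guarantees a solution in time $f(q,t)\cdot n$ on graphs given together with a width-$O(t)$ tree decomposition; this delivers the running time claimed in the lemma.

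The main delicate point on this route is invoking the optimization variant rather than the pure decision version of Courcelle's theorem, together with the finiteness of $\mathscr{F}$ which is what guarantees that $|\phi_{\mathscr{F}}|$ is bounded by a constant $q=q(\mathscr{F})$. As an explicit alternative I would sketch a direct dynamic program: the state at each bag $X$ records, for every $H\in\mathscr{F}$, the isomorphism type of a partial branch-set pattern of a potential $H$-minor restricted to $X$, together with the connectivity equivalence classes of each partial branch set on the already-processed side; the number of such states is bounded by a function of $t$ and $\max_{H\in\mathscr{F}}|V(H)|$. Standard introduce/forget/join transitions combined with the weight of vertices placed into $S$ then yield the same $f(q,t)\cdot n$ bound. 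On this direct route the only real subtlety is correctly merging the connectivity information of partial branch sets at join nodes, which the LinEMSOL black box conveniently avoids.
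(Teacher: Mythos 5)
Your proof is correct and takes essentially the same route as the paper: both express the $\mathscr{F}$-minor-free vertex deletion problem as an $\mathrm{MSO}$-definable linear optimization over subsets and then invoke an $\mathrm{MSO}$-optimization meta-theorem on bounded-treewidth graphs (the paper cites Borie, Parker, and Tovey 1992 where you cite Courcelle, Makowsky, and Rotics; these play the same role). Your additional details -- the explicit $\mathrm{MSO}_2$ encoding of minor containment, the need for a constant-factor treewidth approximation, and the alternative direct dynamic program -- go beyond what the paper spells out, but the core argument matches.
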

\begin{proof} This follows from the fact that finding such a set $S$ is expressible as an $\sf MSO$-optimization formula $\phi$ whose length, $q$, depends only on the family $\mathscr{F}$~\cite{FominLMS12}. Then, by Theorem 7~\cite{Borie1992}, we can compute an optimal sized set $S$ in time ${\it f}(q,t) \cdot n$.
\end{proof}

Now, we apply the above lemma to the graph $G$ and the family $\mathscr{F}$,
and obtain a minimum weight set $S$ such that $G- S$ is $\mathscr{F}$-minor free.

%

\subsection{General Graphs}\label{sec:approxGenGraphspfd}
We proceed to handle general instances by developing a $\approxDH$-factor approximation algorithm for \wpfdshort, \alg{Gen-WP-APPROX}, thus proving the correctness of Theorem \ref{thm:approx_thmpfd}. The exact value of the constant $d$ will be determined later.

\bigskip
{\noindent\bf  Recursion.} We define each call to our algorithm \alg{Gen-WP-APPROX} to be of the form $(G',w')$, where $(G',w')$ is an instance of \wpfdshort\ such that $G'$ is an induced subgraph of $G$, and we denote $n'=|V(G')|$.  

\bigskip
{\noindent\bf Goal.} For each recursive call \alg{Gen-WP-APPROX}$(G',w')$, we aim to prove the following.

\begin{lemma}\label{lemma:approxRecursiveCallGenPF}
\alg{Gen-WP-APPROX} returns a solution that is at least $\opt$ and at most $\frac{d}{2}\cdot\log^2n'\cdot\opt$. Moreover, it returns a subset $U \subseteq V(G')$ that realizes the solution.
\end{lemma}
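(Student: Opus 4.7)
The plan is to prove the lemma by induction on $n'$, with the base case being $n'$ at most some constant (e.g., $2(c+1)$) where the algorithm solves the problem by brute force enumeration. The crux of the inductive step rests on the following structural observation: if $S^*$ is an optimal solution for $(G', w')$, then $G' - S^*$ is $\mathscr{F}$-minor free and hence has treewidth at most $c = c(\mathscr{F})$ by Proposition~\ref{prop:plaar-minor-free}. Any graph of treewidth at most $c$ admits a balanced separator of size at most $c+1$. Hence there exists $M^* \subseteq V(G') \setminus S^*$ with $|M^*| \leq c+1$ such that every connected component of $G' - (S^* \cup M^*)$ has at most $\frac{2}{3}n'$ vertices. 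Equivalently, $S^*$ is itself a balanced separator of $G' - M^*$, of total weight $\opt$.

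This motivates the algorithm: enumerate every $M \subseteq V(G')$ with $|M| \leq c+1$ (polynomially many, as $c$ is constant), and for each such $M$ invoke the deterministic $O(\log n')$-factor approximation of~\cite{FeigeHL08, BalancedSeparator} on $(G' - M, w'|_{V(G')\setminus M})$ to obtain a balanced separator $W_M$; select the pair $(M, W)$ that minimizes $w'(W_M)$. By the structural claim applied to the correct guess $M = M^*$, we have $w'(W) \leq c_1 \log n' \cdot \opt$ for some absolute constant $c_1$. Next, recurse on each connected component $C_j$ of $G' - (M \cup W)$ to obtain $U_j$, and then form the residual graph $G'' = G' - (W \cup \bigcup_j U_j)$. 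By the inductive hypothesis each $C_j - U_j$ is $\mathscr{F}$-minor free, so $G''$ consists of the at most $c+1$ vertices of $M$ together with an $\mathscr{F}$-minor-free graph, which is exactly an instance of the special case. Solve it optimally via \alg{Special-WP} to obtain $U_M^\star$, and return $U = W \cup (\bigcup_j U_j) \cup U_M^\star$.

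For the upper bound, note that $\sum_j \opt_j \leq \opt$, where $\opt_j$ is the optimum of the subinstance $C_j$, since $S^* \cap V(C_j)$ is feasible for $C_j$; and $S^* \setminus (W \cup \bigcup_j U_j)$ is feasible for $G''$, so $w'(U_M^\star) \leq \opt$. Using $|V(C_j)| \leq \frac{2}{3}n'$ and the inductive bound $w'(U_j) \leq \frac{d}{2}\log^2(\frac{2n'}{3}) \cdot \opt_j$, the total weight is
\[
w'(U) \;\leq\; c_1 \log n' \cdot \opt \;+\; \frac{d}{2}\log^2\!\left(\frac{2n'}{3}\right) \cdot \opt \;+\; \opt.
\]
Since $\log^2 n' - \log^2(\frac{2n'}{3}) \geq \log(\frac{3}{2}) \cdot \log n'$ for $n'$ large enough, choosing $d$ sufficiently large (depending only on $c_1$) absorbs the additive $c_1 \log n' + 1$ term and yields $w'(U) \leq \frac{d}{2}\log^2 n' \cdot \opt$. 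For the lower bound $w'(U) \geq \opt$, observe that $G' - U = G'' - U_M^\star$ is $\mathscr{F}$-minor free by construction, so $U$ is feasible for $(G', w')$.

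The main obstacle in executing this plan is the structural claim: one must argue that for at least one of the $O(n'^{c+1})$ candidate sets $M$, the minimum-weight balanced separator of $G' - M$ has weight at most $\opt$, so that the approximation for balanced separator loses only a $\log n'$ factor per level of recursion. This is precisely what the planarity assumption on $\mathscr{F}$ buys us through Proposition~\ref{prop:plaar-minor-free}; without bounded treewidth of $G'-S^*$, no $M$ of constant size need witness the required balanced separation, and the whole scheme collapses.
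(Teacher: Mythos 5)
Your proof is correct and follows essentially the same approach as the paper: exploit Proposition~\ref{prop:plaar-minor-free} to show that some constant-sized $M$ plus the optimum $S^*$ is a balanced separator, enumerate all candidate $M$'s, approximate the balanced separator of $G'-M$, recurse, and finish with \alg{Special-WP} on the residual ``constant set plus $\mathscr{F}$-minor-free'' instance. The only differences are cosmetic — you recurse on each connected component individually while the paper first groups the components into two sides $V_1, V_2$ with $n_1, n_2 \leq \tfrac{2}{3}n'$, and you use a small-$n'$ brute-force base case while the paper terminates as soon as some $M$ of size at most $c+1$ makes $G'-M$ already $\mathscr F$-minor-free; neither difference affects the recursion depth, the accounting $\sum_j \opt_j \leq \opt$, or the final bound.
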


At each recursive call, the size of the graph $G'$ becomes smaller. Thus, when we prove that Lemma~\ref{lemma:approxRecursiveCallGenPF} is true for the current call, we assume that the approximation factor is bounded by $\frac{d}{2}\cdot\log^2\widehat{n}\cdot\opt$ for any call where the size $\widehat{n}$ of the vertex-set of its graph is strictly smaller than~$n'$.

  \bigskip
{\noindent\bf Termination.} 
In polynomial time we can test whether $G'$ has a minor $F \in \mathscr{F}$~\cite{robertson-minor-testing}. Furthermore, for each $M \subseteq V(G)$ of size at most $c+1$, we can check if $G - M$ has a minor $F \in \mathscr{F}$. If $G - M$ is $\mathscr{F}$-minor free then we are in a special instance, where $G - M$ is $\mathscr{F}$ minor free and $M$ is a constant sized graph. We optimally resolve this instance in polynomial time using the algorithm \alg{Special-WP}. Since we output an optimal sized solution in the base cases, we thus ensure that at the base case of our induction Lemma \ref{lemma:approxRecursiveCallGenPF} holds.

\bigskip
{\noindent\bf Recursive Call.} For the analysis of a recursive call, let $S^*$ denote a hypothetical set that realizes the optimal solution $\opt$ of the current instance $(G',w')$. Let $(F,\beta)$ be a forest decomposition of $G'- S^*$ of width at most $c$, whose existence is guaranteed by Proposition~\ref{prop:plaar-minor-free}. Using standard arguments on forests we have the following observation.

\begin{observation}\label{obs:standardForest}
There exists a node $v\in V(F)$ such that $\beta(v)$ is a balanced separator for $G'- S^*$.
\end{observation}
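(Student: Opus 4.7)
The plan is to adapt the classical tree-centroid argument to the forest setting. Write $H := G' - S^*$ and $n' := |V(H)|$. For any $v \in V(F)$, removing $v$ from $F$ yields a family of connected pieces (the trees of $F$ other than $v$'s own, plus the subtrees obtained by deleting $v$ from its own tree). By the subtree axiom of forest decompositions --- for every $x \in V(H)$, the nodes whose bag contains $x$ form a subtree of $F$ --- any connected component of $H - \beta(v)$ is contained in $\bigcup_{u \in C} \beta(u) \setminus \beta(v)$ for exactly one such piece $C$. Hence it suffices to exhibit $v$ for which every piece of $F - v$ carries at most $\frac{2}{3} n'$ vertices of $H$.

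The same subtree axiom partitions $V(H)$ among the trees of $F$: each vertex appears in bags of a unique tree. Let $T^{(1)}, \ldots, T^{(r)}$ be these trees and set $n_i := |\bigcup_{u \in V(T^{(i)})} \beta(u)|$, so that $\sum_i n_i = n'$. If every $n_i \le \frac{2}{3} n'$, then choosing $v$ to be any node of any tree already works: each piece of $F - v$ is contained in a single tree and therefore has weight at most $n_j \le \frac{2}{3} n'$.

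Otherwise there is a unique heavy tree, say $T^{(1)}$, with $n_1 > \frac{2}{3} n'$, and I would apply the standard centroid argument inside $T^{(1)}$: orient each edge $\{a,b\} \in E(T^{(1)})$ toward the endpoint whose side (after that edge is removed) carries more vertices of $H$, breaking ties arbitrarily, and take $v$ to be a sink of this orientation. Such a sink exists because any directed walk in $T^{(1)}$ cannot revisit a node (that would produce a cycle in the underlying tree), so the walk must terminate. Every subtree of $T^{(1)} - v$ then carries at most $n_1 / 2 < \frac{2}{3} n'$ vertices, while each other tree $T^{(j)}$ carries $n_j \le n' - n_1 < \frac{1}{3} n'$.

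The only real subtlety --- the correspondence between pieces of $F - v$ and connected components of $H - \beta(v)$ --- is a direct consequence of the three forest-decomposition axioms (in particular, that a vertex appearing in two bags also appears in every bag on the path between them), so no serious obstacle arises in carrying out the argument.
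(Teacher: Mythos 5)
Your proof is correct and is exactly the ``standard argument on forests'' the paper invokes without spelling out: the bag-overlap bookkeeping ($|W| = n_1 - m_v \le n_1/2$ at a sink of the centroid orientation), the reduction from components of $H - \beta(v)$ to pieces of $F - v$ via the subtree axiom, and the easy extension from trees to forests (at most one tree can be heavy) are all sound. Nothing to flag.
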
 

From Observation~\ref{obs:standardForest} we know that there exists a node $v\in V(F)$ such that $\beta(v)$ is a balanced separator for $G'- S^*$. This together with the fact that $G'- S^*$ has treewidth at most $c$ results in the following observation.

\begin{observation} \label{obs:balanced-separator-PF}
There exist a subset $M \subseteq V(G')$ of size at most $c+1$ and a subset $S\subseteq V(G') \setminus M$ of weight at most $\opt$ such that $M \cup S$ is a balanced separator for $G'$.
\end{observation}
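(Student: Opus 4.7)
The plan is to take $S$ to be exactly the hypothetical optimum $S^*$ and to take $M$ to be the bag $\beta(v)$ supplied by Observation~\ref{obs:standardForest}. By Proposition~\ref{prop:plaar-minor-free} the graph $G' - S^*$ has treewidth at most $c$, and hence the forest decomposition $(F,\beta)$ guaranteed just before the observation has the property that $|\beta(u)| \le c+1$ for every node $u \in V(F)$. In particular $|M| = |\beta(v)| \le c+1$, giving the required size bound on $M$. Also, since $\beta(v) \subseteq V(G' - S^*) = V(G') \setminus S^*$, we have $M \cap S^* = \emptyset$, so setting $S := S^*$ satisfies $S \subseteq V(G') \setminus M$, and its weight is $w'(S^*) = \opt$ by the choice of $S^*$.

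The remaining step is to verify that $M \cup S$ is in fact a balanced separator of $G'$. The key observation is that every connected component $C$ of $G' - (M \cup S^*)$ is contained in some connected component of $(G' - S^*) - M$, because $G' - (M \cup S^*) = (G' - S^*) - M$ (as $M$ and $S^*$ are disjoint). By Observation~\ref{obs:standardForest}, the set $M = \beta(v)$ is a balanced separator of $G' - S^*$, so every connected component of $(G' - S^*) - M$ has at most $\tfrac{2}{3}|V(G' - S^*)|$ vertices. Since $|V(G' - S^*)| \le |V(G')|$, we conclude $|V(C)| \le \tfrac{2}{3}|V(G')|$, which is exactly the definition of $M \cup S^*$ being a balanced separator of $G'$ from the preliminaries.

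The only non-trivial ingredient is Observation~\ref{obs:standardForest}, which has already been stated and which is a standard fact: in any forest decomposition of a graph $H$ on $N$ vertices, one can find a node $v$ whose bag meets each connected component of $H - \beta(v)$ in a subset of size at most $\tfrac{2}{3}N$, by walking through the forest in the direction of the larger side whenever an edge is crossed, until no edge leads to a ``heavy'' component; this step we take as given. There is no genuine obstacle — the argument is essentially a bookkeeping check that removing an extra set ($S^*$) on top of a balanced separator of $G' - S^*$ still yields a balanced separator of $G'$, since removing more vertices can only break connected components into smaller pieces.
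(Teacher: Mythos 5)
Your proposal is correct and takes exactly the approach the paper intends: set $M = \beta(v)$ (the balanced-separator bag from Observation~\ref{obs:standardForest}, of size at most $c+1$ since the decomposition has width at most $c$), set $S = S^*$, and note that removing the extra set $S^*$ on top of the bag can only shrink components, so $M\cup S^*$ remains balanced in $G'$. One small wording slip in your parenthetical gloss of Observation~\ref{obs:standardForest} (``whose bag meets each connected component of $H-\beta(v)$ in a subset of size at most $\tfrac{2}{3}N$'' should read ``such that each connected component of $H-\beta(v)$ has at most $\tfrac{2}{3}N$ vertices''), but since you take that observation as given, it does not affect the argument.
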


This gives us a polynomial time algorithm as stated in the following lemma.

\begin{lemma}\label{lem:dividePF}
There is a deterministic (randomized) algorithm which in polynomial-time finds $M \subseteq V(G')$ of size at most $c+1$ and a subset $S\subseteq V(G') \setminus M$ of weight at most $q \cdot \log n' \cdot\opt$  ($q^*\cdot \sqrt{\log n'} \cdot\opt$) for some fixed constant $q$ ($q^*$) such that $ M \cup S$ is a balanced separator for $G'$.
\end{lemma}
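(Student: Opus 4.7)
}
The plan is to brute-force enumerate the set $M$ and then invoke an off-the-shelf approximation algorithm for weighted balanced vertex separator on $G' - M$. Concretely, for every subset $M \subseteq V(G')$ with $|M| \leq c+1$, I would run the known deterministic $\OO(\log n')$-approximation (respectively the randomized $\OO(\sqrt{\log n'})$-approximation of Feige--Hajiaghayi--Lee \cite{FeigeHL08}) for \textsc{Weighted Vertex Separator} on the induced subgraph $G' - M$, asking for a set $S_M \subseteq V(G') \setminus M$ such that every connected component of $G' - (M \cup S_M)$ has at most $\frac{2}{3}n'$ vertices; among all guesses, return the pair $(M, S_M)$ minimizing $w'(S_M)$. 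Since $c = c(\mathscr{F})$ is a constant, the number of guesses for $M$ is $\binom{n'}{\leq c+1} = \OO((n')^{c+1})$, which is polynomial.

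For correctness, let $M^\star \subseteq V(G')$ and $S^\star \subseteq V(G') \setminus M^\star$ be as guaranteed by Observation \ref{obs:balanced-separator-PF}, so $|M^\star| \leq c+1$, $w'(S^\star) \leq \opt$, and $M^\star \cup S^\star$ is a balanced separator of $G'$. In particular, $S^\star$ is a balanced vertex separator of $G' - M^\star$ with respect to the vertex count $n'$ of $G'$, and its weight is at most $\opt$. When my enumeration reaches the guess $M = M^\star$, the separator approximation algorithm is therefore applied to an instance whose optimum is at most $\opt$, and will return some $S_{M^\star}$ with $w'(S_{M^\star}) \leq q \cdot \log n' \cdot \opt$ (deterministic) or $q^\star \cdot \sqrt{\log n'} \cdot \opt$ (randomized), such that $M^\star \cup S_{M^\star}$ is a balanced separator of $G'$. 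The returned pair is at least as cheap as this one, which yields the claimed bound.

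The one technicality I expect to be the main obstacle is that the cited separator algorithms are pseudo-approximations: they typically guarantee that removing $S_M$ leaves components of size at most, say, $\tfrac{3}{4}|V(G' - M)|$, not $\tfrac{2}{3}n'$, while being compared against the optimum $\tfrac{2}{3}$-balanced separator. To convert such a pseudo-separator into a genuine $\tfrac{2}{3}$-balanced separator without losing more than a constant factor in weight, I would repeatedly invoke the pseudo-approximation on the largest remaining component and add the returned vertex set to $S_M$, stopping once every component has at most $\tfrac{2}{3}n'$ vertices. A standard geometric-sum argument shows that the total weight accumulated is at most a constant multiple of $w'(S^\star)$ times the approximation factor, so the loss is absorbed into the constants $q$ and $q^\star$ in the statement. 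All remaining steps -- enumerating $M$, checking the balance condition, and selecting the minimum-weight pair -- are direct polynomial-time computations.
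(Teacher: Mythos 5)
Your proposal is correct and follows essentially the same approach as the paper: enumerate every candidate $M$ of size at most $c+1$, run the Feige--Hajiaghayi--Lee (or Leighton--Rao) balanced vertex separator approximation on $G'-M$, and output the cheapest resulting pair, with Observation~\ref{obs:balanced-separator-PF} guaranteeing that the guess $M=M^\star$ yields the claimed weight bound. The pseudo-approximation technicality you flag (these separator algorithms return a separator whose components are bounded relative to a weaker balance constant, benchmarked against the $\tfrac{2}{3}$-balanced optimum) is a genuine subtlety that the paper's proof silently glosses over, and your iterative repair is the standard way to absorb it into the constants $q, q^\star$.
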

\begin{proof}
Note that we can enumerate every $M \subseteq V(G')$ of size at most $c+1$ in time $\OO(n^{c})$. For each such $M$, we can either run the randomized $q^* \cdot \sqrt{\log n'}$-factor approximation algorithm by 
Feige et al.~\cite{FeigeHL08} or the deterministic $q \cdot \log n'$-factor approximation algorithm by Leighton and Rao \cite{BalancedSeparator} to find a balanced separator $S_M$ of $G'- M$. Here, $q$ and $q^*$ are fixed constants. By Observation~\ref{obs:balanced-separator-PF}, there is a set $S$ in $\{S_M: M \subseteq V(G')$ and $M \leq c+1\}$ such that $w(S) \leq q\cdot \log n' \cdot\opt$ ($w(S) \leq q^*\cdot \sqrt{\log n'}\cdot\opt$). Thus, the desired output is a pair $(M,S)$ where $M$ is one of the vertex subset of size at most $c+1$ such that $S_M=S$.
\end{proof}

We call the algorithm in Lemma~\ref{lem:dividePF} to obtain a pair $(M,S)$. Since $M\cup S$ is a balanced separator for $G'$, we can partition the set of connected components of $G'- (M\cup S)$ into two sets, ${\cal A}_1$ and ${\cal A}_2$, such that for $V_1=\bigcup_{A\in{\cal A}_1}V(A)$ and $V_2=\bigcup_{A\in{\cal A}_2}V(A)$ it holds that $n_1,n_2\leq\frac{2}{3}n'$ where $n_1=|V_1|$ and $n_2=|V_2|$. We remark that we use different 
algorithms for finding a balanced separator in Lemma~\ref{lem:dividePF} based on whether we are looking for a randomized algorithm or a deterministic algorithm.  


Next, we define two inputs of (the general case of) \wpfdshort: $I_1=(G'[V_1],w'|_{V_1})$ and $I_2=(G'[V_2],w'|_{V_2})$. Let $\opt_1$ and $\opt_2$ denote the optimal solutions to $I_1$ and $I_2$, respectively. Observe that since $V_1\cap V_2=\emptyset$, it holds that $\opt_1+\opt_2\leq \opt$. We solve each of the subinstances by recursively calling algorithm \alg{Gen-WP-APPROX}. By the inductive hypothesis, we thus obtain two sets, $S_1$ and $S_2$, such that $G'[V_1]- S_1$ and $G'[V_2]- S_2$ are $\mathscr{F}$-minor free graphs, and $w'(S_1)\leq \frac{d}{2}\cdot\log^2n_1\cdot\opt_1$ and $w'(S_2)\leq \frac{d}{2}\cdot\log^2n_2\cdot\opt_2$.

We proceed by defining an input of the special case of \wpfdshort: $J=(G'[(V_1\cup V_2\cup M)\setminus (S_1\cup S_2)],w'|_{(V_1\cup V_2\cup M)\setminus (S_1\cup S_2)})$. Observe that $G'[V_1\setminus S_1]$ and $G'[V_2\setminus S_2]$ are $\mathscr F$-minor free graphs and there are no edges between vertices in $V_1$ and vertices in $V_2$ in $G'-M$, and $M$ is of constant size. Therefore, we resolve this instance by calling algorithm \alg{Special-WP}. We thus obtain a set, $\widehat{S}$, such that $G'[(V_1\cup V_2\cup M)\setminus (S_1\cup S_2\cup\widehat{S})]$ is a $\mathscr{F}$-minor graph, and $w'(\widehat{S})\leq \opt$ (since $|(V_1\cup V_2\cup M)\setminus (S_1\cup S_2)|\leq n'$ and the optimal solution of each of the special subinstances is at most $\opt$).

Observe that any obstruction in $G' - S$ is either completely contained in $G'[V_1]$, or completely contained in $G'[V_2]$, or it contains at least one vertex from $M$. This observation, along with the fact that $G'[(V_1\cup V_2\cup M)\setminus (S_1\cup S_2\cup\widehat{S})]$ is a $\mathscr{F}$-minor free graph, implies that $G' - T$ is a $\mathscr{F}$-minor free graph where $T= S \cup S_1\cup S_2\cup\widehat{S}$. Thus, it is now sufficient to show that $w'(T)\leq\frac{d}{2}\cdot (\log n')^2 \cdot \opt$.

By the discussion above, we have that

\[\begin{array}{ll}
w'(T) & \leq w'(S) + w'(S_1) + w'(S_2) + w'(\widehat{S})\\
& \leq q\cdot\log n'\cdot\opt + \frac{d}{2}\cdot((\log n_1)^2 \cdot \opt_1 + (\log n_2)^2 \cdot \opt_2) + \opt\\
\end{array}\]

Recall that $n_1,n_2\leq\frac{2}{3}n'$ and $\opt_1+\opt_2\leq\opt$. Thus, we have that

\[\begin{array}{ll}
w'(T) & < q \cdot \log n' \cdot \opt + \frac{d}{2}\cdot (\log \frac{2}{3}n')^2 \cdot \opt + \opt\\
& < \frac{d}{2} \cdot (\log n')^2 \cdot \opt + \log n' \cdot \opt \cdot (q+1+ \frac{d}{2}\cdot (\log \frac{3}{2})^2- \frac{d}{2}\cdot 2 \cdot \log \frac{3}{2}).
\end{array}\]

Overall, we conclude that to ensure that $w'(T)\leq\frac{d}{2}\cdot\log^2n'\cdot\opt$, it is sufficient to ensure that $q+1+ \frac{d}{2}\cdot (\log \frac{3}{2})^2- \frac{d}{2}\cdot 2 \cdot \log \frac{3}{2}\leq 0$, which can be done by fixing $\displaystyle{d = \frac{2}{2 \log\frac{3}{2} - (\log \frac{3}{2})^2}\cdot(q+1)}$.

If we use the $\OO(\sqrt{\log n})$-factor approximation algorithm by Feige et al.~\cite{FeigeHL08} 
for finding a balance separator in Lemma~\ref{lem:dividePF}, then we can do the analysis similar to the deterministic case and obtain a randomized factor-\afpfdr approximation algorithm for \wpfdshort.

%

\section{Weighted Chordal Vertex Deletion on General Graphs}\label{sec:newApproxAlg}
In this section we prove Theorem~\ref{thm:newApprox2}. 
Clearly, we can assume that the weight $w(v)$ of each vertex $v\in V(G)$ is positive, else we can insert $v$ into any solution. 

Roughly speaking, our approximation algorithm consists of two components. The first component handles the special case where the input graph $G$ consists of a clique $C$ and a chordal graph $H$. Here, we also assume that the input graph has no ``short'' chordless cycle. This component is comprised of a recursive algorithm that is based on the method of divide and conquer. The algorithm keeps track of a fractional solution ${\bf x}$ of a special form that it carefully manipulated at each recursive call, and which is used to analyze the approximation ratio. In particular, we ensure that ${\bf x}$ does not assign high values, and that it assigns 0 to vertices of the clique $C$ as well as vertices of some other cliques. To divide a problem instance into two instances, we find a maximal clique $M$ of the chordal graph $H$ that breaks $H$ into two ``simpler'' chordal graphs. The clique $C$ remains intact at each recursive call, and the maximal clique $M$ is also a part of both of the resulting instances. Thus, to ensure that we have simplified the problem, we measure the complexity of instances by examining the maximum size of an independent set of their graphs. Since the input graph has no ``short'' chordless cycle, the maximum depth of the recursion tree is bounded by $\OO(\log n)$. Moreover, to guarantee that we obtain instances that are independent, we incorporate multicut constraints while ensuring that we have sufficient ``budget'' to satisfy them. We ensure that these multicut constraints are associated with chordal graphs, which allows us to utilize the algorithm we design in Section \ref{sec:multicut}.

The second component is a recursive algorithm that solves general instances of the problem. Initially, it easily handles ``short'' chordless cycles. Then, it gradually disintegrates a general instance until it becomes an instance of the special form that can be solved using the first component. More precisely, given a problem instance, the algorithm divides it by finding a maximal clique $M$ (using an exhaustive search which relies on the guarantee that $G$ has no ``short'' chordless cycle) and a small separator $S$ (using an approximation algorithm) that together break the input graph into two graphs significantly smaller than their origin. It first removes $M\cup S$ and solves each of the two resulting subinstances by calling itself recursively; then, it inserts $M$ back into the graph, and uses the solutions it obtained from the recursive calls to construct an instance of the special case solved by the first component.

\subsection{Clique+Chordal Graphs}\label{sec:approxCliqueChordal}

In this subsection we handle the special case where the input graph $G$ consists of a clique $C$ and a chordal graph $H$. More precisely, along with the input graph $G$ and the weight function $w$, we are also given a clique $C$ an a chordal graph $H$ such that $V(G)=V(C)\cup V(H)$, where the vertex-sets $V(C)$ and $V(H)$ are disjoint. Here, we also assume that $G$ has no chordless cycle on at most 48 vertices. Note that the edge-set $E(G)$ may contain edges between vertices in $C$ and vertices in $H$. We call this special case the {\em Clique+Chordal special case}. Our objective is to prove the following result.

\begin{lemma}\label{lemma:newApproxSpecial}
The Clique+Chordal special case of \wcdel admits an $\OO(\log n)$-factor approximation algorithm.
\end{lemma}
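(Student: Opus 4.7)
The plan is to design a recursive LP-rounding algorithm with logarithmic recursion depth, following the scheme outlined at the beginning of Section~\ref{sec:newApproxAlg}. I would start by writing down the natural LP relaxation of \wcdel on $G$: minimize $\sum_{v \in V(G)} w(v) x_v$ subject to $\sum_{v \in V(D)} x_v \geq 1$ for every chordless cycle $D$ of $G$, together with $x_v \geq 0$. A separation oracle is available via shortest-chordless-cycle computation, so the LP can be solved optimally; let $\fopt \leq \opt$ denote its value and $\mathbf{x}^*$ an optimum fractional solution. The algorithm is then a recursion on triples $(G', \mathcal{C}', \mathbf{x}')$, where $G'$ is an induced subgraph of $G$, $\mathcal{C}'$ is a collection of vertex-disjoint cliques (initially $\mathcal{C}' = \{C\}$) whose vertices the algorithm commits to \emph{never} delete, $G'$ minus $\bigcup \mathcal{C}'$ remains chordal, and $\mathbf{x}'$ is a feasible fractional LP solution on $G'$ that (i) assigns $0$ to every vertex of $\bigcup \mathcal{C}'$ and (ii) assigns values below a small constant threshold $\tau$ elsewhere. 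Invariant (ii) is enforced at the start of each call by moving any vertex with $\mathbf{x}'_v \geq \tau$ into the global solution $S$, at an $\OO(1)$ blow-up against $\mathbf{x}'$; the hypothesis that $G$ has no chordless cycle of length at most $48$ guarantees that every chordless cycle is long enough to tolerate threshold $\tau$ without being left infeasible.

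The divide step exploits the chordal structure of $H' = G' - \bigcup \mathcal{C}'$. Using its clique forest (Lemma~\ref{lem:cliqueForest}), a centroid argument picks a maximal clique $M$ of $H'$ whose removal splits $H'$ into induced subgraphs $H'_1, H'_2$ with $\alpha(H'_i) \leq \frac{2}{3}\alpha(H')$. The clique $M$ is added to $\mathcal{C}'$ (so its vertices are now pinned and receive fractional value $0$), and the mass removed from $\mathbf{x}'$ on $M$ is charged against the LP through chordless cycles of $G'$ that pass through $M$. This naturally produces two subinstances $(G'_i, \mathcal{C}' \cup \{M\}, \mathbf{x}'_i)$ for $i = 1, 2$, where $G'_i$ is obtained by keeping $V(H'_i)$ together with all pinned vertices.

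However, chordless cycles of $G'$ that traverse $H'_1$, $M$, and $H'_2$ in sequence are not automatically covered by the two recursive solutions, because pinned vertices cannot be deleted. To handle these straddling cycles I would set up an auxiliary \textsc{Weighted Multicut} instance on a chordal graph derived from $H'$, whose terminal pairs encode the possible entry/exit vertices at $M$ of such cycles, and whose fractional LP value is bounded against the restriction of $\mathbf{x}'$ to $V(H')$. Applying Theorem~\ref{thm:multicutGen} yields a multicut $S_M$ of cost $\OO(1) \cdot \mathbf{x}'(V(H'))$, which is added to $S$. The two subinstances are then solved recursively. Since $\alpha(H')$ drops by a factor of at least $2/3$ at each level and $\alpha(H) \leq n$ initially, the recursion depth is $\OO(\log n)$, and each level contributes only an $\OO(1)$ factor to the cost-to-LP ratio (from threshold rounding, from rounding $M$, and from the multicut subroutine), giving the claimed $\OO(\log n)$ approximation.

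The main obstacle is the multicut reduction: defining the terminal pairs so that (a) breaking them together with the two recursive solutions suffices to kill every chordless cycle of $G'$, (b) the host graph of the multicut is chordal so that Theorem~\ref{thm:multicutGen} applies, and (c) the fractional LP value for the multicut instance is bounded by a constant times $\mathbf{x}'(V(H'))$. Closely related is the bookkeeping needed to restore invariant (ii) after pinning $M$ and possibly boosting $\mathbf{x}'_i$ to reflect fresh LP constraints in each subinstance; this is exactly where the ``no chordless cycle of length $\leq 48$'' hypothesis does most of its work, preventing threshold violations from cascading through the $\OO(\log n)$ levels of recursion and thereby keeping the per-level overhead a genuine constant.
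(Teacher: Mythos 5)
Your plan captures the right architecture and matches the paper at a high level: an LP-guided recursion measured by the independence number of the chordal part, a centroid split of the clique forest, a zero-out-and-boost manipulation of the fractional solution, and a \textsc{Weighted Multicut} subroutine on a chordal graph to kill the cycles that straddle the two halves. However, two aspects of the proposal are genuine gaps rather than just omitted routine details.

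First, the heart of the multicut reduction is missing. You flag it as ``the main obstacle'' and describe the goal --- terminal pairs whose separation (together with the recursive solutions) hits every chordless cycle, on a chordal host, with LP value bounded against $\mathbf{x}'$ --- but this is precisely the content of the paper's Lemmas~\ref{lem:relationBadCyc}--\ref{lem:fracCutPairAlg}, and without them the plan does not close. The crucial observation is that since $\mathbf{x}^*$ is zero on both $C$ and $M$, and a bad cycle contributes at most two vertices from each clique, for every relevant pair $(v,u)\in C\times M$ \emph{one of the two sides} $\widehat{G}_1,\widehat{G}_2$ must already have every $v$--$u$ path carrying $\mathbf{x}^*$-mass at least $1/2$. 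This dichotomy is what makes $2\mathbf{x}^*_i$ a feasible fractional solution for the multicut instance on that side, and it is non-trivial: the proof (Lemma~\ref{lem:fracCutPair}) relies on a case analysis producing a chordless cycle from the union of two cheap paths. Saying ``bounded by a constant times $\mathbf{x}'(V(H'))$'' elides exactly the step that needs proving, and it is also the wrong bound --- the accounting is done separately per side, via $w(\mathbf{x}^*_1)+w(\mathbf{x}^*_2)=w(\mathbf{x}^*)$, and then telescopes through the $\log(\alpha)$ potential, not by multiplying a constant by $O(\log n)$ levels.

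Second, the ``commit to never delete an accumulating collection of vertex-disjoint pinned cliques'' framing is structurally off from what the paper does and is not obviously repairable. The paper keeps exactly one fixed clique $C$ throughout; the separator $M$ is zeroed in the LP and then \emph{folded back into the chordal part} $H_1=H'[\widehat H_1\cup M]$, so the clique+chordal shape of the instance is preserved at every depth, and the $M$'s chosen at different levels need not be disjoint (indeed they may intersect or coincide). Your pinned family $\mathcal{C}'$ grows to $\Theta(\log n)$ disjoint cliques, which strains the ``no chordless cycle of length $\le 48$'' guarantee: a chordless cycle can pass through $25$ disjoint cliques using only $50$ vertices, so after $25$ levels the union $\bigcup\mathcal{C}'$ alone could in principle carry a chordless cycle, yet your LP is identically zero there and all its vertices are forbidden from the solution. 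The paper sidesteps this by never promising not to touch those vertices and by maintaining the feasibility of $\mathbf{x}$ at every level (Lemma~\ref{lem:adjustZeroClique}), which in turn forces any surviving chordless cycle to retain enough LP mass on deletable vertices. If you want to keep the pinning picture, you would need to argue explicitly that the LP remains feasible with all of $\bigcup\mathcal{C}'$ fixed at zero, and your sketch does not supply that argument.

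Two smaller inaccuracies are worth noting: the low-value threshold cannot be a fixed constant $\tau$ --- the paper uses a depth-dependent threshold of order $1/(c\log n)$ precisely so that the per-level boost factors compound to a constant over $\Theta(\log n)$ levels --- and the multicut terminal pairs must encode entry vertices into $\widehat G_i$ coming from \emph{both} $C$ and $M$, not just $M$, since a bad cycle necessarily crosses both cliques.
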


We assume that $n\geq 64$,\footnote{This assumption simplifies some of the calculations ahead.} else the input instance can be solve by brute-force. Let $c$ be a fixed constant (to be determined). In the rest of this subsection, we design a $c\cdot\log n$-factor approximation algorithm for the Clique+Chordal special case of \wcdel.

\bigskip
{\noindent\bf Recursion.} Our approximation algorithm is a recursive algorithm. We call our algorithm \alg{CVD-APPROX}, and define each call to be of the form $(G',w',C,H',{\bf x})$. Here, $G'$ is an induced subgraph of $G$ such that $V(C)\subseteq V(G')$, and $H'$ is an induced subgraph of $H$. The argument ${\bf x}$ is discussed below. We remark that we continue to use $n$ to refer to the size of the vertex-set of the input graph $G$ rather than the current graph $G'$.

\bigskip
{\noindent\bf Arguments.} While the execution of our algorithm progresses, we keep track of two arguments: the size of a maximum independent set of the current graph $G'$, denoted by $\alpha(G')$, and a fractional solution $\bf x$. Due to the special structure of $G'$, the computation of $\alpha(G')$ is simple:

\begin{observation}
The measure $\alpha(G')$ can be computed in polynomial time.
\end{observation}

\begin{proof}
Any maximum independent set of $G'$ consists of at most one vertex from $C$ and an independent set of $H'$.
It is well known that the computation of the size of a maximum independent set of a chordal graph can be performed in polynomial time \cite{Golumbic80}. Thus, we can compute $\alpha(H')$ in polynomial time. Next, we iterate over every vertex $v\in V(C)$, and we compute $\alpha_v = \alpha(\widehat{H}) + 1$ for the graph $\widehat{H} = H'\setminus N_{G'}(v)$ in polynomial time (since $\widehat{H}$ is a chordal graph). Overall, we return $\max\{\alpha(H'),\max_{v\in V(C)}\{\alpha_v\}\}$.
\end{proof}

The necessity of tracking $\alpha(G')$ stems from the fact that our recursive algorithm is based on the method of divide-and-conquer, and to ensure that when we divide the current instance into two instances we obtain two ``simpler'' instances,  we need to argue that some aspect of these instances has indeed been simplified. Although this aspect cannot be the size of the instance (since the two instances can share many common vertices), we show that it can be the size of a maximum independent set. 
 
A fractional solution ${\bf x}$ is a function ${\bf x}: V(G')\rightarrow [0,\infty)$ such that for every chordless cycle $Q$ of $G'$ it holds that ${\bf x}(V(Q))\geq 1$. An optimal fractional solution minimizes the weight $w'({\bf x})=\sum_{v\in V(G')}w'(v)\cdot{\bf x}(v)$. Clearly, the solution to the instance $(G',w')$ of \wcdel is at least as large as the weight of an optimal fractional solution.  Although we initially compute an optimal fractional solution ${\bf x}$ (at the initialization phase that is described below), during the execution of our algorithm, we manipulate this solution so it may no longer be optimal. Prior to any call to \alg{CVD-APPROX} with the exception of the first call, we ensure that ${\bf x}$ satisfies the following invariants:
\begin{itemize}
\item {\bf Low-Value Invariant}: For any $v\in V(G')$, it holds that ${\bf x}(v) < (\frac{c\cdot\log n + 9}{c\cdot\log n})^{\delta}\cdot \frac{1}{c\cdot\log n}$. Here, $\delta$ is the depth of the current recursive call in the recursion tree.\footnote{The depth of the first call is defined to be 1.}
\item {\bf Zero-Clique Invariant}: For any $v\in V(C)$, it holds that ${\bf x}(v) = 0$.
\end{itemize}

\bigskip
{\noindent\bf Goal.} The depth of the recursion tree will be bounded by $q\cdot\log n$ for some fixed constant $q$. The correctness of this claim is proved when we explain how to perform a recursive call. For each recursive call \alg{CVD-APPROX} $(G',w',C,H',{\bf x})$ with the exception of the first call, we aim to prove the following.

\begin{lemma}\label{lemma:approxRecursiveCall}
For any $\delta\in\{1,2,\ldots,q\cdot\log n\}$, each recursive call to \alg{CVD-APPROX}  of depth $\delta \geq 2$ returns a solution that is at least $\opt$ and at most $(\frac{c\cdot\log n}{c\cdot\log n + 9})^{\delta-1}\cdot c\cdot\log(\alpha(G'))\cdot w'({\bf x})$. Moreover, it returns a subset $U\subseteq V(G')$ that realizes the solution.
\end{lemma}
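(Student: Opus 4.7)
The plan is to establish Lemma~\ref{lemma:approxRecursiveCall} by induction on the measure $\alpha(G')$. The feasibility bound $w'(U) \geq \opt$ is immediate once $U$ is built so that $G' - U$ is chordal, so the heart of the argument is the upper bound. For the base case, when $\alpha(G')$ is at most a small absolute constant, $G'$ admits only a bounded number of maximal cliques and chordless cycles, and a solution of cost at most $c \log(\alpha(G')) w'({\bf x})$ can be extracted by rounding $\bf x$ directly (whose coordinates are controlled by the Low-Value invariant), absorbing the constant blow-up into the constant $c$.

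For the inductive step, I would use the clique-forest of $H'$ provided by Lemma~\ref{lem:cliqueForest} to find a maximal clique $M$ of $H'$ whose removal splits $H'$ into two induced chordal subgraphs $H_1, H_2$ with $\alpha(H_i) \leq \alpha(H')/2$; such a ``centroid bag'' exists in any forest weighted by per-subtree independence numbers. This yields two Clique+Chordal subinstances $G'_i$ of strictly smaller independence number, with the separating clique $M$ folded into the clique side of the Clique+Chordal structure while $C$ itself remains present as the clique part required by the invariants. Because $G$ contains no chordless cycle on at most $\numdhd$ vertices and $\alpha$ strictly halves at each level, the recursion tree has depth at most $q \log n$, validating the range of~$\delta$ claimed in the statement.

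Before recursing on $(G'_i, C, H_i, {\bf x}_i)$, the fractional solution $\bf x$ must be transformed into ${\bf x}_i$ that satisfies the Zero-Clique invariant (which now forces mass on $V(M)$ to zero, on top of the mass already zeroed on $V(C)$) and the Low-Value invariant at the incremented depth $\delta + 1$. The mass removed from $V(M)$ is compensated by setting up a \textsc{Weighted Multicut} instance on the chordal graph $H'$ whose demand pairs encode precisely those chordless cycles of $G'$ that used vertices of $M$; Theorem~\ref{thm:multicutGen} then produces a cut $U_M$ with $w'(U_M) = \OO(w'({\bf x}|_{V(M)\cup N(M)}))$. The constant $9$ in the Low-Value invariant is calibrated so that the controlled coordinatewise inflation needed to keep ${\bf x}_i$ a feasible fractional chordless-cycle cover in $G'_i$ does not violate the invariant at depth $\delta+1$.

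The main obstacle will be a single accounting calculation reconciling three competing ingredients: the multiplicative depth factor $(\frac{c\log n}{c\log n+9})^{\delta-1}$ which shrinks with depth, the logarithmic factor $\log\alpha(G')$ which drops by $1$ per halving, and the constant-factor loss from the multicut invocation. Substituting the inductive hypothesis at depth $\delta+1$ into both subcalls and adding $w'(U_M)$, the target bound reduces to an inequality of the form
\[
\Bigl(\tfrac{c\log n}{c\log n+9}\Bigr)^{\delta} c(\log\alpha(G')-1)(w'({\bf x}_1)+w'({\bf x}_2)) + \OO(1)\, w'({\bf x}|_{V(M)\cup N(M)}) \leq \Bigl(\tfrac{c\log n}{c\log n+9}\Bigr)^{\delta-1} c\log\alpha(G')\, w'({\bf x}).
\]
The delicate point is controlling $w'({\bf x}_1) + w'({\bf x}_2)$ in terms of $w'({\bf x})$, since a single vertex of $G'$ may contribute to both subinstances. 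What saves the argument is that the mass duplicated on $V(M)$ is exactly what the multicut step pays for, and that any remaining duplication outside $V(M)$ is absorbed into the geometric slack $(\frac{c\log n+9}{c\log n})^{\delta}$ embedded in the Low-Value invariant. Choosing $c$ sufficiently large relative to the constant hidden in Theorem~\ref{thm:multicutGen} then closes the induction.
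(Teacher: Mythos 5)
Your high-level plan — balanced separator via the clique forest of $H'$, a multicut step to kill crossing cycles, and an inductive account that trades the depth factor against the drop in $\log\alpha(G')$ — matches the paper. However, there are several concrete gaps.

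First, you propose to fold the separating clique $M$ "into the clique side" of the Clique$+$Chordal structure. That cannot work: $C\cup M$ need not be a clique, so the recursive invariants would be violated. In the paper $M$ stays in the \emph{chordal} part of each subinstance ($H_i=H'[V(\widehat H_i)\cup V(M)]$), while $C$ alone remains the clique. The fractional mass on $M$ is indeed zeroed via $({\bf x}\setminus M)$, but this is an explicit algorithmic step (Lemma~\ref{lem:adjustZeroClique}), not a consequence of the Zero-Clique Invariant, which only speaks about~$C$.

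Second, the claimed multicut bound $w'(U_M)=\OO\bigl(w'({\bf x}|_{V(M)\cup N(M)})\bigr)$ is incorrect. A multicut solution may have to delete vertices arbitrarily far from $M$; the correct bound, via Theorem~\ref{thm:multicut}, is $\OO(w({\bf x}^*_i))$ against the fractional solution restricted to each side, not against the fractional mass near $M$. Relatedly, a \emph{single} multicut instance over $H'$ whose demands "encode those chordless cycles using $M$" is not directly feasible: a bad cycle splits into a $G_1$-segment and a $G_2$-segment, and ${\bf x}^*$ only promises total mass $1$ across both. One segment could carry nearly all the mass, so $2{\bf x}^*$ is not a feasible fractional multicut if you demand to cut the cheap side. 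You need the paper's Lemma~\ref{lem:fracCutPair}: for each pair $(v,u)\in V(C)\times V(M)$ there is a side $i(v,u)$ on which all connecting paths have fractional mass at least $1/2$, and the two separate instances $J_1, J_2$ on $\widehat G_1, \widehat G_2$ contain only the demands assigned to them.

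Third, the duplication concern you flag does not actually arise. Since $V(G_1)\cap V(G_2)=V(C)\cup V(M)$ and ${\bf x}^*$ vanishes identically on that set, one has the exact identity $w({\bf x}^*_1)+w({\bf x}^*_2)=w({\bf x}^*)$ (Observation~\ref{obs:inductionSum-chordal}); nothing needs to be absorbed by the geometric slack. The slack $(\frac{c\log n+9}{c\log n})^{\delta}$ exists only to pay for the multiplicative inflation in the $({\bf x}\setminus M)$ step, not for duplicated mass. Finally, the paper's base case is simply "return $\emptyset$ once $G'$ is chordal," and Lemma~\ref{lem:approxTerminate} shows $\alpha(G')<24$ already forces $G'$ to be chordal (because short chordless cycles are excluded); no rounding of ${\bf x}$ is needed.
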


At the initialization phase, we see that in order to prove Lemma \ref{lemma:newApproxSpecial}, it is sufficient to prove Lemma \ref{lemma:approxRecursiveCall}.

\bigskip
{\noindent\bf Initialization.} Initially, the graphs $G'$ and $H'$ are simply set to be the input graphs $G$ and $H$, and the weight function $w'$ is simply set to be input weight function $w$. Moreover, we compute an optimal fractional solution ${\bf x}={\bf x}_{\mathrm{init}}$ by using the ellipsoid method. Recall that the following claim holds.

\begin{observation}
The solution of the instance $(G',w')$ of \wcdel is lower bounded by $w'({\bf x}_{\mathrm{init}})$.
\end{observation}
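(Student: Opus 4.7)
The plan is to interpret any optimal integer solution as a feasible point of the linear program that defines ${\bf x}_{\mathrm{init}}$, and then invoke optimality of ${\bf x}_{\mathrm{init}}$ with respect to that LP. The underlying fact we rely on is the definition of a chordal graph: $G'- S^*$ is chordal precisely when it contains no chordless cycle as an induced subgraph, which is equivalent to $S^*$ intersecting every chordless cycle of $G'$.

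Concretely, I would fix an optimum set $S^*\subseteq V(G')$ with $w'(S^*)=\opt$ and $G'- S^*$ chordal, and associate to it the indicator vector ${\bf x}^*:V(G')\to\{0,1\}$ defined by ${\bf x}^*(v)=1$ iff $v\in S^*$. For any chordless cycle $Q$ of $G'$, the fact that $G'- S^*$ is chordal forces $V(Q)\cap S^*\neq\emptyset$, hence ${\bf x}^*(V(Q))\geq 1$. Thus ${\bf x}^*$ is a feasible fractional solution in the sense defined just above the observation, and its weight equals $\sum_{v\in V(G')}w'(v)\cdot{\bf x}^*(v)=w'(S^*)=\opt$.

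Since ${\bf x}_{\mathrm{init}}$ is, by construction, an optimal (minimum-weight) fractional solution, we obtain $w'({\bf x}_{\mathrm{init}})\leq w'({\bf x}^*)=\opt$, which is exactly the claimed lower bound. There is essentially no obstacle in the proof itself; the only subtle point lies outside the observation, namely that the LP has exponentially many constraints (one per chordless cycle), so that the ellipsoid method invoked in the initialization step needs a polynomial-time separation oracle. This is handled by observing that checking feasibility for a given ${\bf x}$ reduces to searching for a chordless cycle of total ${\bf x}$-value less than $1$, which can be done in polynomial time by standard chordal-graph recognition techniques combined with shortest-path computations on an auxiliary weighted graph; however, this implementation detail is not needed for the inequality $\opt\geq w'({\bf x}_{\mathrm{init}})$ itself.
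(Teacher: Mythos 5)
Your proof is correct and is exactly the standard LP-relaxation argument the paper has in mind; the paper states this as an unproved observation (prefaced earlier by ``Clearly, the solution to the instance $(G',w')$ of \wcdel is at least as large as the weight of an optimal fractional solution''), and your argument — the indicator vector of an optimal deletion set $S^*$ is a feasible fractional solution of weight $\opt$, so the LP optimum ${\bf x}_{\mathrm{init}}$ has weight at most $\opt$ — is precisely the intended justification.
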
 

Thus, to prove Lemma \ref{lemma:newApproxSpecial}, it is sufficient to return a solution that is at least $\opt$ and at most $c\cdot\log n\cdot w'({\bf x})$.
We would like to proceed by calling our algorithm recursively. For this purpose, we first need to ensure that ${\bf x}$ satisfies the low-value and zero-clique invariants, to which end we use the following notation. We let $h({\bf x})=\{v\in V(G'): {\bf x}(v)\geq 1/(c\cdot \log n)\}$ denote the set of vertices to which ${\bf x}$ assigns high values. Moreover, given a clique $M$ in $G'$, we let $({\bf x}\setminus M): V(G')\rightarrow [0,\infty)$ denote the function that assigns 0 to any vertex in $M$ and $\displaystyle{(1+3\cdot\max_{u\in V(G')}\{{\bf x}(u)\}){\bf x}(v)}$ to any other vertex $v\in V(G')$. Now, to adjust ${\bf x}$ to be of the desired form both at this phase and at later recursive calls, we rely on the two following lemmata.

\begin{lemma}\label{lem:adjustLowVal}
Define $\widehat{G}=G'- h({\bf x})$, $\widehat{w}=w'|_{V(\widehat{G})}$ and $ \widehat{\bf x}={\bf x}|_{V(\widehat{G})}$.
Then, $c\cdot\log n\cdot w'(\widehat{\bf x}) +  w'(h({\bf x})) \leq c\cdot\log n\cdot w'({\bf x})$.
\end{lemma}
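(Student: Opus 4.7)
The plan is a direct calculation from the definitions, so there is essentially no obstacle; the whole content of the lemma is a one-line rearrangement exploiting the threshold that defines $h(\mathbf{x})$.

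First I would decompose the weighted sum $w'(\mathbf{x}) = \sum_{v \in V(G')} w'(v)\cdot \mathbf{x}(v)$ according to membership in $h(\mathbf{x})$, writing
\[
w'(\mathbf{x}) \;=\; \sum_{v \in h(\mathbf{x})} w'(v)\cdot \mathbf{x}(v) \;+\; \sum_{v \in V(G') \setminus h(\mathbf{x})} w'(v)\cdot \mathbf{x}(v).
\]
The second sum is exactly $w'(\widehat{\mathbf{x}})$, since $\widehat{\mathbf{x}}$ is defined as the restriction of $\mathbf{x}$ to $V(\widehat{G}) = V(G') \setminus h(\mathbf{x})$ and $\widehat{w}$ agrees with $w'$ on that set.

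Next I would lower-bound the contribution of the high-value vertices. By definition of $h(\mathbf{x})$, every $v \in h(\mathbf{x})$ satisfies $\mathbf{x}(v) \geq 1/(c\cdot \log n)$; combined with the assumption that weights are positive, this yields
\[
\sum_{v \in h(\mathbf{x})} w'(v)\cdot \mathbf{x}(v) \;\geq\; \frac{1}{c\cdot \log n}\sum_{v \in h(\mathbf{x})} w'(v) \;=\; \frac{w'(h(\mathbf{x}))}{c\cdot \log n}.
\]
Substituting this bound into the decomposition gives
\[
w'(\mathbf{x}) \;\geq\; \frac{w'(h(\mathbf{x}))}{c\cdot \log n} \;+\; w'(\widehat{\mathbf{x}}).
\]

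Finally I would multiply both sides by $c\cdot \log n$ (which is positive) to obtain
\[
c\cdot \log n \cdot w'(\mathbf{x}) \;\geq\; w'(h(\mathbf{x})) \;+\; c\cdot \log n \cdot w'(\widehat{\mathbf{x}}),
\]
which is the stated inequality. Since each step is an equality or a direct consequence of the definition of $h(\mathbf{x})$, no further argument is required; the lemma is really just a bookkeeping fact that will later be used to argue that deleting the high-value vertices outright, at an $O(\log n)$ overhead, is ``paid for'' by the fractional solution.
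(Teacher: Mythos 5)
Your proof is correct and is essentially the same argument as the paper's: decompose $w'(\mathbf{x})$ over $h(\mathbf{x})$ and its complement, use the threshold $\mathbf{x}(v)\geq 1/(c\log n)$ to lower-bound the high-value contribution, and rearrange. The paper just states the intermediate inequality $w'(\widehat{\mathbf{x}})\leq w'(\mathbf{x})-\frac{1}{c\log n}w'(h(\mathbf{x}))$ directly without spelling out the decomposition.
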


\begin{proof}
By the definition of $h({\bf x})$, it holds that $w'(\widehat{\bf x})\leq w'({\bf x})-\frac{1}{c\cdot \log n}\cdot w'(h({\bf x}))$. 
Thus, $c\cdot\log n\cdot w'(\widehat{\bf x}) +  w'(h({\bf x})) \leq c\cdot\log n\cdot w'({\bf x})$.
\end{proof}

Thus, it is safe to update $G'$ to $G'- h({\bf x})$, $w'$ to $w'|_{V(\widehat{G})}$, $H'$ to $H'- h({\bf x})$ and ${\bf x}$ to ${\bf x}|_{V(\widehat{G})}$, where we ensure that once we obtain a solution to the new instance, we add $w'(h({\bf x}))$ to this solution and $h({\bf x})$ to the set realizing it. 

\begin{lemma}\label{lem:adjustZeroClique}
Given a clique $M$ in $G'$, the function $({\bf x}\setminus M)$ is a valid fractional solution such that $w'({\bf x}\setminus M)\leq (1+3\cdot\max_{v\in V(G')}\{{\bf x}(v)\})w'({\bf x})$. 
\end{lemma}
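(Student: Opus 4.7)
The plan is to verify two things: that $({\bf x} \setminus M)$ satisfies the fractional covering constraint ${\bf x}(V(Q)) \geq 1$ for every chordless cycle $Q$, and that it satisfies the stated weight bound. Writing $\mu := \max_{u \in V(G')} {\bf x}(u)$ for brevity, the weight bound is essentially immediate: every vertex not in $M$ has its ${\bf x}$-value scaled by exactly $(1 + 3\mu)$, while vertices in $M$ contribute $0$, so
\[ w'({\bf x} \setminus M) \;=\; (1+3\mu)\sum_{v \in V(G')\setminus M} w'(v)\, {\bf x}(v) \;\leq\; (1+3\mu)\, w'({\bf x}), \]
using that weights are nonnegative.

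The main step is therefore to verify the covering constraint. The crucial combinatorial observation I would rely on is that $|V(Q) \cap M| \leq 2$ for every chordless cycle $Q$ of $G'$. Indeed, any two vertices of $V(Q) \cap M$ are adjacent in $G'$ (since $M$ is a clique), so were three vertices of $V(Q)$ pairwise in $M$ they would induce a triangle inside $G'[V(Q)]$. But the only edges of $G'[V(Q)]$ are the cycle edges of $Q$ (by chordlessness), and a triangle among cycle edges would force $Q$ to have length exactly~$3$. Since every chordless cycle in our setting has length at least $4$ (in fact at least $49$, under the standing assumption of this subsection that $G$ has no chordless cycle on at most $\numdhd$ vertices), no such triangle can exist, yielding the claim.

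Using this, I would finish as follows. Since ${\bf x}(V(Q)) \geq 1$ and at most two vertices of $V(Q)$ are zeroed out by the operation $\cdot \setminus M$, each of ${\bf x}$-value at most $\mu$, we get ${\bf x}(V(Q) \setminus M) \geq 1 - 2\mu$. Scaling by $(1+3\mu)$ then gives
\[ ({\bf x}\setminus M)(V(Q)) \;\geq\; (1+3\mu)(1-2\mu) \;=\; 1 + \mu(1-6\mu), \]
which is at least $1$ provided $\mu \leq 1/6$.

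The only genuine obstacle is confirming that $\mu \leq 1/6$ whenever this lemma is invoked. This is precisely what the Low-Value Invariant enforces at every recursive call: the bound ${\bf x}(v) < \bigl((c\log n + 9)/(c\log n)\bigr)^{\delta}\cdot 1/(c\log n)$ together with the depth bound $\delta = O(\log n)$ and $n \geq 64$ keeps $\mu$ well below $1/6$ once $c$ is chosen large enough. At the initialization phase Lemma~\ref{lem:adjustLowVal} is applied first to reduce to ${\bf x}(v) < 1/(c \log n)$, after which the same calculation goes through. In fact, the numerical constant $3$ in the definition of ${\bf x} \setminus M$ appears to be calibrated precisely so that $(1+3\mu)(1-2\mu) \geq 1$ throughout this regime, which is what makes the two invariants compose cleanly across recursive calls.
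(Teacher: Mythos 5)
Your proof is correct and follows essentially the same route as the paper: observe that a chordless cycle meets the clique $M$ in at most two vertices, scale the remaining fractional mass by $(1+3\mu)$, and verify $(1+3\mu)(1-2\mu)\ge 1$ under the Low-Value Invariant. If anything, you are more careful than the paper in flagging that the lemma's validity implicitly relies on $\mu\le 1/6$ (a hypothesis the paper uses without stating it explicitly in the lemma), and in tracing where that bound comes from.
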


\begin{proof}
To prove that $({\bf x}\setminus M)$ is a valid fractional solution, let $Q$ be some chordless cycle in $G'$. We need to show that $({\bf x}\setminus M)(V(Q))\geq 1$. Since $M$ is a clique, $Q$ can contain at most two vertices from $M$. Thus, since $\bf x$  is a valid fractional solution, it holds that ${\bf x}(V(Q)\setminus V(M))\geq 1 - 2\cdot\max_{u\in V(G')}\{{\bf x}(u)\}$. By the definition $({\bf x}\setminus M)$, this fact implies that $({\bf x}\setminus M)(V(Q))=({\bf x}\setminus M)(V(Q)\setminus V(M)) \geq (1+3\cdot\max_{u\in V(G')}\{{\bf x}(u)\})(1-2\cdot\max_{u\in V(G')}\{{\bf x}(u)\}) \geq \min\{(1+\frac{3}{c\cdot\log n})(1-\frac{2}{c\cdot\log n}),1\} = \min\{1 + 1/(c\cdot\log n) - 6/((c\cdot \log n)^2),1\}\geq 1$, where the last inequality relies on the assumption $n\geq 64$.

For the proof of the second part of the claim, note that $w'({\bf x}\setminus M)=(1+3\cdot\max_{v\in V(G')}\{{\bf x}(v)\})$ $w'({\bf x}|_{V(G')\setminus V(M)})\leq (1+3\cdot\max_{v\in V(G')}\{{\bf x}(v)\})w'({\bf x})$.
\end{proof}

Next, it is possible to call \alg{CVD-APPROX} recursively with the fractional solution $({\bf x}\setminus C)$. In the context of the low-value invariant, observe that indeed, for any $v\in V(G')$, it now holds that $({\bf x}\setminus C)(v) = (1+3\cdot\max_{u\in V(G')}\{{\bf x}(u)\}){\bf x}(v) < (1+\frac{3}{c\cdot\log n})\cdot\frac{1}{c\cdot\log n} < (\frac{c\cdot\log n + 9}{c\cdot\log n})^{\delta}\cdot \frac{1}{c\cdot\log n}$ for $\delta=1$.
Similarly, by Lemma \ref{lem:adjustZeroClique}, $w'({\bf x}\setminus C)\leq (\frac{c\cdot\log n + 9}{c\cdot\log n})^{\delta}\cdot w'({\bf x})$ for $\delta=1$. It is also clear that $\alpha(G')\leq n$. Thus, if Lemma \ref{lemma:approxRecursiveCall} is true, we return a solution that is at least $\opt$ and at most $c\cdot\log n\cdot w({\bf x})$ as desired. In other words, to prove Lemma \ref{lemma:newApproxSpecial}, it is sufficient that we next focus only on the proof of Lemma \ref{lemma:approxRecursiveCall}. The proof of this lemma is done by induction. When we consider some recursive call, we assume that the solutions returned by the additional recursive calls that it performs, which are associated with graphs $\widetilde{G}$ such that $\alpha(\widetilde{G})\leq\frac{3}{4}\alpha(G')$, comply with the demands of the lemma.

\bigskip
{\noindent\bf Termination.} Once $G'$ becomes a chordal graph, we return 0 as our solution and $\emptyset$ as the set that realizes it. Clearly, we thus satisfy the demands of Lemma \ref{lemma:approxRecursiveCall}. In fact, we thus also ensure that the execution of our algorithm terminates once $\alpha(G')<24$:

\begin{lemma}\label{lem:approxTerminate}
If $\alpha(G')<24$, then $G'$ is a chordal graph.
\end{lemma}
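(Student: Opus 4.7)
The plan is to prove the contrapositive: assuming $G'$ is not chordal, I will show that $\alpha(G') \geq 24$. The key facts to exploit are that (i) $G'$ is an induced subgraph of the input graph $G$, and (ii) by the standing assumption of the Clique+Chordal special case, $G$ (and hence every induced subgraph of it) has no chordless cycle on at most $\numdhd$ vertices.

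First I would invoke the definition of a chordal graph: if $G'$ is not chordal, then $G'$ contains some chordless cycle $Q$ as an induced subgraph. Since $G'$ is an induced subgraph of $G$, this $Q$ is also a chordless induced cycle of $G$, so by the standing hypothesis $|V(Q)| \geq 49$.

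Next I would use the elementary fact that the independence number of a cycle on $\ell$ vertices is $\lfloor \ell/2 \rfloor$, obtained by taking every other vertex around the cycle. Applied to $Q$ this gives an independent set of $Q$ of size $\lfloor |V(Q)|/2 \rfloor \geq \lfloor 49/2 \rfloor = 24$. Because $Q$ is an induced subgraph of $G'$, any independent set in $Q$ is also an independent set in $G'$, so $\alpha(G') \geq 24$. This contradicts the hypothesis $\alpha(G') < 24$, completing the contrapositive.

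There is no real obstacle here; the lemma is essentially bookkeeping that records why the choice of the constant $\numdhd = 48$ in the definition of the special case is compatible with the threshold $24$ used in the termination condition. The only point worth stating explicitly is that chordless induced cycles are preserved under taking induced subgraphs, which is immediate from the definitions.
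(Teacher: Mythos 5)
Your argument is correct and matches the paper's proof essentially line for line: both take the (contrapositive/contradiction) route of extracting a chordless cycle $Q$ of length at least $49$ from the $C_{48}$-free assumption and then observing that every other vertex of $Q$ yields an independent set of size at least $\lfloor 49/2\rfloor = 24$. The paper phrases it as a direct contradiction while you phrase it as a contrapositive, but the content is identical.
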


\begin{proof}
Suppose, by way of contradiction, that $G'$ is not a chordal graph. Then, it contains a chordless cycle $Q$. Since $G'$ is an induced subgraph of $G$, where $G$ is assumed to exclude any chordless cycle on at most 48 vertices, we have that $|V(Q)|>48$. Note that if we traverse $Q$ in some direction, and insert every second vertex on $Q$ into a set, excluding the last vertex in case $|V(Q)|$ is odd, we obtain an independent set. Thus, we have that $\alpha(G)\geq 24$, which is a contradiction.
\end{proof}

Thus, since we will ensure that each recursive calls is associated with a graph whose independence number is at most $3/4$ the independence number of the current graph, we have the following observation.

\begin{observation}\label{obs:depth-chordal}
The maximum depth of the recursion tree is bounded by $q\cdot\log n$ for some fixed constant $q$.
\end{observation}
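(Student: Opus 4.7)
The plan is to use $\alpha(G')$ as a monotone potential that decays geometrically along any root-to-leaf path in the recursion tree, and to combine this decay with the termination criterion supplied by Lemma~\ref{lem:approxTerminate}.

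First, at the root of the recursion tree we trivially have $\alpha(G')\le |V(G)|=n$. Next, the paragraph immediately preceding the observation commits the algorithm to dividing each subproblem so that every child recursive call is invoked on a graph $\widetilde{G}$ with $\alpha(\widetilde{G})\le \tfrac{3}{4}\alpha(G')$. Taking this shrinkage property as a hypothesis, a routine induction on depth shows that any call at depth $\delta$ (with the root at depth $1$) is associated with a graph of independence number at most $(\tfrac{3}{4})^{\delta-1}n$.

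Second, Lemma~\ref{lem:approxTerminate} ensures that the algorithm returns as soon as $\alpha(G')<24$, because in that regime $G'$ is already chordal and the termination clause fires. Hence the recursion must stop once $(\tfrac{3}{4})^{\delta-1}n<24$, i.e.\ once $\delta > 1+\log_{4/3}(n/24)$. Since $\log_{4/3}(n/24)=\OO(\log n)$, any constant $q$ strictly exceeding $1/\log_2(4/3)$ (for instance $q=3$) bounds the depth by $q\cdot\log n$ for all sufficiently large $n$; the statement for $n$ below some absolute constant is vacuous after adjusting $q$.

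The content of the proof itself is merely bookkeeping. The real difficulty lies entirely in establishing the $\tfrac{3}{4}$-shrinkage property, that is, in designing the division step so that splitting the chordal factor $H'$ along a well-chosen maximal clique $M$ leaves both sides with independence number at most $\tfrac{3}{4}\alpha(G')$. This will require exploiting the clique-tree structure of chordal graphs promised by Lemma~\ref{lem:cliqueForest}, but the present observation treats it as a guarantee to be discharged elsewhere and is then an immediate consequence.
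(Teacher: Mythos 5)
Your proof is correct and is exactly the argument the paper has in mind: the $\tfrac{3}{4}$ geometric decay of $\alpha(G')$ down the recursion tree combined with the termination criterion $\alpha(G')<24$ from Lemma~\ref{lem:approxTerminate} gives an $\OO(\log n)$ bound on the depth. The paper states the $\tfrac{3}{4}$-shrinkage guarantee in the sentence immediately preceding the observation and treats the depth bound as an immediate consequence, just as you do.
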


\bigskip
{\noindent\bf Recursive Call.} Since $H'$ is a chordal graph, it admits a clique forest (Lemma \ref{lem:cliqueForest}). In particular, it contains only $\OO(n)$ maximal cliques, and one can find the set of these maximal cliques in polynomial time \cite{Golumbic80}. By standard arguments on trees, we deduce that $H'$ has a maximal clique $M$ such that after we remove $M$ from $G'$ we obtain two (not necessarily connected) graphs, $\widehat{H}_1$ and $\widehat{H}_2$, such that $\alpha(\widehat{H}_1),\alpha(\widehat{H}_2)\leq\frac{2}{3}\alpha(H')$, and that the clique $M$ can be found in polynomial time. Let $G_1=G'[V(\widehat{H}_1)\cup V(M)\cup V(C)]$, $H_1=H'[V(\widehat{H}_1)\cup V(M)]$, $G_2=G'[V(\widehat{H}_2)\cup V(M)\cup V(C)]$ and $H_2=H'[V(\widehat{H}_2)\cup V(M)]$, and observe that $\alpha(G_1),\alpha(G_2)\leq\frac{2}{3}\alpha(G')+2\leq\frac{3}{4}\alpha(G')$. Here, the last inequality holds because $\alpha(G')\geq 24$, else by Lemma \ref{lem:approxTerminate}, the execution should have already terminated.

\begin{figure}[t]
\centering
\includegraphics[scale=0.6]{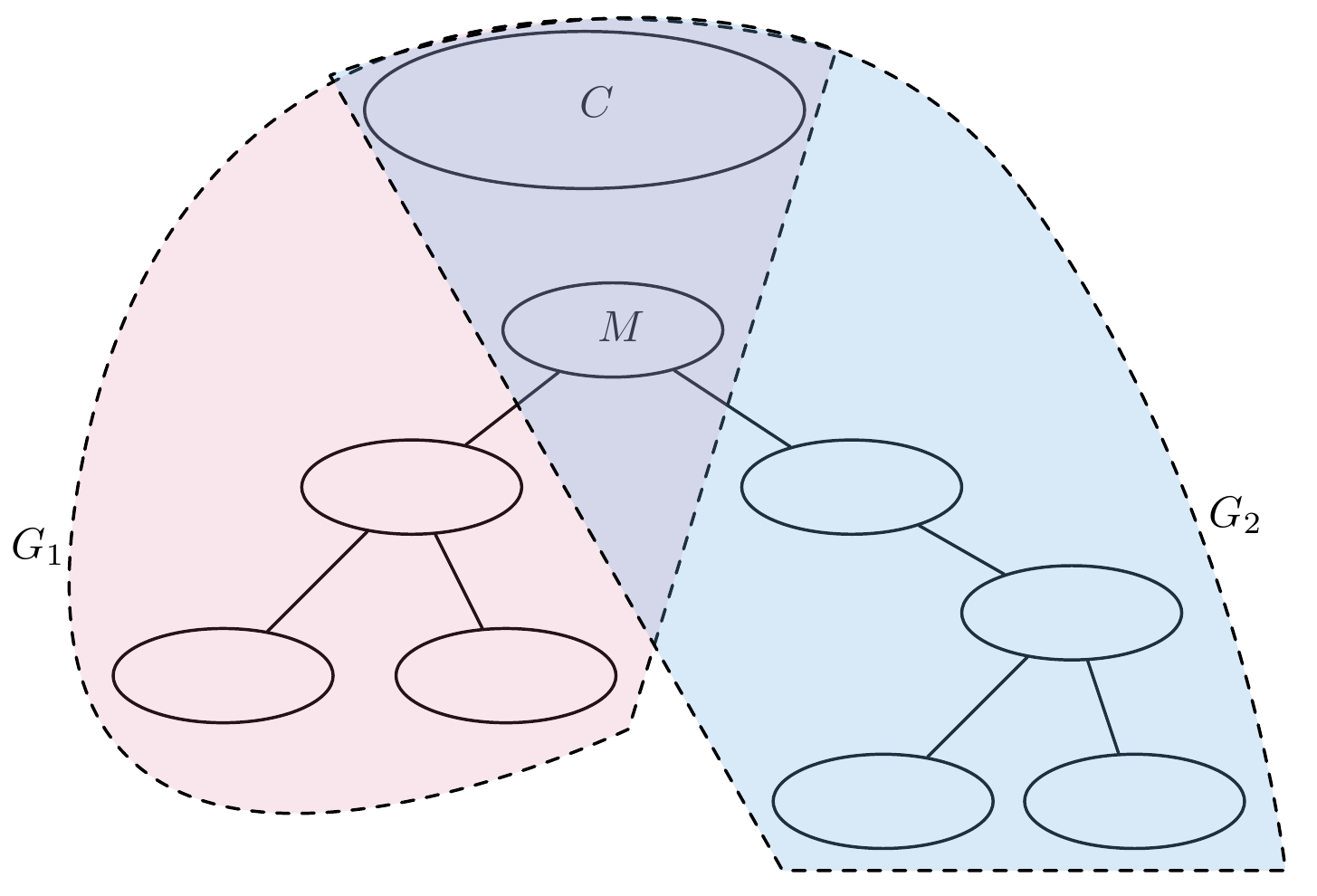}
\caption{Subinstances created by a recursive call}
\label{fig:clique-special-case-recursion}
\end{figure}

We proceed by replacing $\bf x$ by $({\bf x}\setminus M)$. For the sake of clarity, we denote ${\bf x}^*=({\bf x}\setminus M)$. 
By Lemmata \ref{lem:adjustLowVal} and \ref{lem:adjustZeroClique}, to prove Lemma \ref{lemma:approxRecursiveCall}, it is now sufficient to return a solution that is at least $\opt$ and at most $\displaystyle{(1/(1+3\cdot(\frac{c\cdot\log n + 9}{c\cdot\log n})^{\delta}\cdot \frac{1}{c\cdot\log n}))\cdot(\frac{c\cdot\log n}{c\cdot\log n+9})^{\delta-1}\cdot \log \alpha(G') \cdot w({\bf x}^*)}$, along with a set that realizes it. Moreover, for any $v\in V(G')$, it holds that ${\bf x}^*(v) < \displaystyle{(1+3\cdot(\frac{c\cdot\log n + 9}{c\cdot\log n})^{\delta}}$ $\displaystyle{\cdot \frac{1}{c\cdot\log n})\cdot(\frac{c\cdot\log n + 9}{c\cdot\log n})^{\delta}}$ $\displaystyle{\cdot \frac{1}{c\cdot\log n}}$.
Note that by Observation~\ref{obs:depth-chordal}, by setting $c\geq 9q$, we have that $\displaystyle{(\frac{c\cdot\log n + 9}{c\cdot\log n})^{\delta}\leq e < 3}$,
and therefore $\displaystyle{1+3\cdot(\frac{c\cdot\log n + 9}{c\cdot\log n})^{\delta}\cdot \frac{1}{c\cdot\log n}\leq \frac{c\cdot\log n + 9}{c\cdot\log n}}$. In particular, to prove Lemma \ref{lemma:approxRecursiveCall}, it is sufficient to return a solution that is at least $\opt$ and at most $\displaystyle{(\frac{c\cdot\log n}{c\cdot\log n+9})^{\delta}\cdot \log \alpha(G') \cdot w({\bf x}^*)}$.

Next, we define two subinstances, $I_1=(G_1,w|_{V(G_1)},C,H_1,{\bf x}^*|_{V(G_1)})$ and $I_2=(G_2,w|_{V(G_2)},$ $C,H_2,{\bf x}^*|_{V(G_2)})$ (see Figure~\ref{fig:clique-special-case-recursion}). We solve each of these subinstances by a recursive call to \alg{CVD-APPROX} (by the above discussion, these calls are valid --- we satisfy the low-value and zero-clique invariants). Thus, we obtain two solutions, $s_1$ to $I_1$ and $s_2$ to $I_2$, and two sets that realize these solutions, $S_1$ and $S_2$. By the inductive hypothesis, we have the following observations.

\begin{observation}\label{obs:inductionHit-chordal}
$S_1\cup S_2$ intersects any chordless cycle in $G'$ that lies entirely in either $G_1$ or $G_2$.
\end{observation}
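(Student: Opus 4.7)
The plan is to show that this observation follows almost immediately from the inductive hypothesis together with the fact that $G_1$ and $G_2$ are induced subgraphs of $G'$. Concretely, suppose $Q$ is a chordless cycle in $G'$ with $V(Q)\subseteq V(G_1)$. Being a chordless cycle is a property that depends only on the induced subgraph on $V(Q)$, so the same $Q$ is a chordless cycle in $G_1$. By the inductive hypothesis (Lemma~\ref{lemma:approxRecursiveCall}), the recursive call on $I_1=(G_1,w|_{V(G_1)},C,H_1,{\bf x}^*|_{V(G_1)})$ returns a set $S_1$ realizing a valid solution to this WCVD instance, which means $G_1- S_1$ is chordal. Hence $S_1$ must intersect $V(Q)$, and in particular $(S_1\cup S_2)\cap V(Q)\neq\emptyset$. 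The symmetric argument handles the case $V(Q)\subseteq V(G_2)$.

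The only thing to verify before invoking the inductive hypothesis is that the recursive calls on $I_1$ and $I_2$ are legal instances of the special case and that the induction is indeed on a strictly smaller measure. The former holds because $C$ remains a clique in both $G_1$ and $G_2$, and $H_1,H_2$ are induced subgraphs of the chordal graph $H'$, hence chordal; moreover, the preceding manipulation of ${\bf x}$ (via Lemma~\ref{lem:adjustZeroClique} applied to the maximal clique $M$) re-establishes the low-value and zero-clique invariants required at depth $\delta+1$. For the measure, we already observed that $\alpha(G_1),\alpha(G_2)\leq\frac{3}{4}\alpha(G')$, so the induction is applied at strictly smaller $\alpha$.

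There is essentially no obstacle here: the observation is a bookkeeping statement whose entire content is that ``chordless'' is preserved in induced subgraphs and that a solution returned for a subinstance, by definition, destroys every chordless cycle of that subinstance's graph. The work in the surrounding argument will be the complementary claim (handled separately) that chordless cycles of $G'$ which are \emph{not} contained in either $G_1$ or $G_2$ must pass through $M$, which is why the Clique+Chordal finishing call on the special instance built from $M$ together with the recursive residues is needed to cover the remaining obstructions.
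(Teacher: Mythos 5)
Your proof is correct and matches what the paper treats as immediate: the paper states the observation directly after "By the inductive hypothesis, we have the following observations," and your argument is exactly the expected unpacking of that claim (chordlessness is preserved in induced subgraphs, and a realized solution for $I_i$ makes $G_i - S_i$ chordal). The surrounding remarks on the legality of the recursive calls and the strictly decreasing measure $\alpha$ are accurate as well.
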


\begin{observation}\label{obs:inductionWeight-chordal}
Given $i\in\{1,2\}$, $s_i\leq (\frac{c\cdot\log n}{c\cdot\log n+9})^{\delta}\cdot c\cdot\log(\alpha(G_i))\cdot w({\bf x}^*_i)$.
\end{observation}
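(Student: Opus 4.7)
The plan is to invoke the inductive hypothesis, namely Lemma~\ref{lemma:approxRecursiveCall}, on each of the two subinstances $I_1$ and $I_2$. Since the present call is of depth $\delta$, each subinstance corresponds to a recursive call of depth $\delta+1\geq 2$, so the hypothesis applies and delivers the bound $(\frac{c\cdot\log n}{c\cdot\log n+9})^{(\delta+1)-1}\cdot c\cdot\log(\alpha(G_i))\cdot w({\bf x}^*_i) = (\frac{c\cdot\log n}{c\cdot\log n+9})^{\delta}\cdot c\cdot\log(\alpha(G_i))\cdot w({\bf x}^*_i)$, which is precisely the claimed inequality.

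What remains is a routine verification that $I_1$ and $I_2$ are legal recursive calls. I would check that: (a) each $G_i$ is an induced subgraph of $G$ with $V(C)\subseteq V(G_i)$, and $H_i$ is an induced subgraph of $H$, so that the Clique+Chordal structure is preserved and $G_i$ continues to exclude chordless cycles on at most $\numdhd$ vertices; (b) ${\bf x}^*|_{V(G_i)}$ is a valid fractional solution for $G_i$, since every chordless cycle in $G_i$ is already a chordless cycle in $G'$ and ${\bf x}^*$ was shown to be a valid fractional solution on $G'$ by Lemma~\ref{lem:adjustZeroClique}; and (c) the zero-clique and low-value invariants hold at depth $\delta+1$. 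The zero-clique invariant is immediate because ${\bf x}$ already vanishes on $V(C)$ by assumption on the current call, and the transformation $({\bf x}\setminus M)$ preserves this since it only zeros values on $M$ and rescales other values. The low-value invariant at depth $\delta+1$ is essentially established in the paragraph immediately preceding the observation, where the inequality $1+3\cdot(\frac{c\cdot\log n+9}{c\cdot\log n})^{\delta}\cdot\frac{1}{c\cdot\log n}\leq \frac{c\cdot\log n+9}{c\cdot\log n}$, valid under the choice $c\geq 9q$, upgrades the pointwise bound on ${\bf x}^*(v)$ to the one required at depth $\delta+1$.

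Well-foundedness of the recursion is ensured by the bound $\alpha(G_i)\leq\frac{3}{4}\alpha(G')$ established when choosing the splitting clique $M$, together with Lemma~\ref{lem:approxTerminate} supplying the base case of the induction. The main concern, and the only one that is not purely mechanical, is confirming that every prerequisite of Lemma~\ref{lemma:approxRecursiveCall} has indeed been re-established on the subinstances; since the surrounding discussion has already argued these points, the observation is a pure bookkeeping consequence of the inductive hypothesis.
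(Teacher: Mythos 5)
Your proposal is correct and takes essentially the same approach as the paper: the observation is precisely what Lemma~\ref{lemma:approxRecursiveCall} (the inductive hypothesis) yields when applied to the recursive calls on $I_1$ and $I_2$, which are of depth $\delta+1$, so the exponent $(\delta+1)-1=\delta$ appears. The paper simply states ``by the inductive hypothesis''; your additional verification that the subinstances satisfy the preconditions (induced-subgraph structure, validity of the restricted fractional solution, the zero-clique invariant, and the low-value invariant at depth $\delta+1$) is just a more explicit restatement of points already argued in the surrounding text.
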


Moreover, since ${\bf x}^*(V(C)\cup V(M))=0$, we also have the following observation.

\begin{observation}\label{obs:inductionSum-chordal}
$w({\bf x}^*_1) + w({\bf x}^*_2) = w({\bf x}^*)$.
\end{observation}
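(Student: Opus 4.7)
The plan is to verify that the support of ${\bf x}^*$ lies entirely within $V(\widehat{H}_1) \uplus V(\widehat{H}_2)$, so that the weighted sum over $V(G_1)$ plus that over $V(G_2)$ double-counts nothing and together covers every vertex with a nonzero contribution.

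First, I would show that ${\bf x}^*(v) = 0$ for every $v \in V(C) \cup V(M)$. The vanishing on $V(M)$ is immediate from the definition of the operator $({\bf x} \setminus M)$, which explicitly assigns $0$ to the vertices of $M$. The vanishing on $V(C)$ uses the \textbf{Zero-Clique Invariant} maintained by the algorithm: before the current call, ${\bf x}(v) = 0$ for every $v \in V(C)$, and rescaling by $(1 + 3 \max_u {\bf x}(u))$ preserves this. Since $V(C)$ is disjoint from $V(H') \supseteq V(M)$, these two facts jointly give ${\bf x}^* \equiv 0$ on $V(C) \cup V(M)$.

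Next, I would exploit the structural decomposition of $V(G')$. By construction, $V(G') = V(C) \uplus V(H')$ and $V(H') = V(\widehat{H}_1) \uplus V(\widehat{H}_2) \uplus V(M)$, where $V(\widehat{H}_1)$ and $V(\widehat{H}_2)$ are the vertex-sets of the two pieces obtained by removing the separating clique $M$ from $H'$. Thus $V(G_1) \cap V(G_2) = V(C) \cup V(M)$, exactly the set on which ${\bf x}^*$ is identically zero. Writing ${\bf x}^*_i = {\bf x}^*|_{V(G_i)}$, this yields
\[
w({\bf x}^*_i) \;=\; \sum_{v \in V(G_i)} w'(v)\,{\bf x}^*(v) \;=\; \sum_{v \in V(\widehat{H}_i)} w'(v)\,{\bf x}^*(v),
\]
for each $i \in \{1,2\}$, because the contributions from $V(C) \cup V(M) \subseteq V(G_i)$ all vanish.

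Finally, since $V(\widehat{H}_1)$ and $V(\widehat{H}_2)$ are disjoint and their union with $V(C) \cup V(M)$ equals $V(G')$, summing the two displayed equalities gives $w({\bf x}^*_1) + w({\bf x}^*_2) = \sum_{v \in V(\widehat{H}_1) \cup V(\widehat{H}_2)} w'(v)\,{\bf x}^*(v) = \sum_{v \in V(G')} w'(v)\,{\bf x}^*(v) = w({\bf x}^*)$, as required. There is no real obstacle here: the whole point is that the careful engineering of ${\bf x}^*$ (zeroing the clique $C$ via the invariant and zeroing the separator $M$ via the $\setminus M$ operation) was designed precisely so that the splitting into $I_1$ and $I_2$ becomes additive in the fractional cost, and this observation merely records that fact.
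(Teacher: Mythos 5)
Your proposal is correct and follows essentially the same approach as the paper, which justifies the observation in a single line by noting that ${\bf x}^*(V(C)\cup V(M))=0$; you simply spell out the details of why this vanishing (Zero-Clique Invariant for $C$, the $\setminus M$ operation for $M$) together with the decomposition $V(G')=V(C)\uplus V(\widehat H_1)\uplus V(\widehat H_2)\uplus V(M)$ and $V(G_1)\cap V(G_2)=V(C)\cup V(M)$ yields the claimed additivity.
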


\begin{figure}[t]
\centering
\includegraphics[scale=0.6]{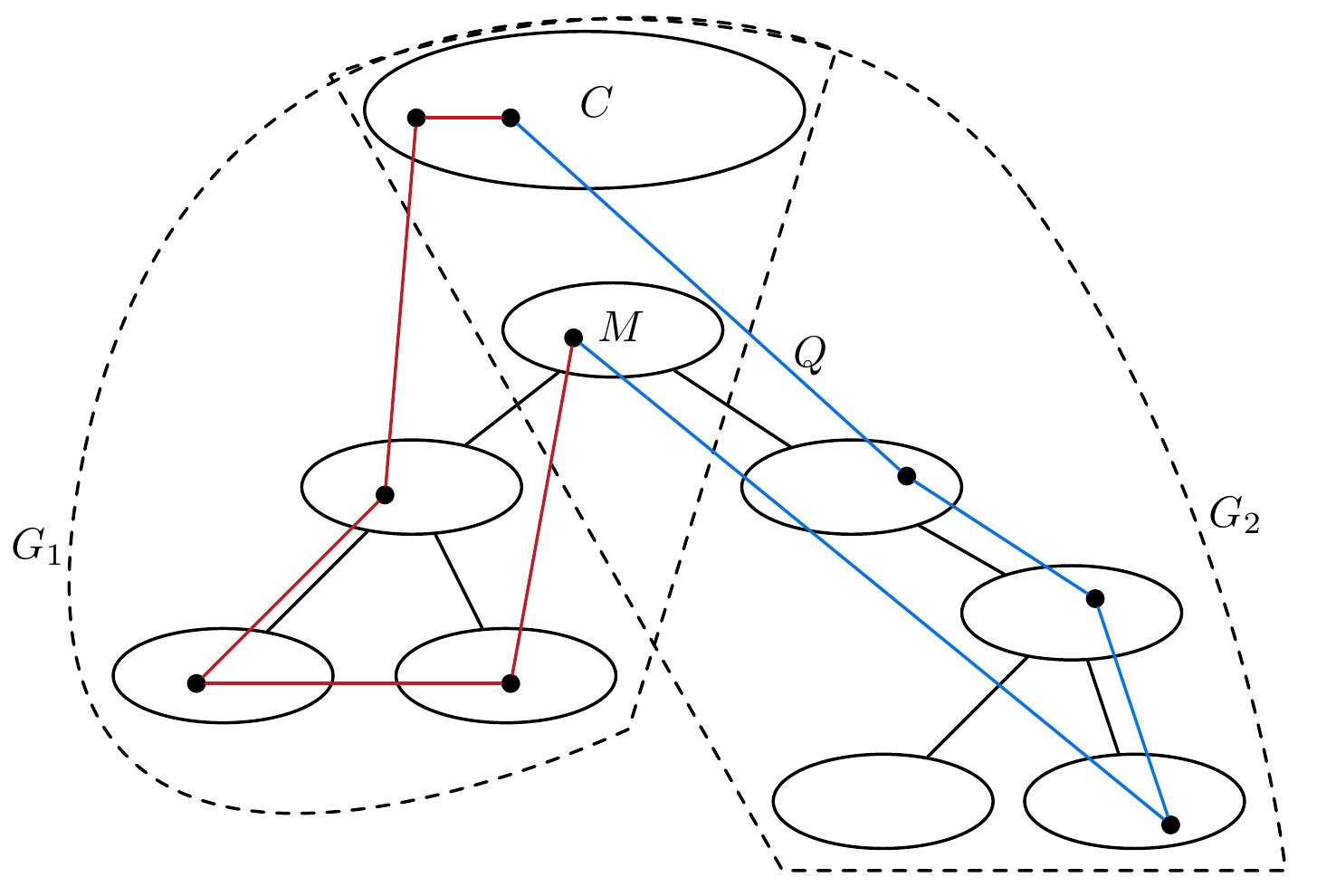}
\caption{An illustration of a bad cycle}
\label{fig:bad-cycle-cvd}
\end{figure}

We say that a cycle of $G'$ is {\em bad} if it is a chordless cycle that belongs entirely to neither $G_1$ nor $G_2$ (see Figure~\ref{fig:bad-cycle-cvd}). Next, we show how to intersect bad cycles.

\bigskip
{\noindent\bf Bad Cycles.} For any pair $(v,u)$ of vertices $v\in V(C)$ and $u\in V(M)$, we let ${\cal P}_1(v,u)$ denote the set of any (simple) path $P_1$ between $v$ and $u$ whose internal vertices belong only to $G_1$ and which does not contain a vertex $v' \in C$ and a vertex $u' \in M$ such that $\{v',u'\} \in E(G')$. Symmetrically, we let ${\cal P}_2(v,u)$ denote the set of any path $P_2$ between $v$ and $u$ whose internal vertices belong only to $G_2$ and which does not contain a vertex $v' \in C$ and a vertex $u' \in M$ such that $\{v',u'\} \in E(G')$. We note here that when $\{v,u\} \in E(G')$ then ${\cal P}_1(v,u)={\cal P}_2(v,u) = \emptyset$.

We first examine the relation between bad cycles and pairs $(v,u)$ of vertices $v\in V(C)$ and $u\in V(M)$.

\begin{lemma}\label{lem:relationBadCyc}
For any bad cycle $Q$ there exist a pair $(v,u)$ of vertices $v\in V(C)$, $u\in V(M)$, a path $P_1\in{\cal P}_1(v,u)$ such that $V(P_1)\subseteq V(Q)$, and a path $P_2\in{\cal P}_2(v,u)$ such that $V(P_2)\subseteq V(Q)$.
\end{lemma}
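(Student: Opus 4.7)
My plan is to exploit the chordlessness of $Q$ together with the structural fact that $V(\widehat{H}_1)$ and $V(\widehat{H}_2)$ induce no edge in $G'$. The starting observation is that $\widehat{H}_1$ and $\widehat{H}_2$ are subgraphs of $H'- V(M)$ separated by the maximal clique $M$ in the chordal graph $H'$, so there is no edge of $H'$ (and hence of $G'$, since $H'$ is an induced subgraph of $G'$) between $V(\widehat{H}_1)$ and $V(\widehat{H}_2)$. Since $Q$ is a bad cycle, $V(Q)$ meets both $V(\widehat{H}_1)$ and $V(\widehat{H}_2)$ (otherwise $Q\subseteq G_1$ or $Q\subseteq G_2$), and so every transition along $Q$ between these two parts must pass through a vertex of $V(M)\cup V(C)$.

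Next I would count how many vertices of $V(C)$ and of $V(M)$ can lie on $Q$. Since both $C$ and $M$ are cliques and $Q$ is chordless, any three vertices of $C$ (respectively $M$) on $Q$ would yield a chord, so $|V(Q)\cap V(C)|\leq 2$ and $|V(Q)\cap V(M)|\leq 2$; moreover when equality is attained the two vertices are consecutive on $Q$. I would then argue $V(Q)\cap V(C)\neq\emptyset$ and $V(Q)\cap V(M)\neq\emptyset$. Suppose for contradiction $V(Q)\cap V(C)=\emptyset$. Walking along the cycle, the (even) number of transitions between the $\widehat{H}_1$- and $\widehat{H}_2$-parts of $Q$ is at least $2$, and each transition must go through $V(M)$, forcing $|V(Q)\cap V(M)|\geq 2$, hence exactly $2$. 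The two such $M$-vertices must lie on opposite sides of $Q$ with a non-empty $\widehat{H}_i$-arc separating them on each side, so they are non-consecutive on $Q$; being adjacent in $G'$ (as $M$ is a clique), they give a chord of $Q$, a contradiction. The symmetric argument rules out $V(Q)\cap V(M)=\emptyset$.

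Having established that both intersections are non-empty, I would pick any $v\in V(Q)\cap V(C)$ and any $u\in V(Q)\cap V(M)$, and cut $Q$ at $\{v,u\}$ into its two $v$-$u$ arcs $Q_1$ and $Q_2$. Using that the $C$-vertices of $Q$ are at most two and consecutive on $Q$, and similarly for the $M$-vertices, together with the fact that $V(\widehat{H}_1)$ and $V(\widehat{H}_2)$ are non-adjacent in $G'$, I would show that each arc's $A$-vertices (where $A_i=V(Q)\cap V(\widehat{H}_i)$) lie entirely in one of $A_1,A_2$. Consequently one arc, call it $P_1$, has internal vertices in $V(\widehat{H}_1)\cup V(C)\cup V(M)\subseteq V(G_1)$, and the other, $P_2$, has internal vertices in $V(\widehat{H}_2)\cup V(C)\cup V(M)\subseteq V(G_2)$. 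Since $Q$ is bad, both $A_1$ and $A_2$ are non-empty on $Q$, so neither arc degenerates to the single edge $\{v,u\}$ (such a collapse would place $Q$ entirely inside $G_1$ or $G_2$).

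Finally, I would verify the non-adjacency condition in the definition of $\mathcal{P}_i(v,u)$: that $P_i$ contains no pair $v'\in V(C)$, $u'\in V(M)$ with $\{v',u'\}\in E(G')$. Any such pair appearing together on an arc either is consecutive on $Q$ or else yields a chord of $Q$. I would show, by listing the possible $C$-$M$ pairs on $P_1$ and $P_2$ and using that each such pair is separated on $Q$ by a non-empty $\widehat{H}_1$-arc on one side and a non-empty $\widehat{H}_2$-arc on the other, that no such pair is consecutive on $Q$; chordlessness then forbids the corresponding edge in $G'$. The main technical obstacle is precisely this last check, which is the reason we needed both that the $C$-vertices (and $M$-vertices) of $Q$ are consecutive and that the two $\widehat{H}_i$-arcs of $Q$ are non-empty; everything else is essentially bookkeeping around the key chordlessness argument.
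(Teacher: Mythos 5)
Your proof is correct and takes essentially the same route as the paper's: use chordlessness plus the fact that $C$ and $M$ are cliques to bound $|V(Q)\cap V(C)|$ and $|V(Q)\cap V(M)|$ by two consecutive vertices each, use the non-adjacency of $V(\widehat{H}_1)$ and $V(\widehat{H}_2)$ to force $Q$ to meet both $C$ and $M$, then cut $Q$ at a pair $(v,u)\in V(C)\times V(M)$ and observe that chordlessness kills every unwanted $C$--$M$ edge along each arc. Your version is somewhat more explicit than the paper's (which simply picks a non-adjacent pair $(v,u)$ and asserts the two subpaths land in ${\cal P}_1$ and ${\cal P}_2$), in that you spell out the four-block structure of a bad cycle and thereby justify both that the two $\widehat{H}_i$-arcs cleanly separate and that no pair in $V(C)\times V(M)$ on $Q$ is consecutive, but the underlying argument is the same.
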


\begin{proof}
Let $Q$ be some bad cycle. By the definition of a bad cycle, $Q$ must contain at least one vertex $a$ from $H_1\setminus V(M)$ and at least one vertex $b$ from $H_2\setminus V(M)$. Since $C$ and $M$ are cliques, $Q$ can contain at most two vertices from $C$ and at most two vertices from $M$, and if it contains two vertices from $C$ (resp.~$M$), then these two vertices are neighbors. Moreover, since the set $V(C)\cup V(M)$ contains all vertices common to $G_1$ and $G_2$, $Q$ must contain at least one vertex $v\in V(C)$ and at least one vertex $u\in V(M)$ with $\{v,u\} \notin E(G')$. Overall, we conclude that the subpath of $Q$ between $v$ and $u$ that contains $a$ belongs to ${\cal P}_1(v,u)$, while the subpath of $Q$ between $v$ and $u$ that contains $b$ belongs to ${\cal P}_2(v,u)$.
\end{proof}

In light Lemma \ref{lem:relationBadCyc}, to intersect bad cycles, we now examine how the fractional solution ${\bf x}^*$ handles pairs $(v,u)$ of vertices $v\in V(C)$ and $u\in V(M)$.

\begin{lemma}\label{lem:fracCutPair}
For each pair $(v,u)$ of vertices $v\in V(C)$ and $u\in V(M)$ with $\{v,u\} \notin E(G')$, there exists $i\in\{1,2\}$ such that for any path $P\in{\cal P}_i(v,u)$, ${\bf x}^*(V(P))\geq 1/2$.
\end{lemma}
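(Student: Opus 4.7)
The natural approach is proof by contradiction. Suppose that for each $i \in \{1,2\}$ there exists a path $P_i \in {\cal P}_i(v,u)$ with ${\bf x}^*(V(P_i)) < 1/2$. The plan is to stitch $P_1$ and $P_2$ together into a chordless cycle $Q$ of $G'$ and then derive a contradiction via the fractional-solution inequality ${\bf x}^*(V(Q)) \geq 1$, noting that by the zero-clique invariant ${\bf x}^*(v) = 0$, and by the replacement ${\bf x}^* = ({\bf x}\setminus M)$ also ${\bf x}^*(u) = 0$, so that ${\bf x}^*(V(Q)) = {\bf x}^*(V(P_1)) + {\bf x}^*(V(P_2)) < 1$ would give the desired contradiction.

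First I would choose the witnesses carefully: among all pairs $(P_1,P_2)\in{\cal P}_1(v,u)\times{\cal P}_2(v,u)$ both of ${\bf x}^*$-weight less than $1/2$, pick one minimizing $|V(P_1)|+|V(P_2)|$. Minimality immediately forces each $P_i$ to be an induced path, since any chord would allow a shortcut to a strictly shorter path on a subset of the vertices, which inherits membership in ${\cal P}_i(v,u)$ (both the ``internal vertices in $V(G_i)$'' condition and the forbidden $C$--$M$ adjacency condition are closed under taking subsets of vertices). By the same token, an internal vertex of $P_1$ lying in $V(C)$ must be the immediate successor of $v$ on $P_1$ (otherwise the clique edge from $v$ would yield a shortcut), and an internal vertex in $V(M)$ must be the immediate predecessor of $u$; the analogous statement holds for $P_2$.

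Next I would argue that $V(P_1)\cap V(P_2) = \{v,u\}$, so that $Q = P_1 + P_2$ is a simple cycle. Any additional shared vertex would lie in $V(G_1)\cap V(G_2) = V(C)\cup V(M)$, hence, by the previous paragraph, would have to be the (unique) $V(C)$- or $V(M)$-internal vertex of \emph{both} paths; such a degenerate case either reduces immediately to a shorter pair of paths (via clique shortcuts) or lets us relabel the shared vertex as the new ``connector'' in place of $v$ or $u$ when building the cycle. Once $Q$ is a simple cycle, I would show it is chordless. Chords inside a single $P_i$ are ruled out by induced-ness. For a cross chord $\{a,b\}$ with $a$ internal to $P_1$ and $b$ internal to $P_2$, the case split is: (i) $a\in V(\widehat{H}_1), b\in V(\widehat{H}_2)$ is impossible because removing $V(M)$ from $H'$ separates these two sides and all such edges would lie in $H$; (ii) if $a\in V(C)\cup V(M)$ or $b\in V(C)\cup V(M)$, then either clique edges from $a$ or $b$ back to $u$ or $v$ give a strict shortcut contradicting minimality, or the edge $\{a,b\}$ would itself be a $C$--$M$ adjacency forbidden by the definition of ${\cal P}_1$ or ${\cal P}_2$. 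Chords incident to $v$ or $u$ are handled analogously, using $\{v,u\}\notin E(G')$ together with shortcuts along $P_1$ or $P_2$.

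I expect the chord-elimination case analysis of step three to be the main obstacle: the subtle point is showing that chords whose endpoints involve the ``boundary'' internal vertices of $P_1$ and $P_2$ (those lying in $V(C)\cup V(M)$) can always be removed either by a shortcut contradicting the choice of $(P_1,P_2)$, or by the explicit $C$--$M$ non-adjacency built into the definition of ${\cal P}_i(v,u)$. Once this is carried out, $Q$ is a chordless cycle of $G'$, so ${\bf x}^*(V(Q))\geq 1$ by validity of ${\bf x}^*$; but ${\bf x}^*(V(Q)) = {\bf x}^*(V(P_1)) + {\bf x}^*(V(P_2)) < 1$ (using ${\bf x}^*(v)={\bf x}^*(u)=0$), a contradiction. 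Hence for some $i\in\{1,2\}$ every path in ${\cal P}_i(v,u)$ has ${\bf x}^*$-weight at least $1/2$, as claimed.
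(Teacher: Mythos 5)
Your overall strategy matches the paper's: assume both $P_1\in{\cal P}_1(v,u)$ and $P_2\in{\cal P}_2(v,u)$ have ${\bf x}^*$-weight below $1/2$, then find a chordless cycle inside $G'[V(P_1)\cup V(P_2)]$ whose ${\bf x}^*$-weight is below $1$. However, your attempt to show that the specific cycle $Q=P_1\cup P_2$ is chordless has a genuine gap, and it sits exactly at the ``subtle point'' you yourself flag. The minimality of $|V(P_1)|+|V(P_2)|$ does rule out chords inside a single $P_i$ (a chord gives a shorter path that still lies in ${\cal P}_i(v,u)$), and the definition of ${\cal P}_i$ rules out any $C$--$M$ adjacency inside one path. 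But neither device rules out a cross-chord between a $C$-internal (or $M$-internal) vertex of $P_1$ and an $\widehat{H}_2$-internal vertex of $P_2$, or symmetrically. Concretely, take $P_1 = (v,c_1,h_1,u)$ with $c_1\in V(C)$, $h_1\in V(\widehat{H}_1)$, and $P_2 = (v,h_1',h_2',u)$ with $h_1',h_2'\in V(\widehat{H}_2)$, and suppose $\{c_1,h_1'\}\in E(G')$. Both paths are induced, neither contains a $C$--$M$ adjacency, and the pair is minimal for $(v,u)$ since the only candidate detour $v,c_1,h_1',h_2',u$ is \emph{longer} than $P_2$, not shorter, and $c_1$ is already the successor of $v$ on $P_1$. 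So none of the reductions in your case (ii) fire: the clique edge $\{c_1,v\}$ yields no shortcut, and $\{c_1,h_1'\}$ is not a $C$--$M$ adjacency. Yet the 6-cycle $Q=P_1\cup P_2$ has the chord $\{c_1,h_1'\}$, so your Step~3 fails. (The contradiction is still available, but from the shorter chordless cycle $(c_1,h_1,u,h_2',h_1')$, which is not of the form $P_1'\cup P_2'$ with $P_i'\in{\cal P}_i(v,u)$.)

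The paper avoids this by never claiming that the whole $P_1\cup P_2$ is chordless. It first notes that, since ${\bf x}^*(V(P_1)\cup V(P_2))<1$ and ${\bf x}^*$ is a valid fractional solution, the induced subgraph $G'[V(P_1)\cup V(P_2)]$ contains no chordless cycle at all; then it explicitly constructs one. It passes to the shortest $C$-to-$M$ subpaths $\widehat{P}_1,\widehat{P}_2$ (so all their internal vertices lie in $\widehat{H}_1$, resp.\ $\widehat{H}_2$, hence no internal-to-internal edges), and then slides along $\widehat{P}_2$ to find the last neighbor $a$ of $a_1\in V(C)$ and the earliest neighbor $b$ of $b_1\in V(M)$ past $a$; the cycle through $\widehat{P}_1$ and the $a$--$b$ segment of $\widehat{P}_2$ is then chordless by construction. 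To salvage your approach you would need either to mimic this explicit extraction, or to strengthen the minimization to range over the pair $(v,u)$ as well and then argue that every remaining cross-chord produces a strictly smaller counterexample with a different connector pair; as written, the minimization is only over paths for a fixed $(v,u)$, and that is not enough.
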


\begin{proof}
Suppose, by way of contradiction, that the lemma is incorrect. Thus, there exist a pair $(v,u)$ of vertices $v\in V(C)$ and $u\in V(M)$ with $\{v,u\} \notin E(G')$, a path $P_1\in{\cal P}_1(v,u)$ such that ${\bf x}^*(V(P_1))<1/2$, and a path $P_2\in{\cal P}_2(v,u)$ such that ${\bf x}^*(V(P_2))<1/2$. Since ${\bf x}^*$ is a valid fractional solution, we deduce that $G'[V(P_1)\cup V(P_2)]$ does not contain any chordless cycle. Consider a shortest subpath $\widehat{P}_1$ of $P_1$ between a vertex $a_1\in V(C)$ and a vertex $b_1\in V(M)$, and a shortest subpath $\widehat{P}_2$ of $P_2$ between a vertex $a_2\in V(C)$ and a vertex $b_2\in V(M)$. Since neither $P_1$ nor $P_2$ contains any edge such that one of its endpoints belongs to $V(C)$ while the other endpoint belongs to $V(M)$, we have that $|V(\widehat{P}_1)|,|V(\widehat{P}_2)|\geq 3$. Furthermore, since vertices common in $P_1$ and $P_2$ must belong to $V(C)\cup V(M)$, we have that $\widehat{P}_1$ does not contain internal vertices that belong to $\widehat{P}_2$ or adjacent to internal vertices on $\widehat{P}_2$. Overall, since $C$ and $M$ are cliques, we deduce that $G'[V(\widehat{P}_1)\cup V(\widehat{P}_2)]$ contains a chordless cycle. To see this, let $a$ be the vertex closest to $b_2$ on $\widehat{P}_2$ that is a neighbor of $a_1$. Observe that $a$ exists as $a_1$ and $a_2$ are neighbors, and $a\neq b_2$. Moreover, we assume without loss of generality that if $a=a_2$, then $a_2$ has no neighbor on $\widehat{P}_1$ apart from $a_1$. Now, let $b$ be the vertex closest to $a$ on the subpath of $\widehat{P}_2$ between $a$ and $b_2$ that is a neighbor of $b_1$. If $b\neq b_2$, then the vertex-sets of $\widehat{P}_1$ and the subpath of $\widehat{P}_1$ between $a$ and $b$ together induce a chordless cycle. Else, let $b'$ be the vertex closest to $a_1$ on $\widehat{P}_1$ that is a neighbor of $b_2$. Then, the vertex-sets of the subpath of $\widehat{P}_1$ between $a_1$ and $b'$ and the subpath of $\widehat{P}_1$ between $a$ and $b_2$ together induce a chordless cycle.
 Since $G'[V(\widehat{P}_1)\cup V(\widehat{P}_2)]$ is an induced subgraph of $G'[V(P_1)\cup V(P_2)]$, we have reached a contradiction.
\end{proof}

Given $i\in\{1,2\}$, let $2{\bf x}^*_i$ denote the fractional solution that assigns to each vertex the value assigned by ${\bf x}^*_i$ times 2. Moreover, let $\widehat{G}_1 = G_1\setminus (V(C)\cup V(M))$ and $\widehat{G}_2 = G_2\setminus (V(C)\cup V(M))$. Observe that $\widehat{G}_1$ and $\widehat{G}_2$ are chordal graphs. Now, for every pair $(v,u)$ such that $v\in V(C), u\in V(M)$, we perform the following operation. We initialize ${\cal T}_1(v,u)=\emptyset$. Next, we consider every pair $(v',u')$ such that $v'\in V(C)$, $u'\in V(M)$, $\{v,v'\}\cap N_{G_1}(u')=\emptyset$ and $\{u,u'\}\cap N_{G_1}(v')=\emptyset$, and insert each pair in $\{(a,b): a\in N_{G_1}(v')\cap V(\widehat{G}_1), b\in N_{G_1}(u')\cap V(\widehat{G}_1), \widehat{G}_1$ has a path between $a$ and $b\}$ into ${\cal T}_1(v,u)$.  We remark that the vertices in a pair in ${\cal T}_1(v,u)$ are not necessarily distinct.
The definition of ${\cal T}_2(v,u)$ is symmetric to the one of ${\cal T}_1(v,u)$.

The following lemma translates Lemma \ref{lem:fracCutPair} into an algorithm.

\begin{lemma}\label{lem:fracCutPairAlg}
For each pair $(v,u)$ of vertices $v\in V(C)$, $u\in V(M)$ and $\{v,u\}\notin E(G')$, one can compute (in polynomial time) an index $i(v,u)\in\{1,2\}$ such that for any path $P\in{\cal P}_i(v,u)$, $2{\bf x}^*_i(V(P))\geq 1$.
\end{lemma}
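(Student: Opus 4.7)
The plan is to turn the existence statement of Lemma~\ref{lem:fracCutPair} into an algorithm by computing, for each pair $(v,u)$ and each $i\in\{1,2\}$, the quantity $D_i(v,u) := \min_{P\in{\cal P}_i(v,u)}{\bf x}^*(V(P))$ and then setting $i(v,u)$ to any index with $D_i(v,u)\geq 1/2$; such an index exists by Lemma~\ref{lem:fracCutPair}, and it guarantees that every $P\in{\cal P}_{i(v,u)}(v,u)$ satisfies $2{\bf x}^*_{i(v,u)}(V(P))\geq 1$. The key structural fact we exploit is that ${\bf x}^*(V(C)\cup V(M))=0$: the zero-clique invariant kills $V(C)$, and the construction of $({\bf x}\setminus M)$ kills $V(M)$. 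Hence the ${\bf x}^*$-weight of any path is supported entirely on its vertices in $V(\widehat{G}_i)$.

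The central claim is that $D_i(v,u)$ equals the minimum, over pairs $(a,b)\in{\cal T}_i(v,u)$, of the vertex-weighted ${\bf x}^*$-shortest-path distance from $a$ to $b$ in $\widehat{G}_i$, which can be computed in polynomial time since ${\bf x}^*$ is nonnegative. For the direction that any $(a,b)\in{\cal T}_i(v,u)$ yields a path in ${\cal P}_i(v,u)$ of matching weight, a witness pair $(v',u')$ for $(a,b)$ together with a shortest $a$-to-$b$ path $Q$ in $\widehat{G}_i$ assembles into $v\to v'\to a\to Q\to b\to u'\to u$; the non-adjacency conditions on $(v',u')$ in the definition of ${\cal T}_i$ are exactly the four constraints required for the forbidden-pattern (that no $C$-vertex and $M$-vertex on the path are adjacent in $G'$), and ${\bf x}^*$ contributes only on $V(Q)$. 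For the converse, given any $P\in{\cal P}_i(v,u)$ one extracts $(v',u',a,b)$ by letting $v'$ be the $C$-vertex latest on $P$, $a$ its successor on $P$, $u'$ the first $M$-vertex on $P$ strictly after $v'$ (which exists since $u$ is such), and $b$ its predecessor on $P$. The forbidden-pattern rules out $a\in V(M)$ and $b\in V(C)$, so $a,b\in V(\widehat{G}_i)$, and by the extremal choices of $v'$ and $u'$ the $P$-segment between $a$ and $b$ stays in $V(\widehat{G}_i)$, producing $(a,b)\in{\cal T}_i(v,u)$ whose $\widehat{G}_i$-distance is at most ${\bf x}^*(V(P))$.

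The main obstacle is the converse direction: verifying that the natural extraction of $(v',u',a,b)$ simultaneously satisfies the position requirements (that $a,b\in V(\widehat{G}_i)$ and the in-between $P$-segment is contained in $V(\widehat{G}_i)$) and the four non-adjacency constraints defining membership in ${\cal T}_i(v,u)$; both come directly from the forbidden-pattern definition of ${\cal P}_i(v,u)$ and the simplicity of $P$, and the degenerate cases $v'=v$, $u'=u$, or $a=b$ are absorbed into the same argument. Note also that if ${\cal T}_i(v,u)=\emptyset$ then the above converse forces ${\cal P}_i(v,u)=\emptyset$, and the lemma is vacuously fulfilled by setting $i(v,u)=i$. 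The overall running time is polynomial: there are polynomially many triples $(v,u,(v',u'))$ over which ${\cal T}_i(v,u)$ is constructed, polynomially many pairs $(a,b)$ per triple, and each vertex-weighted shortest-path computation in $\widehat{G}_i$ is polynomial.
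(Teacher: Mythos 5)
Your proof is correct and follows essentially the same approach as the paper: for each $i$ you run Dijkstra over the terminal pairs $(a,b)\in{\cal T}_i(v,u)$ in $\widehat{G}_i$ with weights $2{\bf x}^*_i$, and invoke Lemma~\ref{lem:fracCutPair} together with ${\bf x}^*(V(C)\cup V(M))=0$ to guarantee the check succeeds for at least one index. You spell out in more detail than the paper the two-sided correspondence between paths in ${\cal P}_i(v,u)$ and pairs in ${\cal T}_i(v,u)$ (the assembly $v\to v'\to a\to Q\to b\to u'\to u$ and the extraction of $(v',u',a,b)$ from a given path), which the paper's proof leaves implicit.
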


\begin{proof}
Let $(v,u)$ be a pair of vertices such that $v\in V(C)$, $u\in V(M)$ and $\{v,u\}\notin E(G')$. If there is $i \in\{1,2\}$ such that $P\in{\cal P}_i(v,u) = \emptyset$, then we have trivially obtained the required index which is $i(v,u)=i$. Otherwise, we proceed as follows. For any index $j\in\{1,2\}$, we perform the following procedure. For each pair $(a,b)\in{\cal T}_i(v,u)$, we use Dijkstra's algorithm to compute the minimum weight of a path between $a$ and $b$ in the graph $\widehat{G}_i$ where the weights are given by $2{\bf x}^*_i$. In case for every pair $(a,b)$ the minimum weight is at least 1, we have found the desired index $i(v,u)$. Moreover, by Lemma \ref{lem:fracCutPair} and since for all $v'\in V(C)\cup V(M)$ it holds that ${\bf x}^*_1(v')={\bf x}^*_2(v')=0$, for at least one index $j\in\{1,2\}$, the maximum weight among the minimum weights associated with the pairs $(a,b)$ should be at least 1 (if this value is at least 1 for both indices, we arbitrarily decide to fix $i(v,u)=1$).
\end{proof}

At this point, we need to rely on approximate solutions to {\sc Weighted Multicut} in chordal graphs (in this context, we will employ the algorithm given by Theorem \ref{thm:multicut} in Section \ref{sec:multicut}). Here, a fractional solution ${\bf y}$ is a function ${\bf y}: V(G')\rightarrow [0,\infty)$ such that for every pair $(s_i,t_i)\in{\cal T}$ and any path $P$ between $s_i$ and $t_i$, it holds that ${\bf y}(V(P))\geq 1$. An optimal fractional solution minimizes the weight $w({\bf y})=\sum_{v\in V(G')}w(v)\cdot{\bf y}(v)$. Let \fopt\ denote the weight of an optimal fractional solution.

By first employing the algorithm given by Lemma \ref{lem:fracCutPairAlg}, we next construct two instances of {\sc Weighted Multicut}. The first instance is $J_1=(\widehat{G}_1,w_1,{\cal T}_1)$ and the second instance is $J_2=(\widehat{G}_2,w_2,{\cal T}_2)$, where the sets ${\cal T}_1$ and ${\cal T}_2$ are defined as follows. We initialize ${\cal T}_1=\emptyset$. Now, for every pair $(v,u)$ such that $v\in V(C), u\in V(M)$, $i(v,u)=1$ and ${\cal P}_1(v,u) \neq \emptyset$, we insert each pair in ${\cal T}_1(v,u)$ into ${\cal T}_1$.
The definition of ${\cal T}_2$ is symmetric to the one of ${\cal T}_1$.

By Lemma \ref{lem:fracCutPairAlg} and since for all $v\in V(C)\cup V(M)$ it holds that ${\bf x}^*_1(v)={\bf x}^*_2(v)=0$, we deduce that $2{\bf x}^*_1$ and $2{\bf x}^*_2$ are valid solutions to $J_1$ and $J_2$, respectively. Thus, by calling the algorithm given by Theorem \ref{thm:multicut} with each instance, we obtain a solution $r_1$ to the first instance, along with a set $R_1$ that realizes it, such that $r_1\leq 2d\cdot w({\bf x}^*_1)$, and we also obtain a solution $r_2$ to the second instance, along with a set $R_2$ that realizes it, such that $r_2\leq 2d\cdot w({\bf x}^*_2)$, for some fixed constant $d$. 

By Observation \ref{obs:inductionHit-chordal} and Lemma \ref{lem:relationBadCyc}, we obtained a set $S^*=S_1\cup S_2\cup R_1\cup R_2$ for which we have the following observation.

\begin{observation}
$S^*$ intersects any chordless cycle in $G'$, and it holds that $w(S^*) \leq s_1+s_2+r_1+r_2$.
\end{observation}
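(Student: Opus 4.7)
The observation bundles two independent claims, so the plan is to handle them separately. The weight bound $w(S^*)\leq s_1+s_2+r_1+r_2$ is immediate from $S^* = S_1\cup S_2\cup R_1\cup R_2$ together with non-negativity of $w$, giving subadditivity on unions; I would just state this in one line. The real content is that $S^*$ intersects every chordless cycle $Q$ of $G'$, and here the plan is to split on whether $Q$ is a bad cycle or not.

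If $V(Q)\subseteq V(G_1)$ or $V(Q)\subseteq V(G_2)$, then Observation~\ref{obs:inductionHit-chordal} immediately gives $(S_1\cup S_2)\cap V(Q)\neq\emptyset$, hence $S^*\cap V(Q)\neq\emptyset$. Otherwise $Q$ is bad, and I would apply Lemma~\ref{lem:relationBadCyc} to produce $v\in V(C)$, $u\in V(M)$ with $\{v,u\}\notin E(G')$, and paths $P_1\in{\cal P}_1(v,u)$, $P_2\in{\cal P}_2(v,u)$ with $V(P_1)\cup V(P_2)\subseteq V(Q)$. Without loss of generality assume $i(v,u)=1$ (the case $i(v,u)=2$ is symmetric, handled by $R_2$); the goal is then to show $R_1\cap V(P_1)\neq\emptyset$.

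The key step is to produce a pair $(a,b)\in{\cal T}_1(v,u)$ that the multicut $R_1$ is forced to separate along a portion of $P_1$. I would traverse $P_1$ from $v$ to $u$, let $v'$ be the last vertex of $P_1$ lying in $V(C)$ and let $u'$ be the first vertex of $P_1$ in $V(M)$ appearing strictly after $v'$. Because $P_1\in{\cal P}_1(v,u)$ forbids any $V(C)$--$V(M)$ edge between vertices on the path, $v'$ and $u'$ cannot be adjacent, so they are separated on $P_1$ by a non-empty subpath whose vertices lie entirely in $V(\widehat{G}_1)$. Take $a$ to be the vertex of $P_1$ immediately after $v'$ and $b$ to be the vertex immediately before $u'$: then $a\in N_{G_1}(v')\cap V(\widehat{G}_1)$, $b\in N_{G_1}(u')\cap V(\widehat{G}_1)$, and the subpath of $P_1$ between $a$ and $b$ witnesses a path between them in $\widehat{G}_1$.

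What remains is to verify that the pair $(v',u')$ meets the two non-adjacency side conditions of the construction of ${\cal T}_1(v,u)$, namely $\{v,v'\}\cap N_{G_1}(u')=\emptyset$ and $\{u,u'\}\cap N_{G_1}(v')=\emptyset$. All four vertices $v,v',u,u'$ lie on $P_1\in{\cal P}_1(v,u)$, with $v,v'\in V(C)$ and $u,u'\in V(M)$, so the defining prohibition on $V(C)$--$V(M)$ edges along paths in ${\cal P}_1(v,u)$ rules out each potential edge among them. Hence $(a,b)\in{\cal T}_1(v,u)\subseteq{\cal T}_1$, so $R_1$ hits every path in $\widehat{G}_1$ between $a$ and $b$, in particular the subpath of $P_1$ from $a$ to $b$, giving $R_1\cap V(Q)\neq\emptyset$. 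The main obstacle is exactly this verification that the "transition pair" $(v',u')$ satisfies the conditions in the construction---everything else is bookkeeping.
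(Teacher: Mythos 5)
Your proposal is correct, and it supplies the argument the paper leaves implicit: the paper simply states the observation ``by Observation~\ref{obs:inductionHit-chordal} and Lemma~\ref{lem:relationBadCyc}'' without spelling out how the multicut sets $R_1,R_2$ actually hit the path segments of a bad cycle, which is precisely the non-trivial step your proof fills in. Your identification of the transition pair $(v',u')$ (last $C$-vertex on $P_1$, first $M$-vertex strictly after it) is exactly the role the enumeration over all pairs $(v',u')$ in the construction of ${\cal T}_1(v,u)$ is designed to cover, and the verification that the four adjacency-exclusion conditions hold is correctly reduced to the prohibition on $V(C)$--$V(M)$ pairs inherent in the definition of ${\cal P}_1(v,u)$. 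Two small points worth noting for completeness: you should also observe that ${\cal P}_1(v,u)\neq\emptyset$ (immediate since $P_1$ witnesses it), which together with $i(v,u)=1$ is what causes ${\cal T}_1(v,u)$ to be inserted into ${\cal T}_1$; and the fact that $a$ and $b$ lie strictly between $v'$ and $u'$ (so they are in $V(\widehat{G}_1)$) hinges on $v'$ and $u'$ being non-adjacent on $P_1$, which follows because $P_1\in{\cal P}_1(v,u)$ forbids $C$--$M$ edges between its vertices, not just on the path -- you invoke this correctly. Overall the proof matches the paper's intended reasoning.
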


Recall that to prove Lemma~\ref{lemma:approxRecursiveCall} we need to show that $s_1+s_2+r_1+r_2\leq (\frac{c\cdot\log n}{c\cdot\log n + 9})^{\delta-1}\cdot c\cdot\log(\alpha(G'))\cdot w'({\bf x})$ and we have $\delta \geq 2$. Furthermore, we have $(\frac{c\cdot\log n}{c\cdot\log n + 9})^{\delta}\cdot c\cdot\log(\alpha(G'))\cdot w'({\bf x}) \leq (\frac{c\cdot\log n}{c\cdot\log n + 9})^{\delta-1}\cdot c\cdot\log(\alpha(G'))\cdot w'({\bf x})$. This together with Lemma~\ref{lem:adjustZeroClique} implies that it is enough to show $s_1+s_2+r_1+r_2\leq (\frac{c\cdot\log n}{c\cdot\log n+9})^{\delta}\cdot c\cdot\log(\alpha(G'))\cdot w({\bf x}^*)$. Recall that for any $i\in\{1,2\}$, $r_i\leq 2d\cdot w({\bf x}^*_i)$. Thus, by Observation \ref{obs:inductionWeight-chordal} and since for any $i\in\{1,2\}$, $\alpha(G_i)\leq\frac{3}{4}\alpha(G')$, we have that
\[w(S^*)\leq (\frac{c\cdot\log n}{c\cdot\log n+9})^{\delta}\cdot c\cdot\log(\frac{3}{4}\alpha(G'))\cdot (w({\bf x}^*_1)+w({\bf x}^*_2)) + 2d\cdot (w({\bf x}^*_1)+w({\bf x}^*_2)).\]

By Observation \ref{obs:inductionSum-chordal}, we further deduce that 
\[w^*(S^*)\leq \left((\frac{c\cdot\log n}{c\cdot\log n+9})^{\delta}\cdot c\cdot\log(\frac{3}{4}\alpha(G'))+2d\right)\cdot w({\bf x}^*).\]

Now, it only remains to show that $(\frac{c\log n}{c\log n + 9})^{\delta}\cdot c\cdot\log(\frac{3}{4}\alpha(G'))+2d\leq (\frac{c\log n}{c\log n + 9})^{\delta}\cdot c\cdot\log\alpha(G')$, which is equivalent to $2d\leq (\frac{c\log n}{c\log n + 9})^{\delta}\cdot c\cdot\log(\frac{4}{3})$. Recall that $\delta\leq q\cdot\log n$ (Observation \ref{obs:depth-chordal}). 
Thus, it is sufficient that we show that $2d\leq (\frac{c\log n}{c\log n + 9})^{q\cdot\log n}\cdot c\cdot\log(\frac{4}{3})$. However, the term $(\frac{c\log n}{c\log n + 9})^{q\cdot\log n}$ is lower bounded by $1/e^{9q}$. In other words, it is sufficient that we fix $c\geq 2\cdot e^{9q}\cdot d\cdot 1/\log(\frac{4}{3})$.

\subsection{General Graphs}\label{sec:approxGenGraphsDH}

In this subsection we handle general instances by developing a $d\cdot\log^{2}n$-factor approximation algorithm for \wcdel, \alg{Gen-CVD-APPROX}, thus proving the correctness of Theorem \ref{thm:newApprox2}. The exact value of the constant $d\geq \max\{96,2c\}$ is determined later.\footnote{Recall that $c$ is the constant we fixed to ensure that the approximation ratio of \alg{CVD-APPROX} is bounded by $c\cdot\log n$.} This algorithm is based on recursion, and during its execution, we often encounter instances that are of the form of the Clique+Chordal special case of \wcdel, which will be dealt with using the algorithm \alg{CVD-APPROX} of Section \ref{sec:approxCliqueChordal}.

\bigskip
{\noindent\bf  Recursion.} We define each call to our algorithm \alg{Gen-CVD-APPROX} to be of the form $(G',w')$, where $(G',w')$ is an instance of \wcdel\ such that $G'$ is an induced subgraph of $G$, and we denote $n'=|V(G')|$. We ensure that after the initialization phase, the graph $G'$ never contains chordless cycles on at most 48 vertices. We call this invariant the {\em $C_{48}$-free invariant}. In particular, this guarantee ensures that the graph $G'$ always contains only a small number of maximal cliques:

\begin{lemma}[\cite{C4FreeNumCliques,TsukiyamaIAS77}]\label{lem:numMaxCliques}
The number of maximal cliques of a graph $G'$ that has no chordless cycles on four vertices is bounded by $\OO({n'}^2)$, and they can be enumerated in polynomial time using a polynomial delay algorithm. 
\end{lemma}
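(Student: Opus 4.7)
The plan is to prove the two assertions separately.

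For the cardinality bound, I would exploit the key structural property of $C_4$-free graphs: for any two non-adjacent vertices $x, y$, their common neighborhood $N(x) \cap N(y)$ induces a clique, since two non-adjacent common neighbors $w_1, w_2$ would form an induced $C_4$ on $\{x, w_1, y, w_2\}$. My approach would then be a charging argument: fix an arbitrary ordering on $V(G')$, and map each maximal clique $K$ of size at least two to a canonical pair of its vertices (a natural choice being the two lowest-indexed vertices of $K$). The goal is to show, using the structural property above, that only $\OO(1)$ maximal cliques can be mapped to any given pair, which combined with the $\binom{n'}{2}$ possible pairs yields the bound $\OO({n'}^2)$. Isolated vertices (maximal cliques of size one) contribute only an additional $\OO(n')$ term.

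For the polynomial-delay enumeration, I would appeal to the classical algorithm of Tsukiyama, Ide, Ariyoshi, and Shirakawa, which enumerates all maximal cliques of an arbitrary graph with delay polynomial in $n'$ between consecutive outputs. Once the cardinality bound is established, the total running time is automatically polynomial, since the number of outputs is $\OO({n'}^2)$.

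The main obstacle lies in the charging step. A naive attempt---charging each maximal clique to any edge it contains---is insufficient, because in a $C_4$-free graph two distinct maximal cliques can genuinely share an edge $\{u, v\}$: if $a \in K_1 \setminus K_2$ and $b \in K_2 \setminus K_1$ can be chosen non-adjacent (which maximality forces), then the induced subgraph on $\{u, v, a, b\}$ is $K_4$ minus the edge $\{a, b\}$ rather than an induced $C_4$, so $C_4$-freeness alone does not forbid this configuration. To make the charging at worst $\OO(1)$-to-one, one must therefore exploit more refined structure; for instance, arguing that if many maximal cliques shared the same lowest-indexed pair, then some further non-adjacent pair among the ``extra'' vertices would be forced to have two non-adjacent common neighbors, producing an induced $C_4$ elsewhere. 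Getting this case analysis tight enough to yield $\OO({n'}^2)$ rather than the weaker $\OO(n'm)$ bound is where the real work lies.
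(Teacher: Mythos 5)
This lemma is not proved in the paper at all --- it is imported wholesale from the cited references (the counting bound from \cite{C4FreeNumCliques} and the polynomial-delay enumeration from \cite{TsukiyamaIAS77}), so there is no in-paper proof for your attempt to match. The correct move here is simply to cite, which you do for the enumeration half but not for the counting half.

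For the counting half, your charging scheme has a concrete flaw. Consider the \emph{book} graph on $n$ vertices: an edge $\{u,v\}$ together with $k=n-2$ further vertices $a_1,\dots,a_k$, each adjacent to both $u$ and $v$ but to no $a_j$. This graph contains no induced $C_4$ (every $4$-subset involving $u,v$ induces a diamond, i.e.\ $K_4$ minus an edge, and every other $4$-subset is a star or edgeless), yet it has $k=n-2$ maximal cliques $\{u,v,a_i\}$, \emph{all} of which share the same two lowest-indexed vertices if $u,v$ come first in your ordering. So the map you propose is $(n-2)$-to-one, not $\OO(1)$-to-one, and the repair you sketch (``some further non-adjacent pair among the extra vertices would be forced to have two non-adjacent common neighbours'') cannot be made to work: here the $a_i$ are pairwise non-adjacent and their only common neighbours are $u$ and $v$, which \emph{are} adjacent, so no induced $C_4$ ever appears. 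The structural observation you start from (common neighbourhoods of non-adjacent pairs are cliques) is the right one, but the canonical pair must be chosen with one endpoint \emph{outside} the clique --- for instance, map $K$ to $(w_K, x_K)$ where $w_K$ is the lowest-indexed vertex of $V(G')\setminus K$ and $x_K$ is the lowest-indexed vertex of $K$ not adjacent to $w_K$ (such an $x_K$ exists by maximality). On the book graph this already separates the cliques, since distinct $a_i$ serve as distinct witnesses $x_K$. Carrying out the injectivity argument for this map is the content of the cited reference; your sketch as written is not a proof and, more importantly, cannot be completed along the lines you indicate.
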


\bigskip
{\noindent\bf Goal.} For each recursive call \alg{Gen-CVD-APPROX}$(G',w')$, we aim to prove the following.

\begin{lemma}\label{lemma:approxRecursiveCallGen}
\alg{Gen-CVD-APPROX} returns a solution that is at least $\opt$ and at most $\frac{d}{2}\cdot\log^2n'\cdot\opt$. Moreover, it returns a subset $U\subseteq V(G')$ that realizes the solution.
\end{lemma}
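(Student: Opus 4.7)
The plan is to prove Lemma \ref{lemma:approxRecursiveCallGen} by induction on $n'$, mirroring the skeleton already laid out for \wpfdshort in Section \ref{sec:planar} but now using the clique-tree structure of chordal graphs instead of bounded-treewidth decompositions, and closing the recursion by invoking the Clique+Chordal algorithm \alg{CVD-APPROX} of Section \ref{sec:approxCliqueChordal} rather than an exact base-case solver. A one-shot preprocessing step (invoked only at the top of the recursion tree or, equivalently, at the start of each call when the $C_{48}$-free invariant is not yet established) greedily hits every chordless cycle on at most $48$ vertices: repeatedly find such a cycle $Q$ and add all of $V(Q)$ to the output; since any feasible solution, and in particular $S^{*}$, must contain at least one vertex of each such $Q$, the total cost of this step is at most $48\cdot\opt$, and the remaining graph satisfies the $C_{48}$-free invariant required by Lemma \ref{lem:numMaxCliques}.

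\textbf{Finding a clique-and-separator pair.} Once the $C_{48}$-free invariant holds, the base case is immediate: if $G'$ is chordal, return $\emptyset$. Otherwise, because $G'-S^{*}$ is chordal, it admits a clique forest (Lemma \ref{lem:cliqueForest}); a standard centroid argument then produces a maximal clique $M^{*}$ of $G'-S^{*}$ such that every connected component of $(G'-S^{*})-M^{*}$ has at most $\frac{2}{3}|V(G'-S^{*})|$ vertices, hence $M^{*}\cup S^{*}$ is a balanced separator of $G'$. Guided by this, the algorithm enumerates all maximal cliques of $G'$ --- there are only $\OO({n'}^{2})$ of them by Lemma \ref{lem:numMaxCliques} and they are obtained with polynomial delay --- and for each candidate $M$ runs the $\OO(\log n')$-factor approximation of \cite{BalancedSeparator} for \textsc{Weighted Balanced Vertex Separator} on $G'-M$, keeping the pair $(M,S)$ minimizing $w'(S)$. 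Because $(M^{*},S^{*}\setminus M^{*})$ is a valid choice with $w'(S^{*}\setminus M^{*})\le \opt$, the returned $S$ satisfies $w'(S)=\OO(\log n')\cdot \opt$.

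\textbf{Recursion and the special case.} Partition the connected components of $G'-(M\cup S)$ into two groups with vertex-sets $V_{1},V_{2}$ of size at most $\frac{2}{3}n'$ each, and call \alg{Gen-CVD-APPROX} recursively on $(G'[V_{1}],w'|_{V_{1}})$ and $(G'[V_{2}],w'|_{V_{2}})$, obtaining $S_{1},S_{2}$. The restrictions $S^{*}\cap V_{i}$ are feasible for the subinstances (induced subgraphs of chordal graphs are chordal), so the corresponding optima $\opt_{1},\opt_{2}$ satisfy $\opt_{1}+\opt_{2}\le \opt$. Next, form the instance $(G^{\star},w'|_{V(G^{\star})},M,H)$ of the Clique+Chordal special case, where $G^{\star}=G'[V(M)\cup (V_{1}\setminus S_{1})\cup (V_{2}\setminus S_{2})]$ and $H=G^{\star}-V(M)$; the graph $H$ is the disjoint union of the two chordal graphs $G'[V_{i}\setminus S_{i}]$ (disjoint because $M\cup S$ separates $V_{1}$ from $V_{2}$ in $G'$), hence chordal, and $G^{\star}$ inherits the $C_{48}$-free invariant from $G'$. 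The optimum of this subinstance is at most $\opt$ (again $S^{*}\cap V(G^{\star})$ is feasible), so Lemma \ref{lemma:newApproxSpecial} applied through \alg{CVD-APPROX} returns $\widehat{S}$ with $w'(\widehat{S})\le c\log n'\cdot\opt$. The output is $T=S\cup S_{1}\cup S_{2}\cup\widehat{S}$: any chordless cycle of $G'$ either lies in $G'[V_{1}]$ or $G'[V_{2}]$ (killed by $S_{1}$ or $S_{2}$ by induction), or meets $S$, or lies in $G^{\star}$ (killed by $\widehat{S}$), so $G'-T$ is chordal.

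\textbf{Closing the induction and the main obstacle.} Using $n_{1},n_{2}\le \frac{2}{3}n'$, $\opt_{1}+\opt_{2}\le\opt$, and the inductive bounds,
\[
w'(T)\;\le\;\OO(\log n')\cdot\opt\;+\;\tfrac{d}{2}\log^{2}\!\bigl(\tfrac{2}{3}n'\bigr)\cdot\opt\;+\;c\log n'\cdot\opt.
\]
Expanding $\log^{2}(\tfrac{2}{3}n')=\log^{2}n'-2\log(\tfrac{3}{2})\log n'+\log^{2}(\tfrac{3}{2})$ gives a net savings of order $d\log(\tfrac{3}{2})\log n'\cdot \opt$, which absorbs the additive $\OO(\log n')\cdot\opt$ and $c\log n'\cdot\opt$ terms once $d$ is taken sufficiently large (the bound $d\ge\max\{96,2c\}$ stated in the paper is essentially what comes out of this calculation, where the $96$ accounts for the initial $48\cdot\opt$ step and the factor $2c$ accounts for the special-case cost). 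The main obstacle is the third step: ensuring that, after the recursive calls return, the \emph{cross-cycles} that traverse the separator $M$ are simultaneously handled \emph{and} accounted for within the $\OO(\log^{2}n')\cdot\opt$ budget. This is precisely what buys Section \ref{sec:approxCliqueChordal}: the Clique+Chordal instance $(G^{\star},M,H)$ captures every surviving obstruction in a structured form, and its $\OO(\log n')$-approximation contributes only a lower-order $c\log n'\cdot\opt$ term, leaving room for the recursion to close.
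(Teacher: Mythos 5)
Your plan and the skeleton of the argument match the paper's proof essentially step for step: the $C_{48}$-free invariant, Lemma~\ref{lem:numMaxCliques} to bound maximal cliques, the enumeration of cliques combined with Leighton--Rao to realize Observation~\ref{obs:divideClique}, the recursion on $V_1,V_2$, the reduction to the Clique+Chordal special case handled by \alg{CVD-APPROX}, and the $\log^2(\tfrac{2}{3}n')$ telescoping to set $d$. That part is fine.

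However, your preprocessing step has a genuine gap. You greedily find a short chordless cycle $Q$, place \emph{all} of $V(Q)$ into the output, delete it, and repeat, claiming the total cost is at most $48\cdot\opt$ ``since any feasible solution must contain at least one vertex of each such $Q$.'' That argument only works in the \emph{unweighted} setting: it shows $\opt\ge k$ (the number of disjoint cycles picked) while your cost is at most $48k$. In the weighted setting it fails. Consider a single chordless $C_{48}$ in which one vertex has weight $1$ and the other $47$ have weight $W\gg1$; here $\opt=1$ but your greedy pays about $47W$, which is unbounded relative to $\opt$. The feasibility argument ``$S^*$ hits each $Q_i$'' only gives $\opt\ge\sum_i \min_{v\in Q_i} w(v)$, which does not dominate $\sum_i w(V(Q_i))$. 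The paper avoids this by casting the preprocessing as an instance of {\sc Weighted 48-Hitting Set} over $\mathcal{C}_{48}$ and invoking the standard LP-based $c'$-approximation for {\sc Weighted $c'$-Hitting Set} (\cite{KleinbergT05}), which does yield $w(S)\le 48\cdot\opt$ for arbitrary weights. Replacing your greedy with that algorithm closes the gap; the rest of your proof then goes through.

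Two minor presentation points, neither fatal. First, your base case is ``$G'$ chordal $\Rightarrow$ return $\emptyset$,'' while the paper terminates the recursion as soon as $G'$ is clique+chordal and hands it to \alg{CVD-APPROX}; both are correct since the recursion still makes progress, but the paper's condition is what keeps the recursion from re-entering unnecessarily. Second, in the existence argument you take $M^*$ to be a maximal clique of $G'-S^*$, but the algorithm enumerates maximal cliques of $G'$; you should note that $M^*$ extends to a maximal clique $M$ of $G'$ for which $M\cup S^*$ is still a balanced separator, so the pair $(M,S^*\setminus M)$ is the one actually witnessed by the enumeration.
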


At each recursive call, the size of the graph $G'$ becomes smaller. Thus, when we prove that Lemma \ref{lemma:approxRecursiveCallGen} is true for the current call, we assume that the approximation factor is bounded by $\frac{d}{2}\cdot\log^2\widehat{n}\cdot\opt$ for any call where the size $\widehat{n}$ of the vertex-set of its graph is strictly smaller than~$n'$.

\bigskip
{\noindent\bf Initialization.} Initially, we set $(G',w')=(G,w)$. However, we need to ensure that the $C_{48}$-free invariant is satisfied. For this purpose, we update $G'$ as follows. First, we let ${\cal C}_{48}$ denote the set of all chordless cycles on at most 48 vertices of $G'$. Clearly, ${\cal C}_{48}$ can be computed in polynomial time and it holds that $|{\cal C}_{48}|\leq n^{48}$. Now, we construct an instance of {\sc Weighted 48-Hitting Set}, where the universe is $V(G')$, the family of 48-sets is ${\cal C}_{48}$, and the weight function is $w'$. Since each chordless cycle must be intersected, it is clear that the optimal solution to our {\sc Weighted 48-Hitting Set} instance is at most $\opt$. By using the standard $c'$-approximation algorithm for {\sc Weighted $c'$-Hitting Set} \cite{KleinbergT05}, which is suitable for any fixed constant $c'$, we obtain a set $S\subseteq V(G')$ that intersects all cycles in ${\cal C}_{48}$ and whose weight is at most $48\cdot\opt$.  Having the set $S$, we remove its vertices from $G'$. Now, the $C_{48}$-free invariant is satisfied, which implies that we can recursively call our algorithm. To the outputted solution, we add $w(S)$ and $S$. If Lemma \ref{lemma:approxRecursiveCallGen} is true, we obtain a solution that is at most $\frac{d}{2}\cdot\log^2n\cdot\opt + 48\cdot\opt\leq d\cdot\log^2n\cdot\opt$, which allows us to conclude the correctness of Theorem \ref{thm:newApprox2}. We remark that during the execution of our algorithm, we only update $G'$ by removing vertices from it, and thus it will always be safe to assume that the $C_{48}$-free invariant is satisfied.

\bigskip
{\noindent\bf Termination.} Observe that due to Lemma \ref{lem:numMaxCliques}, we can test in polynomial time whether $G'$consists of a clique and a chordal graph: we examine each maximal clique of $G'$, and check whether after its removal we obtain a chordal graph. Once $G'$ becomes such a graph that consists of a chordal graph and a clique, we solve the instance $(G',w')$ by calling algorithm \alg{CVD-APPROX}. Since $c\cdot\log n'\leq\frac{d}{2}\cdot\log^2n'$, we thus ensure that at the base case of our induction, Lemma \ref{lemma:approxRecursiveCallGen} holds.

\bigskip
{\noindent\bf Recursive Call.} For the analysis of a recursive call, let $S^*$ denote a hypothetical set that realizes the optimal solution $\opt$ of the current instance $(G',w')$. Moreover, let $(F,\beta)$ be a clique forest of $G'- S^*$, whose existence is guaranteed by Lemma \ref{lem:cliqueForest}. Using standard arguments on forests, we have the following observation.



\begin{observation}\label{obs:divideClique}
There exist a maximal clique $M$ of $G'$ and a subset $S\subseteq V(G')\setminus M$ of weight at most $\opt$ such that $M\cup S$ is a balanced separator for $G'$.
\end{observation}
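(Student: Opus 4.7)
\textbf{Proof proposal for Observation~\ref{obs:divideClique}.}

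The plan is to mimic the proof of Observation~\ref{obs:balanced-separator-PF} using a clique forest of $G'- S^*$ instead of a bounded-width tree decomposition. First, let $S^*$ be a hypothetical optimal solution with $w'(S^*)=\opt$, so $G'- S^*$ is chordal. Apply Lemma~\ref{lem:cliqueForest} to obtain a clique forest $(F,\beta)$ of $G'- S^*$ in which every bag is a maximal clique of $G'- S^*$. Let $n^*=|V(G'- S^*)|\leq n'$.

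Next I would run the standard balanced-separator argument on forests of bags to locate a node $v\in V(F)$ whose bag $\beta(v)$ is a balanced separator of $G'- S^*$. Concretely, assign to each vertex $w\in V(G'- S^*)$ a single ``home'' bag among the bags of $F$ that contain it, giving each node $u\in V(F)$ a weight equal to the number of vertices with home $u$; the total weight is $n^*$. A standard centroid-style argument on each tree of $F$ (working within the tree of largest total weight when $F$ is disconnected) produces a node $v$ such that every subtree of $F-v$ has weight at most $\frac{2}{3}n^*$. By the subtree-intersection property of forest decompositions, every vertex $w\notin\beta(v)$ has all of its bags inside a single component of $F-v$, so each connected component of $(G'- S^*)-\beta(v)$ is contained in the vertex set of a single subtree of $F-v$ and therefore has at most $\frac{2}{3}n^*\leq\frac{2}{3}n'$ vertices.

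Now I would lift $\beta(v)$ to a maximal clique of $G'$. Since $\beta(v)$ is a clique of $G'- S^*$, it is also a clique of $G'$; extend it greedily by vertices of $V(G')$ that are adjacent to every current member to obtain a maximal clique $M$ of $G'$ with $M\supseteq\beta(v)$. Note that any vertex added in the extension must lie in $S^*$: otherwise $\beta(v)$ would not have been maximal in $G'- S^*$, contradicting Lemma~\ref{lem:cliqueForest}. In particular $M\setminus\beta(v)\subseteq S^*$. Finally, define $S=S^*\setminus M$; then $S\subseteq V(G')\setminus M$ and $w'(S)\leq w'(S^*)=\opt$, while $M\cup S=M\cup S^*\supseteq\beta(v)\cup S^*$. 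Consequently every component of $G'-(M\cup S)$ is contained in some component of $(G'- S^*)-\beta(v)$ and thus has at most $\frac{2}{3}n'$ vertices, proving that $M\cup S$ is a balanced separator for $G'$.

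The only nontrivial ingredient is the forest balanced-separator lemma for clique forests, which is the obstacle worth spelling out: one must be careful that a vertex of $G'- S^*$ can sit in many bags, so the weighting by homes only bounds the weight per subtree, and one needs the subtree-intersection property to conclude that this weight in fact upper-bounds the number of vertices of any component of $(G'- S^*)-\beta(v)$ lying in that subtree. Everything else is bookkeeping, parallel to the treewidth-based argument already carried out in Section~\ref{sec:approxGenGraphspfd}.
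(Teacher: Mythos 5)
Your proof is correct and spells out exactly the ``standard arguments on forests'' that the paper invokes without elaboration: centroid-select a bag $\beta(v)$ of the clique forest of $G'-S^*$ that is a balanced separator there, and combine it with $S^*$. The one detail the paper's terse remark glosses over, and which you rightly handle, is that $\beta(v)$ is only maximal in $G'-S^*$, not in $G'$; your observation that any extension of $\beta(v)$ to a maximal clique $M$ of $G'$ satisfies $M\setminus\beta(v)\subseteq S^*$ (by maximality of $\beta(v)$ in the induced subgraph $G'-S^*$) is precisely what is needed to produce the required maximal clique of $G'$ while keeping $S=S^*\setminus M$ within the weight budget.
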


The following lemma translates this observation into an algorithm.

\begin{lemma}\label{lem:divideClique}
There is a polynomial-time algorithm that finds a maximal clique $M$ of $G'$ and a subset $S\subseteq V(G')\setminus M$ of weight at most $q\cdot\log n'\cdot\opt$ for some fixed constant $q$ such that $M\cup S$ is a balanced separator for $G'$.
\end{lemma}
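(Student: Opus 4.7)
The plan is to turn Observation~\ref{obs:divideClique} into a polynomial-time procedure by combining two ingredients: the fact that the $C_{48}$-free invariant makes $G'$ in particular $C_4$-free, so that Lemma~\ref{lem:numMaxCliques} gives us at most $\OO({n'}^2)$ maximal cliques enumerable in polynomial time; and the known approximation algorithm for \textsc{Weighted Vertex Separator} (Leighton--Rao~\cite{BalancedSeparator}, or Feige--Hajiaghayi--Lee~\cite{FeigeHL08} if a randomized $\OO(\sqrt{\log n})$-factor is desired), which, given a graph $H$ and weights on its vertices, produces in polynomial time a vertex set whose removal leaves every component of size at most $\frac{2}{3}|V(H)|$ and whose weight is within an $\OO(\log |V(H)|)$ factor of the minimum weight of any balanced separator of $H$.

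The algorithm is then straightforward. First, enumerate every maximal clique $M$ of $G'$ using Lemma~\ref{lem:numMaxCliques}. For each candidate $M$, run the balanced separator approximation on the induced subgraph $G' - M$ (with the inherited weights $w'|_{V(G')\setminus M}$) to obtain a vertex set $S_M \subseteq V(G')\setminus M$ whose removal leaves every component of $G' - M$ of size at most $\tfrac{2}{3}|V(G')\setminus M|\leq \tfrac{2}{3}n'$. Finally, among all candidate pairs $(M, S_M)$, return the one minimising $w'(S_M)$.

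For correctness, apply Observation~\ref{obs:divideClique} to obtain a maximal clique $M^*$ and a set $S^* \subseteq V(G')\setminus M^*$ with $w'(S^*)\leq\opt$ such that $M^*\cup S^*$ is a balanced separator for $G'$. Then every component of $(G'-M^*)-S^*$ has at most $\tfrac{2}{3}n'$ vertices, so $S^*$ is a valid balanced separator of $G'-M^*$ (relative to the vertex count $n'$, which is at least $|V(G'-M^*)|$, so this only tightens the requirement). Consequently, when the enumeration reaches $M^*$, the approximation algorithm returns $S_{M^*}$ of weight at most $q\cdot\log n'\cdot w'(S^*)\leq q\cdot\log n'\cdot\opt$ for a fixed constant $q$. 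The pair $(M^*, S_{M^*})$ is one of the enumerated candidates, so the pair that our algorithm outputs has at least as small a separator weight, and $M\cup S$ is indeed a balanced separator for $G'$ since removing the extra vertices in $M$ can only break components further.

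The only mild subtlety is the balance parameter: the off-the-shelf separator approximation typically outputs a $\tfrac{2}{3}$-balanced separator whose weight is compared against the cheapest $\tfrac{1}{2}$-balanced separator (or some fixed pair of constants). This is compatible with our needs because $S^*$, being part of a $\tfrac{2}{3}$-balanced separator of the slightly larger graph $G'$, is automatically a $\tfrac{2}{3}$-balanced separator of $G'-M^*$; choosing the appropriate version of the Leighton--Rao (or Feige--Hajiaghayi--Lee) algorithm with matching balance constants and absorbing the resulting constant into $q$ finishes the argument. Polynomial running time is immediate from the polynomial bound on the number of maximal cliques and the polynomial cost of each separator computation.
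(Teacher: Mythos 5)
Your proof is correct and takes essentially the same route as the paper's: enumerate the polynomially many maximal cliques via Lemma~\ref{lem:numMaxCliques}, run the Leighton--Rao separator approximation on each $G'-M$, and return the cheapest resulting pair, with correctness supplied by Observation~\ref{obs:divideClique}. The one subtlety you flag is real but your intermediate phrasing is slightly off---$S^*$ is not ``automatically'' a $\tfrac{2}{3}$-balanced separator of $G'-M^*$ when $|M^*|$ is a large fraction of $n'$, since the threshold $\tfrac{2}{3}|V(G'-M^*)|$ can drop below $\tfrac{2}{3}n'$---however your proposed remedy of running the separator algorithm against the absolute size threshold $\tfrac{2}{3}n'$ (and absorbing constants into $q$) is the correct fix, and the paper's own proof elides exactly the same point.
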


\begin{proof}
We examine every maximal clique of $G'$. By Lemma \ref{lem:numMaxCliques}, we need only consider $\OO({n'}^2)$ maximal cliques, and these cliques can be enumerated in polynomial time. For each such clique $M$, we run the $q\cdot \log n'$-factor approximation algorithm by Leighton and Rao \cite{BalancedSeparator} to find a balanced separator $S_M$ of $G'- M$. Here, $q$ is some fixed constant. We let $S$ denote some set of minimum weight among the sets in $\{S_M: M$ is a maximal clique of $G'\}$. By Observation~\ref{obs:divideClique}, $w(S)\leq q\cdot\log n'\cdot\opt$. Thus, the desired output is a pair $(M,S)$ where $M$ is one of the examined maximal cliques such that $S_M=S$.
\end{proof}

We call the algorithm in Lemma \ref{lem:divideClique} to obtain a pair $(M,S)$. Since $M\cup S$ is a balanced separator for $G'$, we can partition the set of connected components of $G'- (M\cup S)$ into two sets, ${\cal A}_1$ and ${\cal A}_2$, such that for $V_1=\bigcup_{A\in{\cal A}_1}V(A)$ and $V_2=\bigcup_{A\in{\cal A}_2}V(A)$ it holds that $n_1,n_2\leq\frac{2}{3}n'$ where $n_1=|V_1|$ and $n_2=|V_2|$. We remark that we used the $\OO(\log n)$-factor approximation algorithm by Leighton and Rao \cite{BalancedSeparator} in Lemma~\ref{lem:divideClique} to find the balanced separator instead of the $\OO(\sqrt{\log n})$-factor approximation algorithm by Feige et al.~\cite{FeigeHL08}, as the algorithm by Feige et al. is randomized. 

Next, we define two inputs of (the general case of) \wcdel: $I_1=(G'[V_1],w'|_{V_1})$ and $I_2=(G'[V_2],w'|_{V_2})$. Let $\opt_1$ and $\opt_2$ denote the optimal solutions to $I_1$ and $I_2$, respectively. Observe that since $V_1\cap V_2=\emptyset$, it holds that $\opt_1+\opt_2\leq\opt$. We solve each of the subinstances by recursively calling algorithm \alg{Gen-CVD-APPROX}. By the inductive hypothesis, we thus obtain two sets, $S_1$ and $S_2$, such that $G'[V_1]- S_1$ and $G'[V_2]- S_2$ are chordal graphs, and $w'(S_1)\leq \frac{d}{2}\cdot\log^2n_1\cdot\opt_1$ and $w'(S_2)\leq \frac{d}{2}\cdot\log^2n_2\cdot\opt_2$.

We proceed by defining an input of the Clique+Chordal special case of \wcdel: $J=(G'[(V_1\cup V_2\cup M)\setminus (S_1\cup S_2)],w'|_{(V_1\cup V_2\cup M)\setminus (S_1\cup S_2)})$. Observe that since $G'[V_1]- S_1$ and $G'[V_2]- S_2$ are chordal graphs and $M$ is a clique, this is indeed an instance of the Clique+Chordal special case of \wcdel.  We solve this instance by calling algorithm \alg{CVD-APPROX}. We thus obtain a set, $\widehat{S}$, such that $G'[(V_1\cup V_2\cup M)- (S_1\cup S_2\cup\widehat{S})]$ is a chordal graphs, and $w'(\widehat{S})\leq c\cdot\log n'\cdot\opt$ (since $|(V_1\cup V_2\cup M)\setminus (S_1\cup S_2)|\leq n'$ and the optimal solution of each of the subinstances is at most $\opt$).

Observe that since $M$ is a clique and there is no edge in $E(G')$ between a vertex in $V_1$ and a vertex in $V_2$, any chordless cycle of $G'- (S\cup S_1\cup S_2)$ entirely belongs to either $G'[(V_1\cup M)\setminus S_1]$ or $G'[(V_2\cup M)\setminus S_2]$. This observation, along with the fact that $G'[(V_1\cup V_2\cup M)\setminus (S_1\cup S_2\cup\widehat{S})]$ is a chordal graphs, implies that $G'- T$ is a chordal graphs where $T=S\cup S_1\cup S_2\cup\widehat{S}$. Thus, it is now sufficient to show that $w'(T)\leq\frac{d}{2}\cdot\log^2n'\cdot\opt$.

By the discussion above, we have that

\[\begin{array}{ll}
w'(T) & \leq w'(S) + w'(S_1) + w'(S_2) + w'(\widehat{S})\\
& \leq q\cdot\log n'\cdot\opt + \frac{d}{2}\cdot(\log^2n_1\cdot\opt_1 + \log^2n_2\cdot\opt_2) + c\cdot\log n'\cdot\opt.
\end{array}\]

Recall that $n_1,n_2\leq\frac{2}{3}n'$ and $\opt_1+\opt_2\leq\opt$. Thus, we have that

\[\begin{array}{ll}
w'(T) & \leq q\cdot\log n'\cdot\opt + \frac{d}{2}\cdot(\log^2\frac{2}{3}n')\cdot\opt + c\cdot\log n'\cdot\opt\\
& \leq \frac{d}{2}\cdot\log^2n'\cdot\opt + (q + c - \frac{d}{2}\log\frac{3}{2})\cdot\log n'\cdot\opt.
\end{array}\]

Overall, we conclude that to ensure that $w'(T)\leq\frac{d}{2}\cdot\log^2n'\cdot\opt$, it is sufficient to ensure that $q + c - \frac{d}{2}\log\frac{3}{2}\leq 0$, which can be done by fixing $\displaystyle{d=\frac{2}{\log\frac{3}{2}}\cdot(q+c)}$.


\section{Weighted Multicut in Chordal Graphs}\label{sec:multicut}
In this section we prove Theorem~\ref{thm:multicutGen}. 
Let us denote $c=8$. Recall that for {\sc Weighted Multicut}, a fractional solution ${\bf x}$ is a function ${\bf x}: V(G)\rightarrow [0,\infty)$ such that for every pair $(s,t)\in{\cal T}$ and any path $P$ between $s$ and $t$, it holds that ${\bf x}(V(P))\geq 1$. An optimal fractional solution minimizes the weight $w({\bf x})=\sum_{v\in V(G)}w(v)\cdot{\bf x}(v)$. Let \fopt\ denote the weight of an optimal fractional solution. Theorem \ref{thm:multicutGen} follows from the next result, whose proof is the focus of this section.

\begin{lemma}\label{thm:multicut1}
Given an instance of {\sc Weighted Multicut} in chordal graphs, one can find (in polynomial time) a solution that is at least $\opt$ and at most $4c\cdot \fopt$, along with a set that realizes~it.
\end{lemma}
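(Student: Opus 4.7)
The plan is to adapt the classical primal-dual algorithm of Garg, Vazirani and Yannakakis for Weighted Multicut on trees to chordal graphs, using a clique forest of $G$ as the tree-like backbone. First, I would solve the LP relaxation of Weighted Multicut via the ellipsoid method (with shortest paths between the $(s_i,t_i)$ pairs as the separation oracle), obtaining an optimal primal $\x$ of weight $\fopt$ together with an optimal feasible dual $\y$ that is a fractional packing of $s_i$-$t_i$ paths.

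Second, by Lemma~\ref{lem:cliqueForest} I can compute in polynomial time a clique forest $(F,\beta)$ of $G$, and root each tree of $F$ arbitrarily. Two standard properties of clique forests will be used throughout. Property~(A): for every vertex $v\in V(G)$, the set of nodes of $F$ whose bag contains $v$ forms a subtree of $F$. Property~(B): for any non-adjacent $s,t$ in the same component of $G$ and any node $w$ on the unique $F$-path between a bag containing $s$ and a bag containing $t$, the bag $\beta(w)$ is an $s$-$t$ vertex separator in $G$. In particular every $s_i$-$t_i$ path in $G$ contains at least one vertex of each such intermediate bag, so the LP constraint $\x(V(P))\ge 1$ implies $\sum_{u\in\beta(w)}\x(u)\ge 1$ for every intermediate bag $\beta(w)$.

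Third, I would process bags in post-order and run the GVY-style primal-dual loop: maintain a multicut $S$ (initially $\emptyset$) and a residual weight $w'(u)=w(u)-\sum_{P\ni u}y_{i,P}$. When a bag $\beta(v)$ is visited, identify the demand pairs whose ``relevant'' $F$-subtree first closes up at $v$ and that are not yet separated by $S$. For each such unseparated pair $(s_i,t_i)$, raise $y_{i,P}$ along an $s_i$-$t_i$ path realizing a tight constraint until some vertex $u$ on that path becomes tight; add $u$ to $S$, and repeat while $s_i$ and $t_i$ are still connected. By Property~(B), each tightening can always be realized within a single clique bag on the $F$-path between the endpoints, so the inner loop terminates after at most $|\beta(v)|$ steps and the resulting $S$ is a feasible multicut.

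For the analysis, the cost $w(S)$ is paid by tight vertices, whose weight is exactly covered by dual mass on paths passing through them; the sum of dual masses in turn lower-bounds $\fopt$ by weak duality. The hard part is bounding overcharging: unlike in a tree, where each edge can be charged by at most two LCA pairs, a single vertex of $G$ may lie in many clique bags. Property~(A) saves us, because $u$'s bags form a subtree of $F$ and the post-order processing forces each vertex to be ``responsible'' for only a bounded number of ancestor-pair openings; together with a reverse-delete cleanup that removes redundant vertices from $S$, this yields an overcharging factor of at most $c=8$, and hence $w(S)\le 4c\cdot\fopt\le 4c\cdot\opt$. Correctly formalising this charging argument, and in particular choosing \emph{which} vertex inside each tight bag to delete so as to avoid repeated payment across pairs, is the principal obstacle and the reason the factor degrades from the $2$ of GVY on trees to $4c$ here.
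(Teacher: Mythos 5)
Your approach is genuinely different from the paper's, but it has a real gap in the step you yourself flag as ``the principal obstacle,'' and I do not think that step can be patched without essentially a different idea.

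The paper does \emph{not} run a GVY-style primal--dual loop on the clique forest. Instead it is an LP-rounding / region-growing argument in the spirit of the GVY $O(\log k)$ algorithm for multicut on general graphs, but made constant-factor by exploiting the clique-tree structure in the feasibility proof. Concretely: it solves the LP to get $\x$ with $w(\x)=\fopt$; it first incurs a factor $4$ by rounding $\x$ up to a ``nice'' solution whose coordinates are multiples of $1/n$, and then pulls into the solution all vertices with $\x(v)\ge 1/c$ (paying for them exactly once via Lemma~\ref{lem:adjustLowVal2}); it builds a clique tree, picks a root vertex $r_G$, and computes $\x$-shortest-path distances $d(v)$; finally it cuts at a family of concentric shells of thickness $1/c$ around $r_G$, choosing the shift that minimizes the total weight of the ``boundary'' vertices. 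The randomized-shift argument bounds the expected boundary weight by $c\cdot w(\x)$, so the best shift does at least as well, giving the factor $4c$. Feasibility is where the clique tree is used: for a surviving $s$--$t$ path, look at the topmost bag of $F$ hitting the path, and show (using the fact that bags are cliques and the clique-tree ``running intersection'' property) that if no vertex of the path falls on a shell boundary, one can stitch together a path of $\x$-weight strictly less than $8/c=1$, contradicting LP feasibility. So the clique forest plays a role only in the feasibility analysis, not in a combinatorial charging argument.

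Your sketch, by contrast, proposes a combinatorial primal--dual loop à la GVY-on-trees. The gap is exactly where you say it is: the bound $|P\cap S|\le O(1)$ after reverse delete (equivalently, an overcharging factor bounded by a constant) is not proven, and I do not believe it follows from Properties~(A) and~(B) as stated. In edge multicut on trees, each demand path $P_i$ has a unique LCA edge, and the reverse-delete argument shows $|P_i\cap D|\le 2$ because the closest kept edge to the LCA on each side screens off everything beyond it. In a chordal graph, a vertex $u$ lives in a whole subtree $T_u$ of bags; a demand path processed at bag $\beta(v)$ can interact with arbitrarily many bags below $v$, and after reverse delete there is no obvious reason that only a bounded number of vertices of $S$ remain on a given dual-carrying path. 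Your appeal to ``post-order processing forces each vertex to be responsible for only a bounded number of ancestor-pair openings'' is an assertion, not a proof, and the constant $c=8$ appears to be chosen to match the target statement rather than derived. Note also that the arithmetic in your last sentence does not close: if the overcharging factor were $c$, the primal--dual analysis would give $w(S)\le c\cdot\fopt$, not $4c\cdot\fopt$; the extra factor $4$ in the paper comes from a specific LP-rounding preprocessing step (rounding to a nice solution) that has no analogue in your plan. To make your route work you would need to actually prove a bounded-overcharging lemma for clique forests, and I do not see how to do that without new structural ideas; the paper sidesteps the issue entirely by rounding the LP geometrically.
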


\bigskip
{\noindent\bf Preprocessing.} By using the ellipsoid method, we may next assume that we have optimal fractional solution ${\bf x}$ at hand. We say that ${\bf x}$ is {\em nice} if for all $v\in V(G)$, there exists $i\in\{0\}\cup\mathbb{N}$ such that ${\bf x}(v)=\frac{i}{n}$. Let $h({\bf x})=\{v\in V(G): {\bf x}(v)\geq 1/c\}$ denote the set of vertices to which ${\bf x}$ assigns high values. 

\begin{lemma}
Define a function $\widehat{\bf x}: V(G)\rightarrow [0,\infty)$ as follows. For all $v\in V(G)$, if ${\bf x}(v)<1/2n$ then $\widehat{\bf x}(v)=0$, and otherwise $\widehat{\bf x}(v)$ is the smallest value of the form $i/n$, for some $i\in\mathbb{N}$, that is at least $2{\bf x}(v)$. Then, $\widehat{\bf x}$ is a fractional solution such that $w(\widehat{\bf x})\leq 4w({\bf x})$.
\end{lemma}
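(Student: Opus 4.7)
The plan is to verify the two claims of the lemma separately: that $\widehat{\bf x}$ is still a valid fractional multicut solution, and that its weight is at most $4w({\bf x})$.

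For validity, I would fix an arbitrary pair $(s,t)\in{\cal T}$ and an arbitrary path $P$ between $s$ and $t$, and split $V(P)$ into ``low'' vertices with ${\bf x}(v)<1/(2n)$ and ``high'' vertices with ${\bf x}(v)\geq 1/(2n)$. Since $|V(P)|\leq n$, the total ${\bf x}$-mass on the low vertices is strictly less than $n\cdot 1/(2n)=1/2$. Because ${\bf x}$ is a valid fractional solution, ${\bf x}(V(P))\geq 1$, so the high vertices on $P$ carry at least $1/2$ of the ${\bf x}$-mass. By the construction of $\widehat{\bf x}$, every high vertex $v$ satisfies $\widehat{\bf x}(v)\geq 2{\bf x}(v)$, so summing over the high vertices of $P$ gives $\widehat{\bf x}(V(P))\geq 2\cdot (1/2)=1$, as required.

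For the weight bound, the key observation is that whenever $\widehat{\bf x}(v)>0$, the defining rounding-up to the nearest multiple of $1/n$ forces $\widehat{\bf x}(v)<2{\bf x}(v)+1/n$. Combined with ${\bf x}(v)\geq 1/(2n)$, i.e., $1/n\leq 2{\bf x}(v)$, this yields $\widehat{\bf x}(v)<4{\bf x}(v)$. For vertices with $\widehat{\bf x}(v)=0$ the bound $\widehat{\bf x}(v)\leq 4{\bf x}(v)$ is trivial. Multiplying by $w(v)\geq 0$ and summing over all vertices gives $w(\widehat{\bf x})\leq 4w({\bf x})$.

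Neither step looks like a genuine obstacle: validity follows from a clean ``low mass is negligible, so the high part must carry the load'' argument, and the weight bound is immediate from the rounding definition together with the cutoff $1/(2n)$. The cutoff is chosen precisely so that the additive rounding error $1/n$ can be absorbed into a multiplicative factor of $2$ on top of the factor $2$ coming from the explicit doubling, yielding the overall factor $4$.
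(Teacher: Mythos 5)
Your proof is correct and matches the paper's argument essentially step for step: for feasibility you bound the ${\bf x}$-mass of the low vertices on $P$ by $1/2$ and use the factor-$2$ scaling on the high vertices, and for the weight bound you show $\widehat{\bf x}(v)\leq 4{\bf x}(v)$ pointwise. The only difference is that you spell out why the rounding error $1/n$ is absorbed by the cutoff $1/(2n)$ to get the pointwise factor $4$, a step the paper states as an unexplained observation.
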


\begin{proof}
To show that $\widehat{\bf x}$ is a fractional solution, consider some path $P$ between $s$ and $t$ such that $(s,t)\in{\cal T}$. Let $\ell'({\bf x})=\{v\in V(G): {\bf x}(v)<1/2n\}$. We have that $\widehat{\bf x}(V(P)) = \sum_{v\in V(P)\setminus\ell'({\bf x})}\widehat{\bf x}(v) \geq 2\sum_{v\in V(P)\setminus\ell'({\bf x})}{\bf x}(v)$. Thus, to show that $\widehat{\bf x}(V(P))\geq 1$, it is sufficient to show that $\frac{1}{2}\leq \sum_{v\in V(P)\setminus\ell'({\bf x})}{\bf x}(v)$. Since ${\bf x}$ is a fractional solution, it holds that ${\bf x}(V(P)) = \sum_{v\in V(P)\cap\ell'({\bf x})}{\bf x}(v) + \sum_{v\in V(P)\setminus\ell'({\bf x})}{\bf x}(v)\geq 1$. Thus, $1\leq \frac{1}{2n}|V(P)\cap\ell'({\bf x})| + \sum_{v\in V(P)\setminus\ell'({\bf x})}{\bf x}(v)$. Since $|V(P)\cap\ell'({\bf x})|\leq n$, we conclude that $\frac{1}{2}\leq \sum_{v\in V(P)\setminus\ell'({\bf x})}{\bf x}(v)$.

The second part of the claim follows from the observation that for all $v\in V(G)$, $\widehat{\bf x}(v)\leq 4{\bf x}(v)$.
\end{proof}

Accordingly, we update ${\bf x}$ to $\widehat{\bf x}$. Our preprocessing step also relies on the following standard lemma.

\begin{lemma}\label{lem:adjustLowVal2}
Define $\widehat{G}=G- h({\bf x})$, $\widehat{w}=w|_{V(\widehat{G})}$ and $\widehat{\bf x}={\bf x}|_{V(\widehat{G})}$.
Then, $c\cdot w(\widehat{\bf x}) +  w(h({\bf x})) \leq c\cdot w({\bf x})$.
\end{lemma}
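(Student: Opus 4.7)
The plan is essentially to replay the reasoning of Lemma~\ref{lem:adjustLowVal} in the present setting: decompose the integral $w({\bf x})=\sum_{v\in V(G)}w(v){\bf x}(v)$ across the partition $V(G)=h({\bf x})\;\uplus\;(V(G)\setminus h({\bf x}))$ and exploit the defining lower bound ${\bf x}(v)\ge 1/c$ for $v\in h({\bf x})$.

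First I would observe that, directly from the definitions of $\widehat{G}$, $\widehat{w}$ and $\widehat{\bf x}$, one has
\[
w({\bf x})\;=\;\sum_{v\in h({\bf x})}w(v){\bf x}(v)\;+\;\sum_{v\in V(G)\setminus h({\bf x})}w(v){\bf x}(v)\;=\;\sum_{v\in h({\bf x})}w(v){\bf x}(v)\;+\;w(\widehat{\bf x}),
\]
since the second sum is exactly $\widehat{w}(\widehat{\bf x})$ and weights are non-negative (and positive by the standing assumption, but this is not needed here).

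Next I would use the fact that every $v\in h({\bf x})$ satisfies ${\bf x}(v)\ge 1/c$, so $w(v){\bf x}(v)\ge w(v)/c$, and therefore
\[
\sum_{v\in h({\bf x})}w(v){\bf x}(v)\;\ge\;\frac{1}{c}\sum_{v\in h({\bf x})}w(v)\;=\;\frac{1}{c}\,w(h({\bf x})).
\]
Plugging this into the previous identity gives $w({\bf x})\ge \tfrac{1}{c}w(h({\bf x}))+w(\widehat{\bf x})$; multiplying both sides by $c$ yields the claimed inequality $c\cdot w(\widehat{\bf x})+w(h({\bf x}))\le c\cdot w({\bf x})$.

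I do not anticipate any genuine obstacle: the statement is the natural analogue of Lemma~\ref{lem:adjustLowVal} with the threshold $1/c$ in place of $1/(c\log n)$, and the one-line averaging argument above suffices. The only things to keep straight are the identifications $w(\widehat{\bf x})=\sum_{v\notin h({\bf x})}w(v){\bf x}(v)$ and $w(h({\bf x}))=\sum_{v\in h({\bf x})}w(v)$, which are immediate from the definitions.
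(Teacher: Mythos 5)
Your proposal is correct and is essentially the same argument as the paper: the paper states the inequality $w(\widehat{\bf x})\leq w({\bf x})-\frac{1}{c}w(h({\bf x}))$ (which is exactly your decomposition plus the lower bound ${\bf x}(v)\ge 1/c$ on $h({\bf x})$) and then multiplies through by $c$. You have merely spelled out the averaging step the paper leaves implicit.
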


\begin{proof}
By the definition of $h({\bf x})$, it holds that $w(\widehat{\bf x})\leq w({\bf x})-\frac{1}{c} w(h({\bf x}))$. 
Thus, $c\cdot w(\widehat{\bf x}) +  w(h({\bf x})) \leq c\cdot (w({\bf x})-\frac{1}{c}\cdot w(h({\bf x}))) +  w(h({\bf x})) = c\cdot w({\bf x})$.
\end{proof}

We thus further update $G$ to $\widehat{G}$, $w$ to $\widehat{w}$ and ${\bf x}$ to $\widehat{\bf x}$, where we ensure that once we obtain a solution to the new instance, we add $w(h({\bf x}))$ to this solution and $h({\bf x})$ to the set realizing it. Overall, we may next focus only on the proof of the following lemma.

\begin{lemma}\label{lemma:mainMulticut}
Let $(G,w)$ be an instance of {\sc Weighted Multicut} in chordal graphs, and ${\bf x}$ be a nice fractional solution such that $h({\bf x})=\emptyset$. Then, one can find (in polynomial time) a solution that is at least $\opt$ and at most $c\cdot w({\bf x})$, along with a set that realizes it.
\end{lemma}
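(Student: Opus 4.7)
The plan is to adapt the primal-dual algorithm of Garg, Vazirani, and Yannakakis for \textsc{Multicut} on trees~\cite{GVY96} by replacing the host tree with the clique forest of the chordal graph $G$. The conceptual bridge is that, for two vertices $s, t$ of a chordal graph whose subtrees in the clique forest are disjoint, the bag at their meeting node is a clique separator of $G$ that every $s$--$t$ path must cross. The niceness of $\mathbf{x}$ (values of the form $i/n$) and the uniform upper bound $\mathbf{x}(v) < 1/c$ (from $h(\mathbf{x}) = \emptyset$) will allow a charging argument losing only a constant factor.

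Concretely, I would first invoke Lemma~\ref{lem:cliqueForest} to compute a clique forest $(F,\beta)$ of $G$, treat each tree of $F$ separately, and root each one arbitrarily. For every $v \in V(G)$, let $\mathrm{top}(v)$ be the root-closest node of the subtree $T_v = \{a : v \in \beta(a)\}$, and for each pair $(s_i,t_i) \in \mathcal{T}$ let $m_i$ be the LCA in $F$ of $\mathrm{top}(s_i)$ and $\mathrm{top}(t_i)$. If $T_{s_i} \cap T_{t_i} = \emptyset$, then $\beta(m_i)$ separates $s_i$ from $t_i$ in $G$; if $T_{s_i} \cap T_{t_i} \neq \emptyset$, then $s_i$ and $t_i$ lie in a common clique and any multicut must contain at least one of them. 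Sort the pairs so that $\mathrm{depth}(m_i)$ is non-increasing. Then greedily build $S$: starting from $S = \emptyset$, scan the pairs in order and, whenever the current pair $(s_i,t_i)$ is still connected in $G - S$, add to $S$ a minimal subset of $\beta(m_i)$ that disconnects them. Finally perform a reverse scan over the processed pairs and delete from $S$ any vertex whose removal still leaves $S$ a valid multicut, exactly as in the GVY algorithm.

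For the analysis, each surviving vertex $v \in S$ is charged to a witness pair $(s_j,t_j)$ for which $v$ is indispensable after the reverse scan, and its weight is paid for by the fractional mass that $\mathbf{x}$ must place along any $s_j$--$t_j$ path through $v$. Because vertices are added only from bags $\beta(m_i)$ in decreasing order of $\mathrm{depth}(m_i)$, and $v$ can serve as a witness only when $m_j \in T_v$, the clique-tree property severely restricts how many pairs a given $v$ can be charged by; combining this with $\mathbf{x}(P) \geq 1$ along every relevant path $P$ and the pointwise bound $\mathbf{x}(v) < 1/c$, an averaging over the charges delivers the inequality $w(S) \leq c \cdot w(\mathbf{x})$ with $c = 8$. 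Running time is polynomial since $|V(F)| = O(n)$, the sorting is immediate, and each connectivity test is a BFS.

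The principal obstacle will be controlling the charging scheme, which is more delicate than in the tree case of~\cite{GVY96}. There, every edge of the tree is the unique separator corresponding to some LCA, so the analysis is local. In a chordal graph, a single maximal clique $\beta(m_i)$ may lie simultaneously on many pairs' separators, and a vertex $v$ may belong to multiple bags $\beta(m_i)$ for pairs processed at different times; hence one must show that the reverse scan selects a subset $S$ in which each vertex has a canonical witness pair and these witness pairs induce edge-disjoint (or at worst constant-overlap) fractional demands on $\mathbf{x}$. Establishing this combinatorial disjointness using only the subtree-consecutivity property of clique forests, and verifying along the way that the reverse scan indeed preserves the multicut property when separators are cliques rather than edges, is where the bulk of the technical work resides and where the constant $c = 8$ (as opposed to the classical $2$) is absorbed.
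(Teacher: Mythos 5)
The paper's proof takes a fundamentally different route from yours. Rather than a primal--dual/greedy scheme on the clique forest, the paper runs a single-source region-growing rounding: it roots the clique forest at $r_F$, picks $r_G\in\beta(r_F)$, computes for every vertex $v$ the shortest $\mathbf{x}$-weighted distance $d(v)$ from $r_G$ via Dijkstra, and then (exploiting niceness of $\mathbf{x}$, i.e.\ all values are multiples of $1/n$) examines $n+1$ candidate ``annulus'' cuts $B_i=\{v : \exists j,\ d(v)-\mathbf{x}(v)<(\tfrac{i}{n}+2j)\tfrac1c\le d(v)\}$, outputting the cheapest one. The approximation ratio follows from a derandomized averaging argument (each $v$ falls in a random annulus with probability exactly $c\,\mathbf{x}(v)$, using $h(\mathbf{x})=\emptyset$). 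Feasibility is where the clique tree enters: for any surviving $s$--$t$ path, the topmost node in the clique forest whose bag meets the path gives a clique separator, and the bound $\mathbf{x}(v)<1/c$ forces either an annulus crossing or a short $s$--$t$ path of total $\mathbf{x}$-weight $<8/c\le 1$, a contradiction.

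Your proposal, by contrast, adapts the GVY primal--dual/reverse-delete scheme and sorts pairs by LCA depth in the clique forest. This is a legitimate alternate angle, but as written it has a genuine gap at exactly the point you flag: the charging argument. You assert that ``the clique-tree property severely restricts how many pairs a given $v$ can be charged by,'' but this is not a direct consequence of subtree-consecutivity. In a tree the $s$--$t$ path is unique and the LCA structure localizes which cut edges lie on it, so each pair meets at most two deep cut edges; in a chordal graph $s$--$t$ paths are not unique, a vertex $v$ can lie in $\beta(m_j)$ for many distinct $j$ simultaneously, and ``a minimal subset of $\beta(m_i)$ that disconnects them'' is not canonically defined (minimal by inclusion can still be large and depends on the whole graph, not just the clique tree). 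Without a precise tightness/witness lemma replacing GVY's flow-saturation invariant, the step from the pointwise bound $\mathbf{x}(v)<1/c$ to $w(S)\le c\cdot w(\mathbf{x})$ does not follow, and the constant $c=8$ is asserted rather than derived. So the approach is plausible as a research direction, but the present sketch does not constitute a proof, whereas the paper's rounding argument closes the loop using the clique-separator property in a way your primal--dual sketch still has to invent.
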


\bigskip
{\noindent\bf The Algorithm.} Since $G$ is a chordal graph, we can first construct in polynomial time a clique forest $(F,\beta)$ of $G$ (Lemma \ref{lem:cliqueForest}). Without loss of generality, we may assume that $F$ is a tree, else $G$ is not a connected graph and we can handle each of its connected components separately. Now, we arbitrarily root $F$ at some node $r_F$, and we arbitrarily choose a vertex $r_G\in\beta(r_F)$. We then use Dijkstra's algorithm to compute (in polynomial time) for each vertex $v\in V(G)$, the value $\displaystyle{d(v)=\min_{P\in{\cal P}(v)}{\bf x}(V(P))}$, where ${\cal P}(v)$ is the set of paths in $G$ between $r_G$ and $v$.

We define $n+1$ bins: for all $i\in\{0,1,\ldots,n\}$, the bin $B_i$ contains every vertex $v\in V(G)$ for which there exists $j\in\{0\}\cup\mathbb{N}$ such that $d(v)-{\bf x}(v) < (\frac{i}{n}+2j)\frac{1}{c}\leq d(v)$ (i.e., $0\leq d(v)-(\frac{i}{n}+2j)\frac{1}{c} < {\bf x}(v)$). 
Let $B_{i^*}$, $i^*\in\{0,1,\ldots,n\}$, be a bin that minimizes $w(B_{i^*})$. The output consists of $w(B_{i^*})$ and $B_{i^*}$.

\bigskip
{\noindent\bf Approximation Factor.} Given $r\in[0,1]$, let $\widehat{B}_r$ be the set that contains every vertex $v\in V(G)$ for which there exists $j\in\{0\}\cup\mathbb{N}$ such that $0\leq d(v)-(r+2j)\frac{1}{c}< {\bf x}(v)$. We start with the following claim.

\begin{lemma}
There exists $r^* \in[0,1]$ such that $w(\widehat{B}_{r^*})\leq c\cdot w({\bf x})$.
\end{lemma}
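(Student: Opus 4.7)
\medskip
\noindent\textbf{Proof Plan.}

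The plan is a standard averaging (integration) argument over the parameter $r \in [0,1]$. I would first rewrite the defining condition ``$\exists j\in\{0\}\cup\mathbb{N}$ with $0\leq d(v)-(r+2j)\tfrac{1}{c}<{\bf x}(v)$'' in the clean form $r+2j\in \bigl(c(d(v)-{\bf x}(v)),\,c\,d(v)\bigr]$. Thus, for each fixed $v$, the set
\[
R(v)=\{\,r\in[0,1]: v\in \widehat{B}_r\,\}
\]
is exactly the intersection of $[0,1]$ with the disjoint union $\bigcup_{j\ge 0} I_j(v)$, where $I_j(v)=\bigl(c\,d(v)-c\,{\bf x}(v)-2j,\,c\,d(v)-2j\bigr]$ is an interval of length $c\,{\bf x}(v)$, and consecutive intervals $I_j(v),I_{j+1}(v)$ are spaced exactly $2$ apart.

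The first key step is to bound $|R(v)|\le c\,{\bf x}(v)$. Since by the preprocessing we have $h({\bf x})=\emptyset$, every vertex satisfies ${\bf x}(v)<1/c$, so each interval $I_j(v)$ has length strictly less than $1$. Because consecutive intervals are $2$ apart while $[0,1]$ has length $1$, at most one $j$ can contribute an interval meeting $[0,1]$; I would verify this formally by noting that if $I_j(v)\cap[0,1]\neq\emptyset$ then $j$ lies in an open interval of length $(1+c\,{\bf x}(v))/2<1$, hence is unique. Consequently $|R(v)|\le |I_j(v)|=c\,{\bf x}(v)$.

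The second step is the averaging. By Fubini,
\[
\int_0^1 w(\widehat{B}_r)\,dr \;=\; \sum_{v\in V(G)} w(v)\cdot |R(v)| \;\le\; \sum_{v\in V(G)} w(v)\cdot c\,{\bf x}(v) \;=\; c\cdot w({\bf x}).
\]
Since the mean of $w(\widehat{B}_r)$ over $r\in[0,1]$ is at most $c\cdot w({\bf x})$, there must exist some $r^{\ast}\in[0,1]$ with $w(\widehat{B}_{r^{\ast}})\le c\cdot w({\bf x})$, which is what we wanted.

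The only subtle point, and what I expect to be the main obstacle to writing cleanly, is justifying that the intervals $I_j(v)$ do not overlap inside $[0,1]$ (so that their measures add rather than multi-count). This is precisely where the normalization $h({\bf x})=\emptyset$, i.e.\ ${\bf x}(v)<1/c$, enters in an essential way. Note that the argument proves existence of a good $r^{\ast}$ but not constructively, which is fine because the algorithm only ever examines the $n+1$ rational values $r=i/n$ producing the bins $B_i$; passing from the existence of $r^{\ast}$ to the bin $B_{i^{\ast}}$ will be handled in the step that follows this lemma.
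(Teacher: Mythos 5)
Your proposal is correct and is essentially the same argument as the paper's: the paper picks $r\in[0,1]$ uniformly at random and computes the expected weight of $\widehat{B}_r$, which is exactly your Fubini/integration calculation, and both hinge on the observation that $h({\bf x})=\emptyset$ (i.e.\ ${\bf x}(v)<1/c$) forces at most one $j$ to contribute for each vertex, yielding a per-vertex measure of exactly $c\cdot{\bf x}(v)$.
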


\begin{proof}
For any $d\geq 0$, observe that there exists exactly one $j\in\{0\}\cup\mathbb{N}$ for which there exists $r\in[0,1]$ such that $0\leq d-(r+2j)\frac{1}{c} < \frac{1}{c}$, and denote it by $j(d)$.
Suppose that we choose $r \in[0,1]$ uniformly at random.
Consider some vertex $v\in V(G)$. Then, since $h({\bf x})=\emptyset$, the probability that there exists $j\in\{0\}\cup\mathbb{N}$ such that $0\leq d(v)-(r+2j)\frac{1}{c}< {\bf x}(v)$ is equal to the probability that $0\leq d(v)-(r+2j(d(v)))\frac{1}{c} < {\bf x}(v)$. Now, the probability that $0\leq d(v)-(r+2j(d(v)))\frac{1}{c} < {\bf x}(v)$ is equal to $c\cdot{\bf x}(v)$. The expected weight $w(\widehat{B}_r)$ is $c\cdot\sum_{v\in V(G)}{\bf x}(v)\cdot w(v) =  c\cdot w({\bf x})$. Thus, there exists $r^* \in[0,1]$ such that $w(\widehat{B}_{r^*})\leq c\cdot w({\bf x})$.
\end{proof}

Now, the proof of the approximation factor follows from the next claim.

\begin{lemma}
There exists $i\in\{0,1,\ldots,n\}$ such that $B_i\subseteq\widehat{B}_{r^*}$.
\end{lemma}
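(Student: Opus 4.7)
The plan is to exploit the fact that $\mathbf{x}$ is \emph{nice} (every $\mathbf{x}(v)$ is a multiple of $1/n$) to show that the function $s \mapsto \widehat{B}_s$ is ``piecewise constant'' on a partition of $[0,1]$ into $n+1$ intervals whose right endpoints are exactly the values $0, 1/n, 2/n, \ldots, 1$. Once we have that, the existence of $i$ with $B_i \subseteq \widehat{B}_{r^*}$ becomes immediate (with equality, in fact).

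First I would verify that for every $v \in V(G)$, the quantity $d(v)$ is itself a multiple of $1/n$. This is immediate from niceness: $d(v)$ is the minimum over paths $P$ of $\sum_{u \in V(P)} \mathbf{x}(u)$, which is a sum of multiples of $1/n$. Since $c = 8 \in \mathbb{Z}$, both $c \cdot d(v)$ and $c \cdot (d(v) - \mathbf{x}(v))$ are multiples of $1/n$ as well.

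Next, for each vertex $v$, set
\[
  S_v \;=\; \{s \in [0,1] : v \in \widehat{B}_s\} \;=\; [0,1] \cap \bigcup_{j \geq 0} \bigl( c(d(v) - \mathbf{x}(v)) - 2j,\; c\,d(v) - 2j \bigr].
\]
By the previous step, every endpoint in this union is a multiple of $1/n$, and each constituent interval is open on the left and closed on the right. Partition $[0,1]$ into $I_0 = \{0\}$ and $I_i = ((i-1)/n,\; i/n]$ for $i \in \{1, \ldots, n\}$. I would then argue that each $I_i$ is either entirely contained in $S_v$ or entirely disjoint from $S_v$: if $I_i$ meets some constituent interval $(a, b] \subseteq S_v$, then $a, b, (i-1)/n, i/n$ are all multiples of $1/n$ and $(i-1)/n$ and $i/n$ are consecutive such multiples, so the half-open alignment of the two intervals (both open on the left, closed on the right) forces $(i-1)/n \geq a$ and $i/n \leq b$, giving $I_i \subseteq (a,b] \subseteq S_v$.

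Finally, let $i^*$ be the unique index with $r^* \in I_{i^*}$; this $i^*$ exists and lies in $\{0, 1, \ldots, n\}$ since $\{I_0, \ldots, I_n\}$ partitions $[0,1]$. Because $i^*/n \in I_{i^*}$ as well, the ``on/off'' property from the previous paragraph applied to every $v$ yields $\widehat{B}_{r^*} = \widehat{B}_{i^*/n} = B_{i^*}$, and in particular $B_{i^*} \subseteq \widehat{B}_{r^*}$, proving the lemma. The only delicate point is the alignment step: one must be careful that the half-open direction of the intervals matches the half-open direction of the $I_i$'s, which is precisely why the $I_i$'s are defined open at the left and closed at the right.
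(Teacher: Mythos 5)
Your proof is correct. The heart of it is the same as the paper's: the only place niceness of $\mathbf{x}$ is used in either argument is to align the cutoffs $c(d(v)-\mathbf{x}(v))-2j$ and $c\,d(v)-2j$ with the grid of multiples of $1/n$. The packaging differs, though. The paper fixes $i$ to be the least index with $r^*\le i/n$, takes $v\in B_i$, and performs a direct ``round down the strict inequality'' computation: from $d(v)-\mathbf{x}(v)<(\tfrac{i}{n}+2j)\tfrac{1}{c}$ with the left side a multiple of $1/n$ it extracts slack $<\tfrac1n$, which is exactly enough to replace $i/n$ by $r^*$ and conclude $v\in\widehat{B}_{r^*}$. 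You instead observe the cleaner structural fact that, for every vertex $v$, the set $S_v=\{s:v\in\widehat{B}_s\}$ is a union of half-open intervals with endpoints in $\tfrac1n\mathbb{Z}$, all oriented the same way $(\cdot\,,\cdot\,]$, so $s\mapsto\widehat{B}_s$ is constant on each cell $I_i=((i-1)/n,\,i/n]$ (and on $I_0=\{0\}$); this immediately yields the stronger conclusion $B_{i^*}=\widehat{B}_{r^*}$ for the cell containing $r^*$, which of course subsumes the required inclusion. Your warning about half-open orientation is well placed: this is exactly where the argument would break if the intervals and cells were not oriented consistently, and it is the same reason the paper's rounding has to be one-sided. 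One minor wording quibble: the ``constituent interval $(a,b]$'' is not literally a subset of $S_v$ when it pokes outside $[0,1]$; the inclusion you actually want is $I_i\subseteq(a,b]$ combined with $I_i\subseteq[0,1]$, which is what your final step uses anyway.
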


\begin{proof}
Let $i$ be the smallest index in $\{0,1,\ldots,n\}$ such that $r^*\leq\frac{i}{n}$.
Consider some vertex $v\in B_i$. Then, for some $j\in\{0\}\cup\mathbb{N}$, $d(v)-{\bf x}(v) < (\frac{i}{n}+2j)\frac{1}{c}\leq d(v)$. Since $r^*\leq\frac{i}{n}$, we have that $(r^*+2j)\frac{1}{c}\leq d(v)$. Since $\bf x$ is nice, it holds that there exists $t\in\{0\}\cup\mathbb{N}$ such that $d(v)-{\bf x}(v)=\frac{t}{n}$. Thus, for any $p<\frac{1}{n}$, it holds that $d(v)-{\bf x}(v) < (\frac{i}{n}+2j-p)\frac{1}{c}$. By the choice of $i$, $\frac{i}{n} - r^* < \frac{1}{n}$, and therefore $d(v)-{\bf x}(v) < (r^*+2j)\frac{1}{c}$, which implies that $v\in\widehat{B}_{r^*}$.
\end{proof}

\bigskip
{\noindent\bf Feasibility.} We need to prove that for any pair $(s,t)\in{\cal T}$, $G- B_{i^*}$ does not have any path between $s$ and $t$. Consider some path $P=(v_1,v_2,\cdots,v_\ell)$ between $s$ and $t$. Here, $v_1=s$ and $v_\ell=t$. Suppose, by way of contradiction, that $V(P)\cap B_{i^*}=\emptyset$. Then, for all $v_i\in V(P)$, it holds that there is no $j\in\{0\}\cup\mathbb{N}$ such that $0\leq d(v_i)-(\frac{i^*}{n}+2j)\frac{1}{c} < {\bf x}(v_i)$.

Let $s\in V(F)$ be the closest node to $r_F$ that satisfies $\beta(s)\cap V(P)\neq\emptyset$ (since $F$ is a clique tree and $P$ is a path, the node $s$ is uniquely defined).
Let $v_{\widehat{i}}$ be some vertex in $\beta(s)\cap V(P)\neq\emptyset$. For the sake of clarity, let us denote the subpath of $P$ between $v_{\widehat{i}}$ and $v_\ell$ by $Q=(u_1,u_2,\cdots,u_t)$, where $u_1=v_{\widehat{i}}$ and $u_t=v_\ell$. Let $j^*$ be the smallest value in $\{0\}\cup\mathbb{N}$ that satisfies $d(u_1)-{\bf x}(u_1)<(\frac{i^*}{n}+2j^*)\frac{1}{c}$. Note that $d(u_1)<(\frac{i^*}{n}+2j^*)\frac{1}{c}$. It is thus well defined to let $p$ denote the largest index in $[t]$ such that $d(u_p)<(\frac{i^*}{n}+2j^*)\frac{1}{c}$.

First, suppose that $p\in[t-1]$. We then have that $(\frac{i^*}{n}+2j^*)\frac{1}{c}\leq d(u_{p+1})$. For all $2\leq i\leq t$, it holds that $d(u_i)\leq d(u_{i-1}) + {\bf x}(u_i)$. We thus obtain that $d(u_{p+1})-{\bf x}(u_{p+1})\leq d(u_{p})<(\frac{i^*}{n}+2j^*)\frac{1}{c}$. This statement implies that $u_{p+1}\in B_{i^*}$, which is a contradiction.

Now, we suppose that $p=t$. Note that $(\frac{i^*}{n}+2j^*-2)\frac{1}{c}\leq d(u_1)-{\bf x}(u_1)$ (by the minimality of $j^*$), and $d(u_t)<(\frac{i^*}{n}+2j^*)\frac{1}{c}$. We get that $d(u_t) < d(u_1)-{\bf x}(u_1) + \frac{2}{c}$. In other words, $d(u_t)-d(u_1)+{\bf x}(u_1) < \frac{2}{c}$. Let des$(s)$ denote the set consisting of $s$ and its descendants in $F$. Since  $F$ is a clique tree, we have that $V(Q)\subseteq \bigcup_{s'\in\mathrm{des}(s)}\beta(s')$. Thus, any path from $r_G$ to $u_t$ that realizes $d(u_t)$ contains a vertex from $\beta(s)$. Since there exists a path from $r_G$ to $u_t$ that realizes $d(u_t)$, we deduce that there exists a path, $P_t$, from $r_G$ to $u_t$ that realizes $d(u_t)$ and contains a vertex $x\in N_G[u_1]$. Let $P^*_t$ denote the subpath of $P_t$ between $x$ and $u_t$, and let $P^*$ denote the path that starts at $u_1$ and then traverses $P^*_t$. Then, ${\bf x}(V(P^*)) \leq {\bf x}(u_1) + {\bf x}(V(P^*_t)) = {\bf x}(u_1) + d(u_t)-d(x)+{\bf x}(x)$.
Note that $d(u_1)\leq d(x)+{\bf x}(u_1)$, and therefore ${\bf x}(V(P^*))\leq {\bf x}(u_1) + d(u_t) - (d(u_1) - {\bf x}(u_1)) +{\bf x}(x) = {\bf x}(u_1) +{\bf x}(x) + (d(u_t)-d(u_1)+{\bf x}(u_1))$. Since $h({\bf x})=\emptyset$ and $d(u_t)-d(u_1)+{\bf x}(u_1) < \frac{2}{c}$, we get that ${\bf x}(V(P^*))<\frac{4}{c}$. The symmetric analysis of the subpath of $P$ between $u_1=v_{\widehat{i}}$ and $v_1$ shows that there exists a path $P^{**}$ between $u_1$ and $v_1$ such that ${\bf x}(V(P^{**}))<\frac{4}{c}$. Overall, we get that there exists a path, $P'$, between $v_1=s$ and $v_\ell=u_\ell=t$ such that ${\bf x}(V(P'))<\frac{8}{c}$. Since $c \geq 8$, we reach a contradiction to the assumption that ${\bf x}$ is a fractional solution.

%

\section{Distance-Hereditary Vertex Deletion}\label{sec:dist}
In this section we prove Theorem~\ref{thm:newApprox2DH}. We start with preliminaries. 


\paragraph{Preliminaries.} A graph $G$ is \emph{distance hereditary} if every connected induced subgraph $H$ of $G$, for all $u,v \in V(H)$ the number of vertices in shortest path between $u$ and $v$ in $G$ is same as the number of vertices in shortest path between $u$ and $v$ in $H$. Another characterization of distance hereditary graphs is the graph not containing an induced sub-graph isomorphic to a house, a gem, a domino or an induced cycle on $5$ or more vertices (refer Figure~\ref{fig:dh-obs}). We refer to a house, a gem, a domino or an induced cycle on at least $5$ vertices as a \dho. A \dho\ on at most $\numdhd$ vertices is a small \dho. A \biclique\ is a graph $G$ with vertex bipartition $X,Y$ each of them being non-empty such that for each $x\in X$ and $y \in Y$ we have $\{x,y\} \in E(G)$. We note here that, $X$ and $Y$ need not be independent sets in a \biclique\ $G$.

\begin{figure}[t]
\centering
\includegraphics[scale=0.6]{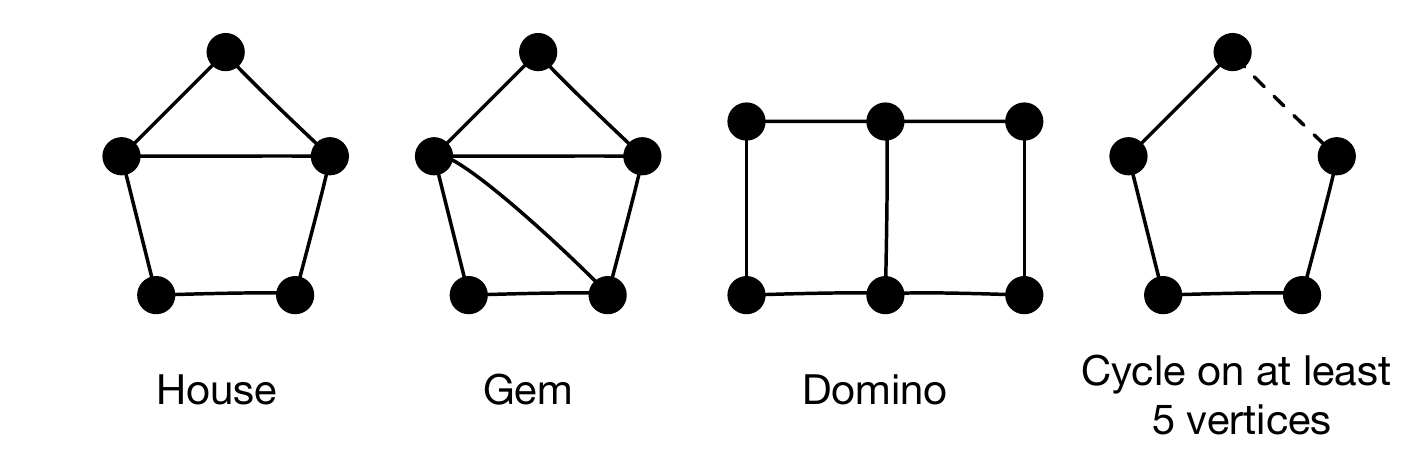}
\caption{Obstruction set for distance hereditary graphs}
\label{fig:dh-obs}
\end{figure}

Clearly, we can assume that the weight $w(v)$ of each vertex $v\in V(G)$ is positive, else we can insert $v$ into any solution. Our approximation algorithm for \wDH\ comprises of two components. The first component handles the special case where the input graph $G$ consists of a biclique $C$ and a distance hereditary $H$. Here, we also assume that the input graph has no ``small'' \dho. We show that when input restricted to these special instances \wDH\ admits an $\mathcal{O}(\log ^2 n)$-factor approximation algorithm. 

The second component is a recursive algorithm that solves general instances of the problem. Initially, it easily handles ``small'' \dho. Then, it gradually disintegrates a general instance until it becomes an instance of the special form that can be solved in polynomial time. More precisely, given a problem instance, the algorithm divides it by finding a maximal biclique $M$ (using an exhaustive search which relies on the guarantee that $G$ has no ``small'' \dho) and a small separator $S$ (using an approximation algorithm) that together break the input graph into two graphs significantly smaller than their origin. 

\subsection{Biclique+ Distance Hereditary Graph}\label{sec:approxBicliqueHereditary}

In this subsection we handle the special case where the input graph $G$ consists of a biclique $C$ and a distance hereditary graph $H$.
More precisely, along with the input graph $G$ and the weight function $w$, we are also given a biclique $C$ and a distance hereditary graph $H$ such that $V(G)=V(C)\cup V(H)$, where the vertex-sets $V(C)$ and $V(H)$ are disjoint. 
Here, we also assume that $G$ has no \dho\ on at most $\numdhd$ vertices, which means that every \dho in $G$ is a chordless cycle of strictly more than $\numdhd$ vertices.
Note that the edge-set $E(G)$ may contain edges between vertices in $C$ and vertices in $H$. 
We call this special case the {\em Biclique + Distance Hereditary special case}. Our objective is to prove the following result.

\begin{lemma}\label{lemma:DHnewApproxSpecial}
    The Biclique + Distance Hereditary special case of \wDH admits an $\OO(\log^2 n)$-factor approximation algorithm.
\end{lemma}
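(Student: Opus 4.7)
The plan is to mimic the recursive scheme developed for the Clique+Chordal special case of \wcdel in Section~\ref{sec:approxCliqueChordal}, adapted to handle bicliques in place of cliques and distance hereditary graphs in place of chordal graphs. I would design a recursive algorithm \alg{DH-APPROX} whose calls are of the form $(G',w',C,H',{\bf x})$, where $C$ is a biclique that is preserved by every recursive subcall, $H'$ is a distance hereditary induced subgraph of $G'$, and $\bf x$ is a (suitably preprocessed) fractional solution for hitting every \dho{} of $G'$. I would maintain two invariants analogous to those of \alg{CVD-APPROX}: a low-value invariant, stating that ${\bf x}(v) < (\tfrac{c\log n + \Theta(1)}{c\log n})^{\delta}\cdot\tfrac{1}{c\log n}$ at recursion depth $\delta$, and a zero-biclique invariant, stating that ${\bf x}(v) = 0$ for every $v \in V(C)$. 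Initially $\bf x$ is computed by the ellipsoid method, and the invariants are established by first rounding up any vertex with ${\bf x}(v)\geq 1/(c\log n)$ into the solution (losing only an $\OO(\log n)$ factor) and then redistributing mass via a $({\bf x}\setminus C)$-style transformation, relying on the fact that no \dho{} can contain too many vertices from a single biclique because $G'$ is free of small \dho{}s.

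The central structural step is to split the current instance along a well-chosen maximal biclique $M$ of $H'$. For this I would exploit the canonical split decomposition of distance hereditary graphs (equivalently, the rankwidth-$1$ structure), which represents $H'$ as a tree of bicliques and stars. Standard balanced-separator arguments on this tree produce in polynomial time a maximal biclique $M$ of $H'$ whose removal partitions $H'$ into two subgraphs $\widehat{H}_1,\widehat{H}_2$ with $\alpha(\widehat{H}_i)\leq \tfrac{2}{3}\alpha(H')$. I would then form the two recursive subinstances $(G_i, w|_{V(G_i)}, C, H_i, ({\bf x}\setminus M)|_{V(G_i)})$ with $V(G_i)=V(\widehat{H}_i)\cup V(M)\cup V(C)$, exactly as in the chordal case; the key point is that the complexity measure $\alpha(G_i)$ drops by a constant factor, so the recursion tree has depth $\OO(\log n)$, allowing the low-value invariant to be preserved.

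After the recursive calls return sets $S_1, S_2$ intersecting every \dho{} that lies entirely in $G_1$ or $G_2$, it remains to intersect the ``bad'' \dho{}s, namely those that meet both sides of the decomposition. Since $G'$ has no \dho{} on at most $\numdhd$ vertices, every \dho{} is a chordless cycle of length strictly greater than $\numdhd$, so the bad-cycle analysis of Lemma~\ref{lem:relationBadCyc} and Lemma~\ref{lem:fracCutPair} goes through almost verbatim, yielding for each pair $(v,u)\in V(C)\times V(M)$ of non-adjacent vertices a ``heavy'' side $i(v,u)\in\{1,2\}$ on which every $(v,u)$-path carries fractional mass at least $1/2$. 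I would then build two instances of \textsc{Weighted Multicut}, one on $\widehat{G}_1=G_1-(V(C)\cup V(M))$ and one on $\widehat{G}_2=G_2-(V(C)\cup V(M))$, and solve each with Theorem~\ref{thm:multicutGen}; note that removing a biclique from a distance hereditary graph with no short \dho{} yields a graph that is chordal (its only possible induced cycles are chordless and long, but each must already cross the removed biclique, a contradiction), which is precisely what enables the invocation.

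The main obstacle, and the reason the special-case bound here is $\OO(\log^2 n)$ rather than the $\OO(\log n)$ of the chordal case, is twofold. First, finding a balanced separating biclique requires working with the split-decomposition tree of a distance hereditary graph rather than with a clique forest, and showing that an appropriate $({\bf x}\setminus M)$-style transformation is a valid fractional solution requires a careful bound on how many vertices of $M$ an arbitrary long chordless cycle can use, which must be extracted from the biclique structure and the absence of small \dho{}s. Second, proving the DH analog of Lemma~\ref{lem:fracCutPair} — that two low-weight $(v,u)$-paths on opposite sides of $M$ together induce a chordless cycle disjoint from $C\cup M$ — is the most delicate geometric step, since bicliques are not cliques and one must reason about the two sides of the bipartition of $M$ separately to locate the forbidden induced cycle. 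I expect the extra $\log n$ factor to enter through a preprocessing step that fractionally splits mass across the two sides of each biclique before the recursion, effectively doubling the depth argument. Once these two steps are in place, the induction analysis proceeds exactly as in \alg{CVD-APPROX}, and fixing the constants yields the desired $\OO(\log^2 n)$-factor approximation.
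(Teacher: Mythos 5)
Your high-level plan (recurse on a biclique found from the split/rankwidth-$1$ decomposition, keep a zero-biclique invariant, handle bad cycles via multicut) matches the paper's outline, but two of your concrete claims are wrong and one of them is load-bearing.

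The critical error is your claim that the multicut instances $\widehat{G}_i = G_i - (V(C)\cup V(M))$ are chordal, which you use to invoke Theorem~\ref{thm:multicutGen} (constant-factor multicut on chordal graphs). This is false. The graph $\widehat{G}_i$ is an induced subgraph of the distance hereditary graph $H$, and distance hereditary graphs freely contain induced $C_4$'s (recall that $C_4$ is not in the obstruction set for DH graphs), so $\widehat{G}_i$ need not be chordal. Your parenthetical justification --- that ``its only possible induced cycles are chordless and long, but each must already cross the removed biclique'' --- gets the structure backwards: being distance hereditary forbids long chordless cycles, so the only chordless cycles $\widehat{G}_i$ can have are the short ones ($C_3$, $C_4$), and $C_4$'s need not touch $M$. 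The paper does not rely on chordality at all here: it applies the Garg--Vazirani--Yannakakis $\OO(\log n)$-factor approximation for \textsc{Weighted Multicut} on \emph{general} graphs (Theorem~\ref{thm:multicut}), and \emph{that} $\log n$ is exactly where the extra logarithm over the Clique$+$Chordal case comes from. Your explanation of the extra $\log n$ (a ``preprocessing step that fractionally splits mass across the two sides of each biclique, effectively doubling the depth'') is therefore misidentified; the recursion depth remains $\OO(\log n)$ as in the chordal case, and the additional factor is entirely the cost of using the weaker, general-graph multicut subroutine.

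There are also a few smaller divergences from the paper that you should be aware of. The complexity measure $\alpha(G')$ in the paper is \emph{not} the independence number but the number of vertices of $H'$ assigned a nonzero value by $\bf x$; this is what makes the $\tfrac{3}{4}$-shrinkage argument compatible with the simpler Low-Value Invariant ${\bf x}(v)<1/\log n$ that the paper adopts here (you imported the more elaborate depth-indexed invariant from Section~\ref{sec:approxCliqueChordal}, which is unnecessary). Finally, you flag but do not resolve the genuinely delicate step: bounding how many vertices of a biclique a long chordless cycle can use. In the paper this is Lemma~\ref{lemma:DHhole biclique}, which shows (using the absence of small \dho{}s in a nontrivial induction on $|Q\cap M|$) that one can always pass to a chordless cycle meeting the biclique in at most $3$ vertices, forming a single vertex, an edge, or an induced $P_3$; this is what justifies the $(1+\tfrac{4}{\log n})$ reweighting in $({\bf x}\setminus M)$. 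You correctly anticipate that ``one must reason about the two sides of the bipartition of $M$ separately,'' but that observation alone does not produce the bound, and without it the zero-biclique invariant cannot be maintained at cost $(1+\OO(1/\log n))$ per level.
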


We assume that $n\geq 2^{12}$, else the input instance can be solve by brute-force \footnote{This assumption simplifies some of the calculations ahead.}. Let $c$ be a fixed constant (to be determined later). In the rest of this subsection, we design a $c\cdot\log n$-factor approximation algorithm for the Biclique + Distance Hereditary special case of \wDH.

\bigskip
{\noindent\bf  Recursion.} Our approximation algorithm is a recursive algorithm. We call our algorithm \alg{DHD-APPROX}, and define each call to be of the form $(G',w',C,H',{\bf x})$. Here, $G'$ is an induced subgraph of $G$ such that $V(C)\subseteq V(G')$, and $H'$ is an induced subgraph of $H$. The argument ${\bf x}$ is discussed below. We remark that we continue to use $n$ to refer to the size of the vertex-set of the input graph $G$ rather than the current graph $G'$.

\bigskip
{\noindent\bf Arguments.} While the execution of our algorithm progresses, we keep track of two arguments: the number of vertices in the current distance hereditary graph $H'$ that are assigned a non-zero value by $\x$, which we denote by $\alpha(G')$ and the fractional solution $\x$.

\begin{observation}
The measure $\alpha(G')$ can be computed in polynomial time.
\end{observation}

A fractional solution ${\bf x}$ is a function ${\bf x}: V(G')\rightarrow [0,\infty)$ such that for every chordless cycle $Q$ of $G'$ on at least $5$ vertices it holds that ${\bf x}(V(Q))\geq 1$. An optimal fractional solution minimizes the weight $w'({\bf x})=\sum_{v\in V(G')}w'(v)\cdot{\bf x}(v)$. Clearly, the solution to the instance $(G',w')$ of \wDH is at least as large as the weight of an optimal fractional solution.  Although we initially compute an optimal fractional solution ${\bf x}$ (at the initialization phase that is described below), during the execution of our algorithm, we manipulate this solution so it may no longer be optimal. Prior to any call to \alg{DHD-APPROX} with the exception of the first call, we ensure that ${\bf x}$ satisfies the following invariants:
\begin{itemize}
\item {\bf Low-Value Invariant}: For any $v\in V(G')$, it holds that ${\bf x}(v) < 1/\log n$.
\item {\bf Zero-Biclique Invariant}: For any $v\in V(C)$, it holds that ${\bf x}(v) = 0$.
\end{itemize}

We note that the Low-Value Invariant used here is simpler than the one used in Section~\ref{sec:approxCliqueChordal} since it is enough for the purpose of this section.  

\bigskip
{\noindent\bf Goal.} The depth of the recursion tree will be bounded by $\Delta=\OO(\log n)$, where the depth of initial call is $1$. The correctness of this claim is proved when we explain how to perform a recursive call. For each recursive call to \alg{DHD-APPROX}$(G',w',C,H',{\bf x})$, we aim to prove the following.

\begin{lemma}\label{lemma:DHapproxRecursiveCall}
For any $\delta\in\{1,2,\ldots,\Delta\}$, each recursive call to \alg{DHD-APPROX}  of depth $\delta \geq 2$ returns a solution that is at least $\opt$ and at most $(\frac{\log n}{\log n + 4})^\delta\cdot c\cdot\log n\cdot\log(\alpha(G'))\cdot w'({\bf x})$. Moreover, it returns a subset $U\subseteq V(G')$ that realizes the solution.
\end{lemma}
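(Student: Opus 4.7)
The plan is to mirror the recursive analysis of the Clique+Chordal case (Lemma~\ref{lemma:approxRecursiveCall}), with three adaptations: the clique $C$ becomes a biclique, the clique-forest decomposition of $H'$ becomes a split decomposition of the distance hereditary graph $H'$, and the small obstructions to be handled are now chordless cycles on more than $\numdhd$ vertices. The goal is again to divide, conquer, and then patch together the ``bad'' obstructions that straddle the partition using a multicut-based subroutine.

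\textbf{Termination and depth.} First I would argue, in the spirit of Lemma~\ref{lem:approxTerminate}, that once $\alpha(G')$ drops below a fixed constant the graph $G'$ must already be distance hereditary: every remaining \dho\ is forced to be a chordless cycle of more than $\numdhd$ vertices, which contains an independent set of size at least $\lceil (\numdhd+1)/2\rceil$. Together with the guarantee that each recursive call drops $\alpha$ by a factor of $3/4$, this caps the recursion depth at $\Delta = \OO(\log n)$.

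\textbf{Recursive splitting.} Exploiting the tree of the split decomposition of $H'$, I would pick a node whose removal yields two subtrees that are each ``balanced'' in the number of leaves assigned nonzero value by $\x$; this produces a biclique $M \subseteq V(H')$ such that $H' - M$ splits into $\widehat{H}_1$ and $\widehat{H}_2$ with $\alpha(\widehat{H}_1), \alpha(\widehat{H}_2) \leq \tfrac{3}{4}\alpha(H')$. Setting $G_i = G'[V(\widehat{H}_i)\cup V(M)\cup V(C)]$ and $H_i = H'[V(\widehat{H}_i)\cup V(M)]$, I update the fractional solution to ${\bf x}^* = {\bf x}\setminus M$ by the analog of Lemma~\ref{lem:adjustZeroClique}; since a chordless cycle meets the biclique $M$ in at most a bounded number of vertices, the low-value invariant ensures that the weight overhead is $(1+\OO(1/\log n))$, and that $\x^*$ still satisfies (an update of) the low-value invariant. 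Recursive calls on $I_1 = (G_1, w'|_{V(G_1)}, C, H_1, \x^*|_{V(G_1)})$ and $I_2 = (G_2, w'|_{V(G_2)}, C, H_2, \x^*|_{V(G_2)})$ return sets $S_1, S_2$ that intersect every \dho\ lying entirely in $G_1$ or $G_2$.

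\textbf{Bad obstructions.} Any remaining \dho\ uses vertices of both $V(\widehat{H}_1)$ and $V(\widehat{H}_2)$, so it is a long chordless cycle that must cross $V(C)\cup V(M)$ at least twice, decomposing into a pair of paths through $G_1$ and $G_2$ joining common endpoints in $V(C)\cup V(M)$. Emulating Lemma~\ref{lem:fracCutPair}, I would show that for each pair of crossing endpoints, $\x^*$ assigns weight at least $1/2$ on all crossing paths in at least one of the two sides, so that $2\x^*_i$ is feasible for a Weighted Multicut instance on side $i$. The main obstacle is that, unlike the chordal case, the multicut instance lives on a distance hereditary graph (not a chordal one), so Theorem~\ref{thm:multicutGen} does not directly apply; I expect to resolve this either by arguing that, after removing small DH-obstructions in a preprocessing layer, the relevant induced subgraphs are chordal (so Theorem~\ref{thm:multicutGen} applies) or by paying an extra $\OO(\log n)$ factor using a generic multicut algorithm, which is consistent with the extra $\log n$ present in the target bound compared to the chordal case.

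\textbf{Weight analysis.} Combining $w'(T) \leq w'(S_1) + w'(S_2) + w'(R_1) + w'(R_2)$ with the inductive bounds $w'(S_i) \leq (\tfrac{\log n}{\log n+4})^{\delta+1} c \log n \log(\alpha(G_i)) w'(\x^*_i)$, the multicut bound $w'(R_i) \leq \OO(\log n)\, w'(\x^*_i)$, the additivity $w'(\x^*_1)+w'(\x^*_2)=w'(\x^*)$ (since $\x^*$ is zero on $V(C)\cup V(M)$), and $\alpha(G_i)\leq \tfrac{3}{4}\alpha(G')$, the desired bound $(\tfrac{\log n}{\log n+4})^\delta \cdot c\cdot \log n\cdot \log(\alpha(G'))\cdot w'(\x)$ follows by an inequality of the form $c \log(\tfrac{3}{4}x) + \OO(1) \leq \tfrac{\log n + 4}{\log n}\cdot c\log x$ whenever $c$ is chosen large enough relative to the constants hidden in the multicut and update steps.
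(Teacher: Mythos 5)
Your high-level plan mirrors the paper's actual proof for this lemma: split the distance hereditary part of $G'$ via a width-one decomposition at a balanced edge into two halves sharing the cut biclique $M$, rescale $\x$ to $\x\setminus M$, recurse, and patch up the straddling obstructions using multicut. You also correctly guess that the paper pays an extra $\log n$ by falling back on the Garg--Vazirani--Yannakakis multicut algorithm (Theorem~\ref{thm:multicut}) rather than the chordal-graph multicut of Theorem~\ref{thm:multicutGen}; the alternative you float --- arguing that the relevant subgraphs are chordal --- does not work, since distance hereditary graphs are not chordal in general and removing small \dho s does not make them so.

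However, there is a genuine gap in the step you dismiss as ``a chordless cycle meets the biclique $M$ in at most a bounded number of vertices.'' This is simply false for bicliques: if $M=M_1\uplus M_2$ and one of $M_1$, $M_2$ happens to be an independent set, a long chordless cycle can pass through arbitrarily many vertices of that side without creating any chord. (The only free observation is that such a cycle cannot contain two vertices from each of $M_1$ and $M_2$, since that would give a $K_{2,2}$ and hence a chord; this alone does not bound the intersection.) The paper resolves this with Lemma~\ref{lemma:DHhole biclique}, which is a nontrivial replacement argument: if a bad cycle $Q$ meets $M$ in more than three vertices, one uses a vertex on the opposite side of the bipartition together with segments of $Q$ to manufacture a new \dho, and since all small \dho s have been removed this new \dho\ must itself be a long chordless cycle that meets $M$ in fewer vertices; iterating shrinks the intersection to at most three. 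Both the feasibility of $\x\setminus M$ (the analogue of Lemma~\ref{lem:adjustZeroClique}) and the decomposition of bad cycles into two path segments (Lemma~\ref{lem:DHrelationBadCyc}) hinge on this lemma, so it cannot be skipped. A second, smaller mismatch: in this section the paper defines $\alpha(G')$ as the number of $H'$-vertices with nonzero $\x$-value, not the independence number, and the termination argument is purely LP-based --- any surviving obstruction $Q$ must satisfy $\x(V(Q))\geq 1$ while the low-value invariant forces $\x(v)<1/\log n$, so $|V(Q)|>\log n$ and at least $\log n$ of its vertices have nonzero value --- rather than the independent-set count you carry over from the chordal case.
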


At the initialization phase, we see that in order to prove Lemma \ref{lemma:DHnewApproxSpecial}, it is sufficient to prove Lemma \ref{lemma:DHapproxRecursiveCall}.

\bigskip
{\noindent\bf Initialization.} Initially, the graphs $G'$ and $H'$ are simply set to be the input graphs $G$ and $H$, and the weight function $w'$ is simply set to be input weight function $w$. Moreover, we compute an optimal fractional solution ${\bf x}={\bf x}_{\mathrm{init}}$ by using the ellipsoid method. Recall that the following claim holds.

\begin{observation}
The solution of the instance $(G',w')$ of \wDH is lower bounded by $w'({\bf x}_{\mathrm{init}})$.
\end{observation}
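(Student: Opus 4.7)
The plan is the standard LP-relaxation lower bound argument: exhibit the indicator vector of any integral feasible solution as a feasible fractional solution, and then invoke the optimality of ${\bf x}_{\mathrm{init}}$ to conclude. Because this is a routine observation rather than a substantive theorem, I do not expect any genuine obstacle; the only thing to be slightly careful about is that the LP constraints do not a priori cover every type of \dho.

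Concretely, let $S^{\star}\subseteq V(G')$ be an optimal solution, so $G'- S^{\star}$ is distance hereditary and $w'(S^{\star})=\opt$. Define the indicator $\chi_{S^{\star}}:V(G')\to\{0,1\}$ by $\chi_{S^{\star}}(v)=1$ iff $v\in S^{\star}$. The main step is to verify that $\chi_{S^{\star}}$ is feasible for the LP whose optimum defines ${\bf x}_{\mathrm{init}}$. Nonnegativity is immediate; for any chordless cycle $Q$ of $G'$ on at least $5$ vertices, the cycle $Q$ is a \dho, and since $G'- S^{\star}$ is distance hereditary, $Q$ cannot survive in $G'- S^{\star}$, so $V(Q)\cap S^{\star}\neq\emptyset$ and $\chi_{S^{\star}}(V(Q))\geq 1$. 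Therefore $\chi_{S^{\star}}$ is fractionally feasible, and optimality of ${\bf x}_{\mathrm{init}}$ yields
\[
w'({\bf x}_{\mathrm{init}})\;\le\;w'(\chi_{S^{\star}})\;=\;\sum_{v\in S^{\star}}w'(v)\;=\;w'(S^{\star})\;=\;\opt,
\]
which is the claimed bound.

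The one point that might look like a loose end is that a \dho\ may also be a house, a gem, or a domino, whereas the LP has constraints only for chordless cycles of length at least $5$. However, the standing assumption of the Biclique + Distance Hereditary special case is that $G$ has no \dho\ on at most $\numdhd$ vertices. Since houses, gems, and dominoes all have at most $6\le\numdhd$ vertices, and any chordless cycle on between $5$ and $\numdhd$ vertices is also a small \dho, neither $G$ nor its induced subgraph $G'$ contains any of these. Hence every \dho\ in $G'$ is a chordless cycle on more than $\numdhd\ge 5$ vertices, and the LP constraints address precisely these obstructions, so no \dho\ is missed by the feasibility check above.
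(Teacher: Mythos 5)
Your main argument is correct and is the standard LP-relaxation lower bound: the indicator vector of any integral feasible solution $S^{\star}$ satisfies all the LP constraints (every chordless cycle on at least $5$ vertices is a \dho, hence met by $S^{\star}$), so $w'({\bf x}_{\mathrm{init}})\le w'(S^{\star})=\opt$. The paper gives no proof of this observation, so your argument simply supplies the routine justification the paper leaves implicit.

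One small remark on your ``loose end'' paragraph: that discussion is not actually needed for this direction of the inequality. For the lower bound, one only needs that every LP constraint is implied by integral feasibility (so the LP is a relaxation); it is harmless if the LP were to \emph{omit} some obstruction types, since omitting constraints only makes the LP optimum smaller and hence still a valid lower bound. The concern that the LP constraints should also \emph{capture} every remaining \dho\ (so that a rounded fractional solution actually yields a distance-hereditary graph) is exactly what the $\numdhd$-free assumption buys, but that is relevant to the rounding/upper-bound side of the algorithm, not to this observation.
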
 

Moreover, it holds that $\alpha(G')\leq n$, and therefore to prove Lemma \ref{lemma:DHnewApproxSpecial}, it is sufficient to return a solution that is at least $\opt$ and at most $c\cdot\log n\cdot\log(\alpha(G))\cdot w({\bf x})$ (along with a subset that realizes the solution). Part of the necessity of the stronger claim given by Lemma \ref{lemma:DHapproxRecursiveCall} will become clear at the end of the initialization phase.

We would like to proceed by calling our algorithm recursively. For this purpose, we first need to ensure that ${\bf x}$ satisfies the low-value and zero-biclique invariants, to which end we use the following notation. We let $h({\bf x})=\{v\in V(G'): {\bf x}(v)\geq 1/\log n\}$ denote the set of vertices to which ${\bf x}$ assigns high values. Note that we can assume for each $v \in h({\bf x})$, we have ${\bf x}(v) \leq 1$. Moreover, given a biclique $M$ in $G'$, we let $({\bf x}\setminus M): V(G')\rightarrow [0,\infty)$ denote the function that assigns 0 to any vertex in $M$ and $(1+\frac{4}{\log n}){\bf x}(v)$ to any other vertex $v\in V(G')$. Now, to adjust ${\bf x}$ to be of the desired form both at this phase and at later recursive calls, we rely on the following lemmata.

\begin{lemma}\label{lem:DHadjustLowVal}
Define $\widehat{G}=G'- h({\bf x})$, $\widehat{w}=w'|_{V(\widehat{G})}$ and $\widehat{\bf x}={\bf x}|_{V(\widehat{G})}$. Then, $c'\cdot\log n\cdot\log(\alpha(\widehat{G}))\cdot w'(\widehat{\bf x}) +  w'(h({\bf x})) \leq c'\cdot\log n\cdot\log(\alpha(G))\cdot w'({\bf x})$, where $c' \geq 1$.
\end{lemma}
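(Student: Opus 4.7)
The plan is to mimic the short calculation used for the analogous \textsc{WCVD} lemma (Lemma~\ref{lem:adjustLowVal}), but with the extra $\log(\alpha(\cdot))$ factor absorbed into the slack we obtain from removing $h({\bf x})$. The key point is that every vertex of $h({\bf x})$ carries a relatively large $\bf x$-value ($\geq 1/\log n$), so the fractional weight ``paid'' for these vertices by $\bf x$ is already comparable to their integral weight up to a $\log n$ factor.

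Concretely, I would first note that by the definition of $h({\bf x})$ we have
\[
w'({\bf x}|_{h({\bf x})}) \;=\; \sum_{v\in h({\bf x})} w'(v)\cdot {\bf x}(v) \;\geq\; \frac{1}{\log n}\,w'(h({\bf x})),
\]
and hence
\[
w'(\widehat{\bf x}) \;=\; w'({\bf x}) - w'({\bf x}|_{h({\bf x})}) \;\leq\; w'({\bf x}) - \frac{1}{\log n}\,w'(h({\bf x})).
\]
Multiplying both sides by $c'\cdot \log n\cdot \log(\alpha(\widehat{G}))$ yields
\[
c'\log n\,\log(\alpha(\widehat{G}))\,w'(\widehat{\bf x}) \;\leq\; c'\log n\,\log(\alpha(\widehat{G}))\,w'({\bf x}) - c'\,\log(\alpha(\widehat{G}))\,w'(h({\bf x})).
\]
Adding $w'(h({\bf x}))$ to both sides gives the claim provided $c'\log(\alpha(\widehat{G})) \geq 1$ and $\log(\alpha(\widehat{G})) \leq \log(\alpha(G))$.

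The monotonicity $\alpha(\widehat{G}) \leq \alpha(G)$ is immediate from the definition of $\alpha$ as the number of non-zero $\bf x$-vertices in $H'$, since $\widehat{G}$ is obtained by removing vertices from $G'$ and $\widehat{\bf x}$ is ${\bf x}$ restricted to the surviving vertices. For the factor $c'\log(\alpha(\widehat{G})) \geq 1$, the assumption $c' \geq 1$ is enough whenever $\alpha(\widehat{G}) \geq 2$. The only potentially delicate point is the degenerate regime $\alpha(\widehat{G}) \leq 1$: in that case $\widehat{G}$ contains at most one non-zero-valued vertex, so the fractional solution $\widehat{\bf x}$ on $\widehat{G}$ cannot fractionally cover any chordless cycle, meaning $\widehat{G}$ itself is distance hereditary and one can take the trivial solution $\emptyset$ for the subproblem on $\widehat{G}$; this makes the first summand on the left vanish and the inequality reduces to $w'(h({\bf x})) \leq c'\log n\,\log(\alpha(G))\,w'({\bf x})$, which follows from $w'(h({\bf x})) \leq \log n\cdot w'({\bf x}|_{h({\bf x})}) \leq \log n\cdot w'({\bf x})$ together with $c'\log(\alpha(G))\geq 1$ (which holds since, unless the whole problem is trivial, $\alpha(G) \geq 2$). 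So the main, and essentially only, step requiring care is this boundary argument; the bulk of the lemma is a one-line weight accounting.
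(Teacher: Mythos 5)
Your proof is the paper's own one-line computation: bound $w'(\widehat{\bf x}) \leq w'({\bf x}) - \frac{1}{\log n}w'(h({\bf x}))$ from the definition of $h({\bf x})$, multiply by the $c'\log n\log\alpha$ coefficient, add $w'(h({\bf x}))$, and use monotonicity $\alpha(\widehat{G}) \leq \alpha(G')$. You go further by flagging the implicit requirement $c'\log\alpha \geq 1$, which the paper never mentions but does silently rely on when it cancels the $-c'\log(\alpha(G'))\, w'(h({\bf x}))$ term against $+w'(h({\bf x}))$; in the algorithm's context this is harmless because the recursion terminates once $\alpha(G') < \log n$. Two small remarks on your degenerate-case analysis: you route the multiplication through the smaller $\log\alpha(\widehat{G})$, so you need the stronger condition $c'\log\alpha(\widehat{G}) \geq 1$, whereas the paper's chain implicitly upgrades to $\log\alpha(G')$ first and hence needs only $c'\log\alpha(G') \geq 1$; and when $\alpha(\widehat{G}) = 1$ the first summand vanishes simply because $\log 1 = 0$, not because the algorithm may return $\emptyset$ --- the lemma is a statement about ${\bf x}$, not about the solver --- while $\alpha(\widehat{G}) = 0$ leaves $\log 0$ undefined and must simply be excluded. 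These are presentational quibbles; the substance of your argument matches the paper's and is, if anything, more careful about the hidden constant condition.
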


\begin{proof}
By the definition of $h({\bf x})$, it holds that $w'(\widehat{\bf x})\leq w'({\bf x})-\frac{1}{\log n}\cdot w'(h({\bf x}))$. Since $\widehat{G}$ is an induced subgraph of $G'$, it also holds that $\alpha(\widehat{G})\leq\alpha(G')$.
Thus, $c'\cdot\log n\cdot\log(\alpha(\widehat{G}))\cdot w'(\widehat{\bf x}) +  w'(h({\bf x})) \leq c'\cdot\log n\cdot\log(\alpha(G'))\cdot (w'({\bf x})-\frac{1}{\log n}\cdot w'(h({\bf x}))) +  w'(h({\bf x}))\leq c'\cdot\log n\cdot\log(\alpha(G))\cdot w'({\bf x})$.
\end{proof}

Thus, it is safe to update $G'$ to $G'- h({\bf x})$, $w'$ to $w'|_{V(\widehat{G})}$, $H'$ to $H'- h({\bf x})$ and ${\bf x}$ to ${\bf x}|_{V(\widehat{G})}$, where we ensure that once we obtain a solution to the new instance, we add $w'(h({\bf x}))$ to this solution and $h({\bf x})$ to the set realizing it.

\begin{lemma} \label{lemma:DHhole biclique}
Let $Q$ be a chordless cycle on at least $5$ vertices and $M$ be a biclique in $G'$ with vertex partitions as $V(M)=M_1 \uplus M_2$ such that $V(Q) \cap V(M) \neq \emptyset$. Then there is a chordless cycle $Q'$ on at least $5$ vertices that intersects $M$ in at most $3$ vertices such that $E(Q' \setminus M) \subseteq E(Q \setminus M)$. Furthermore, $Q'$ is of one of the following three types.
\begin{itemize}
    \item $Q' \cap M$ is a single vertex
    \item $Q' \cap M$ is an edge in $G[M]$
    \item $Q' \cap M$ is an induced path on $3$ vertices in $M$.
\end{itemize}
\end{lemma}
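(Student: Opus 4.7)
My plan is to proceed by a detailed case analysis on $V(Q) \cap V(M)$, combined with an extremal argument that lets us assume the cycle is as ``tight'' as possible. The two key structural facts I will rely on are: (a) since $M$ is a biclique, every pair $(u,v)$ with $u \in M_1, v \in M_2$ is adjacent in $G$; and (b) since $Q$ is chordless, any two vertices of $V(Q) \cap V(M)$ that are adjacent in $G$ must be consecutive on $Q$. Combining (a) and (b), any $M_1$-vertex and any $M_2$-vertex in $V(Q) \cap V(M)$ must be consecutive on $Q$. From this I would first deduce that if $|V(Q) \cap V(M)| \geq 4$, then $V(Q) \cap V(M)$ lies entirely in one part of the bipartition, say $M_1$: otherwise a simple pigeonhole on the consecutiveness constraint forces $Q$ to be a cycle of length at most $4$, contradicting $|V(Q)| \geq 5$.

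Next, I would dispose of the small cases. If $|V(Q) \cap V(M)| = 1$, take $Q' = Q$ (type 1). If $|V(Q) \cap V(M)| = 2$ with the two vertices adjacent in $G$, they are forced to be consecutive on $Q$ and hence $Q \cap M$ is an edge of $G[M]$, so $Q' = Q$ (type 2). If $|V(Q) \cap V(M)| = 3$ containing vertices from both sides, the consecutiveness constraint forces the unique $M_2$-vertex (resp.\ $M_1$-vertex) to sit between the two others on $Q$, yielding an induced $3$-path $m_1 - m^* - m_3$ in $M$, so $Q' = Q$ (type 3). The remaining configurations, all with $V(Q) \cap V(M) \subseteq M_1$, will be reduced to type 3 via a construction.

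The core construction is as follows. Working under the assumption that not every configuration already matches one of the three types, I pick two consecutive vertices $m, m' \in V(Q) \cap V(M)$ on $Q$ whose arc $A$ between them contains $r \geq 1$ internal vertices $v_1, \ldots, v_r$ outside $M$. Chordlessness of $Q$ forces $m, m' \in M_1$ non-adjacent in $G$. I then choose $m^* \in M_2$ (nonempty by biclique definition) not adjacent to any $v_j$, and set $Q' = m, v_1, \ldots, v_r, m', m^*, m$. The only potential chords are between $m^*$ and $v_j$, which are excluded by the choice of $m^*$; the rest of the chords are ruled out by chordlessness of $Q$ and the non-adjacency of $m, m'$. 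Then $V(Q') \cap V(M) = \{m, m', m^*\}$ induces the $3$-path $m - m^* - m'$, the edge-set condition $E(Q' \setminus M) \subseteq E(Q \setminus M)$ is satisfied by construction, and $|V(Q')| = r + 3 \geq 5$ provided $r \geq 2$.

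The main obstacle will be two boundary issues that must be handled carefully: the existence of an arc with $r \geq 2$ and the existence of a suitable $m^* \in M_2$ disjoint from the neighborhoods of the internal arc-vertices. To handle these uniformly, I would apply an extremal argument: among all chordless cycles $Q''$ of length at least $5$ with $E(Q'' \setminus M) \subseteq E(Q \setminus M)$ and $V(Q'') \cap V(M) \neq \emptyset$, pick one minimizing $|V(Q'') \cap V(M)|$. The construction above, instantiated on this minimizer, produces either a valid $Q'$ of one of the three types, or a strictly smaller witness, contradicting minimality. For the residual configurations where $r \leq 1$ for every arc (which forces $|V(Q) \cap V(M)| \geq 3$ by the length bound $|V(Q)| \geq 5$), I would fall back on the $|V(Q) \cap V(M)| \leq 3$ analysis from the second paragraph, noting that the consecutiveness structure already delivers type 2 or type 3 intrinsically; and when no good $m^*$ exists in $M_2$, a vertex in $M_2$ adjacent to some interior $v_j$ yields a shorter certifying cycle through $v_j$ contradicting minimality, pushing us back into the handled regime.
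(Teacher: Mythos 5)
Your proof has the right skeleton — the consecutiveness observation, the fact that $\lvert V(Q) \cap V(M)\rvert \geq 4$ forces $V(Q)\cap V(M)$ into one side of the bipartition, and the dispatch of the small cases are all correct. However, there is a genuine gap in the core construction, and it is exactly the step you yourself flag as a ``boundary issue'': the existence of a vertex $m^* \in M_2$ with no neighbor among the internal arc vertices $v_1,\dots,v_r$. Nothing in the hypotheses guarantees such an $m^*$ exists; $M_2$ might, for instance, be a single vertex that is adjacent to one of the $v_j$. Your fallback (``a vertex in $M_2$ adjacent to some interior $v_j$ yields a shorter certifying cycle \dots contradicting minimality'') does not go through as stated: if $m^*$ is adjacent to $v_j$ with $j\leq 2$, the resulting cycle $m, v_1, \ldots, v_j, m^*$ has only $3$ or $4$ vertices, which is neither a chordless cycle on at least $5$ vertices nor something that contradicts the minimality of $\lvert V(Q'')\cap V(M)\rvert$. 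The same length issue arises whenever $r\leq 1$, and your fallback to ``the $\lvert V(Q)\cap V(M)\rvert \leq 3$ analysis'' does not apply when $\lvert V(Q)\cap V(M)\rvert \geq 4$ and all arcs are short.

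The deeper issue is that your argument does not exploit the crucial standing assumption in this part of the paper, namely that $G'$ has no DH-obstruction on at most $48$ vertices. The paper's proof leans on it heavily and in a quite different way: instead of trying to pick a ``good'' $m^*$, it picks an \emph{arbitrary} $u \in M_2$ and argues that $G'[V(P_1)\cup\{u\}]$ is not distance hereditary (because $u$ creates a shortcut between the endpoints $v_1,v_2$ of the long arc $P_1$), hence contains a DH-obstruction; since all small DH-obstructions have been removed, that obstruction must itself be a chordless cycle on at least $5$ vertices of the right shape. The same idea is repeated (together with the auxiliary claim that $G'[M]$ has no induced $P_4$) to handle $\lvert V(Q)\cap V(M)\rvert \geq 4$ by recursion. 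Without invoking this assumption, the troublesome cases you identify — adjacency of every $M_2$-vertex to the arc, and short arcs — cannot be ruled out, so the proof as written is incomplete. To repair it you would need to show that if no good $m^*$ exists, the adjacencies from $M_2$ into the arc together with $m, m'$ force a small domino/house/gem or short chordless cycle, contradicting the no-small-obstruction invariant; that is essentially the argument the paper makes, just packaged differently.
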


\begin{proof}
Observe that no chordless cycle on $5$ or more vertices may contain two vertices from each of $M_1$ and $M_2$, as that would imply a chord in it.
Now, if the chordless cycle $Q$ already satisfies the required conditions we output it as $Q'$.

First consider the case, when $Q \cap M$ contains exactly two vertices that don't have an edge between them.
Then the two vertices, say $v_1, v_2$, are both either in $M_1$ or in $M_2$. Suppose that they are both in $M_1$ and consider some vertex $u \in M_2$. 
Let $P_1$ is the longer of the two path segments of $Q$ between $v_1$ and $v_2$, and note that it must length at least $3$. 
Then observe that $G'[P_1 \cup \{u, v_1, v_2\}]$ contains a {\dho}, as $v_1, v_2$ have different distances depending on if $u$ is included in an induced subgraph or not. 
And further, it is easy to see that this {\dho} contains the induced path $v_1, u, v_2$. 
However, as all small obstructions have been removed from the graph, we have that $Q'$ is a chordless cycle in $G'$ on at least $5$ vertices. Furthermore, $Q' \cap M$ is the induced path $(v_1, u , v_2)$, in $G'$ and $E(Q' \setminus M) \subseteq E(Q \setminus M)$.

Now consider the case when $Q \cap M$ contains exactly three vertices. Observe that it cannot contain two vertices of $M_1$ and one vertex of $M_2$, or vice versa, as $Q$ doesn't satisfy the required conditions.
Therefore, $Q \cap M$ contains exactly three vertices from $M_1$ (or from $M_2$), which again don't form an induced path of length $3$. So there is an independent set of size $2$ in $Q \cap M$, and now, as before, we can again obtain the chordless cycle $Q'$ on at least $5$ vertices with $E(Q' \setminus M) \subseteq E(Q \setminus M)$. Before we consider the other cases, we have the following claim.

\begin{claim} \label{claim:biclique-property}
Let $M$ be a biclique in $G'$ with vertex partition as $V(M)= M_1 \uplus M_2$. Then $G'[M]$ has no induced $P_4$.
\end{claim}
\begin{proof}
Let $P$ be any induced path of length $4$ in $G'[M]$.
Then, either $V(P) \subseteq M_1$ or $V(P) \subseteq M_2$.
Now consider any such path $P$ in $M_1$ and some vertex $u \in M_2$. Then $G'[P \cup \{u\}]$ contains a {\dho} of size $5$ which is a contradiction to the fact that $G'$ has no small obstructions.
\end{proof}

Next, let $Q \cap M$ contain $4$ or more vertices. Note that in this case all these vertices are all either in $M_1$ or in $M_2$ since otherwise, $Q$ would not be a chordless cycle in $G'$ on at least $5$ vertices. Let us assume these vertices lie in $M_1$ (other case is symmetric). Let $v_1, v_2, v_3, \cdots, v_\ell \in M_1 \cap Q$ be the sequence of vertices obtained when we traverse $Q$ starting from an arbitrary vertex, where $\ell \geq 4$. By Claim~\ref{claim:biclique-property} they cannot form an induced path on $4$ vertices, i.e. $G'[V(Q) \cap M_1]$ consists of at least two connected components. Without loss of generality we may assume that $v_1$ and $v_\ell$ are in different components. Observe that the only possible edges between these vertices may be at most two of the edges $(v_1,v_2), (v_2,v_3)$, and $(v_3,v_\ell)$.
Hence, we conclude that either $v_1, v_3$ or $v_2, v_\ell$ are a distance of at least $3$ in $Q$. Let us assume that $v_2, v_\ell$ are at distance $3$ or more in $Q$, and the other case is symmetric and $P_{23}, P_{3\ell}$ be the paths not containing $v_1$ in $Q$ between $v_2$ and $v_3$, and $v_3$ and $v_\ell$, respectively.
Notice that for any $u \in M_2$ the graph $G'[\{u\} \cup V(P_{23}) \cup V(P_{3\ell})]$ contains a {\dho}. Since the graph is free of all small obstruction, this {\dho}, denoted $\hat Q$, must be a chordless cycle on at least $5$ vertices. Furthermore this obstruction can contain at most $2$ vertices from $\{v_2, v_3, v_\ell\}$, as otherwise there would be a chord in it. Hence $\hat Q \cap M$ contains strictly fewer vertices than $Q \cap M$. Moreover, we have $E(\hat Q \setminus M) \subseteq E(Q \setminus M)$. Now, by a recursive application of this lemma to $\hat Q$, we obtain the required $Q'$.
\end{proof}

A consequence of the above lemma is that, whenever $M$ is a biclique in $G'$, we may safely ignore any \dho that intersects $M$ in more than $3$ vertices. This leads us to the following lemma.

\begin{lemma}\label{lem:DHadjustZeroClique}
Given a biclique $M$ in $G'$, the function $({\bf x}\setminus M)$ is a valid fractional solution such that $w'({\bf x}\setminus M)\leq (1+\frac{4}{\log n})w'({\bf x})$. 
\end{lemma}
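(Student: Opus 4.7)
The plan is to prove the two properties separately, the weight bound being essentially immediate and the validity needing some work via Lemma~\ref{lemma:DHhole biclique}. For the weight, since $({\bf x}\setminus M)$ either assigns $0$ (on $V(M)$) or scales ${\bf x}(v)$ by $1+4/\log n$, summing $w'(v)\cdot({\bf x}\setminus M)(v)$ across $V(G')$ gives at most $(1+4/\log n)\,w'({\bf x})$ directly.

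For validity, fix a chordless cycle $Q$ in $G'$ on at least $5$ vertices; the goal is $({\bf x}\setminus M)(V(Q))\ge 1$. If $V(Q)\cap V(M)=\emptyset$ this is trivial since $({\bf x}\setminus M)(V(Q))=(1+4/\log n)\,{\bf x}(V(Q))\ge 1$. Otherwise, I would invoke Lemma~\ref{lemma:DHhole biclique} to produce a chordless cycle $Q'$ of length at least $5$ with $|V(Q')\cap V(M)|\le 3$ and $E(Q'\setminus M)\subseteq E(Q\setminus M)$. Inspecting the three cases in that lemma, the only vertices $Q'$ uses that are not already in $Q$ lie in $M$, so $V(Q')\setminus V(M)\subseteq V(Q)\setminus V(M)$.

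Now I would combine ${\bf x}(V(Q'))\ge 1$ (because ${\bf x}$ is a valid fractional solution and $Q'$ is a \dho) with the Low-Value Invariant ${\bf x}(v)<1/\log n$ to conclude ${\bf x}(V(Q')\setminus V(M))\ge 1-3/\log n$. Monotonicity then gives ${\bf x}(V(Q)\setminus V(M))\ge 1-3/\log n$, whence
\[
({\bf x}\setminus M)(V(Q))=\Bigl(1+\frac{4}{\log n}\Bigr){\bf x}(V(Q)\setminus V(M))\;\ge\;\Bigl(1+\frac{4}{\log n}\Bigr)\Bigl(1-\frac{3}{\log n}\Bigr)=1+\frac{1}{\log n}-\frac{12}{\log^2 n},
\]
which is at least $1$ as soon as $\log n\ge 12$, i.e. $n\ge 2^{12}$, exactly the standing assumption made at the start of Subsection~\ref{sec:approxBicliqueHereditary}.

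The main obstacle is really the application of Lemma~\ref{lemma:DHhole biclique}: once one verifies that its conclusion $E(Q'\setminus M)\subseteq E(Q\setminus M)$ forces the vertex inclusion $V(Q')\setminus V(M)\subseteq V(Q)\setminus V(M)$, the rest is a short computation whose constants ($3$ vertices of intersection, scaling factor $4/\log n$, threshold $2^{12}$) have all been tuned to exactly balance in the final inequality.
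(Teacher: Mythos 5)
Your proof is correct and tracks the paper's argument: the same bound of at most three $M$-vertices on the cycle, the same product $\bigl(1+\frac{4}{\log n}\bigr)\bigl(1-\frac{3}{\log n}\bigr)$, and the same $n\ge 2^{12}$ threshold. You are actually a bit more careful than the paper at the one spot it is terse: the paper simply asserts that $Q$ meets $M$ in at most three vertices, leaning on the informal ``we may safely ignore'' remark following Lemma~\ref{lemma:DHhole biclique}, whereas you explicitly replace $Q$ by a cycle $Q'$ with $|V(Q')\cap V(M)|\le 3$ and transfer the bound back via $V(Q')\setminus V(M)\subseteq V(Q)\setminus V(M)$ and monotonicity of $({\bf x}\setminus M)$. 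Just note that this inclusion is not a purely formal consequence of $E(Q'\setminus M)\subseteq E(Q\setminus M)$ on its own (a degree-zero vertex of $Q'\setminus M$ could in principle escape); here it holds because $|V(Q')|\ge 5$ and $|V(Q')\cap V(M)|\le 3$, with the $M$-vertices consecutive on $Q'$, force the non-$M$ vertices to form an arc of length at least two, so each lies on an edge of $Q'\setminus M$ --- or, as you say, one can read it off the explicit construction in Lemma~\ref{lemma:DHhole biclique}. With that spelled out the argument is sound.
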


\begin{proof}
To prove that $({\bf x}\setminus M)$ is a valid fractional solution, let $Q$ be some chordless cycle (not on $4$ vertices) in $G'$. We need to show that $({\bf x}\setminus M)(V(Q))\geq 1$. By our assumption $Q$ can contain at most $3$ vertices from $M$. Thus, since $\bf x$  is a valid fractional solution, it holds that ${\bf x}(V(Q)\setminus V(M))\geq 1 - \frac{3}{\log n}$. By the definition of $({\bf x}\setminus M)$, this fact implies that $({\bf x}\setminus M)(V(Q))=({\bf x}\setminus M)(V(Q)\setminus V(M)) \geq (1+ \frac{4}{\log n})(1-\frac{3}{\log n}) = 1 + \frac{1}{\log n} - \frac{12}{(\log n)^2}\geq 1$, where the last inequality relies on the assumption $n\geq 2^{12}$.
    
    For the proof of the second part of the claim, note that $w'({\bf x}\setminus M)=(1+\frac{4}{\log n})$ $w'({\bf x}|_{V(G')\setminus V(M)})\leq (1+\frac{4}{\log n})w'({\bf x})$.
\end{proof}

We  call \alg{DHD-APPROX} recursively with the fractional solution $({\bf x}\setminus C)$, and by Lemma \ref{lem:DHadjustZeroClique}, $w'({\bf x}\setminus C)\leq (1+\frac{4}{\log n})w'({\bf x})$.  If Lemma \ref{lemma:DHapproxRecursiveCall} were true, we return a solution that is at least $\opt$ and at most $(\frac{\log n}{\log n + 4}) \cdot c \cdot\log n\cdot\log(\alpha(G))\cdot w({\bf x}\setminus M) \leq c\cdot\log n\cdot\log(\alpha(G))\cdot w({\bf x})$ as desired. In other words, to prove Lemma \ref{lemma:DHnewApproxSpecial}, it is sufficient that we next focus only on the proof of Lemma \ref{lemma:DHapproxRecursiveCall}. The proof of this lemma is done by induction. When we consider some recursive call, we assume that the solutions returned by the additional recursive calls that it performs, which are associated with graphs $\widetilde{G}$ such that $\alpha(\widetilde{G})\leq\frac{3}{4}\alpha(G')$, complies with the conclusion of the lemma.

\bigskip
{\noindent\bf Termination.} Once $G'$ becomes a distance hereditary graph, we return 0 as our solution and $\emptyset$ as the set that realizes it. Clearly, we thus satisfy the demands of Lemma \ref{lemma:DHapproxRecursiveCall}. In fact, we thus also ensure that the execution of our algorithm terminates once $\alpha(G')<\log n$.

\begin{lemma}\label{lem:DHapproxTerminate}
If $\alpha(G') < \log n$, then $G'$ is a distance hereditary graph.
\end{lemma}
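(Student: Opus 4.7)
The plan is to prove the contrapositive: if $G'$ is not distance hereditary, then $\alpha(G') \geq \log n$. So suppose $G'$ fails to be distance hereditary, which means (by the characterization from the preliminaries) that $G'$ contains some \dho\ $Q$.

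The first step is to pin down the shape of $Q$. Because $G'$ is an induced subgraph of the input $G$, and this subsection operates under the Biclique+Distance Hereditary special-case hypothesis that $G$ contains no \dho\ on at most $\numdhd$ vertices, every remaining \dho\ in $G'$ must be a chordless cycle on strictly more than $\numdhd$ vertices. In particular, $Q$ is a chordless cycle on at least $5$ vertices, so the constraint defining a valid fractional solution $\mathbf{x}$ kicks in and gives $\mathbf{x}(V(Q)) \geq 1$.

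The second step uses the two invariants maintained on $\mathbf{x}$. By the Zero-Biclique Invariant, every vertex $v \in V(C) \cap V(Q)$ satisfies $\mathbf{x}(v) = 0$, so the total weight on $Q$ is carried entirely by the vertices in $V(H') \cap V(Q)$. Of these, only the ones with $\mathbf{x}(v) > 0$ contribute, and each such vertex contributes strictly less than $1/\log n$ by the Low-Value Invariant. Let $k$ be the number of vertices in $V(H')$ on $Q$ with positive $\mathbf{x}$-value; then $1 \leq \mathbf{x}(V(Q)) < k/\log n$, so $k > \log n$. Since these $k$ vertices are all among the nonzero-valued vertices of $H'$ counted by the measure $\alpha(G')$, we conclude $\alpha(G') \geq k > \log n$, contradicting the assumption $\alpha(G') < \log n$.

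No step looks like a real obstacle: the argument is essentially a pigeonhole combining the fractional feasibility constraint with the two numerical invariants, together with the fact that "short'' \dho{}s have already been destroyed in the preprocessing for this special case. The only subtlety to double-check is that the chordless cycle constraint of $\mathbf{x}$ (which is phrased for cycles of length at least $5$) indeed applies to $Q$; this is immediate from $|V(Q)| > \numdhd \geq 5$.
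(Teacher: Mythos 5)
Your proof is correct and takes essentially the same approach as the paper's: derive $\mathbf{x}(V(Q)) \geq 1$ from fractional feasibility, bound each positive coordinate by $1/\log n$ via the Low-Value Invariant, and conclude that more than $\log n$ vertices of $H'$ carry a nonzero $\mathbf{x}$-value. You are a bit more explicit than the paper about two small points the paper leaves implicit — that the absence of small DH-obstructions forces $Q$ to be a long chordless cycle (so the LP constraint actually applies), and that the Zero-Biclique Invariant guarantees the positively-valued vertices of $Q$ lie in $H'$ rather than $C$ (so they are actually counted by $\alpha(G')$).
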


\begin{proof}
Suppose that $G'$ is not a distance hereditary graph. Then, it contains an obstruction $Q$. Since $\bf x$ is a valid fractional solution, it holds that ${\bf x}(V(Q))\geq 1$. But $\bf x$ satisfies the low-value invariant therefore, it holds that ${\bf x}(V(Q)) < |V(Q)|/\log n$. These two observations imply that $|V(Q)|>\log n$. Furthermore, at least $\log n$ of these vertices are assigned a non-zero value by ${\bf x}$, i.e. $\alpha(G') \geq \log n$. Therefore, if $\alpha(G') < \log n$, then $G'$ must be a distance hereditary graph.
\end{proof}

The fact that, the recursive calls are made onto graphs where the distance hereditary subgraph contains at most $3/4$ the number of vertices in the current distance hereditary subgraph, we observe the following.

\begin{observation}\label{obs:depth}
    The maximum depth of the recursion tree is bounded by $q\cdot\log n$ for some fixed constant $q$.
\end{observation}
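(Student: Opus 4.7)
The plan is to combine the termination condition of Lemma~\ref{lem:DHapproxTerminate} with the geometric decrease of the parameter $\alpha(G')$ along any root-to-leaf path in the recursion tree. Recall that by the definition of $\alpha(G')$ used in this section, at the initial call we have $\alpha(G')\leq n$, and that by Lemma~\ref{lem:DHapproxTerminate} a call terminates (returning $0$ and $\emptyset$) as soon as $\alpha(G')<\log n$.

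I would first formalize the recursive shrinking of $\alpha$. By the inductive hypothesis stated at the end of the initialization discussion, every recursive call \alg{DHD-APPROX}$(\widetilde{G},\widetilde{w},C,\widetilde{H},\widetilde{\bf x})$ spawned inside a call with graph $G'$ satisfies $\alpha(\widetilde{G})\leq\tfrac{3}{4}\alpha(G')$. Iterating this bound along a path in the recursion tree of length $\delta$ starting at the root yields $\alpha(G')\leq (\tfrac{3}{4})^{\delta-1}\cdot n$ for the graph $G'$ considered at depth~$\delta$.

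Next, I would combine this with the termination criterion. If the call at depth $\delta$ is not a base case, then by Lemma~\ref{lem:DHapproxTerminate} its graph $G'$ is not distance hereditary, hence $\alpha(G')\geq \log n$. Putting the two inequalities together gives
\[
\log n \;\leq\; \alpha(G') \;\leq\; \left(\tfrac{3}{4}\right)^{\delta-1}\cdot n,
\]
which after taking logarithms becomes $\delta-1 \leq \log_{4/3}(n/\log n) \leq \log_{4/3} n$. Thus $\delta \leq 1+\tfrac{1}{\log(4/3)}\cdot \log n$, so choosing $q$ to be any constant exceeding $1+\tfrac{1}{\log(4/3)}$ (for instance $q=4$) gives $\delta \leq q\cdot \log n$, as claimed.

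The argument is essentially routine once one has Lemma~\ref{lem:DHapproxTerminate} and the $\tfrac{3}{4}$-shrinkage of $\alpha$ in hand; the only mild subtlety is to make sure that the $\tfrac{3}{4}$ factor is guaranteed on \emph{every} branch of the recursion, which is precisely what the inductive hypothesis in the description of the recursive call was set up to ensure. There is no real obstacle beyond bookkeeping the base of the logarithm, which just determines the constant $q$.
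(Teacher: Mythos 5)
Your proposal is correct and follows the same reasoning the paper itself gives (albeit tersely, in the sentence preceding the observation and again in the analysis at the end of the recursive call): the parameter $\alpha$ starts at most $n$, shrinks by a factor of $\tfrac{3}{4}$ at each level of the recursion, and the recursion bottoms out once $\alpha < \log n$ by Lemma~\ref{lem:DHapproxTerminate}, forcing depth $\OO(\log n)$. Your explicit derivation of the constant $q$ and the careful statement of the $\tfrac{3}{4}$-shrinkage per branch is just a fleshed-out version of what the paper asserts.
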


\bigskip
{\noindent\bf Recursive Call.} 
Since $H'$ is a distance hereditary graph, it has a rank-width-one decomposition $(\cal T, \phi)$, where $\cal T$ is a binary tree and $\phi$ is a bijection from $V(G')$ to the leaves of $\cal T$.
Furthermore, rank-width of $\cal T$ is $1$, which means that for any edge of the tree, by deleting it, we obtain a partition of the leaves in $\cal T$. This partition induces a cut of the graph, where the set of edges crossing this cut forms a biclique $M$, with vertex partition as $V(M)=M_1 \uplus M_2$ in the graph.
By standard arguments on trees, we deduce that $\cal T$ has an edge that defines a partition such that after we remove the biclique edges between $M_1$ and $M_2$ from $G'$ we obtain two (not necessarily connected) graphs, $H_1$ and $H_2$, such that $|V(H_1)|,|V(H_2)|\leq\frac{3}{4}|V(H')|$
and $M_1 \subseteq H_1$, $M_2 \subseteq H_2$.
Note that the bicliques $M$ and $C$ are vertex disjoint.
We proceed by replacing the fractional solution $\bf x$ by $({\bf x}\setminus M)$.
For the sake of clarity, we denote ${\bf x}^*=({\bf x}\setminus M)$.
Let $G_1=G'[V(H_1) \cup V(C) \cup V(M)]$, $G_2=G'[V(H_2) \cup V(C) \cup V(M)]$.


We adjust the current instance by relying on Lemma \ref{lem:DHadjustLowVal} so that $\bf x^*$ satisfies the low-value invariant (in the same manner as it is adjusted in the initialization phase). In particular, we remove $h({\bf x^*})$ from $G'$,$H'$, $G_1$, $H_1$, $G_2$ and $H_2$, and we let $(G^*,w^*,C,H^*,{\bf x}^*)$, $G_1^*$, $H_1^*$ $G_2^*$ and $H_2^*$ denote the resulting instance and graphs. Observe that, now we have $\alpha(G^*_1),\alpha(G^*_2)\leq\frac{3}{4}\alpha(G')$. We will return a solution that is at least $\opt$ and at most $(\frac{\log n}{\log n + 4})^{\delta+1}\cdot c\cdot\log n\cdot\log(\alpha(G'))\cdot w^*({\bf x}^*)$, along with a set that realizes it.\footnote{Here, the coefficient $(\frac{\log n}{\log n + 4})^\delta$ has been replaced by the smaller coefficient $(\frac{\log n}{\log n + 4})^{\delta+1}$.} In the analysis we will argue this it is enough for our purposes. 


Next, we define two subinstances, $I^*_1=(G^*_1,w^*|_{V(G^*_1)},C,H^*_1,{\bf x}^*|_{V(G^*_1)})$ and $I^*_2=(G^*_2,w^*|_{V(G^*_2)},$ $C,H^*_2,{\bf x}^*|_{V(G^*_2)})$. We solve each of these subinstances by a recursive call to \alg{DHD-APPROX}, and thus we obtain two solutions of sizes, $s^*_1$ to $I^*_1$ and $s^*_2$ to $I^*_2$, and two sets that realize these solutions, $S_1^*$ and $S_2^*$. By the inductive hypothesis, we have the following observations.

\begin{observation}\label{obs:inductionHit}
$S^*_1\cup S^*_2$ intersects any chordless cycle on at least $6$ vertices in $G^*$ that lies entirely in either $G^*_1$ or $G^*_2$.
\end{observation}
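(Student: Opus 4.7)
The plan is to derive this observation directly from the feasibility guarantees of the recursive calls, so the proof is essentially a bookkeeping exercise rather than a combinatorial argument. I would take an arbitrary chordless cycle $Q$ on at least $6$ vertices in $G^*$ with $V(Q) \subseteq V(G^*_1)$, and argue that $S^*_1$ alone must already hit $V(Q)$; the case $V(Q) \subseteq V(G^*_2)$ is handled by the symmetric argument applied to $S^*_2$, and combining both gives the desired claim about $S^*_1 \cup S^*_2$.

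The first step is to observe that $G^*_1$ is an induced subgraph of $G^*$. Indeed, by construction $G_1 = G'[V(H_1) \cup V(C) \cup V(M)]$, and both $G^*$ and $G^*_1$ are obtained from $G'$ and $G_1$ respectively by deleting the same vertex set $h({\bf x}^*)$. Consequently, for any $W \subseteq V(G^*_1)$, the induced subgraphs $G^*[W]$ and $G^*_1[W]$ coincide. In particular, since $V(Q) \subseteq V(G^*_1)$ and $Q$ is chordless in $G^*$, the very same $Q$ is a chordless cycle of the same length in $G^*_1$.

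The second step is to invoke the inductive hypothesis on the recursive call on the subinstance $I^*_1$. Lemma~\ref{lemma:DHapproxRecursiveCall} guarantees that this call returns a \emph{feasible} solution for \wDH\ on $(G^*_1, w^*|_{V(G^*_1)})$, meaning that the output set $S^*_1$ satisfies the property that $G^*_1 - S^*_1$ is distance hereditary. Since any chordless cycle on five or more vertices is a \dho, and in particular $Q$ (on at least six vertices) is such an obstruction, no copy of $Q$ can survive in $G^*_1 - S^*_1$. Hence $V(Q) \cap S^*_1 \neq \emptyset$, and thus $V(Q) \cap (S^*_1 \cup S^*_2) \neq \emptyset$ as required.

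There is really no hard step here: the only thing one has to be careful about is that the graph manipulations (the deletion of $h({\bf x}^*)$, the inclusion of $V(C) \cup V(M)$ on both sides, the formal replacement of $\bf x$ by ${\bf x}^*$) preserve the induced-subgraph relationship between $G^*_1$ and $G^*$, so that a chordless cycle in one is literally the same chordless cycle in the other. Once this is noted, the statement reduces to the defining property of a feasible solution to \wDH, applied to the subinstance $I^*_1$ (or $I^*_2$) produced by the recursion.
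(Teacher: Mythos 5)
Your argument is correct and is essentially the reasoning the paper intends but does not spell out: the paper simply states this observation as following ``by the inductive hypothesis.'' You correctly identify the two things that make the deduction valid --- that $G^*_1$ (resp.\ $G^*_2$) is an induced subgraph of $G^*$, so chordlessness of a cycle with vertex set inside $V(G^*_i)$ is the same notion in both graphs, and that the recursive call returns a set $S^*_i$ realizing a solution to \wDH\ on $I^*_i$, so $G^*_i - S^*_i$ is distance hereditary and therefore contains no chordless cycle on five or more vertices.
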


\begin{observation}\label{obs:inductionWeight}
Given $i\in\{1,2\}$, $s^*_i\leq (\frac{\log n}{\log n + 4})^{\delta+1}\cdot c\cdot\log n\cdot\log(\alpha(G^*_i))\cdot w({\bf x}^*_i)$.
\end{observation}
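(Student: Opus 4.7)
The plan is to derive Observation~\ref{obs:inductionWeight} as an immediate consequence of the inductive hypothesis provided by Lemma~\ref{lemma:DHapproxRecursiveCall}, applied to the recursive invocation of \alg{DHD-APPROX} on the subinstance $I^*_i$. The work reduces to verifying that the preconditions of that lemma are met by $I^*_i$ at the moment the recursive call is issued, and then reading off the depth of the call.

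First, I will confirm that $I^*_i = (G^*_i, w^*|_{V(G^*_i)}, C, H^*_i, {\bf x}^*|_{V(G^*_i)})$ satisfies both structural invariants preceding the call. The Low-Value Invariant follows from the definition ${\bf x}^* = ({\bf x}\setminus M)$ combined with Lemma~\ref{lem:DHadjustLowVal}, since we explicitly excise the high-value vertex set $h({\bf x}^*)$ from $G^*_i$ beforehand, leaving every surviving vertex $v$ with ${\bf x}^*(v) < 1/\log n$. For the Zero-Biclique Invariant, I would argue that $V(C) \cap V(M) = \emptyset$: $C$ is the biclique tracked throughout the recursion while $M$ arises from the rank-width-one decomposition of $H'$, and by hypothesis $V(C)$ is disjoint from $V(H') \supseteq V(M)$. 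Consequently, for every $v \in V(C)$, ${\bf x}^*(v) = (1 + 4/\log n)\,{\bf x}(v) = 0$, inheriting the property from $\bf x$.

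Second, because the recursive call on $I^*_i$ takes place at depth $\delta + 1$ in the recursion tree, applying Lemma~\ref{lemma:DHapproxRecursiveCall} directly yields
\[
s^*_i \;\leq\; \left(\tfrac{\log n}{\log n + 4}\right)^{\delta + 1}\cdot c\cdot\log n \cdot \log(\alpha(G^*_i))\cdot w^*({\bf x}^*_i),
\]
which is exactly the claimed bound. Invocation of the inductive hypothesis at this depth is legitimate because the earlier rank-width-one argument produces subgraphs with $\alpha(G^*_i) \leq \tfrac{3}{4}\alpha(G')$, so the recursion indeed progresses to strictly simpler instances as measured by $\alpha$, and Observation~\ref{obs:depth} bounds the overall depth by $q \cdot \log n$.

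The main subtlety is bookkeeping: I must check that neither the transformation ${\bf x} \mapsto {\bf x}^*$ (justified by Lemma~\ref{lem:DHadjustZeroClique}) nor the subsequent removal of $h({\bf x}^*)$ (justified by Lemma~\ref{lem:DHadjustLowVal}) compromises the validity of the fractional solution on the induced subgraph $G^*_i$. Both lemmas furnish exactly these guarantees; once verified, no further argument is required and the observation is immediate.
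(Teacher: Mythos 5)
Your proof is correct and takes essentially the same approach as the paper: the observation is stated there as an immediate consequence of the inductive hypothesis (Lemma~\ref{lemma:DHapproxRecursiveCall}) applied to the recursive call on $I^*_i$ at depth $\delta+1$, and you correctly supply the routine bookkeeping---verification of the low-value and zero-biclique invariants via Lemmata~\ref{lem:DHadjustLowVal}, \ref{lem:DHadjustZeroClique} and disjointness of $C$ and $M$, plus the progress measure $\alpha(G^*_i)\leq\frac{3}{4}\alpha(G')$---that the paper leaves implicit.
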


Moreover, since ${\bf x}^*(V(C)\cup V(M))=0$, we also have the following observation.

\begin{observation}\label{obs:inductionSum}
$w^*({\bf x}^*_1) + w^*({\bf x}^*_2) = w^*({\bf x}^*)$.
\end{observation}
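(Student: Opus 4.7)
The plan is a direct unpacking of definitions combined with an inclusion--exclusion argument, so the proof will be short. First I would observe that, by the definitions of the two subinstances $I_1^*$ and $I_2^*$, we have ${\bf x}^*_i = {\bf x}^*|_{V(G^*_i)}$ for each $i \in \{1,2\}$, and hence $w^*({\bf x}^*_i) = \sum_{v \in V(G^*_i)} w^*(v){\bf x}^*(v)$. Summing the two expressions and applying inclusion--exclusion, the claim reduces to verifying that (a) $V(G^*_1) \cup V(G^*_2) = V(G^*)$, and (b) $\sum_{v \in V(G^*_1) \cap V(G^*_2)} w^*(v){\bf x}^*(v) = 0$.

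For (a), the rank-width-one decomposition of $H'$ partitions its leaves, so $V(H_1) \uplus V(H_2) = V(H')$. Combined with $V(G') = V(H') \cup V(C)$ and $V(G_i) = V(H_i) \cup V(C) \cup V(M)$ (noting $V(M) \subseteq V(H')$), this gives $V(G_1) \cup V(G_2) = V(G')$; since the starred versions arise by removing the same set $h({\bf x}^*)$ from each of $G'$, $G_1$, $G_2$, (a) follows. For (b), the disjointness of $V(H_1)$ and $V(H_2)$ forces $V(G_1) \cap V(G_2) = V(C) \cup V(M)$, and hence $V(G^*_1) \cap V(G^*_2) \subseteq V(C) \cup V(M)$; but ${\bf x}^*$ is identically zero on $V(C)$ by the zero-biclique invariant inherited from the previous recursive call, and zero on $V(M)$ by the very definition of $({\bf x}\setminus M)$, so the intersection contributes nothing to the sum.

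The only minor bookkeeping item is to confirm that removing $h({\bf x}^*)$ does not spuriously delete vertices of $V(C) \cup V(M)$; this is automatic, since those vertices carry value $0 < 1/\log n$ under ${\bf x}^*$ and therefore never lie in $h({\bf x}^*)$. There is no substantive obstacle here: the observation is really just the statement that a sum of a nonnegative function over two overlapping sets is additive whenever the function vanishes on the overlap.
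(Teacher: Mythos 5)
Your proposal is correct and matches the paper's reasoning: the paper's entire justification is the single sentence ``since ${\bf x}^*(V(C)\cup V(M))=0$,'' and your argument is exactly the inclusion--exclusion unpacking of that fact, with the additional (correct) checks that $V(G^*_1)\cup V(G^*_2)=V(G^*)$, that the overlap is contained in $V(C)\cup V(M)$, and that removing $h({\bf x}^*)$ cannot touch $V(C)\cup V(M)$.
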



We say that a cycle in $G^*$ is {\em bad} if it is a chordless cycle not on four vertices that belongs entirely to neither $G^*_1$ nor $G^*_2$. 
Next, we show how to intersect bad cycles.


\bigskip
{\noindent\bf Bad Cycles.} 
Let us recall the current state of the graph $G'$.
$G'$ is partitioned into a biclique $C$ and a distance hereditary graph $H'$. Furthermore, there is a biclique $M$ with vertex bipartition as $M_1$ and $M_2$ so that deleting the edges between $M_1$ and $M_2$, gives a balanced partition of $H'$ into $H_1$ and $H_2$. 
Now, by Lemma~\ref{lemma:DHhole biclique}, we may ignore any chordless cycle that intersects either of the two bicliques $C$ and $M$ in more than three vertices each, and this allows us to update our fractional feasible solution to $\x^* = (\x / M)$.
Then we recursively solve the instances $G_1^*$ and $G_2^*$ and remove the returned solution. Now consider the remaining graph, and any obstructions that are left. As the graph no longer contains small obstructions, it is clear that any remaining obstruction is a chordless cycle on at least $6$ vertices and is a {\em bad cycle}. We first examine the relation between bad cycles and pairs $(v,u)$ of vertices $v\in V(C)$ and $u\in V(M)$. 

\begin{lemma}\label{lem:DHrelationBadCyc}
    If a bad cycle exists, then there must also be a bad cycle $Q$ such that $Q \setminus (M \cup C)$ is a union of two internally vertex disjoint and non-adjacent path-segments, $P_1$ and $P_2$ such that, $P_1 \subseteq G_1$ and $P_2 \subseteq G_2$, and each of them connect a pair of vertices in $M \times C$.
\end{lemma}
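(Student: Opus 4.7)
The plan is to begin with an arbitrary bad cycle and repeatedly simplify it, using the biclique structure of $M$ and $C$, until we obtain a bad cycle whose non-$(M\cup C)$ portion consists of exactly two path-segments with the desired endpoints-in-$M\times C$ property.

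The first step is a structural observation: in $G^*$, a vertex of $V(H_1)\setminus M$ has its neighborhood contained in $V(H_1)\cup V(C)$ and in particular has no edge to any vertex of $M_2$ or of $V(H_2)\setminus M$; the symmetric statement holds for $V(H_2)\setminus M$. Consequently, any maximal subpath of a cycle that avoids $M\cup C$ lies entirely in $V(H_1)\setminus M$ (and hence in $G_1$) or entirely in $V(H_2)\setminus M$ (and hence in $G_2$). A bad cycle must therefore contain at least one such subpath on each side, and hence must meet $M\cup C$ in at least two vertices, giving at least two non-empty arcs outside $M\cup C$.

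Next I would invoke Lemma~\ref{lemma:DHhole biclique} twice, once with respect to $M$ and once with respect to $C$, to reduce $|Q\cap M|$ and $|Q\cap C|$ to at most three, each with one of the three structured shapes allowed by that lemma. A delicate point is that this lemma does not automatically preserve the bad property: we must choose the auxiliary vertex it introduces so that vertices from both $V(H_1)\setminus M$ and $V(H_2)\setminus M$ survive in the resulting cycle. Because a bad cycle always contains witnesses from both sides, there is enough flexibility to select the auxiliary vertex from the appropriate partition of $M$, or from the appropriate side of $C$, to keep the cycle bad after each reduction.

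Finally, with a bad cycle $Q$ satisfying $|Q\cap M|\le 3$ and $|Q\cap C|\le 3$ and obeying the structured intersections of Lemma~\ref{lemma:DHhole biclique}, I would carry out a case analysis on the shapes of $Q\cap M$ and $Q\cap C$. The neighborhood restrictions above rule out many configurations; for instance, a $G_2$-segment cannot be bordered in $Q$ by an $M_1$-vertex, and a single $M$-vertex on its own cannot serve as the only bridge between the two sides. The surviving configurations always give rise to exactly two maximal $(M\cup C)$-avoiding arcs $P_1\subseteq G_1$ and $P_2\subseteq G_2$, each flanked in $Q$ by one vertex of $M$ and one vertex of $C$, which is precisely the structure claimed. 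The main obstacle will be this case analysis combined with guaranteeing that the applications of Lemma~\ref{lemma:DHhole biclique} can always be made while keeping the cycle bad; the sub-cases in which both $Q\cap M$ and $Q\cap C$ are induced $P_3$'s are where the argument will be most involved, because the extra internal $M$- or $C$-vertex complicates the bookkeeping of which partition each flanking vertex belongs to.
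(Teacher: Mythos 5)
Your plan follows essentially the same route as the paper: observe that a bad cycle must meet both $C$ and $M$, apply Lemma~\ref{lemma:DHhole biclique} once to each of the two bicliques to normalize $Q\cap C$ and $Q\cap M$ to a vertex/edge/$P_3$, and then read off the two path-segments from the resulting structure. (The paper normalizes with respect to $C$ first and then $M$; the order is immaterial.) Once both intersections are consecutive arcs of the cycle, the ``exactly two arcs'' conclusion is immediate rather than needing a long case analysis: $Q-(C\cup M)$ is a union of at most two arcs, and since $V(H_1)\setminus M$ and $V(H_2)\setminus M$ are separated in $G'-(C\cup M)$ while a bad cycle has a vertex in each, there must be exactly two, one on each side.

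You are right to flag that Lemma~\ref{lemma:DHhole biclique} does not by itself guarantee the output cycle is still bad, and the paper indeed asserts this without justification. However, your proposed remedy --- ``select the auxiliary vertex from the appropriate partition of $M$, or from the appropriate side of $C$'' --- does not address the issue: the auxiliary vertex introduced by that lemma always lies \emph{inside} the biclique being shrunk (so in $C$, or in $M_1\cup M_2$), never in $V(H_1)\setminus M$ or $V(H_2)\setminus M$. What can be lost in the reduction is an entire arc of the original cycle on one side of the cut (the lemma keeps only one of the two arcs between the two chosen biclique vertices, plus the auxiliary vertex), so preserving witnesses from both $H_1\setminus M$ and $H_2\setminus M$ requires an argument about which arc is retained, not about the choice of auxiliary vertex. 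So the delicate point you correctly identify is not resolved by the mechanism you propose; you would need to argue, e.g., via a minimality/exchange argument on the original bad cycle or by examining the arcs retained in each case of Lemma~\ref{lemma:DHhole biclique}, that the normalized cycle still straddles the cut.
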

\begin{proof} Let $Q'$ be a bad cycle.
    Let us recall that the input graph $G'$ can be partitioned into the biclique $C$ and a distance hereditary graph $H'$. Hence $Q' \cap C \neq \emptyset$. Furthermore, if $Q' \cap M = \emptyset$, then $Q'$ is preserved in $G' - M$. This means that $Q'$ is either present in $G_1^*$, or in $G_2^*$, and hence it cannot be a bad cycle, which is a contradiction. Hence $Q' \cap M \neq \emptyset$ as well. 
    Finally, $Q'$ contains vertices from both $H_1$ and $H_2$, which implies
    $Q' \cap G_1$ and $Q' \cap G_2$ are both non-empty as well.
    
    Now, by applying Lemma~\ref{lemma:DHhole biclique} to $Q'$ and $C$, we obtain a bad cycle $\hat Q$ such that $\hat Q \cap C$ is either a single vertex, or an edge or an induced path of length three. Since, $M \cap C = \emptyset$, we can again apply Lemma~\ref{lemma:DHhole biclique} to $\hat Q$ and $M$, and obtain a bad cycle $Q$ such that each of $Q \cap C$ and $Q \cap M$ is either a single vertex, or, an edge or an induced path of length three.
    Hence, $Q - (V(M) \cup V(C))$ is a pair of internally disjoint paths, whose endpoints are in $M \times C$.
    Furthermore, one of these paths, denoted $P_1$, is contained in $G_1$, and the other, denoted $P_2$, is contained in $G_2$.
\end{proof}

\noindent
The above lemma (Lemma~\ref{lem:DHrelationBadCyc}) implies that it is safe to ignore all the bad cycles that don't satisfy the conclusion of this lemma. 
We proceed to enumerate some helpful properties of those bad cycles that satisfy the above lemma.
We call $P_1,P_2$ the \emph{path segments} of the bad cycle $Q$. 
%

\begin{lemma} 
Suppose $P_1,P_2$ are path segments of a bad cycle $Q$ where $P_1\subseteq G_1 - S^*_1$ and $P_2\subseteq G_2 - S^*_2$, where $S^*_1$ and $S^*_2$ are a solution to $G^*_1$ and $G^*_2$, respectively. Then for any $P_1'$ which is an induced path in $G_1 - S^*_1$ with the same endpoints as $P_1$ we have that $Q' = G'[(Q \cap (M \cup C)) \cup V(P_1') \cup V(P_2)]$ is also a bad cycle.
\end{lemma}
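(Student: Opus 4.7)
The plan is to verify the three conditions for $Q'$ to be a bad cycle: (i) $Q'$ is a simple cycle, (ii) $Q'$ is chordless, and (iii) $Q'$ is not entirely contained in $G_1^*$ nor in $G_2^*$.

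By Lemma~\ref{lem:DHrelationBadCyc}, $Q$ decomposes as $(Q\cap M) \cup (Q\cap C) \cup P_1 \cup P_2$, where $P_1, P_2$ are internally vertex-disjoint and non-adjacent paths of $Q - (V(M)\cup V(C))$, with $P_1\subseteq G_1$ and $P_2\subseteq G_2$. In particular, every vertex of $P_1$ lies in $V(H_1)\setminus V(M_1)$ and every vertex of $P_2$ in $V(H_2)\setminus V(M_2)$. Since $P_1'$ shares its endpoints with $P_1$ and is an induced path in $G_1 - S_1^*$, constructing $Q'$ amounts to splicing $P_1'$ in place of $P_1$ while leaving the ``skeleton'' $P_2 \cup (Q\cap (M\cup C))$ intact.

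Property (iii) follows directly: the endpoints of $P_1'$ lie in $V(H_1)\setminus V(M_1)$ and thus outside $V(G_2)$, while the internal vertices of $P_2$ lie in $V(H_2)\setminus V(M_2)$ and thus outside $V(G_1)$; hence $Q'$ is contained in neither side. For property (i), the internal vertices of $P_1'$ lie in $V(G_1)$, which is disjoint from the internal vertices of $P_2$ (which sit in $V(G')\setminus V(G_1)$), so the cyclic sequence of $Q'$ is well defined.

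The main challenge is property (ii), chordlessness. Chords within $P_1'$ are ruled out because $P_1'$ is induced; chords within $P_2$ and within $Q\cap(M\cup C)$ carry over from the chordlessness of $Q$. Chords between internal vertices of $P_1'$ (in $V(H_1)\setminus V(M_1)$) and internal vertices of $P_2$ (in $V(H_2)\setminus V(M_2)$) cannot exist, since the biclique $M$ is the only bridge between $H_1$ and $H_2$ inside $H'$, so any edge across the $H_1$--$H_2$ partition must touch $V(M)$. The remaining case is chords between $P_1'$ and $Q\cap(M\cup C)$; here I would invoke Lemma~\ref{lemma:DHhole biclique} together with the standing assumption that $G'$ contains no \dho\ on at most $\numdhd$ vertices: any such chord, combined with a suitable subarc of $Q'$, would yield a \dho\ that is either small (contradicting the assumption) or is reducible via Lemma~\ref{lemma:DHhole biclique} to a chordless cycle forbidden by the chordlessness of $Q$ itself. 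This case analysis, leveraging the biclique structures of $M$ and $C$ and the absence of small obstructions, is where the real work of the proof lies.
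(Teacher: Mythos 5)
Your verification of properties (i) and (iii) is sound, and you correctly dispose of all but one sub-case of chordlessness in (ii): chords within $P_1'$ (since it is induced), within $P_2$ or within $Q\cap(M\cup C)$ (since $Q$ is chordless), and between $P_1'$ and $P_2$ (by the vertex-disjointness of $V(H_1)\setminus V(M)$ and $V(H_2)\setminus V(M)$). But you explicitly defer the decisive sub-case --- chords from internal vertices of $P_1'$ into $Q\cap(M\cup C)$ --- as ``where the real work of the proof lies,'' so your argument is incomplete precisely at the point that matters. Moreover, the route you sketch for that case is not clearly sound: a chord in $Q'$ does not by itself yield a \dho\ (it could simply cut $Q'$ into a triangle and a shorter cycle, neither of which need be an obstruction), so neither Lemma~\ref{lemma:DHhole biclique} nor the absence of small obstructions obviously produces the contradiction you want.

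The paper's proof takes a different and much shorter route, and the ingredient it uses is one you never invoke: the graph $G_1 - S^*_1$ is \emph{distance hereditary}. In a distance-hereditary graph every induced path is a shortest path, so $P_1$ and $P_1'$, being two induced paths between the same pair of endpoints in $G_1 - S^*_1$, have the same length; in particular $|V(Q')| = |V(Q)|$, so $Q'$ cannot collapse to a short cycle when you splice $P_1'$ in. The paper then asserts that no vertex of $P_1'$ is adjacent to a vertex of $Q - P_1$, again resting on the structure of distance-hereditary graphs rather than on the obstruction/biclique machinery you propose. You should also be aware that you silently rely on $|V(P_1')| = |V(P_1)|$ (or at least on $Q'$ being long enough to be a \dho) without establishing it --- this is exactly what the distance-hereditary property supplies. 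So the gap is twofold: the crucial chordlessness sub-case is not actually proved, and the tool the paper uses to prove it (distance hereditariness of $G_1 - S^*_1$, which also pins down the length of $P_1'$) is absent from your argument.
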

\begin{proof}
    Observe that, $P_1$ and $P_1'$ are paths between the same endpoints in $G_1 - S^*_1$, which is a distance hereditary graph. Therefore, $P_1'$ is an induced path of the same length as $P_1$. Furthermore, no vertex in $P_1'$ is adjacent to a vertex in $Q - P_1$. Hence $Q'$ is also a bad cycle.
\end{proof}
The above lemma allows us to reduce the problem of computing a solution that intersects all bad-cycles, to computing a solution for an instance of {\sc Weighted Multicut}. 
More formally, let $Q$ be a bad cycle with path segments $P_1$ and $P_2$ , the feasible fractional solution $\x^*$ assigns a total value of at least $1$ to the vertices in $Q$. As $\x^*$ assigns $0$ to every vertex in $M \cup C$, we have that at least one of $P_1$ or $P_2$ is assigned a total value of at least $1/2$. Suppose that it were $P_1$ then $2 \x^*$ assigns a total value $1$ to $P_1$ in $G_1$. This fractional solution is a solution to the {\sc Weighted Multicut} problem defined on the pairs of vertices in $C \times M$, which are separated by $2\x^*$ in $G'$ (whose description is given below).

%
%
%

Given $i\in\{1,2\}$, let $2{\bf x}^*_i$ denote the fractional solution that assigns to each vertex the value assigned by ${\bf x}^*_i$ times 2. 
For a pair $(v,u)$ of vertices such that $v\in V(C)$ and $u\in V(M)$ we call $(v,u)$ an \emph{important pair} if there is a bad cycle $Q$ with path segments $P_1$ and $P_2$ that connects $v$ and $u$. Let $S^*_1$ and $S^*_2$ be a solution to $G^*_1$ and $G^*_2$, respectively (obtained recursively). For an important pair $(v,u)$ we let ${\cal P}_1(v,u)$ denote the set of any (simple) path $P_1$ between $v$ and $u$ whose internal vertices belong only to $G_1 - S^*_1$ and which does not contain any edge such that one of its endpoints belongs to $V(C)$ while the other endpoint belongs to $V(M)$. Symmetrically, we let ${\cal P}_2(v,u)$ denote the set of any path $P_2$ between $v$ and $u$ whose internal vertices belong only to $G_2 - S^*_2$ and which does not contain any edge such that one of its endpoints belongs to $V(C)$ while the other endpoint belongs to $V(M)$.

\begin{lemma}\label{lem:DHfracCutPairAlg}
For an important pair $(v,u)$ of vertices where $v\in V(C)$ and $u\in V(M)$, in polynomial time we can compute an index $i(v,u)\in\{1,2\}$ such that for any path $P\in{\cal P}_i(v,u)$, $2{\bf x}^*_i(V(P))\geq 1$.
\end{lemma}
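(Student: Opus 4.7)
The algorithm follows the same template as Lemma~\ref{lem:fracCutPairAlg}. For each $i \in \{1,2\}$, I construct the auxiliary graph $\widehat G_i$ from $G_i - S^*_i$ by deleting every edge whose endpoints lie one in $V(C)$ and one in $V(M)$, assign vertex weights $2{\bf x}^*_i$, and run Dijkstra's algorithm to compute the minimum weight $\mu_i$ of a $v$-to-$u$ path in $\widehat G_i$; this quantity equals the minimum weight over $\mathcal{P}_i(v,u)$. If $\mu_1 \geq 1$ we output $i(v,u)=1$, otherwise $i(v,u)=2$. All steps are clearly polynomial, so the only substantive content is proving that at least one of $\mu_1, \mu_2$ is at least $1$.

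To establish this, I proceed by contradiction. Suppose minimum-weight paths $P_1 \in \mathcal{P}_1(v,u)$ and $P_2 \in \mathcal{P}_2(v,u)$ both satisfy ${\bf x}^*(V(P_j)) < 1/2$. By the zero-biclique invariant, ${\bf x}^*$ vanishes on $V(C) \cup V(M)$, so ${\bf x}^*(V(P_1) \cup V(P_2)) < 1$. The plan is to extract from $G'[V(P_1) \cup V(P_2)]$ a chordless cycle on at least $5$ vertices; since every \dho\ of size at most $\numdhd$ has already been removed, such a cycle is a \dho\ of $G'$, contradicting feasibility of ${\bf x}^*$.

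The extraction itself mirrors the case analysis in the proof of Lemma~\ref{lem:fracCutPair}. First I shorten each $P_j$ to a sub-segment $\widehat P_j$ whose endpoints lie one in $V(C)$ and the other in $V(M)$; these have length at least $2$ because admissible paths cannot use a direct $V(C)$--$V(M)$ edge. The minimality of $P_j$ together with ${\bf x}^* \equiv 0$ on $V(C) \cup V(M)$ lets me assume $\widehat P_1$ and $\widehat P_2$ share no internal vertex, since otherwise one could splice in a strictly lighter admissible path. Using the biclique structure of $V(C), V(M)$ together with Claim~\ref{claim:biclique-property} (no induced $P_4$ inside a biclique), the sub-segments can be closed into an induced cycle in the combined subgraph, possibly augmented with a biclique connector.

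The primary obstacle, absent in the chordal setting, is that the target structure must be a chordless cycle of length at least $5$ rather than $4$, since a $C_4$ is not a \dho. I plan to resolve this by appealing to Lemma~\ref{lemma:DHhole biclique}: whenever the natural closure at the interface with $V(C)$ or $V(M)$ produces an induced $4$-cycle, the biclique supplies a detour vertex that enlarges it, and an iterative application of Lemma~\ref{lemma:DHhole biclique} pushes the resulting structure into a chordless cycle on at least $5$ vertices whose ${\bf x}^*$-weight is still strictly less than $1$, since any adjoined biclique vertices contribute $0$. The important-pair hypothesis ensures a bad cycle through $v$ and $u$ exists, which in turn guarantees that the biclique neighbors required for the detour are available; this is the crucial place where importance enters. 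The resulting \dho\ yields the desired contradiction and proves the existence of the index $i(v,u)$.
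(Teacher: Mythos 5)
Your proposal diverges from the paper's proof in a substantive way, and the divergence exposes a genuine gap.

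The paper's proof is short and does not perform a chordal-style extraction at all. It uses the important-pair hypothesis directly: by definition there exists a bad cycle $Q$ through $v$ and $u$ with path segments in $G_1$ and $G_2$, and the unnumbered lemma stated just before Lemma~\ref{lem:DHfracCutPairAlg} shows that these path segments can be swapped out for \emph{any} induced paths with the same endpoints while still yielding a bad cycle. That replacement lemma relies crucially on the distance-hereditary property that all induced paths between a fixed pair of vertices have equal length, which guarantees the replacement cycle is still long (in particular, still $\geq \numdhd+1$ vertices). Replacing both segments with the low-weight paths $P_1, P_2$ then gives a bad cycle whose ${\bf x}^*$-weight is strictly below $1$, an immediate contradiction.

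Your extraction route cannot be repaired without effectively rediscovering this mechanism. The concrete failure mode: nothing in your setup forbids $P_1 = (v,a,u)$ and $P_2 = (v,b,u)$, in which case $G'[V(P_1)\cup V(P_2)]$ is an induced $C_4$. A $C_4$ is not a {\dho}, so feasibility of ${\bf x}^*$ imposes no constraint on it, and no contradiction arises. You flag exactly this obstacle but propose to resolve it with Lemma~\ref{lemma:DHhole biclique}; that lemma, however, takes a chordless cycle on \emph{at least $5$} vertices as input and produces a smaller-overlap one of the same length class --- it cannot inflate a $C_4$ into a {\dho}. The sentence ``the important-pair hypothesis \ldots guarantees that the biclique neighbors required for the detour are available'' gestures at the right hypothesis but deploys it incorrectly: the way importance actually enters is not by supplying detour vertices, but by supplying a long bad cycle whose path-segment \emph{lengths} transfer to $P_1$ and $P_2$ via distance-heredity, which rules out the short-path scenario entirely. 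Absent that length-transfer step, your case analysis (which you also leave largely to analogy with Lemma~\ref{lem:fracCutPair}) is not sound.

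Separately, note that the chordal-case analysis you are mirroring makes essential use of $C$ and $M$ being cliques --- every two vertices in $C$ are adjacent, every two vertices in $M$ are adjacent --- when constructing the chordless cycle. In the biclique setting one side of $M$ (or $C$) can be an independent set, so the ``close up the two sub-segments into an induced cycle'' step has different geometry; Claim~\ref{claim:biclique-property} (no induced $P_4$ inside a biclique) controls one failure mode but does not by itself hand you the closure. This is another place where the paper's replacement-lemma approach sidesteps work that your proposal would have to carry out explicitly.
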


\begin{proof}
Let $(v,u)$ be an important pair of vertices with $v\in V(C)$ and $u\in V(M)$. We start by arguing that such an index exists. Assuming a contradiction, suppose there exists $P_1 \in {\cal P}_1(v,u)$ and $P_2 \in {\cal P}_2(v,u)$ such that $2{\bf x}^*_1(V(P_1))< 1$ and $2{\bf x}^*_2(V(P_2))< 1$. Recall that we have a bad cycle bad cycle $Q$ in $G' - (S^*_1 \cup S^*_2)$ with paths segments as $P_1$ and $P_2$ which connects $v$ and $u$. But this implies that $2x^*(Q)<1$, contradicting that $x^*$ was a feasible solution to $G' - (S^*_1 \cup S^*_2)$. Therefore, such an index always exists.

For any index $j\in\{1,2\}$, we use Dijkstra's algorithm to compute the minimum weight of a path between $v$ and $u$ in the graph $\widehat{G}^*_i$ where the weights are given by $2{\bf x}^*_i$. In case the minimum weight is at least $1$, we have found the desired index $i(v,u)$. Moreover, we know that for at least one index $j\in\{1,2\}$, the minimum weight should be at least $1$ (if the minimum weight is at least 1 for both induces, we arbitrarily decide to fix $i(v,u)=1$).
\end{proof}

We say that an important pair $(u,v)$ is \emph{separated} in $G_i$, if the index assigned by Lemma~\ref{lem:DHfracCutPairAlg} assigns $i$ to ${\cal P}_i(u,v)$. Now, for every important pair $(v,u)$ such that $v\in V(C), u\in V(M)$ and $\{v,u\}\notin E(G')$, we perform the following operation. We check if this pair is separated in $G_1$, and if so, then we initialize ${\cal T}_1(v,u)=\emptyset$. Then for each pair of neighbors of $x$ of $v$ and $y$ of $u$, we add the pair $(x,y)$ to ${\cal T}_1(u,v)$. The set ${\cal T}_2(u,v)$ is similarly defined. 
At this point, we need to rely on approximate solutions to the {\sc Weighted Multicut} problem which is given by theorem below (Theorem~\ref{thm:multicut}).

\begin{theorem}[\cite{GVY96}]\label{thm:multicut}
Given an instance of {\sc Weighted Multicut}, one can find (in polynomial time) a solution that is at least $\opt$ and at most $d\cdot\log n\cdot \fopt$ for some fixed constant $d>0$, along with a set that realizes it.
\end{theorem}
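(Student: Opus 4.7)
The plan is to follow the classical region-growing approach of Garg, Vazirani, and Yannakakis, adapted to the vertex-weighted setting. First I would write down the natural LP relaxation: a variable $x_v \in [0,\infty)$ for each $v \in V(G)$, minimize $\sum_{v} w(v) x_v$ subject to $\sum_{v \in V(P)} x_v \geq 1$ for every path $P$ between any pair $(s_i,t_i) \in \mathcal{T}$. Although the number of path constraints is exponential, a separation oracle is obtained by viewing $x$ as a vertex length and running a shortest-path computation between each $(s_i,t_i)$: a violated constraint exists precisely when some pair has distance strictly less than $1$. Hence the ellipsoid method yields an optimal fractional solution $\mathbf{x}$ of weight $\fopt$ in polynomial time, and this $\fopt$ lower-bounds $\opt$.

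Next I would interpret $\mathbf{x}$ as inducing a semi-metric $d_{\mathbf{x}}(u,v) = \min_{P} \sum_{w \in V(P)} x_w$ over $u$-$v$ paths $P$, with the convention that endpoints are counted so that $d_{\mathbf{x}}(s_i,t_i) \geq 1$ for every terminal pair. To every vertex $v$ I assign a charge $\mathsf{vol}(v) = w(v) x_v$, and define the volume of a ball $B(s,r) = \{v : d_{\mathbf{x}}(s,v) \leq r\}$ as the total charge of its vertices plus a small additive offset $\fopt/n$ (the offset ensures the initial volume is nonzero and controls the base of the logarithm). The algorithm then processes the pairs one by one: for each still-connected pair $(s_i,t_i)$, it grows a ball around $s_i$ until it finds a radius $r_i < 1/2$ whose boundary $\partial B(s_i,r_i)$ has weight at most $c \cdot \ln(n+1) \cdot \mathsf{vol}(B(s_i,r_i))$ for a suitable constant $c$; these boundary vertices are added to the output set $S$ and removed from $G$ together with the interior of the ball, and the algorithm recurses on the remaining pairs.

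The key analytic step is showing that a good radius $r_i < 1/2$ always exists. This is the standard ``region-growing lemma'': if every radius in $[0,1/2)$ produced a boundary heavier than the claimed threshold times the volume, then integrating the differential inequality $\frac{d}{dr} \mathsf{vol}(B(s_i,r)) \geq c \ln(n+1) \cdot w(\partial B(s_i,r))$ forces the volume to grow by a factor greater than $n+1$ across an interval of length $1/2$, exceeding the total available charge $\fopt + \fopt \leq 2\fopt$ and giving a contradiction once $c$ is chosen appropriately. Feasibility of $S$ is immediate: since $d_{\mathbf{x}}(s_i,t_i) \geq 1 > 1/2 > r_i$, the vertex $t_i$ lies strictly outside $B(s_i,r_i)$, so every surviving $s_i$–$t_i$ path must pass through $\partial B(s_i,r_i) \subseteq S$.

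Finally, summing the boundary bounds across iterations, and observing that each vertex contributes to the volume of at most one ball (since interiors are deleted after processing), the total weight $w(S)$ is at most $c \ln(n+1) \cdot \bigl(\sum_v w(v)x_v + \fopt\bigr) = O(\log n) \cdot \fopt$, giving the required $d \cdot \log n$ approximation factor. The main obstacle in making this clean is selecting the radius in polynomial time, which is handled by noting that the volume is piecewise-constant in $r$ with breakpoints only at the finitely many distances $d_{\mathbf{x}}(s_i,v)$, so one can check the region-growing ratio at each breakpoint and pick the first feasible one; together with the additive offset in the volume, this makes the algorithm deterministic and efficient.
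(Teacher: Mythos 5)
The paper does not actually prove Theorem~\ref{thm:multicut}; it states it as a known result and cites Garg, Vazirani, and Yannakakis, using it as a black box. Your proposal correctly reconstructs the region-growing argument that underlies that result, so the approach matches the cited source. Two details worth tightening for the vertex-weighted setting (which the edge-version in GVY does not face): you should define the boundary $\partial B(s_i,r_i)$ precisely as the set of vertices whose ``interval'' straddles the radius-$r_i$ sphere (e.g., vertices $v$ with a neighbor $u$ satisfying $d_{\mathbf{x}}(s_i,u)<r_i$ but $d_{\mathbf{x}}(s_i,v)\ge r_i$), so that deleting $\partial B(s_i,r_i)$ genuinely disconnects the ball from its complement; and you should first place any terminal with $x_{s_i}\ge 1/2$ directly into $S$, since otherwise the ball around $s_i$ may be empty under the convention that $d_{\mathbf{x}}$ counts both endpoints. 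With these two clarifications the differential-inequality accounting is standard and gives the claimed $O(\log n)$ factor against $\fopt$. (An alternative that avoids these subtleties entirely is to split each vertex $v$ into $v_{\mathrm{in}}$ and $v_{\mathrm{out}}$ joined by an edge of weight $w(v)$, give original edges infinite weight, and invoke the edge-multicut region-growing bound directly.)
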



Here, a fractional solution ${\bf y}$ is a function ${\bf y}: V(G)\rightarrow [0,\infty)$ such that for every pair $(s_i,t_i)\in{\cal T}$ and any path $P$ between $s_i$ and $t_i$, it holds that ${\bf y}(V(P))\geq 1$. An optimal fractional solution minimizes the weight $w({\bf y})=\sum_{v\in V(G)}w(v)\cdot{\bf y}(v)$. Let \fopt\ denote the weight of an optimal fractional solution. 


By employing the algorithm given by Lemma \ref{lem:DHfracCutPairAlg}, we next construct two instances of {\sc Weighted Multicut}. The first instance is $J_1=(\widehat{G}^*_1,w^*_1,{\cal T}_1=\{{\cal T}_1(v,u): v\in V(C), u\in V(M), i(v,u)=1, \mbox{ and } (v,u) \mbox{ is an important pair}\})$ and the second instance is $J_2=(\widehat{G}^*_2,w^*_2,{\cal T}_2=\{{\cal T}_2(v,u): v\in V(C), u\in V(M), i(v,u)=2, \mbox{ and } (v,u) \mbox{ is an important pair}\})$. By Lemma \ref{lem:DHfracCutPairAlg}, $2{\bf x}^*_1$ and $2{\bf x}^*_2$ are valid solutions to $J_1$ and $J_2$, respectively. Thus, by calling the algorithm given by Theorem \ref{thm:multicut} with each instance, we obtain a solution $r_1$ to the first instance, along with a set $R_1$ that realizes it, such that $r_1\leq 2d\cdot \log|V(G^*_1)|\cdot w^*({\bf x}^*_1)$, and we also obtain a solution $r_2$ to the second instance, along with a set $R_2$ that realizes it, such that $r_2\leq 2d\cdot \log|V(G^*_2)|\cdot w^*({\bf x}^*_2)$. 
Now by Observation \ref{obs:inductionHit} and Lemma \ref{lem:DHrelationBadCyc}, we have obtained a set $S^*=S^*_1\cup S^*_2\cup R_1\cup R_2$ for which we have the following observation.

\begin{observation}
$S^*$ intersects any chordless cycle in $G^*$, and it holds that $w^*(S^*) \leq s^*_1+s^*_2+r_1+r_2$.
\end{observation}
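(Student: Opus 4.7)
The weight bound is a one-line calculation: by subadditivity of $w^*$ on set unions,
\[
w^*(S^*)\le w^*(S_1^*)+w^*(S_2^*)+w^*(R_1)+w^*(R_2)=s_1^*+s_2^*+r_1+r_2.
\]
For the hitting property, I would recall that $G^*$ inherits from $G$ the absence of any \dho\ on at most $\numdhd$ vertices, so every \dho\ in $G^*$ is a chordless cycle on at least $5$ vertices. Pick any such chordless cycle $Q$ and argue by contradiction, assuming $V(Q)\cap S^*=\emptyset$.

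The easy case is when $V(Q)\subseteq V(G_1^*)$ or $V(Q)\subseteq V(G_2^*)$: Observation~\ref{obs:inductionHit} then already gives $V(Q)\cap(S_1^*\cup S_2^*)\neq\emptyset$, contradicting the assumption. Hence $Q$ must be a bad cycle, and I would invoke Lemma~\ref{lem:DHrelationBadCyc} on $Q$. The proof of that lemma applies Lemma~\ref{lemma:DHhole biclique} first to $C$ and then to $M$, so the bad cycle $\tilde Q$ it produces only possibly differs from $Q$ inside $M\cup C$, and in particular satisfies $E(\tilde Q\setminus(M\cup C))\subseteq E(Q\setminus(M\cup C))$. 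Consequently the two path segments $P_1\subseteq G_1$ and $P_2\subseteq G_2$ of $\tilde Q\setminus(M\cup C)$ use only edges of $Q$, so every vertex of $P_1\cup P_2$ is a vertex of $Q$ and therefore avoids $S^*$ by assumption.

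Let $(v,u)\in V(C)\times V(M)$ be the important pair joined by $P_1$ and $P_2$ in $\tilde Q$, and by Lemma~\ref{lem:DHfracCutPairAlg} fix $i\in\{1,2\}$ with $i(v,u)=i$; say $i=1$. Let $x$ and $y$ be the endpoints of $P_1$ adjacent to $v$ and $u$ respectively. By construction $(x,y)\in\mathcal{T}_1(v,u)\subseteq\mathcal{T}_1$. Since $P_1$ lies in $\widehat{G}_1^*=G_1^*\setminus(V(C)\cup V(M))$ and $R_1$ is a feasible multicut for $J_1$, the path $P_1$ must contain a vertex of $R_1\subseteq S^*$. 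But every vertex of $P_1$ is a vertex of $Q$, contradicting $V(Q)\cap S^*=\emptyset$ and finishing the proof.

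The main obstacle I anticipate is the bookkeeping in the last step: one must verify that the path segment $P_1$ indeed lies in $\widehat{G}_1^*$ (so that $R_1$'s multicut feasibility applies) and that its internal vertices persist as vertices of the \emph{original} $Q$ rather than merely of the restructured $\tilde Q$. Both facts hinge on the edge-preservation clause $E(\tilde Q\setminus(M\cup C))\subseteq E(Q\setminus(M\cup C))$ obtained by two applications of Lemma~\ref{lemma:DHhole biclique}, together with the fact that path segments, by definition, avoid the biclique vertices internally.
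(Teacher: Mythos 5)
Your proof is correct and follows exactly the route the paper has in mind: the paper gives no argument for this observation beyond pointing to Observation~\ref{obs:inductionHit}, Lemma~\ref{lem:DHrelationBadCyc}, and the multicut construction, and your write-up fleshes out precisely that chain (hit non-bad chordless cycles via the recursive solutions, restructure a bad cycle via the two applications of Lemma~\ref{lemma:DHhole biclique}, and hit its path segments via the multicut solutions $R_1,R_2$). One small caveat worth recording: the edge-containment $E(\tilde Q\setminus(M\cup C))\subseteq E(Q\setminus(M\cup C))$ alone does not cover the degenerate case where a path segment is a single vertex with no edges outside $M\cup C$; for that you should instead appeal to the fact that the construction in Lemma~\ref{lemma:DHhole biclique} only ever introduces new vertices from the biclique, so $V(\tilde Q)\setminus(V(M)\cup V(C))\subseteq V(Q)$ holds directly.
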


%

We start by showing that $s^*_1+s^*_2+r_1+r_2 + w(h(x))\leq (\frac{\log n}{\log n + 4})^{\delta+1}\cdot c\cdot\log n\cdot\log(\alpha(G'))\cdot w^*({\bf x}^*)$. Recall that for any $i\in\{1,2\}$, $r_i\leq 2d\cdot \log|V(G^*_i)|\cdot w^*({\bf x}^*_i)$. Thus, by Observation \ref{obs:inductionWeight} and since for any $i\in\{1,2\}$, $|V(G^*_i)|\leq n$ and $\alpha(G^*_i)\leq\frac{3}{4}\alpha(G')$, we have that

\[w^*(S^*)\leq (\frac{\log n}{\log n + 4})^{\delta+1}\cdot c\cdot\log n\cdot\log(\frac{3}{4}\alpha(G'))\cdot (w^*({\bf x}^*_1)+w^*({\bf x}^*_2)) + 2d\cdot \log n\cdot (w^*({\bf x}^*_1)+w^*({\bf x}^*_2)).\]

By Observation \ref{obs:inductionSum}, we further deduce that 

\[w^*(S^*)\leq \left((\frac{\log n}{\log n + 4})^{\delta+1}\cdot c\cdot\log(\frac{3}{4}\alpha(G'))+2d\right)\cdot\log n\cdot w^*({\bf x}^*).\]

Now, it only remains to show that $(\frac{\log n}{\log n + 4})^{\delta+1}\cdot c\cdot\log(\frac{3}{4}\alpha(G'))+2d\leq (\frac{\log n}{\log n + 4})^{\delta+1}\cdot c\cdot\log\alpha(G')$, which is equivalent to $2d\leq (\frac{\log n}{\log n + 4})^{\delta+1}\cdot c\cdot\log(\frac{4}{3})$. Observe that $\delta\leq q\cdot\log n - 1$ for some fixed constant $q$. Indeed, it initially holds that $\alpha(G)\leq n$, at each recursive call, the number of vertices assigned a non-zero value by ${\bf x}^*$ decreases to at most a factor of $3/4$ of its previous value, and the execution terminates once this value drops below $(\log n)/2$. Thus, it is sufficient to choose the constant $c$ so that $2d\leq (\frac{\log n}{\log n + 4})^{q\cdot\log n}\cdot c\cdot\log(\frac{4}{3})$. As the term $(\frac{\log n}{\log n + 4})^{q\cdot\log n}$ is lower bounded by $1/(e^{4q})$, it is sufficient that we fix $c=2\cdot e^{4q}\cdot d\cdot 1/\log(\frac{4}{3})$.

Note that $2d\leq (\frac{\log n}{\log n + 4})^{q\cdot\log n}\cdot c\cdot\log(\frac{4}{3})$, where $d \geq 1$. Therefore, $(\frac{\log n}{\log n + 4})^{q\cdot\log n}\cdot c \geq 1$. This together with Lemma~\ref{lem:DHadjustLowVal} and~\ref{lem:DHadjustZeroClique} implies that $w'(S^*) + w'(h(x)) \leq (\frac{\log n}{\log n + 4})^{\delta} \cdot c\cdot \log(\alpha(G')) w'(x)$, which proves Lemma~\ref{lemma:DHapproxRecursiveCall}. 

%

\subsection{General Graphs}\label{sec:approxGenGraphs}
In this section we handle general instances by developing a $\approxDH$-factor approximation algorithm for \wDH, \alg{Gen-DHD-APPROX}, thus proving the correctness of Theorem \ref{thm:newApprox2DH}. 


\bigskip
{\noindent\bf  The Recursive Algorithm.} We define each call to our algorithm \alg{Gen-DHD-APPROX} to be of the form $(G',w')$, where $(G',w')$ is an instance of \wDH\ such that $G'$ is an induced subgraph of $G$, and we denote $n'=|V(G')|$. We ensure that after the initialization phase, the graph $G'$ never contains a {\dho} on at most $\DHbddObs$ vertices. We call this invariant the {\em $\OO_{\DHbddObs}$-free invariant}. In particular, this guarantee ensures that the graph $G'$ always contains only a small number of maximal bicliques, as stated in the following lemma.

\begin{lemma}[Lemma 3.5 \cite{OsmallfreeDHVDkim}]\label{lem:numMaxCliquesDH}
    Let $G$ be a graph on $n$ vertices with no \dho\ on at most $6$ vertices. Then $G$ contains at most $(n^3+5n)/6$ maximal bicliques, and they can be enumerated in polynomial time.
\end{lemma}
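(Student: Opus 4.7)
The plan is to show that each maximal biclique of $G$ admits a short ``witness'' of constant size (roughly three vertices), and that distinct maximal bicliques produce distinct witnesses; the bound $(n^3+5n)/6$ will then follow by counting such witnesses, with $\binom{n}{3}$ accounting for the generic case and an $\OO(n)$ correction absorbing some degenerate bicliques.

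First I would extract the structural consequences of forbidding every {\dho} on at most $6$ vertices. The forbidden induced subgraphs (house, gem, domino, $C_5$ and $C_6$) impose strong rigidity on any two-layer structure in $G$: for any pair $u,v$ with large common neighborhood, one cannot have an induced matching, an induced $P_4$, or certain alternating patterns among their common neighbors, since these would combine with $u,v$ to produce a small {\dho}. The key consequence I want to distill is that if $(X,Y)$ is a maximal biclique and we know two vertices of one side and one vertex of the other, the rest of $(X,Y)$ is forced.

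Next I would define a canonical encoding. For a maximal biclique $(X,Y)$ with $|X|,|Y|\geq 2$, assign to it an (unordered) triple consisting of two vertices $x_1,x_2\in X$ and one vertex $y\in Y$ (chosen canonically, e.g.\ minimum indices under a fixed ordering of $V(G)$). I would then prove that if $(X',Y')$ is any maximal biclique with $x_1,x_2\in X'$ and $y\in Y'$, then $(X',Y')=(X,Y)$. The argument is by contradiction: if some $v\in X\setminus X'$, or $v\in Y\setminus Y'$, or the corresponding symmetric situation occurs, combining $v$ with suitably chosen vertices of $X\cup Y\cup X'\cup Y'$ yields a house, gem, domino, $C_5$ or $C_6$ as an induced subgraph, contradicting the hypothesis. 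The degenerate situations $|X|=1$ or $|Y|=1$ use a smaller witness (essentially a single edge together with a choice of side), which I would show contributes at most a linear number of maximal bicliques and accounts for the $+5n$ term in the bound.

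The main obstacle is the case analysis for uniqueness: one has to enumerate how two distinct maximal bicliques could share the canonical triple and, in each case, exhibit a concrete forbidden induced subgraph on at most $6$ vertices. This requires splitting on whether the ``extra'' vertex lies in $X\triangle X'$ or $Y\triangle Y'$ and on the adjacency pattern of this vertex with $x_1,x_2,y$ and with a second discrepancy vertex, but each case is constant-sized. Once uniqueness is established, the enumeration is immediate: iterate over all $\OO(n^3)$ triples (and $\OO(n)$ degenerate witnesses); for each, compute the candidate $(X,Y)$ by intersecting appropriate neighborhoods and closing under maximality, and verify in polynomial time that the result is indeed a maximal biclique. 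The total running time is polynomial and the output list has size at most $(n^3+5n)/6$, as required.
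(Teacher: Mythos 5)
The uniqueness step --- that knowing two vertices $x_1,x_2$ on one side of a maximal biclique and one vertex $y$ on the other determines the biclique --- is false, even under the hypothesis. Take the bipartite graph on $V=\{a,b,c,d,e,f\}$ with parts $\{a,b,f\}$ and $\{c,d,e\}$ and all cross edges except $fd$, so $E=\{ac,ad,ae,bc,bd,be,fc,fe\}$. It is bipartite, hence triangle-free with no odd induced cycle, so it contains no house, gem or $C_5$; inspecting the single $6$-vertex induced subgraph shows it is neither a domino nor a $C_6$. Thus the graph has no \dho\ on at most $6$ vertices (indeed it is distance-hereditary). Nonetheless $(\{a,b\},\{c,d,e\})$ and $(\{a,b,f\},\{c,e\})$ are two distinct maximal bicliques, both with $a,b$ on one side and $c$ on the other. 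Your ``minimum index'' rule assigns the triple $\{a,b,c\}$ to both, so the encoding is not injective, and there is no small forbidden subgraph to extract a contradiction from in the case analysis: the witness genuinely fails to determine the rest of the biclique.

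To repair the count, the map must run in the opposite direction. For each induced $P_3$ with endpoints $u,w$ and middle vertex $v$, iterate the closure $Y\mapsto\bigcap_{x\in X}N(x)$, $X\mapsto\bigcap_{y\in Y}N(y)$ starting from $X=\{u,w\}$, $Y=\{v\}$; this converges to a well-defined maximal biclique, so the number of maximal bicliques arising as such closures is at most the number of induced $P_3$'s, which is at most $n(n-1)(n-2)/6$. The no-small-\dho\ hypothesis is then what one needs to show that every maximal biclique with a non-edge on one side is the closure of \emph{some} $P_3$ inside it --- but not of every such $P_3$: in the example above, the closure of the $P_3$ on $\{a,b,c\}$ gives the first biclique, and only $P_3$'s through $f$ recover the second. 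Maximal bicliques that induce a complete graph, together with maximal stars, are bounded separately with witnesses of size at most two, accounting for the remaining $n(n+1)/2$ and yielding the stated $(n^3+5n)/6$.
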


\bigskip
{\noindent\bf Goal.} For each recursive call \alg{Gen-DHD-APPROX}$(G',w')$, we aim to prove the following.

\begin{lemma}\label{lemma:approxRecursiveCallGenDH}
\alg{Gen-DHD-APPROX} returns a solution that is at least $\opt$ and at most $\frac{d}{2}\cdot\log^3n'\cdot\opt$. Moreover, it returns a subset $U\subseteq V(G')$ that realizes the solution. Here $d$ is a constant, which will be determined later.
\end{lemma}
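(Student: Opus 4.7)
The plan is to mirror the recursive scheme for \wcdel in Section \ref{sec:approxGenGraphsDH}, replacing maximal cliques by maximal bicliques and the Clique+Chordal special case by the Biclique+Distance Hereditary special case (handled by \alg{DHD-APPROX}). The algorithm has three phases: an initialization that enforces the $\OO_{\DHbddObs}$-free invariant, a base case that solves Biclique+DH instances, and a recursive step.

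To initialize, I would enumerate every DH-obstruction on at most $\DHbddObs$ vertices (polynomially many) and apply the standard constant-factor approximation for {\sc Weighted $\DHbddObs$-Hitting Set} to obtain a set $S$ of weight at most $\DHbddObs \cdot \opt$ hitting all short obstructions; $S$ is removed from $G'$ and added to the output, after which the $\OO_{\DHbddObs}$-free invariant is preserved throughout. For the base case, I detect whether $G'$ decomposes as a biclique plus a distance hereditary graph by enumerating the polynomially many maximal bicliques via Lemma \ref{lem:numMaxCliquesDH} and testing, for each candidate $M$, whether $G' - M$ is distance hereditary; if so, I invoke \alg{DHD-APPROX}, whose $c \log^2 n'$ guarantee is dominated by $\frac{d}{2} \log^3 n'$ for $d$ chosen large enough.

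For the recursive step, let $S^*$ realize $\opt$, so that $G' - S^*$ is distance hereditary and admits a rank-width-$1$ decomposition. A standard splitting argument on that decomposition produces an edge whose associated biclique cut of $G' - S^*$ is balanced; extending this to a maximal biclique $M$ of $G'$ yields the existential statement that some maximal biclique $M$ of $G'$ and some $S \subseteq V(G') \setminus M$ with $w'(S) \leq \opt$ together form a balanced separator of $G'$. Algorithmically I enumerate all maximal bicliques via Lemma \ref{lem:numMaxCliquesDH} and, for each, run the Leighton--Rao $\OO(\log n')$-approximation for balanced vertex separators, retaining the pair $(M, S)$ of smallest separator weight, which satisfies $w'(S) \leq q \log n' \cdot \opt$ for some fixed constant $q$. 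Splitting the components of $G' - (M \cup S)$ into $V_1, V_2$ of size at most $\frac{2}{3} n'$, I recurse on $(G'[V_i], w'|_{V_i})$ to obtain $S_i$ with $w'(S_i) \leq \frac{d}{2} \log^3 n_i \cdot \opt_i$ and $\opt_1 + \opt_2 \leq \opt$, and then invoke \alg{DHD-APPROX} on the Biclique+DH instance induced by $(V_1 \cup V_2 \cup M) \setminus (S_1 \cup S_2)$ to obtain $\widehat{S}$ of weight at most $c \log^2 n' \cdot \opt$. Because $M$ is a biclique and no edge of $G' - M$ crosses from $V_1$ to $V_2$, every surviving DH-obstruction after removing $S \cup S_1 \cup S_2$ lies entirely in $G'[(V_1 \cup M) \setminus S_1]$ or $G'[(V_2 \cup M) \setminus S_2]$, so $T = S \cup S_1 \cup S_2 \cup \widehat{S}$ is feasible.

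The arithmetic then gives
\[
w'(T) \leq q \log n' \cdot \opt + \frac{d}{2} \log^3\!\left(\frac{2}{3} n'\right) \cdot \opt + c \log^2 n' \cdot \opt,
\]
and expanding $\log^3(\frac{2}{3} n') = (\log n' - \log \frac{3}{2})^3$ yields a $-\frac{3d}{2} \log \frac{3}{2} \cdot \log^2 n'$ correction to the leading $\frac{d}{2} \log^3 n'$ term plus lower-order contributions, so choosing $d$ large enough that $\frac{3d}{2} \log \frac{3}{2}$ dominates $c + q$ plus the lower-order remainder forces $w'(T) \leq \frac{d}{2} \log^3 n' \cdot \opt$, completing the induction. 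The main obstacle is the existential step: verifying that the rank-width-$1$ decomposition of $G' - S^*$ furnishes a biclique cut that, when extended to a maximal biclique of $G'$, produces a balanced separator of $G'$ together with a cost-$\opt$ set. Once this distance-hereditary analog of Observation \ref{obs:divideClique} is in hand, the remaining template-matching to Section \ref{sec:approxGenGraphsDH} is routine.
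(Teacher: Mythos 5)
Your proposal matches the paper's overall strategy---initialization via hitting short DH-obstructions, a base case that invokes \alg{DHD-APPROX}, and a recursive step that finds a biclique-plus-small-set balanced separator, recurses on the halves, and then runs \alg{DHD-APPROX} on the residual Biclique+DH instance. The existential/algorithmic step you flag as ``the main obstacle'' is exactly what the paper outsources to Lemma~\ref{lem:bicliqueBalSep} (the weighted analogue of Lemma~3.8 of Kim--Kwon); your sketch of deriving it from the rank-width-1 decomposition of $G'-S^*$ plus enumerating maximal bicliques is the right idea, though you should also account for the case the paper explicitly allows in Lemma~\ref{lem:bicliqueBalSep} where the biclique part $K$ is empty (which the paper handles with a separate, shorter arithmetic).

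The genuine error is your feasibility claim: ``because $M$ is a biclique and no edge of $G'-M$ crosses from $V_1$ to $V_2$, every surviving DH-obstruction after removing $S\cup S_1\cup S_2$ lies entirely in $G'[(V_1\cup M)\setminus S_1]$ or $G'[(V_2\cup M)\setminus S_2]$.'' This is the clique argument from \wcdel transplanted verbatim, and it fails here. For \wcdel, $M$ is a clique, so a chordless cycle can meet $M$ in at most two consecutive vertices and therefore cannot have internal vertices on both sides; that is why Section~\ref{sec:approxGenGraphsDH} can make this claim. For \wDH, $M$ is only a biclique: a long induced cycle can enter $M$ from $V_1$ and leave into $V_2$, passing through up to three vertices of $M$ (Lemma~\ref{lemma:DHhole biclique}). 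Indeed, these crossing obstructions are precisely the ``bad cycles'' that the Biclique+DH routine \alg{DHD-APPROX} is designed to catch; if your claim were true, the elaborate bad-cycle/multicut machinery of Section~\ref{sec:approxBicliqueHereditary} would be unnecessary. The correct (and simpler) feasibility argument, as the paper gives it, is: $G'-T = G'[(V_1\cup V_2\cup K)\setminus(S_1\cup S_2\cup\widehat{S})]$ is exactly the graph left after applying \alg{DHD-APPROX} to the Biclique+DH instance, hence distance hereditary by that algorithm's guarantee; the recursive sets $S_1,S_2$ are needed only to make the residual graph a valid Biclique+DH instance, not to split obstructions between halves.
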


At each recursive call, the size of the graph $G'$ becomes smaller. Thus, when we prove that Lemma~\ref{lemma:approxRecursiveCallGenDH} is true for the current call, we assume that the approximation factor is bounded by $\frac{d}{2}\cdot\log^3\widehat{n}\cdot\opt$ for any call where the size $\widehat{n}$ of the vertex-set of its graph is strictly smaller than~$n'$.

\bigskip
{\noindent\bf Initialization.} We are given $(G,w)$ as input, and first we need to ensure that the $\OO_{\DHbddObs}$-free invariant is satisfied. 
%
%
For this purpose, we update $G$ as follows. 
First, we let $\OO_{\DHbddObs}$ denote the set of all \dho\ on at most $\DHbddObs$ vertices of $G$. Clearly, $\OO_{\DHbddObs}$ can be computed in polynomial time and it holds that $|\OO_{\DHbddObs}|\leq n^{\OO(1)}$. Now, we construct an instance of {\sc Weighted $\DHbddObs$-Hitting Set}, where the universe is $V(G)$, the family of all setsof size at most $\DHbddObs$ in $\OO_{\DHbddObs}$, and the weight function is $w'$. Since each \dho\ must be intersected, therefore, the optimal solution to our {\sc Weighted $\DHbddObs$-Hitting Set} instance is at most $\opt$. By using the standard $c'$-approximation algorithm for {\sc Weighted $c'$-Hitting Set}~\cite{KleinbergT05}, which is suitable for any fixed constant $c'$, we obtain a set $S\subseteq V(G)$ that intersects all the \dho\ in $\OO_{\DHbddObs}$ and whose weight is at most $\DHbddObs \cdot \opt$.  Having the set $S$, we remove its vertices from $G$ to obtain the graph $G'$,
and $w' = w|_{G'}$. Now that the $\OO_{\DHbddObs}$-free invariant is satisfied, we can call {\sf Gen-DHD-APPROX} on $(G',w')$ and to the outputted solution, we add $w(S)$ and $S$. 

We note that during the execution of the algorithm, we update $G'$ only by removing vertices from it, and thus it will always be safe to assume that the $\OO_{\DHbddObs}$-free invariant is satisfied. 
Now, by Lemma~\ref{lemma:approxRecursiveCallGenDH}, 
we obtain a solution of weight at most  $\frac{d}{2}\cdot \log^3n\cdot \opt + 50\cdot\opt\leq d\cdot\log^3n\cdot\opt$, then combined with $S$, it allows us to conclude the correctness of Theorem \ref{thm:newApprox2DH}. 
  
\bigskip
{\noindent\bf Termination.} 
Observe that due to Lemma \ref{lem:numMaxCliquesDH}, we can test in polynomial time, if our current graph $G'$ is of the special kind that can be partitioned into a biclique and a distance hereditary graph: we examine each maximal biclique of $G'$, and check whether after its removal we obtain a distance hereditary graph. Once $G'$ becomes such a graph that consists of a biclique and a distance hereditary graph, we solve the instance $(G',w')$ by calling algorithm \alg{DHD-APPROX}. 
Observe that this returns a solution of value $\OO(\log^2 n \cdot \opt)$ which is also $\OO(\log^3 n \cdot \opt)$.

\bigskip
{\noindent\bf Recursive Call.} Similar to the case for \wcdel, instead computing a balanced separators with a maximal clique and some additional  vertices, here we find a balanced separator that comprises of a biclique and some additional, but small number of vertices. Existence of such a separator is guaranteed by Lemma~\ref{lem:bicliqueBalSep}. From Lemma~\ref{lem:numMaxCliquesDH}, it follows that the graph with no \dho\ of size at most $\DHbddObs$ contains at most $\OO(n^3)$ maximal bicliques and they can enumerated in polynomial time. We use the weighted variant of Lemma 3.8 from~\cite{OsmallfreeDHVDkim} in Lemma~\ref{lem:bicliqueBalSep}. The proof of Lemma~\ref{lem:bicliqueBalSep} remains exactly the same as that in Lemma 3.8 of~\cite{OsmallfreeDHVDkim}.




\begin{lemma}[Lemma 3.8~\cite{OsmallfreeDHVDkim}] \label{lem:bicliqueBalSep}
Let $G'$ be a connected graph on $n'$ vertices not containing any \dho\ of size at most $\DHbddObs$ and $w : V(G) \rightarrow \mathbb{R}$ be a weight function. Then in polynomial time we can find a balanced vertex separator $K \uplus X$ such that the following conditions are satisfied.
  \begin{itemize}
  \item $K$ is a biclique in $G$ or an empty set;
  \item $w(X) \leq q \cdot \log n' \cdot \opt$, where $q$ is some fixed constant.
  \end{itemize}
Here, \opt\ is the weight of the optimum solution to \wDH\ of $G$.
\end{lemma}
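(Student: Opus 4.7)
The plan is to combine an existence result for biclique-plus-small-set balanced separators---essentially Lemma~3.8 of~\cite{OsmallfreeDHVDkim}, proved there for the unweighted setting---with a weighted approximation algorithm for balanced vertex separators. The algorithm itself is an enumerate-and-approximate scheme, very much in the spirit of Lemma~\ref{lem:divideClique} used for \wcdel, but with maximal bicliques playing the role of maximal cliques.

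The first step is the following existence claim: there exist a biclique $K^{\star}$ of $G'$ (possibly empty) and a set $X^{\star} \subseteq V(G') \setminus K^{\star}$ with $w(X^{\star}) \leq \opt$ such that $K^{\star} \cup X^{\star}$ is a balanced vertex separator of $G'$. To prove it I would let $S^{\star}$ be an optimum solution of weight $\opt$, so that $G' - S^{\star}$ is distance hereditary and therefore admits a rank-width-$1$ decomposition $(\mathcal{T},\phi)$. A standard centroid-edge argument on the binary tree $\mathcal{T}$ yields an edge whose deletion partitions the leaves into two sides $A_1,A_2$ with $|A_i| \leq \tfrac{2}{3}|V(G'-S^{\star})|$. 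The rank-$1$ property then guarantees that all edges of $G'-S^{\star}$ crossing the cut $(A_1,A_2)$ form a biclique $M = M_1 \uplus M_2$ with $M_i \subseteq A_i$. Setting $K^{\star} := M$ and $X^{\star} := S^{\star}$, every connected component of $G' - (K^{\star}\cup X^{\star})$ is contained in some $A_i \setminus M_i$ and hence of size at most $\tfrac{2}{3}n'$, establishing the claim.

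The algorithm then proceeds as follows. Enumerate all maximal bicliques of $G'$, together with the empty set; by Lemma~\ref{lem:numMaxCliquesDH}, and because the $\OO_{\DHbddObs}$-free invariant is in force, this enumeration is of polynomial size and can be carried out in polynomial time. For each candidate biclique $K$, invoke the deterministic $\OO(\log n')$-factor approximation algorithm for \textsc{Weighted Balanced Vertex Separator} of Leighton and Rao~\cite{BalancedSeparator} on $G' - K$ to obtain a balanced separator $X_K$ of $G' - K$. Return the pair $(K,X_K)$ minimizing $w(X_K)$.

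For the approximation guarantee I would observe that any maximal biclique $K$ of $G'$ containing $K^{\star}$ inherits the balancedness property of $K^{\star}$: since $G'-(K\cup X^{\star})$ is an induced subgraph of $G'-(K^{\star}\cup X^{\star})$, its components are still of size at most $\tfrac{2}{3}n'$, so $X^{\star}$ is a balanced separator of $G'-K$ of weight at most $\opt$. Because the enumeration is exhaustive, the loop encounters some such $K$, and for this choice the Leighton--Rao routine returns $X_K$ with $w(X_K) \leq q\cdot\log n'\cdot\opt$ for a constant $q$ inherited from~\cite{BalancedSeparator}; since the algorithm keeps the best candidate, the returned pair satisfies the same bound. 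The only substantive difference from the proof in~\cite{OsmallfreeDHVDkim} is the replacement of the unweighted balanced separator approximation by its weighted analogue; the polynomial bound on maximal bicliques, the rank-width-$1$ decomposition, and the centroid-edge argument all transfer verbatim, so I expect no serious obstacle beyond verifying that the weighted Leighton--Rao bound plugs in cleanly.
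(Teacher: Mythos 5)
Your proposal is correct and takes essentially the same route as the paper, which does not reprove this lemma but simply cites Lemma~3.8 of Kim and Kwon~\cite{OsmallfreeDHVDkim} and notes that its proof goes through unchanged once the unweighted balanced-separator subroutine is replaced by the weighted $\OO(\log n')$-approximation of Leighton and Rao; your enumerate-maximal-bicliques-and-separate scheme, justified via the rank-width-$1$ decomposition and centroid-edge argument, is exactly that reconstruction. The one small imprecision (shared with the paper's own Lemma~\ref{lem:divideClique}) is that $X^{\star}$ is a balanced separator of $G'-K$ only relative to $n'$ rather than to $|V(G'-K)|$; this is harmless because the Leighton--Rao routine is applied in its standard bicriteria form.
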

We note that we used the $\OO(\log n')$-factor approximation algorithm by Leighton and Rao \cite{BalancedSeparator} in Lemma~\ref{lem:bicliqueBalSep} to find the balanced separator, instead of the $\OO(\sqrt{\log n'})$-factor approximation algorithm by Feige et al.~\cite{FeigeHL08}, as the algorithm by Feige et al. is randomized. 
Let us also remark that if $K$ is a biclique, then there is a bipartition of the vertices in $K$ into $A \uplus B$, where both $A$ and $B$ are non-empty,
which will be crucially required in later arguments.
%
%
%

Next, we apply in Lemma~\ref{lem:bicliqueBalSep} to $(G',w')$ to obtain a pair $(K,X)$. Since $K \cup X$ is a balanced separator for $G'$, we can partition the set of connected components of $G'\setminus (M\cup S)$ into two sets, ${\cal A}_1$ and ${\cal A}_2$, such that for $V_1=\bigcup_{A\in{\cal A}_1}V(A)$ and $V_2=\bigcup_{A\in{\cal A}_2}V(A)$ it holds that $n_1,n_2\leq\frac{2}{3}n'$ where $n_1=|V_1|$ and $n_2=|V_2|$. 
We then define two inputs of (the general case) \wDH: $I_1=(G'[V_1],w'_{V_1})$ and $I_2=(G'[V_2],w'_{V_2})$. Let $\opt_1$ and $\opt_2$ denote the optimal solutions to $I_1$ and $I_2$, respectively. Observe that since $V_1 \cap V_2=\emptyset$, it holds that $\opt_1+ \opt_2 \leq\opt$. We solve each of the two sub-instances by recursively calling algorithm \alg{Gen-DH-APPROX}. 
By the inductive hypothesis, we obtain two sets, $S_1$ and $S_2$, such that $G'[V_1] \setminus S_1$ and $G'[V_2]\setminus S_2$ are both distance hereditary graphs, and $w'(S_1) \leq \frac{d}{2}\cdot\log^3 n_1\cdot\opt_1$ and $w'(S_2)\leq \frac{d}{2}\cdot\log^3 n_2\cdot\opt_2$. Now, if $K$ were an empty set then it is easy to see that $X \cup S_1 \cup S_2$ is a feasible solution to the instance $(G', w')$. Now let us bound the total weight of this subset.
\[\begin{array}{ll}
w'(X \cup S_1 \cup S_2) & \leq w'(X) + w'(S_1) + w'(S_2) \\
& \leq q \cdot \log n' \cdot \opt + \frac{d}{2}\cdot(\log^3 n_1 \cdot \opt_1 + \log^3 n_2 \cdot \opt_2) \\
\\
& \text{Recall that $n_1,n_2\leq\frac{2}{3}n' $ and $\opt_1+\opt_2 \leq \opt$.} \\ 
\\
& < q \cdot \log n' \cdot \opt + \frac{d}{2}\cdot \log^3 \frac{2}{3}n' \cdot \opt \\
& < \frac{d}{2} \cdot \log^3 n' \cdot \opt 
\end{array}\]


The more interesting case is when $K$ is a biclique. 
Then, we first remove $X \cup S_1 \cup S_2$ from the graph, and note that the above bound also holds for this subset of vertices.
Now observe that the graph $G'' = G' - (X \cup S_1 \cup S_2)$ can be partitioned into a biclique $K$ and a distance hereditary graph $H = G[(V_1 \cup V_2) \setminus (S_1 \cup S_2)]$, along with the weight function $w'' = w'_{V(G'')}$.
Thus we have an instance  of the Biclique + Distance Hereditary Graph spacial case of WDHVD.
Furthermore, note that we retained a fractional feasible solution $\x$ to the LP of the initial input $G', w'$, which upperbounds the value of a fractional feasible solution $\x''$ to the LP of the instance $G'',w''$.
We apply the algorithm {\sf DHD-APPROX} on $(G'',w'', K, H, \x'')$ which outputs a solution $S$ such that $w''(S) = w'(S) = \OO(\log^2 n \cdot \opt)$.


Observe that, any obstruction in $G' \setminus S$ is either completely contained in $G'[V_1 \setminus S]$, or completely contained in $G'[V_2 \setminus S]$ or it contains at least one vertex from $K$.
This observation, along with the fact that $G'[(V_1\cup V_2\cup K) \setminus (S_1\cup S_2\cup\widehat{S})]$ is a distance hereditary graph, implies that $G' \setminus T$ is a distance hereditary graph where $T= X \cup S_1\cup S_2\cup\widehat{S}$. Thus, it is now sufficient to show that $w'(T)\leq\frac{d}{2}\cdot (\log n')^3 \cdot \opt$.
By the discussion above, we have that {\sf DHD-APPROX} returns a solution of value $c \log^2 n \cdot \opt$, where $c$ is some constant.

\[\begin{array}{ll}
w'(T) & \leq w'(S) + w'(S_1) + w'(S_2) + w'(\widehat{S}_1) + w'(\widehat{S}_2)\\
& \leq q\cdot\log n'\cdot\opt + \frac{d}{2}\cdot(\log^3n_1\cdot\opt_1 + \log^3n_2\cdot\opt_2) + c\cdot\log^2n'\cdot\opt.
\end{array}\]

Recall that $n_1,n_2\leq\frac{2}{3}n'$ and $\opt_1+\opt_2\leq\opt$. Thus, we have that

\[\begin{array}{ll}
w'(T) & \leq q\cdot\log n'\cdot\opt + \frac{d}{2}\cdot(\log^3\frac{2}{3}n')\cdot\opt + c\cdot\log^2n'\cdot\opt\\
& \leq \frac{d}{2}\cdot\log^3n'\cdot\opt + (c - d\log\frac{3}{2})\cdot\log^2 n'\cdot\opt.
\end{array}\]

Overall, we conclude that to ensure that $w'(T)\leq\frac{d}{2}\cdot\log^3n'\cdot\opt$, it is sufficient to ensure that $c - d\log\frac{3}{2}\leq 0$, which can be done by fixing $\displaystyle{d=\frac{c}{\log\frac{3}{2}}}$.

\section{Conclusion}
In this paper, we designed \afg-approximation algorithms for \wpfd, \wcvd and \wDHfull (or \wroned). These  
algorithms are the first ones for these problems whose approximation factors are bounded by \afg. Along the way, we also obtained a constant-factor approximation algorithm for {\sc Weighted Multicut} on chordal graphs. 
All our algorithms are based on the same recursive scheme. We believe that the scope of applicability of our approach is very wide. We would like to conclude our paper with the following concrete open problems. 
\begin{itemize}
\setlength{\itemsep}{-2pt}
\item Does \wpfd admit a constant-factor approximation algorithm? Furthermore, studying families $\mathscr{F}$ that do not necessarily contain a planar graph is another direction for further research.
\item Does \wcvd admit a constant-factor approximation algorithm?
\item Does \wretad admit a \afg-factor approximation algorithm?
\item On which other graph classes {\sc Weighted Multicut} admits a constant-factor approximation?
\end{itemize}

\bibliographystyle{siam}
\bibliography{references}

\appendix

\end{document}